\RequirePackage{silence}

\documentclass[11pt]{article}
\usepackage[in]{fullpage}
\usepackage[normalem]{ulem}
\usepackage{parskip}
\usepackage{graphicx}

\usepackage{amsmath}
\usepackage{amsthm}
\usepackage{amssymb}
\usepackage{amsfonts}
\usepackage{bbm}
\usepackage{comment}
\usepackage{delarray}
\usepackage[T1]{fontenc}
\usepackage{keytheorems}
\usepackage{mathtools}
\usepackage{mfirstuc}
\usepackage[table]{xcolor}
\usepackage{xspace}

\usepackage{alglib-tomer}
\usepackage{thmlib-tomer}
\usepackage{levi-fonts}

\newcolumntype{Y}{>{\centering\arraybackslash}X}

\newcommand\accept[0]{\textsc{accept}\xspace}
\newcommand\reject[0]{\textsc{reject}\xspace}

\newcommand\abs[1]{\left|{#1}\right|}
\newcommand\cond{\middle |}
\newcommand\floor[1]{{\left\lfloor{#1}\right\rfloor}}
\newcommand\ceil[1]{{\left\lceil{#1}\right\rceil}}

\newcommand\poly{\mathrm{poly}}

\newcommand{\Ber}{\mathrm{Ber}}
\newcommand{\Bin}{\mathrm{Bin}}
\newcommand{\Geo}{\mathrm{Geo}}

\newcommand\dtv{\ensuremath{d_\mathrm{TV}}}

\newcommand\eps{\varepsilon}

\usepackage{hyperref}
\hypersetup{
	colorlinks,
	citecolor=blue,
	filecolor=black,
	linkcolor=black,
	urlcolor=black,
	linktoc=all
}

\newcommand\oracleDEG{\ensuremath{\mathrm{DEG}}}
\newcommand\oracleNEIGHBOR{\ensuremath{\mathrm{NEIGHBOR}}}
\newcommand\oracleIS{\ensuremath{\mathrm{IS}}}

\newcommand\paperZcoefZestZindZinv{\ensuremath{5}}

\title{Almost-Uniform Edge Sampling: Leveraging Independent-Set and Local Graph Queries}
\author{Tomer Adar\thanks{Technion - Israel Institute of Technology, Israel. Email: \href{mailto:tomer-adar@campus.technion.ac.il}{tomer-adar@campus.technion.ac.il}.} \and Amit Levi\thanks{University of Haifa, Israel. Email: \href{mailto:alevi@cs.haifa.ac.il}{alevi@cs.haifa.ac.il}.}}

\begin{document}

\begin{titlepage}
    \maketitle
    \thispagestyle{empty}
    \pagestyle{empty}
    
    \begin{abstract}
        A central theme in sublinear graph algorithms is the relationship between counting and sampling: can the ability to approximately count a combinatorial structure be leveraged to sample it nearly uniformly at essentially the same cost? 
        
        We study (i) independent-set (IS) queries, which return whether a vertex set $S$ is edge-free, and (ii) two standard local queries: degree and neighbor queries. Eden and Rosenbaum (SOSA ‘18) proved that in the local-query model, uniform edge sampling is no harder than approximate edge counting. We extend this phenomenon to new settings.

        We establish sampling-counting equivalence for the hybrid model that combines IS and local queries, matching the complexity of edge-count estimation achieved by Adar, Hotam and Levi (2026), and an analogous equivalence for IS queries, matching the complexity of edge-count estimation achieved by Chen, Levi and Waingarten (SODA `20).
        
        For each query model, we show lower bounds for uniform edge sampling that essentially coincide with the known bounds for approximate edge counting.
    \end{abstract}

    \newpage
    \nonumber
    \tableofcontents
\end{titlepage}

\section{Introduction}
A fundamental question in the field of sublinear-time graph algorithms concerns the relationship between \emph{approximate counting} and \emph{uniform sampling} (see e.g, \cite{er18,ERR19,AKK19,FGP20,EMR21,DLM22,AMM22,ENT23,ELRR25}). Can the ability to estimate the size of a combinatorial structure, such as the number of edges of a graph $G=(V,E)$, be leveraged to sample an element of that structure nearly uniformly at essentially the same cost?

The earliest investigation of this relationship focused on the case where the combinatorial structure is the set of edges $E$ of an unknown graph $G$. In this setting, the relationship was first explored within the \emph{local} query model. In this model, the algorithm is granted access to the graph's adjacency list, allowing it to perform \emph{degree queries} (returning the degree of a vertex $v$) and \emph{neighbor queries} (returning the $i$-th neighbor of a vertex $v$).

For a given $\eps\in(0,1)$, the tasks are formally defined as follows:
\begin{itemize}
    \item \textbf{Approximate counting:} With high probability the algorithm outputs a value $\tilde{\bm}$ satisfying $|E|(1-\eps)\le \tilde{\bm}\le (1+\eps)|E|$.
    \item \textbf{Almost uniform sampling:} With high probability the algorithm outputs $\be$ such that for every edge $e\in E$, ${(1-\eps)}/{|E|}\le\Pr[\be=e]\le ({1+\eps})/{|E|}$.
\end{itemize}

While early edge sampling methods often required significantly higher complexity than counting~\cite{KKR04}, a sequence of works~\cite{er18,TT22,ENT23} established that the query complexity of almost-uniform edge sampling is essentially equivalent to the query complexity of approximate edge counting (see~\cite{GR08}). 
Both tasks share query complexity of $\widetilde{\Theta}(n/\sqrt{ m})$, where $n$ is the number of nodes and $m$ is the number of edges in $G$ and the tilde notation suppresses polynomial factors in $\log n $ and $1/\eps$.

Beame, Har-Peled, Ramamoorthy, Rashtchian, and Sinha~\cite{beame20} introduced a notably different \emph{global} access by introducing the \emph{independent set} oracle (IS). In this model, an algorithm may query any arbitrary subset $S\subseteq V$, and the oracle responds with a single bit indicating whether $S$ induces an edge-free subgraph. 

Unlike local inspection, IS queries do not reveal the identity or specific location of edges. Instead, they provide coarse, global information about the presence or absence of edges within a queried set. This model can be viewed as a form of combinatorial group testing, where a single query probes a large portion of the graph simultaneously. As such, IS queries expose structural information that is fundamentally inaccessible through purely local inspection. The IS model and its variant received significant attention in the past years~\cite{BBGM19,BGMP19,BBGM21,DL21,AMM22,DLM22,DLM24} (see more information in Section~\ref{sec:intro:subsec:related-work}).

In this model, the complexity of approximate edge count was settled by Chen, Levi and Waingarten~\cite{clw20} at $\widetilde{\Theta}(\min\{\sqrt m, n/\sqrt{m}\})$. Most recently, Adar, Hotam and Levi~\cite{ahl26} considered a \emph{hybrid} model which integrates global IS queries with local queries. They showed that the query complexity of approximate edge count is essentially $\widetilde{\Theta}(\min\{\sqrt{m}, \sqrt{n/\sqrt{m}}\})$. Nonetheless, the corresponding sampling problems in both the IS and the hybrid models remained open.

\subsection{Our results}
In this paper, we bridge this gap by establishing that the sampling--counting equivalence extends to both the IS and Hybrid query models. In particular, we first present an efficient algorithm for edge sampling in the Hybrid model.
\begin{theorem}[note=Rephrase of Lemma \reflemma{sample-edge-hyb}, label=th:ubnd-hyb]
    There exists an algorithm $\proc*{sample-edge-hyb}(G,\eps)$ where $G$ is a graph over $n$ vertices (known to the algorithm) and $m$ edges (unknown to the algorithm) and $\eps > 0$, that makes $O(R \log n \log (n/\eps))$ degree, neighbor and independent-set queries, where $R = \min\{\sqrt{m}, \sqrt{n/\sqrt{m}}\}$, whose random output $\bX$ satisfies:
    \begin{itemize}
        \item $\Pr\left[\bX \ne \reject\right] \ge 2/3$.
        \item For every edge $e$ in $G$, $\Pr\left[\bX = e\; \cond\; \bX \ne \reject\right] \in (1 \pm \eps) / m$.
    \end{itemize}
\end{theorem}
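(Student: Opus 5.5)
Two regimes, one per term of $R=\min\{\sqrt m,\sqrt{n/\sqrt m}\}$; I will run both and combine them without knowing $m$. Throughout, an edge $\{u,v\}$ is sampled as an oriented pair $(u,v)$, so uniformity over oriented edges gives uniformity over $E$.

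\textbf{Local part (relevant when $m\ge n^{2/3}$, so $R=\sqrt{n/\sqrt m}$).} I will follow the Eden--Rosenbaum scheme with threshold $\theta\approx\sqrt{m}/\eps$.
\begin{enumerate}
\item Sample a uniform vertex $u$ and query $\deg(u)$.
\item If $u$ is light, i.e. $\deg(u)\le\theta$, pick $i\in[\theta]$ uniformly, return $(u,\text{$i$-th neighbor})$ if $i\le\deg(u)$, and fail otherwise. Each oriented edge leaving a light vertex is then hit with probability exactly $1/(n\theta)$.
\item Heavy edges are reached by a light-then-heavy walk: take a light $u$ and a random neighbor $w$, and if $w$ is heavy, continue to a uniform neighbor of $w$. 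The usual argument (a heavy vertex has at most $\eps$ fraction of heavy neighbors, else it would have more than $\theta$ heavy neighbors, contradicting $m<\theta^2\eps^2$) gives each heavy oriented edge probability in $(1\pm\eps)/(n\theta)$.
\item A single round succeeds with probability about $m/(n\theta)=\eps\sqrt m/n$, so this alone costs $n/\sqrt m$ rounds, which is too many.
\end{enumerate}
IS queries should remove this cost: instead of sampling single vertices, I sample a random set $S$ of about $n/(\sqrt m\,\cdot k)$ vertices and use IS/group-testing on $S$ against the neighborhood structure to locate a vertex of nonzero degree. The aim is to trade rounds for queries so that the product is balanced at $\sqrt{n/\sqrt m}$. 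Concretely, I would partition the vertex sample into $\sqrt{n/\sqrt m}$ blocks, check each block for non-isolated vertices via IS queries (binary search costing $O(\log n)$), and then find a uniform non-isolated vertex with its degree through rejection sampling against $\theta$. The key requirement is that the probability of outputting a given oriented edge stays proportional to $1/(n\theta)$ times a common factor. The standard way to get this is to choose a uniform element from a uniformly random block and to accept with probability proportional to the number of candidates found relative to a fixed cap. This keeps the output exactly proportional, at the price of a constant rejection rate.

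\textbf{IS part (relevant when $m<n^{2/3}$, so $R=\sqrt m$).} I will use Chen--Levi--Waingarten-style edge finding: randomly color $V$ into $k\approx\sqrt m$ classes, sample pairs of classes, and find an edge between them by binary search with IS queries in $O(\log n)$ queries. Because exact uniformity is delicate here, I would instead use neighbor/degree queries after finding one endpoint. Given a found vertex $u$, take a uniform neighbor, reweight by $\deg(u)/\theta$, and accept; this turns the task into sampling non-isolated vertices proportionally to degree, which local queries handle exactly.

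\textbf{Combining without knowing $m$.} I will run a geometric guess $\hat m=n^2,n^2/2,\dots$ and use the $(1\pm\eps)$ estimate of \cite{ahl26} (cost $\tilde O(R)$) to fix $\hat m$ up to a constant. The failure probability of the estimator enters as an additive $\eps/m$-scale bias, and the $\log(n/\eps)$ factor in the bound pays for driving it down. Each attempt costs $O(\log n)$ queries and succeeds with constant probability after $O(R)$ rounds, which gives $O(R\log n\log(n/\eps))$ overall.

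\textbf{Main obstacle.} The hardest step is the IS-accelerated vertex search in the hybrid regime. The output distribution must stay exactly proportional per edge even though the number of candidates in a block varies. I would first try the capped-rejection trick with cap $O(\log n)$ and use Chernoff bounds to show that exceeding the cap happens with probability $\le\eps/n^2$.
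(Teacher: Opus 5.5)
\textbf{There is a genuine gap.} The step meant to bring the Eden--Rosenbaum cost $n/\sqrt m$ down to $\sqrt{n/\sqrt m}$ does not work, and the ideas that actually achieve this are missing.

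First, an IS query on a block $B$ only reports edges with both endpoints in $B$. So it cannot tell whether $B$ contains a vertex of nonzero degree. More fundamentally, isolated vertices are not the bottleneck. Take $G$ a perfect matching ($m=n/2$, so $R=\Theta(n^{1/4})$). Every vertex is non-isolated, yet each attempt of your light step succeeds with probability $\deg(u)/\theta=\eps/\sqrt m$. That forces $\Theta(\sqrt n/\eps)$ attempts. The loss is the acceptance $\deg(u)/\theta$ itself, and locating non-isolated vertices cannot reduce it.

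The paper instead uses IS queries to bias \emph{what is sampled}:
\begin{itemize}
\item Drawing $S\sim\Bin(V,1/(10\sqrt{\tilde m}))$ and returning the unique edge of $G_S$ hits every low--low edge with probability $\Theta(1/m)$.
\item The star-vertex sampler returns every vertex of degree $>\sqrt{\tilde m}$ with probability $\Theta(1/\sqrt{\tilde m})$. Follow it with one neighbor query and acceptance $\deg(u)/k_3$, where $k_3=\sqrt{n\sqrt{\tilde m}}$ (or $\deg(u)/2\tilde m$ when $\tilde m\le n^{2/3}$). This gives the low--medium edges (and all low--high edges when $\tilde m\le n^{2/3}$) probability $\Theta(1/(\tilde m R))$.
\item Two star vertices plus one IS query handle edges between two such vertices.
\item Only low--high edges with $\tilde m>n^{2/3}$ use a two-step walk, and it starts from vertices of degree at most $k_1=3\sqrt{\tilde m^{3/2}/n}$, not $\sqrt m$.
\end{itemize}
Your IS part lacks the same ingredient. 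A vertex found via color classes and binary search has an uncontrolled distribution. ``Turning the task into sampling non-isolated vertices proportionally to degree'' is circular: degree-proportional vertex sampling followed by a uniform neighbor \emph{is} uniform edge sampling.

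Second, these samplers are biased by graph-dependent factors: the loneliness of an edge, the starness of a vertex, and the fraction of tiny-degree neighbors of a high vertex. Exact uniformity requires canceling them, and your proposal has no mechanism for this. The paper's tool is \textsc{Estimate-Indicator-Inverse}. Each factor is the acceptance probability $p$ of a cheaply simulable event, bounded below by a constant $\rho$ on the relevant category. The procedure counts calls until the first acceptance, truncated at about $\ln(5/\eps)/\rho$, and outputs $\Ber(N\rho/\ln(5/\eps))$. This yields a filter with mean $e^{\pm\eps}\rho/(p\ln(5/\eps))$ at $O(\log(1/\eps))$ worst-case cost.

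This mechanism is also what keeps the dependence on $\eps$ logarithmic. Your threshold $\theta\approx\sqrt m/\eps$, which tolerates an $\eps$-fraction of heavy neighbors, costs a factor polynomial in $1/\eps$. That is incompatible with the claimed $O(R\log n\log(n/\eps))$. The paper keeps every threshold at constant-factor slack and instead cancels the tininess factor (at least $1/4$ for high vertices) exactly.
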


We complement this algorithmic result by showing that the query complexity of \proc{sample-edge-hyb} is optimal within the hybrid model, up to logarithmic factors.

\begin{theorem}[note=Rephrase of Lemma \reflemma{lbnd-sample-edge-hyb}]
    Any algorithm whose behavior matches the guarantees of Theorem \ref{th:ubnd-hyb} must make $\Omega(\min\{\sqrt{m}, \sqrt{n / \sqrt{m}}\})$ queries in expectation.
\end{theorem}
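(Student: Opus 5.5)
The plan is to reduce from the known hardness of approximate edge counting in the hybrid model. Adar, Hotam and Levi show that distinguishing a graph with $m$ edges from one with, say, $2m$ edges (or a comparable constant-factor gap) requires $\Omega(\min\{\sqrt{m},\sqrt{n/\sqrt{m}}\})$ hybrid queries. A direct black-box reduction from sampling to counting is awkward, though. Sampling $k$ near-uniform edges and looking for collisions costs too many samples: about $\sqrt m$ samples, each costing $R$ queries. So rather than reduce, I would reprove the lower bound directly with a hidden-structure construction tailored to sampling. The argument splits into two regimes according to which term achieves the minimum.

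\textbf{Regime $\sqrt m \le \sqrt{n/\sqrt m}$, i.e.\ $m\le n^{2/3}$.} Take a base graph consisting of a clique (or any fixed dense part) on about $\sqrt m$ vertices with $m$ edges. Add a hidden planted component: a small set of $\Theta(m)$ edges, for instance a matching placed on randomly chosen vertices among the $n$. The two components are roughly equal in edge count, so an almost-uniform sampler must output a planted edge with constant probability. Outputting a planted edge $\{u,v\}$ requires having "found" it, in the sense that the transcript must distinguish it from non-edges.

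I would show that each IS query or degree/neighbor query on a not-yet-discovered vertex hits the planted structure with probability $O(1/\sqrt m)$. The argument is that a query set $S$ large enough to likely contain a planted edge also intersects the clique in at least two vertices, so the oracle returns "not independent" regardless. Small sets, in turn, catch planted edges only with probability about $|S|^2 m/n^2$. A union bound over $q=o(\sqrt m)$ queries then shows that the planted edges remain undiscovered. Any edge output from the planted part is therefore essentially a guess, which succeeds with probability $o(1)$, contradicting the requirement $\Pr[\bX=e\mid \bX\neq\reject]\ge(1-\eps)/m$ summed over planted edges.

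\textbf{Regime $m> n^{2/3}$, target $\sqrt{n/\sqrt m}$.} Here I would use the construction from the counting lower bound, namely a hidden dense piece: a clique on $k\approx\sqrt m$ vertices placed on a random vertex subset, contributing a constant fraction of the edges. Finding the clique by sampling a random vertex costs about $n/\sqrt m$ queries. With IS queries plus local queries, however, the cost is a square root of that. This matches the birthday-type argument of the counting bound: the clique can be found only when two queried vertices both lie in it, or when local exploration enters it. I would adapt the counting indistinguishability proof, which says the transcript distribution is nearly the same with or without the planted clique when $q=o(\sqrt{n/\sqrt m})$. By that indistinguishability, the sampler's output distribution cannot put mass about $1/2$ on clique edges in one world while those edges do not exist in the other world. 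Hence it must reject or err, contradicting both the guarantee $\Pr[\bX\neq\reject]\ge 2/3$ and near-uniformity.

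The expected-query version follows from Markov's inequality: truncate at three times the expectation, losing only a constant in the success probability. The main obstacle will be the second regime, specifically arguing rigorously that IS queries on large sets give no information about the planted clique's location, since a large $S$ is already non-independent because of other edges. I would first try to reuse the ahl26 hard instance verbatim and only add the observation that a correct sampler yields a distinguisher: output "planted" if the returned edge lies in the suspected dense part. That converts indistinguishability into the sampling lower bound.
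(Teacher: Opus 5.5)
There is a genuine gap. The instances you build are not hard, and the step that actually carries the proof, turning a sampler into a distinguisher, is left unspecified.

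\textbf{The constructions.} In regime~1 your dichotomy for IS queries breaks exactly at the relevant scale. Let $S$ contain each vertex independently with probability $1/\sqrt m$. The number of clique vertices in $S$ is roughly Poisson with constant mean, so $|S\cap K|\le 1$ with constant probability. Independently, the expected number of planted matching edges inside $S$ is $\Theta(m)\cdot(1/\sqrt m)^2=\Theta(1)$. Hence a single IS query already has constant distinguishing advantage, and edge extraction then returns a planted edge after $O(\log n)$ more queries. So the claimed $O(1/\sqrt m)$ per-query hitting probability is false.

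The same computation defeats your regime-2 instance. A planted clique with $\Theta(m)$ edges on $\Theta(\sqrt m)$ vertices shows up in a set of density $\Theta(1/\sqrt m)$ with constant probability, and a degree query on an extracted endpoint identifies it. The birthday heuristic governs pair probes, not set queries. Any planted structure whose edges lie on vertices of degree $O(\sqrt m)$ is exactly what the lone-edge sampler finds cheaply.

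Hard instances invert the roles. The clique is the \emph{base} graph, present in both worlds, so every set large enough to have a decent chance of noticing the added edges is already non-independent. The added edges hang off a few high-degree hubs that a query set must contain to detect anything. This is the clique+biclique idea of \cite{ahl26}, reused in Section~\ref{sec:lbnd}.

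\textbf{The conversion.} Your rule ``output planted if the returned edge lies in the suspected dense part'' is not implementable, because the distinguisher does not know the relabeling $\pi$. The paper's proof consists precisely of this step. It takes the \cite{ahl26} pair $G_1=G_{n,m/n,0}$ and $G_2=G_{n,m/n,\eps}$, which satisfy:
$E_1\subseteq E_2$ (the key extra property), $|E_1|\le m$, $\eps m\le |E_2\setminus E_1|\le m$, and $\dtv\le Cq/R$ for $R=\min\{\sqrt m,\sqrt{n/\sqrt m}\}$.
It then runs the sampler and spends one additional IS query on $\pi(G_1)$, under the same relabeling, to test whether the output edge lies in $\pi(E_1)$. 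On $\pi(G_1)$ the output always lies in $\pi(E_1)$. On $\pi(G_2)$ it lies outside with probability at least $\lambda e^{-\eps}|E_2\setminus E_1|/|E_2|=\Omega(1)$. Hence $Cq/R=\Omega(1)$ and $q=\Omega(R)$, with no regime split needed. Equivalently, couple the two runs through the same $\pi$: they can output an edge of $\pi(E_2\setminus E_1)$ only after diverging.

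Your closing suggestion to reuse \cite{ahl26} verbatim is the right move, and it should cover both regimes since its bound already contains the minimum. But without the containment $E_1\subseteq E_2$ and this membership check, there is no argument.

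A minor point: truncating at three times the expected query count can cost $1/3$ of the success probability. That may erase the roughly $\frac23\cdot\frac12$ mass the sampler must place on planted edges, so truncate at a larger constant.
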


Next, we shift our focus to the more restricted Independent Set query model.

\begin{theorem}[note=Rephrase of Lemma \reflemma{sample-edge-IS}, label=th:ubnd-IS]
There exists an algorithm $\proc*{sample-edge-IS}(G,\eps)$ where $G$ is a graph over $n$ vertices (known to the algorithm) and $m$ edges (unknown to the algorithm) and $\eps > 0$, that makes $O(R \cdot \poly(\log (n/\eps)))$ independent-set queries, where $R = \min\{\sqrt{m}, n/\sqrt{m}\}$, whose random output $\bX$ satisfies:
    \begin{itemize}
        \item $\Pr\left[\bX \ne \reject\right] \ge 2/3$.
        \item For every edge $e$ in $G$, $\Pr\left[\bX = e \;\cond\; \bX \ne \reject\right] \in (1 \pm \eps) / m$.
    \end{itemize}
\end{theorem}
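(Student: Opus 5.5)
We plan to prove the statement by a rejection-sampling reduction to IS-based edge counting, treating the two regimes of $R = \min\{\sqrt m, n/\sqrt m\}$ separately. First, we will obtain a constant-factor estimate $\tilde m$ of $m$ using the counting algorithm of \cite{clw20}. It costs $\widetilde O(\min\{\sqrt m, n/\sqrt m\})$ IS queries, which is within the budget. The estimate decides the regime and calibrates all thresholds that follow. The basic primitive is \emph{edge recovery}: given a set $S$ that is known to contain exactly one edge (or few edges), we can find an edge of $S$ by binary search with IS queries. We split $S$ in halves, and a set containing an edge is detected by a negative IS answer. This locates both endpoints in $O(\log n)$ queries.

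\textbf{Regime $m \le n$ (so $R=\sqrt m$ up to constants)}, which we treat by random coloring. We color the vertices with $k \approx \sqrt{\tilde m}$ colors. In the right parametrization, we pick a uniformly random pair of color classes $(i,j)$ and consider the bipartite set $S=C_i\cup C_j$. Each fixed edge lands across $(i,j)$ with probability exactly $2/k^2$, by symmetry over colorings. Instead of conditioning on the number of edges, we use a standard trick: sample a uniformly random subset $T$ of the vertices, keeping each with probability $p$ chosen so that $\mathbb E|E(T)|=\Theta(1)$, and then estimate the edge count inside $T$. Accepting proportionally to that count, capped at a polylog threshold $\tau$, yields each edge with probability $p^2/\tau$ up to the truncation error. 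To output an edge we enumerate $E(T)$ by repeated binary search. An edge is found, and then one endpoint is excluded from a branch; this is the "find all edges" procedure of \cite{beame20}. We then pick one edge uniformly and accept with probability $|E(T)|/\tau$. The result is exactly uniform except when $|E(T)|>\tau$. Since $|E(T)|$ has constant expectation but possibly heavy tails (high-degree vertices), we need $\tau$ large enough. Here $\Pr[|E(T)|>\tau]$ must be shown to be at most an $\eps$ fraction of each individual edge's mass. With $p\approx 1/\sqrt{\tilde m}$, acceptance probability $\Omega(1/\tau)$ and enumeration cost $O(\tau\log n)$ give $\widetilde O(1)$ work per trial. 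However, this is not $\sqrt m$, so the real per-edge guarantee needs care.

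\textbf{Main obstacle: truncation bias for edges incident to high-degree vertices.} An edge $uv$ with $\deg u$ large lies in $T$ together with about $p\deg u$ other edges. Its conditional count is therefore skewed, and capping at $\tau$ under-samples it. My plan is to handle heavy vertices separately, as in the local-query samplers of \cite{er18}: a vertex is heavy if $\deg \ge \sqrt{m/\eps}$. Light-light edges are sampled as above, and truncation is rare for them by a second-moment bound. For edges incident to heavy vertices, we sample a light-ish vertex set $T$ and then "reach" the heavy endpoint. With IS queries we cannot query degree, but we can test membership: for a sampled vertex $u$ and a random set $W$, an IS query on $\{u\}\cup W$ detects adjacency. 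Estimating $\deg u$ to $1\pm\eps$ costs $\widetilde O(n/\deg u)$ queries, and this is $\le n/\sqrt m$ precisely when $u$ is heavy. This is where the $n/\sqrt m$ regime arises, and it explains why the bound is $\min\{\sqrt m,n/\sqrt m\}$.

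\textbf{Regime $m > n$ (so $R=n/\sqrt m$).} Here we sample a vertex $u$ uniformly. We then estimate $\deg u$ with IS queries via geometric-scale random subsets, then sample a uniform neighbor of $u$. The neighbor is obtained by taking a random subset of density $\approx 1/\deg u$ and binary searching within it for a neighbor, rejecting unless exactly one is present. Finally we accept with probability $\widetilde{\deg u}/\Delta_{\text{cap}}$, where $\Delta_{\text{cap}}\approx \sqrt{m}$ caps the light side. Heavy vertices are reached by the two-step light$\to$heavy walk of \cite{er18}, implemented with the same neighbor-sampling subroutine. The expected number of trials is $O(n/\sqrt m)$, each using $\poly\log(n/\eps)$ queries. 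The $(1\pm\eps)$ errors in the degree estimates multiply into the $(1\pm\eps)$ uniformity guarantee after rescaling $\eps$. Amplification by $O(1)$ repetitions gives success probability $2/3$.
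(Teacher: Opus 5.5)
There is a genuine gap. The step you yourself call the main obstacle, edges at high-degree vertices, never receives a working mechanism.

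For $m\le n$, ``sample a light-ish set $T$ and then reach the heavy endpoint'' is not a procedure. Nothing in your plan reaches a given heavy vertex with probability anywhere near $1/\sqrt{m}$ per cheap trial; a uniform vertex equals it with probability only $1/n\le 1/m$. The degree-estimation cost you quote, $\widetilde O(n/\deg u)$, is of order $n/\sqrt{m}$ for vertices near the heavy threshold, far above $R=\sqrt{m}$ when $n\gg m$.

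For $m>n$, your neighbor sampler breaks for light $u$. A random set $W$ of density $1/\deg u$ spans about $m/\deg(u)^2\gg 1$ edges when $\deg u\ll\sqrt{m}$. So IS answers on subsets of $W\cup\{u\}$ cannot isolate a neighbor of $u$ unless you first split $W$ into roughly $\sqrt{m}/\deg u$ pieces sparse enough to be independent. The first step of the two-step walk of \cite{er18}, from a light $w$ to a heavy neighbor $u$, cannot be certified either: a random set containing $u$ at the density needed to catch a neighbor of $w$ also contains many neighbors of $u$, so it is not independent.

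The missing idea is the paper's star-vertex sampler. Draw $S$ of density $1/(10\sqrt{\tilde m})$ and output the vertex common to all edges of $G_S$ (a random endpoint if $G_S$ is a single edge). Every vertex $u$ is output with probability exactly $\mathcal S_u(\tilde m)/(10\sqrt{\tilde m})$, and $\mathcal S_u(\tilde m)\ge 1/30$ when $\deg u\ge\sqrt{\tilde m}$. A neighbor is then drawn with probability exactly $1/n$ (a uniform $v$ plus one query $\oracleIS(\{u,v\})$) when $2\tilde m\ge n$, or with probability $\mathcal N_{u,v}(\tilde m)/\tilde m$ via a density-$1/\tilde m$ lonely-edge test otherwise. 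No walk and no degree estimate is needed. Together with the lone-edge sampler for L--L edges, every edge gets mass $\Theta\bigl(1/(\min\{n,2\tilde m\}\sqrt{\tilde m})\bigr)$, i.e.\ total mass $\Theta(1/R)$.

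Second, the plan neither achieves the $\poly(\log(n/\eps))$ dependence nor handles misclassification. Estimating $\deg u$ within $1\pm\eps$, the \cite{er18}-style threshold $\sqrt{m/\eps}$, and a second-moment truncation bound each cost $\poly(1/\eps)$. Your claim of $\poly\log(n/\eps)$ queries per trial is therefore unsupported. Moreover, IS queries only give a noisy degree test, so a vertex near the threshold is classified light or heavy essentially at random. An edge may then be produced by both of your samplers or by neither, and nothing in your plan makes the two contributions sum to the same value for every edge.

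The paper never estimates degrees. It cancels each bias factor exactly in expectation with \textsc{Estimate-Indicator-Inverse} at $O(\log(1/\eps))$ cost. It also uses a single coarse test $T_u$ ($\deg u\le k$ versus $\deg u\ge 2k$). The L--L sampler outputs $uv$ with weight $\Pr[T_u=T_v=0]$. The H-sampler halves its output when the neighbor also tests high, so it outputs $uv$ with weight $\Pr[T_u=1]\bigl(1-\tfrac12\Pr[T_v=1]\bigr)+\Pr[T_v=1]\bigl(1-\tfrac12\Pr[T_u=1]\bigr)=\Pr[T_u=1\vee T_v=1]$. By independence of the tests the two weights sum to exactly $1$, whatever happens at the boundary.

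Incidentally, your very first idea would already suffice if the cap were $\tau=\Theta\bigl((R+\log(n/\eps))\log(1/\eps)\bigr)$ rather than polylogarithmic. Every degree is at most $\min\{n,m\}$, so even conditioned on $u,v\in T$, all degrees in $G_T$ are $O(R+\log(n/\eps))$ except with probability $\eps$. Every edge of $G_T$ touches a maximal matching $M$ of $G_T$, so $|E(T)|\le 2|M|\max_a\deg_{G_T}(a)$. Also $|M|=O(\log(1/\eps))$ except with probability $\eps$, because $k$ disjoint edges lie in $T$ with probability at most $(p^2m)^k/k!$. Truncation is then $\eps$-rare for every edge, and $O(\tau)$ trials of $O(\log n)$ expected cost suffice, with no light/heavy split. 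As written, though, the proposal does not establish the theorem.
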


Finally, we establish a corresponding lower bound for the IS model, confirming that the dependence on $m$ and $n$ in Theorem \ref{th:ubnd-IS} is essentially tight.
\begin{theorem}[note=Rephrase of Lemma \reflemma{lbnd-sample-edge-IS}]
    Any algorithm whose behavior matches the guarantees of Theorem \ref{th:ubnd-IS} must make $\Omega(\min\{\sqrt{m}, n / \sqrt{m}\})$ queries in expectation.
\end{theorem}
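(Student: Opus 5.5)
The plan is to reduce the sampling lower bound to known lower-bound constructions for edge counting in the IS model, following the framework of \cite{clw20}. Two regimes appear: $\sqrt{m}$ (relevant when $m \lesssim n^{2/3}\cdot$const, more precisely $m\le n$) and $n/\sqrt{m}$ (when $m \ge n$). In each regime I would build two distributions over graphs that are indistinguishable to any algorithm making $o(R)$ IS queries. Then I would show that an almost-uniform sampler can tell them apart, or else outputs some edge with grossly non-uniform probability. Throughout, a random relabeling of vertices makes the adversary's choice of labels uniform.

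\textbf{Regime $m\ge n$ (bound $n/\sqrt m$).} Take a base graph $G_0$ with $m$ edges spread so that IS queries learn little; for instance, a random graph or a clique on $\sqrt{2m}$ random vertices. Let $G_1$ be $G_0$ plus a hidden clique on $k=\Theta(\sqrt m)$ random vertices, which adds $\Theta(m)$ edges. Equivalently, I could hide a fraction of the edges inside a small structure. The sampler on $G_1$ must output an edge inside the hidden clique with probability $\Omega(1)$, conditioned on not rejecting. Outputting such an edge means the algorithm has named two vertices of the hidden set. The core step is then to show that with $q=o(n/\sqrt m)$ IS queries, the probability of identifying any pair inside a random $k$-subset is $o(1)$. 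Any query set $S$ touching the hidden clique in at least two vertices already contains base edges with high probability once $|S|\gtrsim n/\sqrt m$. Smaller sets rarely hit two hidden vertices: the chance is about $(|S|k/n)^2$, which requires $|S|\gtrsim n/\sqrt m$. So IS answers give essentially no information about the hidden set, and a coupling argument (simulate answers from $G_0$ alone) shows the output hits the hidden clique with probability $o(1)$. This contradicts the requirement $\Pr[\bX\in\text{clique}\mid\bX\neq\reject]\ge(1-\eps)\cdot\Omega(1)$ together with $\Pr[\bX\ne\reject]\ge 2/3$.

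\textbf{Regime $m\le n$ (bound $\sqrt m$).} Use a graph consisting of $m$ disjoint edges (a matching) on random vertices, plus isolated vertices. To output an edge, the algorithm must find a specific pair $\{u,v\}$ that is an edge. An IS query on $S$ returns ``not independent'' only if $S$ contains a full matching edge. I would argue, as in the $\sqrt m$ counting lower bound of \cite{clw20}, that each query reveals at most a bounded amount of information about the pairing. I expect the cleanest route is to compare with a graph where a random half of the matching edges is ``rewired'', and show the algorithm cannot certify a specific pair in $o(\sqrt m)$ queries. Concretely, uniformity forces the output to be a genuine edge with probability $1$ given non-rejection. Yet to confirm a pair, a query on $\{u,v\}$ must be made. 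Before any query reveals a dependency, the posterior on partners is nearly uniform, and a birthday-type argument shows that $q$ adaptive queries locate an edge with probability $O(q^2/m)$. This is the step I expect to be the main obstacle: handling adaptive queries on large sets, which can detect edges without localizing them. I would bound this with a binary-search-style information argument: localizing an edge among candidates by halving needs many queries. An alternative is to reduce directly to the known counting lower bound. The idea is that the sampler's outputs of distinct edges, collected via repeated runs, let one estimate $m$ through collision counting with $O(\sqrt m)$ samples. However, that reduction loses a $\sqrt m$ factor, so I would prefer the direct argument.

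Finally, I would convert worst-case to expected query complexity by Markov's inequality: truncate the algorithm at $10\times$ its expectation, which changes the output distribution by only a small amount. This needs care, since truncation can break conditional uniformity; I would handle it by treating truncation as rejection and noting that uniformity up to a constant $\eps$ still suffices for the contradictions above.
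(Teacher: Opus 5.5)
Your high-level framework is sound and matches the paper's. You hide extra edges $E_2\setminus E_1$, $\Theta(m)$ of them, under a random relabeling. You then show that $o(R)$ queries cannot tell $\pi(G_1)$ from $\pi(G_2)$. Finally, an $\eps$-uniform sampler on $\pi(G_2)$ must output a hidden edge with probability $\Omega(1)$, so it cannot be that fast. The paper phrases this last step as a distinguisher that checks the output edge against $G_1$ with one extra IS query.

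\textbf{The gap.} Both of your hard instances are easy for IS queries, so the indistinguishability step is false in each regime.
\begin{itemize}
\item \emph{Regime $m\ge n$.} A hidden clique on $\Theta(\sqrt m)$ vertices is not masked by a base clique on $\sqrt{2m}$ vertices, nor by a random graph with $m$ edges. Both structures become visible at the same scale. A random set $S$ of size $c\,n/\sqrt m$ contains at least two hidden-clique vertices and at most one base-clique vertex (resp.\ no base edge) with constant probability. So your claim that such an $S$ ``already contains base edges with high probability'' fails. Querying such an $S$ and then running the binary search \textsc{Extract-Edge-IS} returns a hidden edge with constant probability after $O(\log n)$ queries.
\item \emph{Regime $m\le n$.} The matching instance falls to the same attack. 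A set of density $1/\sqrt m$ contains a matching edge with probability about $1-1/e$, and binary search localizes it in $O(\log n)$ queries. Trying all densities $2^{-i}$ costs only $O(\log^2 n)$ even without knowing $m$. Your $O(q^2/m)$ birthday bound holds only for pair queries. The ``halving needs many queries'' argument you hoped for is exactly what binary search refutes. Rewiring half of a matching does not help either, since it yields an isomorphic graph.
\end{itemize}

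\textbf{The missing idea.} The added edges must be concentrated on \emph{few} vertices, while a dense clique saturates every query large enough to have a decent chance of touching them. The paper uses one construction for both regimes. $G$ is a $k$-clique with $k=\lfloor\sqrt m\rfloor$ plus isolated vertices. $H$ adds an $(\ell,h)$-biclique with $h=\lceil 2m'/n\rceil\approx\max\{1,2m/n\}$ hubs and $\ell\approx n/2$ leaves.
\begin{itemize}
\item Each IS query on $\pi(G)$ is simulated by revealing vertex colors until two clique vertices appear. This costs $O(n/\sqrt m)$ reveals in expectation for the first $k/4$ queries. Markov's inequality handles the rest, so no truncation of the sampler is needed.
\item The runs on $\pi(G)$ and $\pi(H)$ can diverge only if some revealed non-clique vertex is a hub. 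This happens with probability $h/(n-k)$ per reveal.
\end{itemize}
This gives advantage $O(\sqrt m/n+1/\sqrt m)$ per query, i.e.\ the bound $\Omega(\min\{\sqrt m,n/\sqrt m\})$. In particular, the $\sqrt m$ term comes from the case $h=1$: a hidden star whose single center must be hit, not a matching.
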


\subsection{Related work}
\label{sec:intro:subsec:related-work}

Since the initial work in the local model, the problem of almost-uniform edge sampling has been extended to various settings. Eden, Ron, and Rosenbaum~\cite{ERR19} showed that if a graph has bounded arboricity\footnote{Note that since $\alpha\le \sqrt{m}$ for every $G$ with $m$ edges, this result is at least as good as the result in \cite{er18,TT22,ENT23}} $\alpha$, the query complexity for almost-uniform sampling can be improved to $\widetilde{O}(\alpha n/m)$. 
Eden, Mossel and Rubinfeld~\cite{EMR21} gave an algorithm which samples $t$ edges using $\widetilde{O}(\sqrt{t}\cdot n/\sqrt{m})$ queries --  a notable improvement compared to naively using $t$ invocation of the edge sampler. This query complexity was later proved to be optimal by~\cite{TT22}. 

 Beyond the specific focus on edge estimation, several recent works have explored the broader problem of counting and sampling arbitrary subgraphs $H$ within a variety of expanded query models. \cite{ERR22} extended the results of~\cite{ERR19} for sampling $k$-cliques in the local model. In particular, their results show that for $k$-cliques sampling is harder than counting. In contrast, \cite{ELRR25} showed that for Hamiltonian subgraphs the complexity of counting and sampling coincides. 
 
Recent research has expanded the investigation of the sampling–counting relationship to arbitrary subgraphs $H$, utilizing various combinations of degree, neighbor, and sampling oracles to establish complexity bounds based on structural parameters like fractional edge covers and decomposition costs \cite{ABG+18, AKK19, FGP20,BER21}. Notably, these works emphasize that the cost of uniform subgraph sampling can often be made to match the complexity of approximate counting, even as the target structures become more complex.

The introduction of the independent-set oracle has spurred a broader investigation into global query primitives. \cite{beame20} introduced the Bipartite Independent Set (BIS) oracle, which queries whether an edge exists between two disjoint vertex subsets, and used it to provide bounds for edge estimation. These bounds were later strengthen by~\cite{DLM22}, and~\cite{AMM22} considered  non-adaptive settings for both counting and sampling. This model has since been extended to as well as to more general combinatorial frameworks~\cite{BBGM19,BGMP19,BBGM21}.

Furthermore, \cite{DL21,DLM22,DLM24} investigated the IS model alongside the Colorful Independence Oracle -- a generalization of the Bipartite Independent Set (BIS) oracle -- and developed efficient algorithms for approximate edge counting in hypergraphs. Their approach establishes fine-grained reductions from approximate counting to decision problems with minimal overhead. In particular, this work provides a framework for converting decision oracles into approximate counting and sampling algorithms for $k$-uniform hypergraphs and small witnesses, such as $k$-cliques. Closely related is the work of Beretta, Chakrabarty, and Seshadhri~\cite{BCS26}, who investigate edge counting in a model similar to our hybrid setting but with a more powerful ``global'' oracle.

\section{Conceptual overview}
In this work, we present two algorithms for edge sampling: a hybrid algorithm, which leverages both local (degree and neighbor) and independent-set (IS) queries, and a pure IS algorithm, which relies exclusively on independent-set queries.

We begin by presenting the hybrid algorithm. While it can be viewed as a synthesis of the local sampling techniques from \cite{er18} and our IS-based approach, the pure IS algorithm is arguably more technically demanding. Although conceptually simpler, the IS model requires us to statistically estimate vertex degree categories rather than retrieving them deterministically through local queries

The core strategy, following the framework in \cite{er18}, involves partitioning vertices into three degree categories - Low (L), Medium (M), and High (H) - and sampling edges within each category independently. We classify edges based on the degree profiles of their endpoints.

To facilitate this, we consider three ``natural'' edge sampling primitives, each of which may fail to return an edge in a single execution:
\begin{itemize}
\item \textbf{Sparse sampler:} A sparse subset of vertices is drawn; if the induced subgraph contains exactly one edge, that edge is returned as the sample.
\item \textbf{Path sampler:} A vertex is selected as a starting point for a short random walk, with the final edge traversed serving as the sample.
\item \textbf{Link sampler:} Two vertices are drawn independently, and an IS query is used to determine if they are adjacent.
\end{itemize}

In both the path and link samplers, we have an additional degree of freedom: the initial vertex distribution. We leverage the global reach of independent-set queries to prioritize MH-vertices.

Each edge category is associated with a specific sampling primitive for which the primitive is optimally suited.

In the general case -- specifically when the graph is not excessively sparse -- we employ the following strategy.

\begin{itemize}
\item \textbf{For L--L edges:} We utilize the sparse sampler, as the low density of these edges allows a sparse vertex subset to isolate a single edge with sufficient probability.
\item \textbf{For L--M edges:} We employ a one-step path sampler. To ensure efficiency, the starting vertex is sampled from a distribution that  prioritizes MH-vertices, thereby increasing the likelihood that the single-edge walk terminates at an L-vertex.
\item \textbf{For L--H edges:} We use a two-step path sampler starting from a uniform vertex distribution, following the approach in \cite{er18}. This length-2 walk is necessary to ``reach'' the relatively rare L--H edges from a random starting point.
\item \textbf{For MH--MH edges:} We use the link sampler. By biasing the vertex selection toward MH-vertices, we significantly increase the probability that two independently drawn vertices are adjacent.
\end{itemize}

If the graph is sparse, $m = O(n^{2/3})$, the optimal sampler for L--H edges is no longer the two-step walk from \cite{er18}, but rather the targeted one-step sampler used for L--M edges. The threshold $m \approx n^{2/3}$ is a natural transition point, as it represents the ``break-even'' equilibrium between $\sqrt{m}$ and $\sqrt{n/\sqrt{m}}$, the two terms that govern our query complexity.

\paragraph{Factors}
We introduce four \emph{factors} that characterize the bias of the natural samplers described above.

The first factor is the \emph{loneliness} factor, arising in the analysis of the sparse sampler. If we draw a set of density $p$, then we expect to have $p^2 m$ edges. If $p^2 m$ is smaller than one but still $\Omega(1)$, then the induced subgraph has exactly one edge with constant probability. While the probability of a specific edge $e$ to be in the induced subgraph is clearly bounded by $p^2$ (which is the probability that both $e$'s vertices belong to the vertex set), the probability that it is the \emph{only} edge is lower\footnote{Unless $e$ is the only edge in the graph.}. We denote this reduced probability by $\mathcal L_e \cdot p^2$, where $0 \le \mathcal L_e \le 1$ is the \emph{loneliness factor} of $e$. We can derive the following definition for $\mathcal L_e$ (depending on $p$), assuming that $e=(u,v)$, $G=(V,E)$ and $\bS \sim \Bin(V,p)$:
\[  \mathcal L_e
    = \Pr\left[\text{$G_{\bS\setminus\{u\}}$ has no edges} \wedge \text{$G_{\bS\setminus\{v\}}$ has no edges} \;\cond\; u,v \in \bS \right]
\]

The second factor is the \emph{starness} factor, arising in the attempt to prioritize drawing of MH-vertices. Again, assume that we draw a set of density $p$ where $p^2 m$ is smaller than one but still $\Omega(1)$. We consider the event that there is at least one edge in the induced subgraph, and all these edges are adjacent to a common vertex (note that in the case of a single edge, there are \emph{two} such ``common'' vertices). This event contains the high-probability event of having exactly one edge.

Consider the following process: first, we draw a sparse vertex set $\bS$ as before, and then we uniformly choose a vertex adjacent to all edges (if such a vertex exists and there is at least one edge). If the induced subgraph contains at least two edges, then such a vertex (if exists) is distinct, and therefore we choose it with probability $1$. We refer to such a vertex as a \emph{full-star} vertex. If the induced subgraph contains a single edge, then each of its endpoint vertices is chosen with probability $1/2$. These vertices are referred to as \emph{semi-star} vertices. The probability of a vertex $u$ to be chosen according to this process is bounded by $p$ (which is the probability that $u$ belongs to the vertex set), and can be denoted by $\mathcal S_u \cdot p$, where $0 \le \mathcal S_u \le 1$ is the \emph{starness factor} of $u$. We can derive the following definition for $\mathcal S_u$ (depending on $p$), assuming that $G=(V,E)$ and $\bS \sim \Bin(V,p)$:
\[  \mathcal S_u
    = \frac{1}{2} \Pr\left[\text{$G_{\bS\setminus\{u\}}$ has no edges} \wedge \abs{\bS \cap \mathrm{N}_u} = 1\right] + \Pr\left[\text{$G_{\bS\setminus\{u\}}$ has no edges} \wedge \abs{\bS \cap \mathrm{N}_u} \ge 2\right].
\]

We have additional two ad-hoc factors: the ``tininess factor'' of a vertex, which is only used in the analysis of the hybrid algorithm, is the fraction of tiny-degree neighbors, and the ``neighborhood factor'' of a vertex, which is only used in the analysis of the IS algorithm, is the probability to draw a neighbor under specific constraints.

\paragraph{Canceling the factors}
To achieve a uniform edge sample, we must normalize the output probabilities by counteracting the biases introduced by our low-level sampling primitives. Specifically, we must ``cancel out'' the loneliness, starness, and ad-hoc factors that skew the sampling distribution of certain edges. To this end, we introduce \proc{estimate-indicator-inv}, a procedure that enables ``division by $\E[\bX]$'' given sampling access to a Bernoulli random variable $\bX$. While seemingly simple in concept, this allows us to re-weight our samples to compensate for the factors

\paragraph{Combining the category-specific samplers} In the hybrid algorithm, each category-specific sampler is close to uniform when conditioned on sampling an edge of its prioritized category, and cannot sample edges of the wrong category (thanks to the degree oracle that allows to determine the exact category of a given edge). More precisely, for every category $A$, we compute a mass $\lambda_A$ so that every edge in the prioritized category is sampled with probability in the range $e^{\pm \eps} \cdot \lambda_A$. To uniformly sample an edge from the graph, we compute a mass $\lambda$, and randomly choose a category (or an immediate failure) so that each category $A$ is chosen with probability $\lambda/\lambda_A$. Clearly, $\lambda$ must be sufficiently small so that $e^{\eps} \lambda \sum_{A} (1/\lambda_A) \le 1$ (to match the worst-case sampling bound). This way, each edge is sampled with probability $e^{\pm \eps} \lambda$, which means that the resulting sample (when conditioned on success) is $O(\eps)$-uniform.

\paragraph{The IS algorithm} Conceptually, the IS algorithm can be viewed as an adaptation of the hybrid framework where the medium-degree category is subsumed into the high-degree category, and local oracle calls are simulated using independent-set queries.

From an analytical perspective, the primary challenge lies in the uncertainty of vertex categorization. Without access to a degree oracle, the algorithm cannot deterministically assign a vertex to a category. It can only perform statistical tests to estimate its degree. This necessitates a careful analysis of the algorithm's behavior for ``boundary'' vertices -- those whose degrees lie near the threshold between the Low (L) and High (H) categories. At the top level, the algorithm partitions edges into two primary classes:

\begin{itemize}
\item \textbf{For L--L edges:} We continue to utilize the sparse-vertex-set sampler.
\item \textbf{For LH--H edges:} We employ the path sampler (analogous to the L--M strategy in the hybrid algorithm), where the neighbor oracle is replaced by a simulation based on independent-set queries.
\end{itemize}

In both cases, once a candidate edge is obtained, the algorithm verifies the degree categories of its endpoints. If the estimated degrees do not match the target profiles, the sample is rejected, and the algorithm declares a failure.

\subsection{Additional details}
Both the hybrid algorithm and the IS algorithm star by obtaining an estimation $\tilde{\bm}$ of the number of edges.

\begin{lemma}[Theorem 1 in \cite{ahl26} - Rephrased]{external:estimate-edges-hybrid}
    There exists an algorithm whose input is a graph $G = (V,E)$ over $n$ vertices (known to the algorithm) and $m$ edges (unknown to the algorithm) that is accessible through the degree oracle, the neighbor oracle and the independent-set oracle, whose output $\tilde{\bm}$ is in the range $e^{\pm \eps} m$ with probability at least $2/3$. Moreover, the expected query complexity of this algorithm is $O\left(\frac{\log n}{\eps^{5/2}} \min\{\sqrt{m}, \sqrt{n/\sqrt{m}}\}\right)$.
\end{lemma}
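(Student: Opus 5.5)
This statement is a rephrasing of Theorem~1 of \cite{ahl26}, so the plan is to derive it from that result rather than reprove the estimator. Two things need checking: that the guarantee of \cite{ahl26} implies the multiplicative form $e^{\pm\eps} m$, and that its complexity bound is the one stated.

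\textbf{Accuracy.} Theorem~1 of \cite{ahl26} outputs, with probability at least $2/3$, an estimate in $[(1-\eps')m,(1+\eps')m]$ for an accuracy parameter $\eps'$. I will call it with $\eps' = \eps/2$ and assume $\eps \le 1$.
\begin{itemize}
    \item Since $e^{\eps} \ge 1+\eps \ge 1+\eps/2$, the upper end is covered.
    \item Since $1-x \ge e^{-2x}$ for $x \in [0,1/2]$, taking $x = \eps/2$ gives $1-\eps/2 \ge e^{-\eps}$, so the lower end is covered.
\end{itemize}
Hence the output lies in $e^{\pm\eps} m$ with the same success probability $2/3$. For $\eps > 1$, I will simply run the procedure with $\eps' = 1/2$, which is only more accurate than needed and costs $O(1)$ times the complexity for constant accuracy. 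Halving $\eps$ changes the $\eps^{-5/2}$ factor by the constant $2^{5/2}$, which the $O(\cdot)$ absorbs.

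\textbf{Complexity.} The stated bound $O\bigl(\eps^{-5/2}\log n \cdot \min\{\sqrt m,\sqrt{n/\sqrt m}\}\bigr)$ should be read directly off \cite{ahl26}. The one point that needs care is the main obstacle: \cite{ahl26} may state its bound for an algorithm that receives a crude guess of $m$, or as a worst-case bound under a promise on $m$, whereas here $m$ is unknown and the bound is in expectation.

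If so, I will remove the dependence with the standard geometric search. For guesses $\hat m = n^2/2^i$, $i = 0,1,2,\dots$, run the estimator with budget proportional to $R(\hat m)$, where $R(x) = \min\{\sqrt{x},\sqrt{n/\sqrt{x}}\}$. Then accept the first estimate that is consistent with its guess, meaning it is at least of the order of $\hat m$.
\begin{itemize}
    \item \textbf{Cost of the search.} $R$ is unimodal in $x$, increasing up to about $n^{2/3}$ and decreasing after. So the costs $R(\hat m)$ over the geometric sequence form at most two geometric series, and their sum is $O(R(m))$ up to a $\log n$ factor. That factor is already in the bound.
    \item \textbf{Early stopping.} Guesses $\hat m \gg m$ must be rejected with high probability. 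I will ensure this by amplifying each trial's confidence with a median of $O(\log\log n)$ repetitions and a union bound over the $O(\log n)$ guesses. If this overhead is not already absorbed by the $\log n$ factor of \cite{ahl26}, it can be folded into the constants because \cite{ahl26} states the expected-complexity bound directly.
\end{itemize}

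In the expected situation, \cite{ahl26} already handles unknown $m$ and gives an expected-query bound. The lemma then follows immediately from the accuracy reparametrization above.
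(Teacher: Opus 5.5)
Your main route is the paper's: the paper gives no proof beyond citing Theorem~1 of \cite{ahl26}, which already covers unknown $m$ with an expected-query bound. Your rescaling of $\eps$ to pass between $1\pm\eps$ and $e^{\pm\eps}$ is exactly the constant-factor equivalence the paper notes in its preliminaries.

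The fallback geometric search is unnecessary, and as written it would not work. Searching downward from $\hat m = n^2$ passes through $\hat m \approx n^{2/3}$, where $R(\hat m) = n^{1/3}$. So for $m \ll n^{2/3}$ the total cost is about $n^{1/3} \gg \sqrt{m} = R(m)$, not $O(R(m))$. In addition, rejecting too-large guesses would require a one-sided (non-overestimation) property of the \cite{ahl26} estimator, which you do not establish.
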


\begin{lemma}[Theorem 1 in \cite{clw20} - Rephrased]{external:estimate-edges-IS}
    There exists an algorithm whose input is a graph $G = (V,E)$ over $n$ vertices (known to the algorithm) and $m$ edges (unknown to the algorithm) that is accessible through the independent-set oracle, whose output $\tilde{\bm}$ is in the range $e^{\pm \eps} m$ with probability at least $2/3$. Moreover, the query complexity of this algorithm is $O\left(\min\{\sqrt{m}, n/\sqrt{m}\} \poly(\log n, 1/\eps)\right)$.
\end{lemma}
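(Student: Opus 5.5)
This lemma is a restatement of a known result, so the plan is a reduction to Theorem~1 of \cite{clw20} plus a translation of parameters, not a new argument. That theorem gives an IS-only algorithm which, on an $n$-vertex graph with $m$ edges ($n$ known, $m$ unknown), outputs $\tilde{\bm}$ with $(1-\eps')m \le \tilde{\bm} \le (1+\eps')m$ with probability at least $2/3$. Its query complexity is $\min\{\sqrt{m}, n/\sqrt{m}\}\cdot\poly(\log n, 1/\eps')$. I would run that algorithm with $\eps' = c\,\eps$ for a small absolute constant $c$. For $\eps \le 1$ we have $1+c\eps \le e^{\eps}$ and $1-c\eps \ge e^{-\eps}$ (taking $c = 1/2$ suffices), so the multiplicative guarantee converts to the range $e^{\pm\eps} m$.

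For $\eps > 1$ I would simply run the procedure with $\eps' = 1/2$. The interval $[m/2, 3m/2]$ is contained in $[e^{-1}m, e^{1}m] \subseteq e^{\pm\eps}m$, and the $\poly(1/\eps')$ factor becomes a constant.

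The remaining issue is the form in which \cite{clw20} state their bound. If their complexity holds only in expectation, or only with high probability because of a geometric search over guesses of $m$, I would convert it to a worst-case bound. Run the algorithm with a query budget equal to a constant times the expected bound, and output an arbitrary value if the budget is exhausted. By Markov's inequality this costs at most $1/12$ in success probability. I would then restore the $2/3$ bound by taking the median of $O(1)$ independent runs, which keeps the complexity within $O(\min\{\sqrt{m}, n/\sqrt{m}\}\poly(\log n,1/\eps))$.

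The main obstacle is this bookkeeping, not any mathematical difficulty. I must make sure the budget used for truncation can be set without knowing $m$. If \cite{clw20} already give a worst-case bound this step is vacuous. Otherwise I would truncate the total number of queries adaptively, using the doubling-search structure of their algorithm, so that the dependence on the unknown $m$ is preserved.
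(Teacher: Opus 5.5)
Your proposal matches the paper, which gives no argument of its own for this lemma and simply imports Theorem~1 of \cite{clw20}; your conversion from a $(1\pm\eps/2)$-approximation to the range $e^{\pm\eps}m$, together with the choice $\eps'=1/2$ for $\eps>1$, is exactly the rephrasing needed. The truncation paragraph (converting an expected bound to a worst-case one) is not required for the statement as written, which just inherits whatever complexity guarantee \cite{clw20} provides, and as sketched its budget depends on the unknown $m$, so it is better dropped than left half-specified.
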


We use a constant accuracy parameter, so that $\tilde{\bm} \in e^{\pm 1/10} m$ with high probability. We amplify the success probability from $2/3$ to $1-r$, for $r=\poly(\eps,1/n)$, by taking the median of $O(\log (1/r)) = O(\log(n/\eps))$ independent estimations. In total, the expected cost of obtaining $\tilde{\bm}$ is:
\[\begin{array}{ll}
     O\left(\min\{\sqrt{m}, \sqrt{n/\sqrt{m}}\} \log n \log (n/\eps)\right) & \text{for the hybrid algorithm,} \\
     O\left(\min\{\sqrt{m}, n/\sqrt{m}\} \poly(\log n) \log (n/\eps) \right) & \text{for the IS algorithm.}
\end{array}\]

Based on the estimation $\tilde{\bm}$ we define the threshold degrees:
\begin{itemize}
    \item $k_1 = 3\sqrt{\tilde{\bm}^{3/2} / n}$ -- an upper bound for tiny degrees in the non-sparse case of the hybrid algorithm.
    \item $k_2 = \sqrt{\tilde{\bm}}$ -- an upper bound for low degrees.
    \item $k_3 = \sqrt{n \sqrt{\tilde{\bm}}}$ -- an upper bound for medium degrees in the hybrid algorithm.
\end{itemize}
More precisely, the degree ranges in the hybrid algorithm are $[0, k_2]$ for low, $(k_2, k_3]$ for medium and $(k_3,n-1]$ for high. In the IS algorithm, the high range is $(k_2,n-1]$.

Note that, up to poly-logarithmic factors, $k_2$ corresponds to the threshold degree in \cite{er18} (sampling, local queries) and \cite{clw20} (counting, IS queries) and $k_3$ corresponds to the threshold degree in \cite{ahl26} (counting, local+IS queries).

\paragraph{Estimating an indicator inverse}
\defproc{estimate-indicator-inv}{Estimate-Indicator-Inverse}
The following lemma formalizes the mechanism of ``division by $\E[\bX]$'' when $\bX$ is a Bernoulli variable accessible through samples.

\begin{lemma}{estimate-indicator-inv}
    Consider a black-box procedure $\mathcal A$ that results in an indicator. Let $p_\mathcal A$ be the expected value of this indicator. There exists a procedure $\proc*{estimate-indicator-inv}(\mathcal A, \eps, \rho)$ that makes $O\left(\frac{1}{p_\mathcal A} \right)$ calls in expectation to $\mathcal A$ and:
    \begin{itemize}
        \item The outcome is a Bernoulli variable.
        \item If $p_\mathcal A \ge \rho$, then the expected output is in the range $e^{\pm \eps} \rho / (p_\mathcal A \cdot \ln (\paperZcoefZestZindZinv/\eps))$.
    \end{itemize}
   Moreover, the worst-case number of calls is $O\left(\log \eps^{-1} / \rho\right)$ (even if $p_\mathcal A < \rho$).
\end{lemma}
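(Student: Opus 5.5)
The plan is a truncated geometric waiting-time estimator followed by a single biased coin flip, which linearizes the expectation. Set $K = \lceil \ln(\paperZcoefZestZindZinv/\eps)/\rho \rceil$. The procedure $\proc*{estimate-indicator-inv}(\mathcal A,\eps,\rho)$ calls $\mathcal A$ repeatedly, at most $K$ times, and stops at the first call whose indicator equals $1$. Let $\bT$ be the index of that call, and set $\bT = K$ if no call succeeded; thus $\bT = \min\{\bG, K\}$ with $\bG \sim \Geo(p_\mathcal A)$. The procedure then outputs a single coin $\Ber\bigl(\bT\rho/(K\rho)\bigr)$, i.e.\ $\Ber(\bT/K)$. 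Since $\bT \le K$, this is a legal Bernoulli parameter, which gives the first bullet.

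For the query bounds, the number of calls is exactly $\bT \le K = O(\log \eps^{-1}/\rho)$ deterministically, which gives the worst-case claim regardless of $p_\mathcal A$. The expected number of calls satisfies $\E[\bT] \le \E[\bG] = 1/p_\mathcal A$.

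For the expectation, I will use the tail-sum identity
\[ \E[\bT] = \sum_{i=0}^{K-1} \Pr[\bG > i] = \sum_{i=0}^{K-1}(1-p_\mathcal A)^i = \frac{1-(1-p_\mathcal A)^K}{p_\mathcal A}. \]
Hence the expected output is $\E[\bT]/K = \frac{1-(1-p_\mathcal A)^K}{K p_\mathcal A}$.

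Two deviations from the target $\rho/(p_\mathcal A\ln(\paperZcoefZestZindZinv/\eps))$ must be controlled.
\begin{itemize}
\item \textbf{Truncation loss.} When $p_\mathcal A \ge \rho$ we have $(1-p_\mathcal A)^K \le e^{-p_\mathcal A K} \le e^{-\rho K} \le \eps/\paperZcoefZestZindZinv$.
\item \textbf{Rounding of $K$.} We have $K \in [L/\rho,\, L/\rho + 1]$ with $L=\ln(\paperZcoefZestZindZinv/\eps)$.
\end{itemize}

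The main obstacle is the second item: $K\rho$ is $L$ only up to an additive $\rho$, which is a multiplicative $(1+\rho/L)$ error. This is harmless if $\rho \le \eps L/10$, which is the regime in which the lemma will be applied ($\rho$ is a small probability). To avoid any assumption, I would instead use randomized rounding. Draw $K$ as $\lfloor L/\rho\rfloor$ or $\lceil L/\rho\rceil$ with probabilities chosen so that $\E[K] = L/\rho$. Then output $\Ber(\bT\rho/L)$, capped at $1$. The cap is only active in the ceiling case, where it costs at most a factor $1-\rho/L$ on a bounded term; alternatively, simply replace $L$ by $\rho\lceil L/\rho\rceil$ in the analysis.

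Either way, the expected output lies in $\frac{\rho}{p_\mathcal A L}\cdot[1-\eps/\paperZcoefZestZindZinv - O(\eps),\,1]$. The constant $\paperZcoefZestZindZinv$ leaves room to absorb these losses into $e^{-\eps}$, using $1-\eps/5 \ge e^{-\eps}$ for $\eps \le 1$. For $\eps>1$ the statement follows from the case $\eps = 1$ after adjusting constants.
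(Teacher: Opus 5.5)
Your final construction (randomized rounding of the cutoff, then a capped coin $\Ber(T\rho/C)$ with $C=\ln(\paperZcoefZestZindZinv/\eps)$) is correct and is essentially the paper's argument, but the detour is unnecessary. The paper removes your ``main obstacle'' in the design itself. It truncates at $N_{\max}=\floor{C/\rho}$ but normalizes by the exact $C/\rho$, outputting $\Ber(N\rho/C)$ with $N=\min\{X,N_{\max}\}$ and $X\sim\Geo(p_{\mathcal A})$. Then $N\rho/C\le 1$ holds automatically, so no cap is needed. The expected output is exactly $(1-(1-p_{\mathcal A})^{N_{\max}})\cdot\rho/(p_{\mathcal A}C)$, which follows from your tail-sum identity or from memorylessness as in the paper. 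Truncation is therefore the only loss. Your randomized version is sandwiched between this quantity and $\rho/(p_{\mathcal A}C)$, because $\min\{\min\{X,K\}\rho/C,1\}\ge\min\{X,\floor{C/\rho}\}\rho/C$. So the deterministic floor branch alone already proves the lemma, and the randomization and cap add nothing.

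Several claims in your write-up need correcting.

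(i) You are right that your primary version ($K=\ceil{C/\rho}$, output $\Ber(T/K)$) fails, but your excuse for setting it aside is false. The paper does not only apply the lemma with $\rho\le\eps C/10$. It invokes it with constant $\rho$ ($1/2$, $1/4$, $1/30$, $1/900$, \ldots) and arbitrarily small $\eps$. For example, $\rho=1/2$, $\eps=0.01$, $p_{\mathcal A}=1$ gives ratio $C/(K\rho)=\ln 500/6.5\approx 0.956<e^{-0.01}$.

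(ii) Replacing $C$ by $\rho\ceil{C/\rho}$ ``in the analysis'' does not prove the stated lemma. The normalizer must be $\ln(\paperZcoefZestZindZinv/\eps)$ independently of $\rho$. The combined samplers (e.g.\ \textsc{Sample-Edge-Core-Hybrid}) mix subroutines that use different values of $\rho$, and they rely on this common factor to equalize edge probabilities.

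(iii) With a floor cutoff, your bound $(1-p_{\mathcal A})^K\le\eps/\paperZcoefZestZindZinv$ no longer holds. Using $p_{\mathcal A}\ge\rho$ and $p_{\mathcal A}\le 1$, you get $e^{-p_{\mathcal A}\floor{C/\rho}}\le e^{-C+p_{\mathcal A}}\le e\eps/\paperZcoefZestZindZinv$. The inequality actually needed is $e\eps/\paperZcoefZestZindZinv\le 1-e^{-\eps}$, valid for $\eps\le 1$ by concavity of $1-e^{-\eps}$. This is what the paper checks, and it should replace your unquantified ``$1-\eps/5-O(\eps)$''.

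(iv) The case $\eps>1$ does not reduce to $\eps=1$. The target $\rho/(p_{\mathcal A}\ln(5/\eps))$ itself changes with $\eps$, and it blows up past $1$ as $\eps\to 5$, which no Bernoulli expectation can match. The lemma is only meaningful, and only used, for small $\eps$.
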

We prove Lemma \reflemma{estimate-indicator-inv} in Section \ref{sec:common}.

\paragraph{Finding an edge in a non-independent set}
\defproc{extract-edge-IS}{Extract-Edge-IS}

Some of our procedures require the ability to obtain an arbitrary edge from a given non-independent vertex set.

\begin{lemma}{extract-edge-IS}
    There exists a deterministic procedure $\proc*{extract-edge-IS}(G;S)$ whose input is a non-empty vertex set $S$ of a graph $G$ that makes $O(1 + \log \abs{S})$ independent-set queries (worst-case) and returns an arbitrary edge $e$ between two $S$-vertices, if such an edge exists, or \reject otherwise.
\end{lemma}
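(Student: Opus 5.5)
We plan to prove Lemma \reflemma{extract-edge-IS} by a deterministic binary-search-style halving argument that uses the independent-set oracle to maintain a non-independent set while repeatedly shrinking it. First, we query $\oracleIS(S)$. If $S$ is independent (this includes $|S| = 1$), we return \reject; this costs one query and accounts for the additive $O(1)$ term. Otherwise, we maintain the following invariant: there is a pair of disjoint sets $(A,B)$ with $A \cup B \subseteq S$ such that every edge we still care about has one endpoint in $A$ and the other in $B$, or alternatively a single set $T$ with $G_T$ non-independent.

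The key steps in order are as follows. We split the current non-independent set $T$ (initially $S$) into two halves $T_1, T_2$ of sizes $\lfloor |T|/2 \rfloor$ and $\lceil |T|/2 \rceil$. We query $T_1$ and $T_2$. If either one is non-independent, we recurse into it, which halves the size. If both are independent, then every edge of $G_T$ crosses between $T_1$ and $T_2$. At that point we switch to a bipartite phase with $(A,B) = (T_1, T_2)$, where it is known that $A$ and $B$ are each independent and that $A \cup B$ is not. In this phase, we halve $A$ into $A_1, A_2$ and query $A_1 \cup B$. Since $A_1$ and $B$ are each independent, this query is non-independent exactly when there is an edge between $A_1$ and $B$. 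If the answer is non-independent, we keep $A_1$; otherwise some edge must run between $A_2$ and $B$, so we keep $A_2$. Subsets of independent sets remain independent, so the invariant is preserved. Once $|A| = 1$, we halve $B$ symmetrically. When both sides are singletons $\{u\}, \{v\}$, the invariant guarantees that $(u,v)$ is an edge, and we return it.

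For the query count, each step of the first phase halves $|T|$ using at most $2$ queries. Each step of the bipartite phase halves $|A|$ or $|B|$ using $1$ query. Since both sides start with size at most $|S|$, the total is $O(1 + \log|S|)$ worst case, and the procedure is deterministic because the splits are fixed, for instance by a fixed vertex ordering.

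The only delicate point is the correctness of the bipartite phase, which is the main thing we need to verify carefully. Here we must make sure the invariant "$A$, $B$ independent and there is an edge between them" survives each halving, and this relies on the hereditary property of independence. We also need to handle the edge case $|T| = 2$ in the first phase. In that case both halves are singletons, which are independent, and we move straight to the bipartite phase, where the answer is immediate. Since the IS oracle never reports edges on singletons, no self-loop issues arise.
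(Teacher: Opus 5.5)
Your proposal is correct and follows essentially the same route as the paper. Both query $S$, recurse into a non-independent half while one exists, and otherwise binary-search each side of the independent/independent cut with one query of the form $A_1 \cup B$ per step, giving $O(1+\log|S|)$ queries. The only cosmetic differences are that the paper rejects $|S|\le 1$ without a query and redundantly re-queries the chosen half at the start of each recursive call, neither of which affects the bound.
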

Note that this procedure also appears in \cite{beame20, clw20}. For the sake of self-containment, we provide a version of the algorithm in Section \ref{sec:common}.

\section{Preliminaries}

\subsection{Notation scheme}

\paragraph{Graph notations} The input graph is usually denoted by $G = (V, E)$. It has $n = \abs{V}$ vertices and $m = \abs{E}$ edges. At some point we define sets categorizing degree vertices: $L$ (low), $M$ (medium) and $H$ (high). We use $E_\mathrm{A,B}$ to denote the set of edges between category $A$ and category $B$. For a vertex set $S$, we use $E_S$ to denote the set of edges in the subgraph induced on the vertices of $S$ and $G_S$ to denote this induced subgraph.

\paragraph{Set of neighbors} We use $\mathrm{N}_u = \{ v \in V : \text{the edge $uv$ exists} \}$ to denote the set of neighbor vertices of $u$.

\paragraph{Random subsets} When drawing a random subset $\bS \subseteq U$ with \emph{density} $p$, every element in $U$ belongs to $\bS$ with probability $p$, independently. This drawing is also denoted by $\Bin(U,p)$.

\paragraph{Range notations} An expression of the form $\pm \alpha$ indicates the range between $-\alpha$ and $\alpha$. Composite ranges are defined elementwise (for example, the implicit range notation $\exp([\min,\max])$ indicates the range $[\exp(\min), \exp(\max)]$). Multiple occurrences of $\pm$ in the same expression are parsed individually.

\paragraph{Multiplicative errors} We denote bounded multiplicative errors by an $e^{\pm \eps}$-factor. Note that this is slightly different from the usual notation of $(1 \pm \eps)$-factor. Observe that $e^{\pm \eps/2} \subseteq 1 \pm \eps \subseteq e^{\pm 2\eps}$ for a sufficiently small $\eps$, and therefore, both notations are essentially the same up to a constant factor.

\paragraph{Pseudocode conventions} In a few algorithms we use the following phrases:
\begin{itemize}
    \item Filter by [calling a procedure whose output is a Bernoulli variable].
    \item Proceed with probability $p$.
\end{itemize}
The first form requires calling the specified procedure and terminate if the result is $0$ (or \reject), and the second form terminates with probability $1-p$. In both cases, termination is done by returning a value indicating failure (usually \reject). We believe that these declarative phrases improve the readability over the pure-imperative if-then scheme.

\paragraph{Query model} Across the paper, every procedure describes the graph oracles it may access as a part of its name (IS, Local, Hybrid).

\begin{definition}[An $\eps$-uniform distribution]{eps-uniform-distribution}
    A distribution $\mu$ over a discrete domain $\Omega$ is \emph{$\eps$-uniform} if $\mu(x) \in e^{\pm \eps} / \abs{\Omega}$ for every $x \in \Omega$.
\end{definition}

\begin{definition}[Edge sampling algorithm]{edge-sampling-algorithm}
    An algorithm is a \emph{$(\lambda,\eps)$-edge-sampling algorithm} if, for every input graph $G$ that has at least one edge:
    \begin{itemize}
        \item With probability at least $\lambda$, the output is not a declaration of failure.
        \item When conditioned on success, the output of the algorithm is $\eps$-uniform over the edge set of $G$.
    \end{itemize}
\end{definition}

\subsection{A few technical lemmas and tools}
The following technical lemmas are well known and easily verifiable (either explicitly or using automated tools such as Wolfram Alpha).

\begin{lemma}[Technical lemma]{ubnd-1+x}
    For every $x$, $1 + x \le e^x$.
\end{lemma}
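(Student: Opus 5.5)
The plan is to prove $1 + x \le e^x$ for all real $x$ by a convexity (tangent-line) argument. Define $f(x) = e^x - 1 - x$ on $\mathbb{R}$; the claim is exactly that $f(x) \ge 0$ everywhere.

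First I compute $f'(x) = e^x - 1$. Since $e^x$ is strictly increasing, $f'(x) < 0$ for $x < 0$, $f'(0) = 0$, and $f'(x) > 0$ for $x > 0$. Hence $f$ is decreasing on $(-\infty, 0]$ and increasing on $[0,\infty)$, so it attains its global minimum at $x = 0$. Because $f(0) = e^0 - 1 - 0 = 0$, this gives $f(x) \ge f(0) = 0$ for every $x$, which is the statement.

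Equivalently, one can observe that $f''(x) = e^x > 0$, so $f$ is convex. A convex function lies above its tangent line at $0$, and that tangent line is $y = 1 + x$.

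There is no real obstacle here. The only point worth noting is that the inequality must hold for all real $x$, including $x \le -1$ where $1+x \le 0$. The monotonicity argument covers these values uniformly, so no separate case split is needed, although for $x \le -1$ the claim is also immediate because $e^x > 0 \ge 1 + x$.
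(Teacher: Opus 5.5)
Your proof is correct and complete: $f(x)=e^x-1-x$ has $f'(x)=e^x-1$, which changes sign only at $0$, so $f(x)\ge f(0)=0$ for every real $x$. The paper does not prove this lemma; it lists it among technical facts that are ``well known and easily verifiable'' (explicitly or with automated tools). Your monotonicity/tangent-line argument is the standard verification the paper leaves implicit.
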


\begin{lemma}[Technical lemma]{ubnd-exp-minus-x}
    For every $x \ge 0$, $e^{-x} \le 1 - x + \frac{1}{2}x^2$.
\end{lemma}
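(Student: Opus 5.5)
The statement to prove is Lemma \reflemma{ubnd-exp-minus-x}: for every $x \ge 0$, $e^{-x} \le 1 - x + \frac{1}{2}x^2$. I will define the difference $f(x) = 1 - x + \frac{1}{2}x^2 - e^{-x}$ and show that $f(x) \ge 0$ on $[0,\infty)$. The argument is a monotonicity argument applied twice, starting from the origin where everything vanishes.

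First, $f(0) = 1 - 0 + 0 - 1 = 0$. Next, compute the derivative $f'(x) = -1 + x + e^{-x}$, which also vanishes at the origin: $f'(0) = -1 + 0 + 1 = 0$.

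Now $f'(x) = e^{-x} - (1 - x)$. By Lemma \reflemma{ubnd-1+x} applied to $-x$, we have $1 - x \le e^{-x}$, so $f'(x) \ge 0$ for every real $x$. (Alternatively, $f''(x) = 1 - e^{-x} \ge 0$ for $x \ge 0$ together with $f'(0)=0$ gives the same conclusion on the relevant range.) Hence $f$ is nondecreasing on $[0,\infty)$, and therefore $f(x) \ge f(0) = 0$ for all $x \ge 0$, which is exactly the claimed inequality.

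There is no real obstacle here. The only point requiring care is to make sure the sign of $f'$ is controlled on the whole half-line, and this is handled directly by invoking Lemma \reflemma{ubnd-1+x}. One could also mention the alternating Taylor series with a Lagrange remainder, $e^{-x} = 1 - x + \frac{x^2}{2} - \frac{e^{-\xi}x^3}{6}$ for some $\xi \in (0,x)$; since the remainder term is nonpositive for $x \ge 0$, this gives an equivalent one-line proof.
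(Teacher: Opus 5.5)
Your proof is correct. With $f(x) = 1 - x + \frac{1}{2}x^2 - e^{-x}$ you have $f(0)=0$, and $f'(x) = e^{-x} - (1-x) \ge 0$ follows from $1+y \le e^{y}$ at $y=-x$, so $f \ge 0$ on $[0,\infty)$. Your Lagrange-remainder variant, with remainder $-e^{-\xi}x^3/6 \le 0$, is equally valid.

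The paper gives no proof of this lemma. It lists it among technical facts that are ``well known and easily verifiable'' (explicitly or with automated tools), so there is no route to compare against. Your argument supplies the omitted verification, and it reduces the claim to the paper's other stated technical lemma, $1+x \le e^{x}$, rather than to an external check.
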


\begin{lemma}[Technical lemma]{lbnd-1-plus-x-power-y}
    For every $x \ge -1$ and $y \ge 1$, $(1 + x)^y \ge (1 - x^2 y)e^{xy}$.
\end{lemma}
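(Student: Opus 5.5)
The plan is to split on the sign of the factor $1 - x^2 y$, dispose of the trivial case, and in the remaining case take logarithms, where the comparison reduces to the inequality $1 + x \le e^{x}$ of Lemma \reflemma{ubnd-1+x}.

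\textbf{Trivial case.} If $1 - x^2 y \le 0$, the right-hand side is non-positive. The left-hand side $(1+x)^y$ is non-negative because $x \ge -1$. So the inequality holds.

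\textbf{Main case.} Assume $x^2 y < 1$. Since $y \ge 1$ we get $x^2 < 1$, so $x \in (-1,1)$. Hence $1+x > 0$ and $1-x > 0$, and all logarithms below are defined. Taking logarithms, the claim becomes
\[ y\bigl(x - \ln(1+x)\bigr) \le -\ln(1 - x^2 y). \]

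\textbf{Reducing the right-hand side.} Write $s = x^2 y \in [0,1)$ and expand $-\ln(1-s) = \sum_{k \ge 1} s^k / k$. Since $y \ge 1$, each term satisfies $s^k = x^{2k} y^k \ge y\, x^{2k}$. Summing gives
\[ -\ln(1 - x^2 y) \ge y \sum_{k\ge1} \frac{x^{2k}}{k} = -y \ln(1 - x^2). \]
It therefore suffices to prove $x - \ln(1+x) \le -\ln(1-x^2)$. Using $-\ln(1-x^2) = -\ln(1-x) - \ln(1+x)$, this simplifies to $x \le -\ln(1-x)$, which is equivalent to $1 - x \le e^{-x}$. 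That is exactly Lemma \reflemma{ubnd-1+x} applied at $-x$.

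\textbf{Main obstacle.} The delicate step is the regime $x$ close to $-1$. There, cruder bounds such as $x - \ln(1+x) \le x^2$ fail, because the cubic and higher-order terms are not dominated. The plan avoids this by bounding the right-hand side by $y$ times the full series $-\ln(1-x^2)$, rather than by its leading term $x^2 y$ alone. With that choice, no case split on the sign of $x$ is needed.
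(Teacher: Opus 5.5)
Your proof is correct. The paper does not actually prove this lemma. It lists it among technical facts that are ``well known and easily verifiable (either explicitly or using automated tools such as Wolfram Alpha)'', so your argument gives a self-contained derivation where the paper only appeals to outside verification.

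It may help to see your argument as two classical steps. First, for $|x|<1$ the inequality $1-x\le e^{-x}$ is equivalent to $1+x \ge (1-x^2)e^{x}$, and you raise this to the power $y$. Second, your term-by-term comparison of the series for $-\ln(1-x^2y)$ and $-y\ln(1-x^2)$ is exactly a proof of Bernoulli's inequality $(1-x^2)^y \ge 1-x^2y$ for $y\ge 1$.

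Written this way, the only ingredients are the paper's own technical lemma $1+x\le e^{x}$ and Bernoulli's inequality, and no logarithms or power series are needed. Your remark that a second-order bound on $x-\ln(1+x)$ breaks down near $x=-1$ is accurate. It explains why keeping the full factor $(1-x^2)^y$ until the Bernoulli step is the right choice.
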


\begin{lemma}[Chernoff bounds]{chernoff-bounds}
    Let $\bX$ be a random variable distributing Binomially or according to Poisson distribution.
    \[ \begin{array}{rlclcl}
        [0 < \delta \le 1] & \Pr\left[\bX \le (1 - \delta)\E[\bX]\right] &\le& e^{-\frac{1}{2}\delta^2 \E[\bX]}\\{}
        [\delta > 0] & \Pr\left[\bX \ge (1 + \delta)\E[\bX]\right] &\le& e^{-\frac{\delta^2}{2+\delta} \E[\bX]}
        \le \begin{array}[t]\{{ll}.
            e^{-\frac{1}{3}\delta^2} & \text{if $\delta \le 1$} \\
            e^{-\frac{1}{3}\delta} & \text{if $\delta \ge 1$}
        \end{array}
    \end{array} \]
\end{lemma}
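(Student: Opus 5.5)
The plan is the standard moment-generating-function (Chernoff--Cramér) argument, carried out once for a random variable whose MGF is dominated by that of a Poisson variable; this covers both the Binomial and the Poisson case at once. Write $\mu = \E[\bX]$.

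\textbf{Step 1: MGF bound.} If $\bX = \sum_i \bX_i$ with independent $\bX_i \sim \Ber(p_i)$, then for every real $t$,
\[ \E[e^{t\bX}] = \prod_i \bigl(1 + p_i(e^t - 1)\bigr) \le \prod_i e^{p_i(e^t-1)} = e^{\mu(e^t - 1)}, \]
where the inequality is Lemma \reflemma{ubnd-1+x}. If $\bX$ is Poisson with mean $\mu$, the same bound holds with equality. From here on only the bound $\E[e^{t\bX}] \le e^{\mu(e^t-1)}$ is used.

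\textbf{Step 2: upper tail.} For $t>0$, Markov's inequality applied to $e^{t\bX}$ gives $\Pr[\bX \ge (1+\delta)\mu] \le e^{\mu(e^t-1) - t(1+\delta)\mu}$. Choosing $t = \ln(1+\delta)$ yields the bound $\exp\bigl(\mu(\delta - (1+\delta)\ln(1+\delta))\bigr)$. It then suffices to show $(1+\delta)\ln(1+\delta) - \delta \ge \delta^2/(2+\delta)$ for $\delta>0$. This follows from the elementary inequality $\ln(1+\delta) \ge 2\delta/(2+\delta)$: the difference of the two sides vanishes at $\delta = 0$, and its derivative is $\delta^2/\bigl((1+\delta)(2+\delta)^2\bigr) \ge 0$. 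Substituting, $(1+\delta)\cdot 2\delta/(2+\delta) - \delta = \delta^2/(2+\delta)$.

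\textbf{Step 3: simplified forms.} If $\delta \le 1$, then $2+\delta \le 3$, so $\delta^2/(2+\delta) \ge \delta^2/3$. If $\delta \ge 1$, then $3\delta \ge 2+\delta$, so $\delta^2/(2+\delta) \ge \delta/3$. This gives the two cases $e^{-\delta^2\mu/3}$ and $e^{-\delta\mu/3}$. The displayed statement omits the factor $\E[\bX]$ in these last two exponents. I would prove the versions with $\mu$ present and read the statement that way; the $\mu$-free forms follow directly whenever $\mu \ge 1$.

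\textbf{Step 4: lower tail.} For $0<\delta<1$, apply Markov's inequality to $e^{-t\bX}$ with $t = -\ln(1-\delta) > 0$. Using the Step 1 bound at $-t$, this gives $\Pr[\bX \le (1-\delta)\mu] \le \exp\bigl(-\mu(\delta + (1-\delta)\ln(1-\delta))\bigr)$. It remains to show $(1-\delta)\ln(1-\delta) \ge -\delta + \delta^2/2$. Set $f(\delta) = (1-\delta)\ln(1-\delta) + \delta - \delta^2/2$. Then $f(0)=0$, $f'(\delta) = -\ln(1-\delta) - \delta \ge 0$, so $f \ge 0$. The endpoint $\delta = 1$ follows either by continuity or directly from $\Pr[\bX = 0] \le e^{-\mu} \le e^{-\mu/2}$.

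There is no real obstacle in this argument. The only points needing care are the two calculus inequalities in Steps 2 and 4, and the bookkeeping remark in Step 3 about the missing $\E[\bX]$ factor.
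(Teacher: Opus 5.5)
Your proof is correct. The paper gives no proof of this lemma: it lists it among facts that are ``well known and easily verifiable''. What you wrote is the standard moment-generating-function derivation. The Poisson-domination bound $e^{\mu(e^t-1)}$, the choices $t=\pm\ln(1\pm\delta)$, both calculus inequalities, and the endpoint $\delta=1$ are all handled correctly.

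Your remark about the last two exponents is right, and the correction is necessary, not merely convenient. Without the factor $\mu$ the simplified bounds are false. Take $X\sim\Ber(1/10)$ and $\delta=9$. Then $\Pr[X\ge(1+\delta)\mu]=\Pr[X\ge 1]=1/10$, which exceeds $e^{-3}=e^{-\delta/3}\approx 0.05$. A Poisson variable with mean $1/10$ fails the same way. So the statement must be read with $\mu$ in those exponents, or with the extra hypothesis $\mu\ge 1$, exactly as you propose.

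This typo does not affect the rest of the paper. Its only concentration step, in the high-degree test, uses an additive Hoeffding-type bound rather than these forms.
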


\section{Common samplers}

In this section we provide two samplers that are used in both the hybrid algorithm and the IS algorithm. The other samplers mentioned in the overview are described within each algorithm's detail section.

\subsection{Lone-edge sampler}
\defproc{sample-lone-edge-IS}{Sample-Lone-Edge-IS}

The lone-edge sampler, which corresponds to the ``sparse-vertex-set sampler'' in the conceptual overview, draws a vertex set $\bS$ of density $p = 1/10\sqrt{\tilde{m}}$. If the induced subgraph $G_{\bS}$ has exactly a single edge, then the algorithm returns it. Otherwise it returns \reject.

Algorithm \refalg{sample-lone-edge-IS} (\proc*{sample-lone-edge-IS}) provides the pseudocode for the sparse-set sampler. In the following we define $\mathcal L_e(\tilde{m})$, the \emph{loneliness factor} of $e$ (as a function of $\tilde{m}$), and show that the probability to sample $e$ using this algorithm is exactly $\frac{1}{100\tilde{m}} \mathcal L_e(\tilde{m})$.

\begin{proc-algo}{sample-lone-edge-IS}{G,n; \tilde{m}}
    \algoutput[noperiod]{Every edge $uv \in E$ is returned with probability exactly $\frac{1}{100\tilde{m}} \mathcal L_{u,v}(\tilde{m})$}
    \algcomplexity{$O(\log n)$ worst-case}
    \begin{code}
        \algitem Let $p \gets 1 / 10 \sqrt{\tilde{m}}$.
        \algitem Let $\bS \subseteq V$ be a set that every vertex belongs to with probability $p$ iid.
        \algpushcomment{(No edges)}
        \begin{If}{$\oracleIS(\bS)$ accepts}
            \algitem Return \reject.
        \end{If}
        \algitem Let $uv \gets \proc{extract-edge-IS}(G;\bS)$.
        \algpushcomment{(More than one edge in $\bS$)}
        \begin{If}{$\proc{test-loneliness-IS}(\bS,u,v)$}
            \algitem Return $uv$.
        \end{If}
        \algitem Return \reject.
    \end{code}
\end{proc-algo}

\defproc{test-loneliness-IS}{Test-Loneliness-IS}
We say that an unordered pair $\{u,v\}$ (regardless of whether it forms an edge) is \emph{lonely} in a vertex set $S$, if there are no edges in $S \cup \{u,v\}$, possibly except the edge $uv$ itself. Algorithm \refalg{test-loneliness-IS} (\proc*{test-loneliness-IS}) determines whether the pair $\{u,v\}$ is lonely in a vertex set $S$ directly by definition.

\begin{proc-algo}{test-loneliness-IS}{S; u,v}
    \algcomplexity{$O(1)$ worst-case}
    \begin{code}
        \begin{If}{$\oracleIS(S \cup \{v\} \setminus \{u\})$ rejects}
            \algitem Return \reject. \algcomment Has a non-$u$ edge.
        \end{If}
        \begin{If}{$\oracleIS(S \cup \{u\} \setminus \{v\})$ rejects}
            \algitem Return \reject. \algcomment Has a non-$v$ edge.
        \end{If}
        \algitem Return \accept.
    \end{code}
\end{proc-algo}

\begin{observation}{test-loneliness-IS}
    Procedure \proc*{test-loneliness-IS} makes $O(1)$ IS queries at worst-case.
\end{observation}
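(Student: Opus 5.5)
This is immediate from the pseudocode of \proc*{test-loneliness-IS}. I will count the oracle calls along every execution path; since the procedure contains no loops or recursive calls, the count is a constant.

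The procedure runs sequentially through two conditional blocks. The first block makes exactly one call, $\oracleIS(S \cup \{v\} \setminus \{u\})$. If that call rejects, the procedure returns immediately, having used one query. Otherwise it makes exactly one more call, $\oracleIS(S \cup \{u\} \setminus \{v\})$. It then either returns \reject or falls through to return \accept, with no further oracle access in either case.

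So every execution path makes at most two IS queries, whatever $S$, $u$, $v$ and the graph are. This gives the $O(1)$ worst-case bound. The only other steps are forming the sets $S \cup \{v\} \setminus \{u\}$ and $S \cup \{u\} \setminus \{v\}$, which are local set manipulations and not oracle calls, so they do not count toward the query complexity.

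I do not expect any step to be an obstacle. The one thing to state explicitly is that nothing in the procedure depends on randomness or on the size of $S$. The bound is therefore worst-case, not merely in expectation.
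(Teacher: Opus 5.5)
Your proof is correct and matches the paper, which states this as an observation read directly off the pseudocode. The procedure is deterministic and loop-free and makes at most two IS queries on any input, which is exactly your count.
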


\defproc{loneliness-event-IS}{Lonliness-Event-IS}
For an edge $e=uv$, we define the \emph{loneliness} factor of $e$ as the ratio between the probability that $E_{\bS} = \{e\}$ and the probability that $e \in E_{\bS}$, where $\bS \sim \Bin(V,p)$ and $p=1/10\sqrt{\tilde{m}}$. Alternatively, this is the probability that the pair $\{u,v\}$ is lonely in the drawn set $\bS$.

To draw an indicator whose expected value is $\mathcal L_{u,v}(\tilde{m})$, we draw a set $\bS \subseteq V$ of density $p = 1/10\sqrt{\tilde{m}}$ and test whether or not $\{u,v\}$ is lonely in $\bS$. Algorithm \refalg{loneliness-event-IS} (\proc*{loneliness-event-IS}) provides the pseudocode for this logic.

\begin{proc-algo}{loneliness-event-IS}{G,n; \tilde{m}, u, v}
    \algoutput{Binary answer. Accept probability $\mathcal L_{u,v}(\tilde{m})$}
    \algcomplexity{$O(1)$ worst-case}
    \begin{code}
        \algitem Let $p \gets 1 / 10 \sqrt{\tilde{m}}$.
        \algitem Let $\bS \subseteq V$ be a set that every vertex belongs to with probability $p$ iid.
        \algitem Accept if and only if $\proc{test-loneliness-IS}(\bS;u,v)$ accepts.
    \end{code}
\end{proc-algo}

\begin{observation}{loneliness-event-IS}
    A call to $\proc*{loneliness-event-IS}(G,n;\tilde{m},u,v)$ accepts the input with probability exactly $\mathcal L_{u,v}(\tilde{m})$ at the cost of $O(1)$ IS-queries worst-case.
\end{observation}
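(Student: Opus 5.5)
The claim follows directly from the definitions, so I would simply unfold the pseudocode of \proc*{loneliness-event-IS} and match the acceptance event to the definition of $\mathcal L_{u,v}(\tilde{m})$. The procedure draws $\bS \sim \Bin(V,p)$ with $p = 1/10\sqrt{\tilde{m}}$ and accepts exactly when \proc*{test-loneliness-IS}$(\bS;u,v)$ accepts. This is the same density used in the definition of the loneliness factor. I would therefore show two things: that the acceptance event coincides with the event ``$\{u,v\}$ is lonely in $\bS$'', and that the probability of this event is $\mathcal L_{u,v}(\tilde{m})$.

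\textbf{Correctness of the test.} \proc*{test-loneliness-IS} accepts if and only if both $G_{\bS\cup\{v\}\setminus\{u\}}$ and $G_{\bS\cup\{u\}\setminus\{v\}}$ are edge-free. I would check that this holds if and only if every edge of $G_{\bS\cup\{u,v\}}$ other than $uv$ is absent. Any edge $xy \neq uv$ in $\bS\cup\{u,v\}$ misses at least one of $u,v$ as an endpoint. If it misses $u$, it lies in $\bS\cup\{v\}\setminus\{u\}$; if it misses $v$, it lies in $\bS\cup\{u\}\setminus\{v\}$. Either way one of the two queries rejects. Conversely, each queried set is a subset of $\bS\cup\{u,v\}$ that omits $u$ or $v$, so it cannot contain $uv$. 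Hence any edge it contains is a non-$uv$ edge, and the pair is not lonely.

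\textbf{Matching the definition.} The definition gives $\mathcal L_e = \Pr[E_{\bS}=\{e\}]/\Pr[e\in E_{\bS}]$, which is the probability of loneliness conditioned on $u,v\in\bS$. Whether $\{u,v\}$ is lonely in $\bS$ depends only on $\bS\setminus\{u,v\}$, since $u$ and $v$ are added explicitly. Because membership of the vertices is independent, conditioning on $u,v\in\bS$ does not change the distribution of $\bS\setminus\{u,v\}$. So the unconditional acceptance probability equals the conditional one, namely $\mathcal L_{u,v}(\tilde{m})$. This also justifies the text's ``alternatively'' phrasing, and the argument applies to non-edge pairs as well.

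\textbf{Cost.} The procedure makes exactly the two IS queries of \proc*{test-loneliness-IS}, so the cost is $O(1)$ worst-case by Observation \refobs{test-loneliness-IS}; drawing $\bS$ costs no queries.

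The only point needing care is the independence step: the acceptance event is unconditional, while the loneliness factor is defined conditionally on $u,v\in\bS$.
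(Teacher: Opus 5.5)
Your proposal is correct and matches the paper, which treats this observation as immediate: the loneliness factor is defined as (equivalently) the probability that the pair is lonely in the drawn set, and the test makes two IS queries. You go slightly further than the paper in two places. You explicitly verify that the two IS queries decide loneliness, and you use independence of vertex memberships to justify why the unconditional acceptance probability equals the factor, which is defined conditionally on $u,v$ being in the set; the paper leaves both steps implicit in its ``alternatively'' remark.
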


At this point we have all the required notations to state the exact behavior of the lone-edge sampler.

\begin{lemma}{sample-lone-edge-IS}
    For every edge $u_0 v_0 \in E$, the probability that \proc*{sample-lone-edge-IS} returns $u_0 v_0$ is exactly $\frac{1}{100\tilde{m}} \mathcal L_{u_0,v_0}(\tilde{m})$. Moreover, the query complexity is $O(\log n)$ worst-case.
\end{lemma}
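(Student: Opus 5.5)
The plan is to identify exactly the event on $\bS$ under which \proc*{sample-lone-edge-IS} returns $u_0v_0$, and show it coincides with ``$u_0,v_0\in\bS$ and $E_{\bS}=\{u_0v_0\}$''. Its probability then factors as $p^2\cdot\mathcal L_{u_0,v_0}(\tilde m)$ with $p^2=1/(100\tilde m)$.

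First I show that if the procedure returns $u_0v_0$, this event holds. The first IS query on $\bS$ rejected, so $E_{\bS}\neq\emptyset$. By Lemma \reflemma{extract-edge-IS}, \proc{extract-edge-IS} returns an edge $uv$ with $u,v\in\bS$; for the output to be $u_0v_0$ we need $\{u,v\}=\{u_0,v_0\}$, so both endpoints lie in $\bS$. Then \proc{test-loneliness-IS}$(\bS;u,v)$ accepted. Since $u,v\in\bS$, the two queried sets are $\bS\setminus\{u\}$ and $\bS\setminus\{v\}$, and both are independent. Any edge of $G_{\bS}$ other than $uv$ avoids $u$ or avoids $v$, so it would lie in one of these sets. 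Hence $E_{\bS}=\{u_0v_0\}$.

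Conversely, suppose $u_0,v_0\in\bS$ and $E_{\bS}=\{u_0v_0\}$. The first IS query rejects, and \proc{extract-edge-IS} must return the unique edge $u_0v_0$, whatever its internal choice rule is. Both loneliness queries then accept, so the procedure returns $u_0v_0$. The two events are therefore equal. I will note that the order of $u,v$ is irrelevant because the loneliness test is symmetric.

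For the probability, I write $\Pr[u_0,v_0\in\bS \wedge E_{\bS}=\{u_0v_0\}]=\Pr[u_0,v_0\in\bS]\cdot\Pr[\{u_0,v_0\}\text{ lonely in }\bS\mid u_0,v_0\in\bS]$. The first factor is $p^2=1/(100\tilde m)$ by independence. The second factor equals $\mathcal L_{u_0,v_0}(\tilde m)$ by definition: the loneliness factor is the ratio of $\Pr[E_{\bS}=\{e\}]$ to $\Pr[e\in E_{\bS}]$, and the latter is exactly $p^2$. For consistency with Observation \reflemma{loneliness-event-IS}, I will also check that loneliness of $\{u_0,v_0\}$ in $\bS$ depends only on $\bS\setminus\{u_0,v_0\}$. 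That set is independent of the membership of $u_0,v_0$, so the conditional probability equals the unconditional probability that $\{u_0,v_0\}$ is lonely in $\bS$.

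The query complexity is one IS query, plus $O(1+\log|\bS|)=O(\log n)$ queries for \proc{extract-edge-IS} (Lemma \reflemma{extract-edge-IS}, using $|\bS|\le n$), plus $O(1)$ queries for the loneliness test (Observation \reflemma{test-loneliness-IS}). This gives $O(\log n)$ in the worst case. The only delicate step is the first direction, namely arguing that the test sets, when $u,v\in\bS$, together cover every edge other than $uv$. I expect no real obstacle beyond writing that case carefully.
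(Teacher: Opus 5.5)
Your proposal is correct and takes essentially the same route as the paper. The paper factors the return probability as $p\cdot p\cdot \mathcal L_{u_0,v_0}(\tilde m)$ and counts one IS query, one edge-extraction call and one loneliness test; your event equivalence (the procedure returns $u_0v_0$ exactly when $E_{\bS}=\{u_0v_0\}$) and your remark that loneliness depends only on $\bS\setminus\{u_0,v_0\}$ spell out what the paper's one-line factorization takes for granted.
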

\begin{proof}
    For complexity, observe that there is one explicit independent-set query, one call to \proc{extract-edge-IS} at the cost of $O(\log n)$ IS-queries (Lemma \reflemma{extract-edge-IS}) and one call to \proc{test-loneliness-IS} at the cost of $O(1)$ IS-queries (Observation \refobs{test-loneliness-IS}).

    The probability to return the edge $u_0 v_0$ is exactly:
    \begin{eqnarray*}
        \Pr\left[\text{return $u_0 v_0$}\right]
        = \Pr\left[u_0 \in \bS\right] \cdot \Pr\left[v_0 \in \bS\right] \cdot \Pr\left[\text{the pair $u_0, v_0$ is lonely in $\bS$}\right]
        &=& p \cdot p \cdot \mathcal L_{u_0,v_0}(\tilde{m}) \\
        &=& \frac{1}{100\tilde{m}} \mathcal L_{u_0,v_0}(\tilde{m})
    \end{eqnarray*}
\end{proof}

\subsection{Star-vertex sampler}
\defproc{sample-star-vertex-IS}{Sample-Star-Vertex-IS}

The star-vertex sampler, which corresponds to the vertex sampler in the conceptual overview that favors medium-degree and high-degree vertices, starts with drawing a set $\bS$ of density $p = 1/10\sqrt{\tilde{m}}$, and looks for the following cases: (1) the induced subgraph $G_{\bS}$ has a single edge, in which case we uniformly choose one of its endpoint vertices as the sample, and (2) the induced subgraph $G_{\bS}$ has two or more edges, all adjacent to a single vertex (a ``star''), which we choose as our sample. In every other case, the sampler return \reject.

Algorithm \refalg{sample-star-vertex-IS} (\proc*{sample-star-vertex-IS}) provides the pseudocode for the star-vertex sampler. In the following we define $\mathcal S_u(\tilde{m})$, the \emph{starness factor} of $u$ (as a function of $\tilde{m}$), and show that the probability to sample $u$ using this algorithm is exactly $\frac{1}{10\sqrt{\tilde{m}}} \mathcal S_u(\tilde{m})$.

\begin{proc-algo}{sample-star-vertex-IS}{G,n; \tilde{m}}
    \algoutput[noperiod]{Every vertex $u \in V$ is returned with probability exactly $\frac{1}{10\sqrt{\tilde{m}}} \mathcal S_u(\tilde{m})$}
    \algcomplexity{$O(\log n)$}
    \begin{code}
        \algitem Let $p \gets 1 / 10 \sqrt{\tilde{m}}$.
        \algitem Let $\bS \subseteq V$ be a set that every vertex belongs to with probability $p$ iid.
        \algpushcomment{(No edges)}
        \begin{If}{$\oracleIS(\bS)$ accepts}
            \algitem Return \reject.
        \end{If}
        \algitem Let $\bz\bw \gets \proc{extract-edge-IS}(G,\bS)$.
        \algitem Exchange $\bw$ and $\bz$ with probability $1/2$.
        \algpushcomment{($\bw$ connected to all edges)}
        \begin{Elif}{$\oracleIS(\bS \setminus \{\bw\})$ accepts}
            \algitem Return $\bw$.
        \end{Elif}
        \algpushcomment{($\bz$ connected to all edges, which $w$ does not)}
        \begin{Elif}{$\oracleIS(\bS \setminus \{\bz\})$ accepts}
            \algitem Return $z$.
        \end{Elif}
        \algitem Return \reject.
    \end{code}
\end{proc-algo}

\defproc{test-starness-IS}{Test-Starness-IS}
Algorithm \refalg{test-starness-IS} (\proc*{test-starness-IS}) gets an input $(S;u)$ and accepts with probability equals to the probability to sample $u$ (by \proc*{sample-star-vertex-IS}) when drawing the set $\bS \cup \{u\}$.

\begin{proc-algo}{test-starness-IS}{S; u}
    \algcomplexity{$O(\log n)$ worst-case}
    \begin{code}
        \begin{If}{$\oracleIS(S)$ rejects}
            \algitem Return \reject. \algcomment{(non-$u$ edges exist)}
        \end{If}
        \begin{If}{$\oracleIS(S \cup \{u\})$ accepts}
            \algitem Return \reject. \algcomment{(no edges at all)}
        \end{If}
        \algitem Let $wz \gets \proc{extract-edge-IS}(G,S \cup \{u\})$.
        \begin{If}{$\proc{test-loneliness-IS}(S,w,z)$}
            \algitem Return \accept with probability $\frac{1}{2}$, otherwise \reject. \algcomment{(a single edge)}
        \end{If}
        \algitem Return \accept. \algcomment{(two or more edges)}
    \end{code}
\end{proc-algo}

\begin{observation}{test-starness-IS}
    Procedure \proc*{test-starness-IS} makes $O(\log n)$ IS queries at worst-case.
\end{observation}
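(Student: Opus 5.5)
The plan is to count the IS queries issued along every execution path of \proc*{test-starness-IS}, since all the queries it makes come from straight-line code and no loops. The first two steps are explicit oracle calls, $\oracleIS(S)$ and $\oracleIS(S \cup \{u\})$. Each costs exactly one query, and either may terminate the procedure early. So on any path that returns at these lines, at most two queries are made.

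If both checks pass, the procedure calls $\proc{extract-edge-IS}(G,S \cup \{u\})$ on a non-empty set, namely a set containing $u$. By Lemma \reflemma{extract-edge-IS}, this costs $O(1 + \log\abs{S \cup \{u\}})$ IS queries in the worst case. Since $S \cup \{u\} \subseteq V$, we have $\abs{S \cup \{u\}} \le n$, so this is $O(\log n)$ for $n \ge 2$. The case $n \le 1$ is trivial, because then there are no edges and the procedure rejects at the second check.

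Next, the call $\proc{test-loneliness-IS}(S,w,z)$ costs $O(1)$ IS queries by Observation \refobs{test-loneliness-IS}. The final steps make no queries: returning \accept with probability $1/2$ uses only internal randomness. Summing over the at most four stages gives $2 + O(\log n) + O(1) = O(\log n)$ queries on every path. This worst-case bound holds regardless of the randomness.

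There is no real obstacle here. The only point needing care is that the bound from Lemma \reflemma{extract-edge-IS} is stated in terms of $\abs{S}$, so we must observe that the set passed to it is a subset of $V$. It is also non-empty, as the lemma requires, because it contains $u$.
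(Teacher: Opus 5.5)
Your proposal is correct and matches the paper's reasoning. The paper states this observation without a separate proof; the intended justification is the same straight-line count used in the proof of the star-vertex sampler lemma: two explicit IS queries, one call to \emph{Extract-Edge-IS} costing $O(\log n)$, and one call to \emph{Test-Loneliness-IS} costing $O(1)$. Your side remarks are accurate but not needed: the set passed to \emph{Extract-Edge-IS} is a non-empty subset of $V$, and the $n \le 1$ case is trivial.
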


\defproc{starness-event-IS}{Starness-Event-IS}
To draw an indicator whose expected value is $\mathcal S_u(\sqrt{\tilde{m}})$, we draw a set $\bS \subseteq V$ of density $p = 1/10\sqrt{\tilde{m}}$ and then accept according to the starness-test of $u$ with respect to $\bS$. Algorithm \refalg{starness-event-IS} (\proc*{starness-event-IS}) provides the pseudocode for this logic.

\begin{proc-algo}{starness-event-IS}{G,n; \tilde{m}, u}
    \algoutput{Binary answer. Accept probability $\mathcal S_u(\tilde{m})$}
    \algcomplexity{$O(\log n)$ worst-case}
    \begin{code}
        \algitem Let $p \gets 1 / 10 \sqrt{\tilde{m}}$.
        \algitem Let $\bS \subseteq V$ be a set that every vertex belongs to with probability $p$ iid.
        \algitem Accept if and only if $\proc{test-starness-IS}(\bS; u)$ accepts.
    \end{code}
\end{proc-algo}

\begin{observation}{starness-event-IS}
    A call to $\proc*{starness-event-IS}(G,n;\tilde{m},u)$ accepts the input with probability $\mathcal S_u(\tilde{m})$ at the cost of $O(\log n)$ IS-queries worst-case.
\end{observation}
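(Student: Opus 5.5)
The statement is Observation \refobs{starness-event-IS}. We need two facts: the acceptance probability equals $\mathcal S_u(\tilde{m})$, and the cost is $O(\log n)$ IS queries in the worst case.

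\textbf{Cost.} This is immediate. \proc{starness-event-IS} makes no queries itself and calls \proc{test-starness-IS} once. That procedure makes two explicit IS queries, one call to \proc{extract-edge-IS} on a set of size at most $n$, and one call to \proc{test-loneliness-IS}. These cost $O(\log n)$ queries by Lemma \reflemma{extract-edge-IS} and $O(1)$ queries by Observation \refobs{test-loneliness-IS}, as recorded in Observation \refobs{test-starness-IS}.

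\textbf{Probability: reduction to $\bS\setminus\{u\}$.} The test never inspects whether $u\in\bS$. Every query it makes depends only on $\bS\cup\{u\}$ and on sets obtained from it (for instance $S$ and $S\setminus\{u\}$ give the same independence answer once $G_{S\setminus\{u\}}$ is edge-free). So I will argue in terms of $\bT=\bS\setminus\{u\}$, which is distributed as $\Bin(V\setminus\{u\},p)$. I read the definition of $\mathcal S_u$ in the overview with $\bS$ replaced by this $\bT$: the probability that $u$ would be selected as a star from the set $\bT\cup\{u\}$. This matches the normalization in which \proc{sample-star-vertex-IS} outputs $u$ with probability $p\,\mathcal S_u$.

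\textbf{Probability: case analysis on $\bT$.} Let $\bT$ be fixed.
\begin{itemize}
\item If $G_{\bT}$ has an edge, the first query rejects. This agrees with the definition, which contributes zero on this event.
\item If $G_{\bT}$ is edge-free and $\bT\cap \mathrm N_u=\emptyset$, then $\bT\cup\{u\}$ is independent and the second query rejects. Again the contribution is zero.
\item Otherwise the edges of $G_{\bT\cup\{u\}}$ are exactly $\{uv : v\in \bT\cap\mathrm N_u\}$. Then \proc{extract-edge-IS} returns some edge $wz$ with $u\in\{w,z\}$.
\end{itemize}

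\textbf{Probability: the loneliness test.} In the last case, \proc{test-loneliness-IS}$(\bT,w,z)$ queries $\bT\cup\{z\}\setminus\{w\}$ and $\bT\cup\{w\}\setminus\{z\}$. Say $w=u$ and $z=v\in\bT$.
\begin{itemize}
\item The set $\bT\cup\{u\}\setminus\{v\}$ is independent iff $v$ is the only neighbor of $u$ in $\bT$.
\item The set $\bT\cup\{v\}\setminus\{u\}=\bT$ is always independent here.
\end{itemize}
So the test accepts iff $\abs{\bT\cap\mathrm N_u}=1$, and the symmetric case $z=u$ behaves the same way. Consequently the procedure accepts with probability $1/2$ exactly when $\abs{\bT\cap \mathrm N_u}=1$, and with probability $1$ when $\abs{\bT\cap\mathrm N_u}\ge 2$. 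Averaging over $\bT$ reproduces the two terms in the definition of $\mathcal S_u$.

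\textbf{Main obstacle.} The delicate step is the ``$u\in\bS$ versus $u\notin\bS$'' bookkeeping in the reduction above. It must hold uniformly, with the correct role of $S$ versus $S\setminus\{u\}$ in each query (for instance the first query $\oracleIS(S)$ when $u\in S$), together with the identification of the definitional $\bS$ with $\bT$. I would handle it by rewriting every query in terms of $\bT$ before running the case analysis.

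As a consistency check, the same decomposition should show that \proc{sample-star-vertex-IS} returns $u$ with probability $p$ times this quantity. The extra factor $p$ is $\Pr[u\in\bS]$, and the $1/2$ comes from the random swap when the induced subgraph has a single edge.
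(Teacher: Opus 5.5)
Your case analysis is right on the event $u\notin\bS$. The reduction to $\bT=\bS\setminus\{u\}$, however, rests on a false claim, and it fails exactly at the step you flag as the main obstacle. It is not true that $S$ and $S\setminus\{u\}$ get the same independence answer once $G_{S\setminus\{u\}}$ is edge-free. If $u\in S$ and $S\cap\mathrm{N}_u\neq\emptyset$, then $S\setminus\{u\}$ is independent while $S$ is not. So the first query $\oracleIS(S)$ of \textsc{Test-Starness-IS} does depend on whether $u\in\bS$. On the event $u\in\bS$ (probability $p=1/10\sqrt{\tilde m}$), the procedure as written always rejects. If $\bS$ contains any edge, the first query rejects. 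If $\bS$ is independent, then $\bS\cup\{u\}=\bS$ is independent and the second query rejects. Rewriting the first query in terms of $\bT$ gives $\oracleIS(\bT\cup\{u\})$, which rejects precisely in your third case, so your plan does not go through. Read literally, the acceptance probability is $(1-p)\,\mathcal S_u(\tilde m)$, not $\mathcal S_u(\tilde m)$.

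To prove the observation as stated, you must say explicitly that the first query is $\oracleIS(S\setminus\{u\})$, as its comment ``non-$u$ edges exist'' intends. Equivalently, you can take $\bS$ to be drawn over $V\setminus\{u\}$, as is done in \textsc{Test-High-Degree-IS} and \textsc{Sample-Neighbor-IS}. With that reading your argument is complete:
\begin{itemize}
\item The second query and \textsc{Extract-Edge-IS} see only $\bT\cup\{u\}$.
\item The extracted edge contains $u$, so the two loneliness queries are on $\bT$ and on $\bT\cup\{u\}\setminus\{v\}$, whether or not $u\in\bS$.
\item Your three cases then give exactly the overview's formula for $\mathcal S_u$. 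That formula already depends only on $\bS\setminus\{u\}$, so you do not need to reinterpret the definition.
\end{itemize}
The paper states this as an observation without proof. Its description of \textsc{Test-Starness-IS} as simulating the draw of $\bS\cup\{u\}$ glosses over the same point. The literal discrepancy is a factor $1-p$ that does not depend on $u$, so it would not affect uniformity downstream. It does, however, break the exact equality you are asked to prove. Your cost analysis is fine.
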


At this point we have all the required notations to state the exact behavior of the star-vertex sampler.

\begin{lemma}{sample-star-vertex-IS}
    Algorithm \refalg{sample-star-vertex-IS} (\proc*{sample-star-vertex-IS}) samples every vertex $u$ with probability $\frac{1}{10\sqrt{\tilde{m}}} \mathcal S_u(\tilde{m})$, and otherwise rejects. Also, the query complexity is $O(\log n)$ worst-case.
\end{lemma}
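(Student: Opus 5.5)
The plan mirrors the proof of Lemma \reflemma{sample-lone-edge-IS}: account for the query cost, then show that \proc*{sample-star-vertex-IS} run on a random $\bS$ is equivalent to first drawing $u\in\bS$ and then running \proc*{test-starness-IS} on the remaining vertices.

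\textbf{Complexity.} The algorithm makes at most three explicit IS queries. It also makes one call to \proc{extract-edge-IS} on $\bS\subseteq V$, which costs $O(1+\log n)$ queries by Lemma \reflemma{extract-edge-IS}. Hence the cost is $O(\log n)$ in the worst case.

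\textbf{Reducing to $u\in\bS$.} Fix a vertex $u$. The sampler returns only vertices of $\bS$, and whether $u\in\bS$ is independent of the rest of the set. So I write $\bS=\bS'\cup\{u\}$ with $\bS'\sim\Bin(V\setminus\{u\},p)$, and
\[
\Pr[\text{return }u]=p\cdot\Pr[\text{return }u \mid u\in\bS].
\]
It remains to show that, conditioned on $u\in\bS$, the sampler returns $u$ with exactly the probability that \proc{test-starness-IS}$(\bS';u)$ accepts. The factor $\mathcal S_u(\tilde m)$ is defined with $\bS\sim\Bin(V,p)$, but the test ignores whether $u$ is already present, so drawing from $V$ or from $V\setminus\{u\}$ gives the same acceptance probability. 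Both sides therefore equal the probability from Observation \refobs{starness-event-IS}.

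\textbf{Case analysis.} I condition on the edge set $E_{\bS}$ and split into cases.
\begin{itemize}
\item If $E_{\bS}=\emptyset$, or some edge of $E_{\bS}$ avoids $u$ (equivalently, $G_{\bS'}$ has an edge), then the sampler can never return $u$. Returning $w$ requires $\bS\setminus\{w\}$ to be independent, i.e.\ $w$ is incident to every edge, and the same holds for $z$. The test rejects in exactly these situations, through its first two queries.
\item Otherwise all edges of $E_{\bS}$ contain $u$, and I split by their number.
\begin{itemize}
\item If there are at least two edges, the extracted edge $\bz\bw$ contains $u$ and its other endpoint is not incident to all edges. After the random swap, whichever of $\bw,\bz$ equals $u$ passes its IS check and the other fails. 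So $u$ is returned with probability $1$, matching the test's "two or more edges" acceptance. Here \proc{test-loneliness-IS} rejects, since the other endpoint $z'$ leaves some edge in $S\cup\{u\}\setminus\{z'\}$.
\item If there is exactly one edge $uv$, both endpoints pass their IS checks, so the first after the swap, $\bw$, is returned. This is $u$ with probability exactly $1/2$, matching the test's accept-with-probability-$1/2$ branch, where loneliness holds.
\end{itemize}
\end{itemize}

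The main obstacle is this delicate matching of the branches. I must verify that the swap-then-\texttt{elif} structure gives exactly $1$ in the star case and exactly $1/2$ in the single-edge case. I also need \proc{test-loneliness-IS}$(S,w,z)$, called with $S=\bS'$ not containing $u$, to correctly separate the single-edge case from the multi-edge case. Checking this means looking at which of $w,z$ equals $u$ and confirming that the sets $S\cup\{z\}\setminus\{w\}$ and $S\cup\{w\}\setminus\{z\}$ probe the right subgraphs in both orientations.
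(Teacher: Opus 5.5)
Your proposal is correct and follows essentially the paper's argument. You condition on $u\in\bS$ to get the factor $p=1/10\sqrt{\tilde{m}}$, then observe that $u$ is returned with probability $1/2$ when $G_{\bS}$ has a single edge at $u$, and with probability $1$ when $G_{\bS}$ has at least two edges all at $u$; this is exactly $\mathcal S_u(\tilde{m})$, and your case analysis simply spells out the swap/IS-check bookkeeping the paper leaves implicit.

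One aside is inaccurate but does not affect the conclusion. \proc{test-starness-IS}$(S;u)$ does \emph{not} ignore whether $u\in S$: if $u\in S$ it always rejects, since then $\oracleIS(S)$ and $\oracleIS(S\cup\{u\})$ coincide. The correct bridge is that the defining formula of $\mathcal S_u$ depends only on $\bS\setminus\{u\}$, and your cases match that formula directly.
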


\begin{proof}
    For complexity, observe that there are three independent-set queries, in additional a single call to \proc{extract-edge-IS}, at the cost of $O(\log n)$ independent set queries at worst-case (Lemma \reflemma{extract-edge-IS}).

    For a vertex $u$, the probability to return it is:
    \begin{eqnarray*}
        \Pr\left[\text{return $u$}\right]
        &=& \Pr\left[u \in \bS\right] \left(\begin{array}{l}
            \frac{1}{2}\Pr\left[\text{single edge, adjacent to $u$} \cond u \in \bS\right] + \\
            \phantom{\frac{1}{2}} \Pr\left[\text{more than one edge, all adjacent to $u$} \cond u \in \bS\right]
            \end{array}\right)\\
        &=& \Pr\left[u \in \bS\right] \Pr\left[\proc{starness-event-IS}(G,n;\tilde{m},u)\right] \\
        &=& p \cdot \mathcal S_u(\tilde{m})\\
        &=& \frac{1}{10\sqrt{\tilde{m}}} \mathcal S_u(\tilde{m})
    \end{eqnarray*}
\end{proof}

\subsection{Factor lower bounds}
We show settings in which the factors are lower-bounded by a constant.

The lone-edge sampler favors low-degree vertices, as stated in the following lemma. To be applicable to the IS algorithm, the lemma statement also includes degrees slightly higher than the threshold between low and medium degrees.

\begin{lemma}{lbnd-loneliness}
    Recall that $k_2 = \sqrt{\tilde{m}}$, and assume that $\tilde{m} \ge e^{-1/10} m$. If $\deg(u) \le 2 k_2$ and $\deg(v) \le 2 k_2$, then $\mathcal L_{u,v}(\tilde{m}) \ge \frac{1}{2}$.
\end{lemma}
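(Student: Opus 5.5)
We have to show: with $\bS\sim\Bin(V,p)$ and $p=1/(10\sqrt{\tilde m})$, the probability that the pair $\{u,v\}$ is lonely in $\bS$ is at least $1/2$. By definition, the pair is lonely unless $G_{\bS\cup\{u,v\}}$ contains an edge other than $uv$. The plan is to bound the probability of this bad event by a union bound over the edges that could witness it, using only the independence of vertex memberships. Since $u,v$ are forced into the set, we condition on $u,v\in\bS$, which does not affect the other vertices. Any edge $e'\neq uv$ present in $G_{\bS\cup\{u,v\}}$ has one of three types:
\begin{enumerate}
\item $e'$ is incident to $u$, with other endpoint $w\notin\{u,v\}$ and $w\in\bS$;
\item $e'$ is incident to $v$, handled symmetrically;
\item $e'$ avoids both $u$ and $v$, and both of its endpoints lie in $\bS$.
\end{enumerate}

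\textbf{Type (1) and (2) edges.} For type (1), the union bound gives probability at most $p\deg(u)\le p\cdot 2k_2=2\sqrt{\tilde m}/(10\sqrt{\tilde m})=1/5$. Type (2) gives the same bound of $1/5$.

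\textbf{Type (3) edges.} The union bound gives at most $p^2 m = m/(100\tilde m)\le e^{1/10}/100$, using the assumption $\tilde m\ge e^{-1/10}m$.

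\textbf{Combining.} The total failure probability is at most $2/5+e^{1/10}/100<0.42<1/2$. Hence $\mathcal L_{u,v}(\tilde m)\ge 1/2$.

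The only point needing care is that loneliness is defined via $S\cup\{u\}\setminus\{v\}$ and $S\cup\{v\}\setminus\{u\}$. Together these cover exactly the edges of $G_{S\cup\{u,v\}}$ other than $uv$, so the three-way split above is exhaustive. There is no real obstacle: the constants $1/10$ in $p$ and the factor $2$ in the degree bound leave ample slack.
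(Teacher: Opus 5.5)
Your proof is correct and follows the paper's argument: the same three-way split of bad edges ($u$-neighbors other than $v$, $v$-neighbors other than $u$, edges avoiding both), the same bounds $1/5$, $1/5$ and $e^{1/10}/100$, and the same combination step. The paper phrases that step as linearity of expectation on the number of bad edges followed by Markov's inequality, which is equivalent to your union bound and gives a total below $1/2$.
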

\begin{proof}
    The loneliness event is the negation of the union of the following events:
    \begin{itemize}
        \item $\bS$ has a $u$-neighbor which is not $v$.
        \item $\bS$ has a $v$-neighbor which is not $u$.
        \item $\bS$ has an edge adjacent to neither $u$ nor $v$.
    \end{itemize}

    The probability to have a $u$-neighbor (which is not $v$) is bounded by $p \cdot (\deg(u) - 1) \le p \deg(u) \le (1/10\sqrt{\tilde{m}})\cdot 2\sqrt{\tilde{m}} = 1/5$. Analogously, this is also a bound for the probability to have a $v$-neighbor (which is not $u$).

    For every edge that is adjacent to neither $u$ nor $v$, the probability that both its vertices belong to $\bS$ is $p^2 = 1 / 100\tilde{m}$. There are at most $m$ such edges, and therefore, the expected number of these edges in $\bS$ is bounded by $\frac{1}{100}(m / \tilde{m}) \le e^{1/10} / 100$.

    By linearity of expectation, the expected number of ``bad'' edges in $\bS$ is bounded by $\frac{1}{5} + \frac{1}{5} + \frac{e^{1/10}}{100} < \frac{1}{2}$. By Markov's inequality, the probability to have ``bad'' edges is at least $1/2$.
\end{proof}

The star-vertex sampler favors medium-degree and high-degree vertices, as stated in the following lemma.

\begin{lemma}{lbnd-starness}
    Recall that $k_2 = \sqrt{\tilde{m}}$, and assume that $\tilde{m} \ge \max\{4, e^{-1/10} m\}$. If $\deg(u) \ge k_2$ then $\mathcal S_u(\tilde{m}) \ge 1/30$.
\end{lemma}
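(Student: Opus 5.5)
We prove that $\mathcal S_u(\tilde{m}) \ge 1/30$ by bounding the starness factor from below by a single event that is easy to analyze. By definition, $\mathcal S_u(\tilde m)$ is at least $\frac12\Pr[\mathcal E]$, where $\mathcal E$ is the event that $G_{\bS\setminus\{u\}}$ has no edges and $\abs{\bS\cap \mathrm N_u}\ge 1$. Indeed, when the intersection has size $1$ the weight is $1/2$, and when it has size at least $2$ the weight is $1\ge 1/2$. It therefore suffices to show $\Pr[\mathcal E]\ge 1/15$. Here $\bS\sim\Bin(V,p)$ with $p=1/10\sqrt{\tilde m}$, and membership of $u$ itself is irrelevant to $\mathcal E$.

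\textbf{Step 1: $\bS$ hits a neighbor of $u$ with constant probability.} To avoid degree-dependent correlations, we fix a subset $N'\subseteq \mathrm N_u$ of exactly $d'=\ceil{k_2}$ neighbors; this is possible since $\deg(u)\ge k_2$, so $d'\le\deg(u)$. Define $\mathcal A$ as the event that $\abs{\bS\cap N'}=1$. Then
\[
\Pr[\mathcal A]=d'p(1-p)^{d'-1}.
\]
We have $d'p\ge 1/10$. Also $d'\le \sqrt{\tilde m}+1$, and the assumption $\tilde m\ge 4$ gives $\sqrt{\tilde m}\ge 2$, so $(d'-1)p\le 1/10$ and $(1-p)^{d'-1}\ge 1-1/10$. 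This is exactly where $\tilde m\ge4$ is used, namely to control the ceiling. Hence $\Pr[\mathcal A]\ge 0.09$.

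\textbf{Step 2: conditioned on $\mathcal A$, no bad edges appear.} Let $\{w\}=\bS\cap N'$. On $\mathcal A$, the event $\mathcal E$ holds unless one of the following occurs:
\begin{itemize}
\item some edge with both endpoints in $\bS\setminus(\{u\}\cup N')$ is present;
\item some edge from $w$ to $\bS\setminus\{u\}$ is present.
\end{itemize}
Edges from $w$ to other vertices of $N'$ cannot occur, since on $\mathcal A$ no other vertex of $N'$ is in $\bS$.

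We bound the two bad events by conditioning on $\bS\cap N'=\{w\}$. The remaining vertices are still independent with density $p$.
\begin{itemize}
\item The expected number of edges of the first kind is at most $p^2m\le e^{1/10}/100$, using $\tilde m\ge e^{-1/10}m$.
\item For edges of the second kind, the expected number is $p\deg(w)$, which could be large. This is the main obstacle: we need $\mathbb E[p\deg(\bw)\mid \mathcal A]$ to be small, averaged over $\bw$ uniform in $N'$.
\end{itemize}

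\textbf{The main obstacle and how we handle it.} To control the second kind of edge, we choose $N'$ to consist of the $d'$ neighbors of $u$ of \emph{smallest} degree. Since $\sum_{w}\deg(w)\le 2m$, the number of neighbors of degree above $t$ is at most $2m/t$. Taking $t=c\sqrt{\tilde m}$ with $c$ large enough, there are fewer than $\deg(u)-d'$ such vertices, provided $\deg(u)$ is large compared to $k_2$. In that case every $w\in N'$ has $\deg(w)\le t$, so $p\deg(w)\le c/10$.

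This argument may fail when $\deg(u)$ is close to $k_2$. There, instead, we average directly:
\[
\frac{1}{d'}\sum_{w\in N'} p\deg(w)\le \frac{p\cdot 2m}{d'}\le \frac{2m}{10\tilde m}\le 0.23,
\]
using $d'\ge\sqrt{\tilde m}$. Combined with the first kind of edge, the conditional expected number of bad edges is at most roughly $0.24$. By Markov's inequality, $\Pr[\mathcal E\mid\mathcal A]\ge 0.76$, and therefore
\[
\Pr[\mathcal E]\ge 0.09\cdot 0.76 \ge 1/15.
\]
Since the averaging bound already works uniformly in $\deg(u)$, the smallest-degree selection is not actually needed; we keep it only as a fallback. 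This yields $\mathcal S_u(\tilde m)\ge \frac12\cdot\frac1{15}=1/30$.
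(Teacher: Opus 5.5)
Your proof is correct, but it takes a more roundabout route than the paper's.

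The paper never conditions. It writes the complement of the starness event as the union of three events: (i) some edge of $G_{\mathbf{S}\setminus\{u\}}$ is present, (ii) no neighbor of $u$ is drawn, and (iii) exactly one neighbor is drawn and the tie-breaking bit goes against $u$. It bounds (i) unconditionally by $p^2m\le e^{1/10}/100$. It computes (ii)+(iii) $=(1-p)^{d}+\frac12dp(1-p)^{d-1}$ with $d=\deg(u)$, asserts that this is decreasing in $d$, and evaluates it at $d=\sqrt{\tilde m}$ to get at most $0.953$. This is where $\tilde m\ge 4$ enters, via $p\le 1/20$.

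Your truncation to exactly $\lceil k_2\rceil$ neighbors is a clean substitute for that monotonicity claim, which the paper states without proof. However, conditioning on a single hit $w$ is what creates your ``main obstacle''. Given $w\in\mathbf{S}$, edges at $w$ are no longer rare, so you need the averaging bound $\frac{p}{d'}\sum_{w\in N'}\deg(w)\le 2mp/d'$, and the final numbers leave little slack (about $0.034$ against $1/30$). Without conditioning the obstacle disappears even inside your own framework:
\[
\Pr[\mathcal E]\ge\Pr[\mathbf{S}\cap N'\neq\emptyset]-p^2m\ge 1-e^{-1/10}-e^{1/10}/100>0.084 .
\]
Hence $\mathcal S_u(\tilde m)\ge 0.042$, with no degree averaging at all.

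Two minor points. First, the step you attribute to $\tilde m\ge 4$ does not use it, since $\lceil x\rceil-1<x$ already gives $(d'-1)p<1/10$. Second, the paragraph about choosing the smallest-degree neighbors with threshold $t=c\sqrt{\tilde m}$ is not rigorous as written. As you note yourself, it is also unnecessary, so it is better dropped than kept as a ``fallback''.
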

\begin{proof}
    The starness event is the negation of the union of the following three events:
    \begin{itemize}
        \item $\bS$ has an edge non-adjacent to $u$.
        \item $\bS$ has no $u$-neighbors.
        \item $\bS$ has a single $u$-neighbor, and an additional uniform bit is $1$.
    \end{itemize}
    (Note that the ``uniform bit'' represents the symmetry described in the overview in the case where there is only one edge)

    The number of edges not-adjacent to $u$ is at most $m$, and each of them belongs to $\bS$ with probability $p^2 = 1/100\tilde{m}$. The expected number of these ``bad'' edges is bounded by $p^2 m = (1/100)(m / \tilde{m}) \le e^{1/10} / 100$. By Markov's inequality, this is also an upper bound for the probability to have any of these edges in $\bS$.

    For the other two events, we use an explicit bound:
    \begin{eqnarray*}
        \Pr\left[\Bin(\deg(u), p) = 0\right] + \frac{1}{2}\Pr\left[\Bin(\deg(u), p) = 1\right]
        = (1 - p)^{\deg(u)} + \frac{1}{2}p \deg(u) (1 - p)^{\deg(u) - 1}
    \end{eqnarray*}

    Since the function $x \to (1-p)^x + \frac{1}{2}px(1-p)^{x-1}$ is decreasing monotone, we can use $\deg(u) \ge \sqrt{\tilde{m}}$ to obtain:
    \begin{eqnarray*}
        [\cdots] \le (1 - p)^{\sqrt{\tilde{m}}} + \frac{1}{2}p \sqrt{\tilde{m}} (1 - p)^{\sqrt{\tilde{m}} - 1}
        &=& (1 - p)^{\sqrt{\tilde{m}}} \left(1 + \frac{1}{2(1 - p)}p \sqrt{\tilde{m}}\right) \\
        &\le& e^{-1/10} \left(1 + \frac{1}{20(1 - p)}\right) \\
        \text{[Since $\tilde{m} \ge 4 \to p \le 1/20$]} &\le& e^{-1/10} \left(1 + \frac{1}{20(1 - 1/20)}\right)
        \le 0.953
    \end{eqnarray*}

    Therefore, $\mathcal S_u(\tilde{m}) \ge 1 - e^{1/10}/100 - 0.953 \ge 1 / 30$.
\end{proof}

\section{Elementary procedures}
\label{sec:common}

In this section we describe elementary procedures which we use in our algorithms. This section is mostly provided for technical completeness (since the front-end interface is already stated in the overview), and has a few to do with our main contribution.

\subsection{Estimating the inverse of an indicator}
Algorithm \refalg{estimate-indicator-inv} (\proc*{estimate-indicator-inv}) gets a black-box $\mathcal A$ for drawing an indicator whose expected value is denoted by $p_\mathcal A$, an accuracy parameter $\eps$ and a saturation parameter $\rho$. Its output is $1$ with probability $\min\{X \cdot \rho/\ln(\paperZcoefZestZindZinv/\eps), 1\}$ (and $0$ otherwise), where $X \sim \Geo(p_\mathcal A)$ is obtained by repeatedly calling the black-box procedure $\mathcal A$ until success.

\begin{proc-algo}{estimate-indicator-inv}{\mathcal A, \eps, \rho}
    \begin{code}
        \algitem Let $C \gets \ln (\paperZcoefZestZindZinv/\eps)$.
        \algitem Let $N_\mathrm{max} \gets \floor{C/\rho}$.
        \algitem Set $\bN \gets 0$.
        \begin{While}{$N < N_\mathrm{max}$}
            \algitem Set $N \gets N + 1$.
            \algitem Call $\mathcal A$, giving $\mathit{ans}$.
            \begin{If}{$\mathit{ans}$ is \accept}
                \algitem Break loop.
            \end{If}
        \end{While}
        \algitem Return a bit distributing like $\Ber(\bN \cdot (\rho / C))$.
    \end{code}
\end{proc-algo}

\repprovelemma{estimate-indicator-inv}
\begin{proof}
    Let $X \sim \Geo(p_\mathcal A)$. Clearly, $N$ distributes as $\min\{X, N_\mathrm{max}\}$. Since $N_\mathrm{max} \le C/\rho$, the output is always bounded between $0$ and $1$.
    
    The expected value of $N$ is:
    \begin{eqnarray*}
        \E[N]
        = \E\left[\min\{X, N_\mathrm{max}\}\right]
        &=& \E[X] - \Pr\left[X \ge N_\mathrm{max} + 1\right]\E\left[X - N_\mathrm{max} \cond X \ge N_\mathrm{max} + 1 \right] \\
        (*) &=& \E[X] - \Pr\left[X \ge N_\mathrm{max} + 1 \right]\E\left[X\right] \\
        &=& \left(1 - \Pr\left[X \ge N_\mathrm{max} + 1\right]\right) \E[X]
    \end{eqnarray*}
    Where the $(*)$-transition is correct since geometric variable is memoryless.

    If $p_\mathcal A \ge \rho$, then:
    \begin{eqnarray*}
        0
        \le \Pr\left[X \ge N_\mathrm{max} + 1\right]
        &=& (1 - p_\mathcal A)^{N_\mathrm{max}} \\
        &\le& e^{-p_\mathcal A \floor{C/\rho}} \\
        \text{[Since $p_{\mathcal A} \ge \rho$]} &\le& e^{-p_{\mathcal A} (C/p_{\mathcal A} - 1)}
        = e^{-C + p_\mathcal A}
        \le e^{-\ln (\paperZcoefZestZindZinv/\eps) + 1}
        = \frac{e}{\paperZcoefZestZindZinv}\eps
        \le 1 - e^{-\eps}
    \end{eqnarray*}

    Therefore, if $p_\mathcal A \ge \rho$, then:
    \begin{eqnarray*}
        \E[\text{output}]
        = e^{\pm \eps} \E[X] \cdot \frac{\rho}{C}
        = e^{\pm \eps} \cdot \frac{1}{p_\mathcal A} \cdot \frac{\rho}{C}
        = e^{\pm \eps} \cdot \frac{\rho}{p_\mathcal A \ln (\paperZcoefZestZindZinv/\eps)}
    \end{eqnarray*}

    The expected cost is bounded by $\E[X] = O(1/p_\mathcal A)$. The worst-case cost is $N_\mathrm{max} = O(C/\rho) = O(\log \eps^{-1} / \rho)$.
\end{proof}

\subsection{Extracting an edge}

Here we provide pseudocode and proof for the \proc*{extract-edge-IS} procedure. Note that this tool appears in \cite{beame20} (implicit) and in \cite{clw20} (explicit randomized version). For self containment, we provide it as Algorithm \refalg{extract-edge-IS} and then prove its correctness as an implementation for Lemma \reflemma{extract-edge-IS}.

Given a non-independent set $S$, the algorithm repeatedly chooses a proper non-independent subset of $S$ until reaching a non-independent set of exactly two vertices, which explicitly describes an edge.

\begin{proc-algo}{extract-edge-IS}{G;S}
    \alginput{A graph $G$, a vertex set $S$}
    \algoutput{An arbitrary edge between two $S$-vertices or \reject if $S$ is independent}
    \algcomplexity{$O(\log \abs{S})$}
    \begin{code}
        \begin{If}{$\abs{S} \le 1$}
            \algitem Return \reject.
        \end{If}
        \begin{If}{$\oracleIS(S)$ accepts}
            \algitem Return \reject.
        \end{If}
        \algitem Let $S_1 \cup S_2$ be an arbitrary partition of $S$ of sizes $\ceil{\abs{S}/2}$ and $\floor{\abs{S}/2}$.
        \begin{If}{$\oracleIS(S_1)$ rejects}
            \algitem Return $\proc*{extract-edge-IS}(G;S_1)$.
        \end{If}
        \begin{If}{$\oracleIS(S_2)$ rejects}
            \algitem Return $\proc*{extract-edge-IS}(G;S_2)$.
        \end{If}
        \begin{While}{$\abs{S_1} > 1$}
            \algitem Let $S_{11} \cup S_{12}$ be an arbitrary partition of $S_1$ of sizes $\ceil{\abs{S_1}/2}$ and $\floor{\abs{S_1}/2}$.
            \begin{If}{$\oracleIS(S_{11} \cup S_2)$ accepts}
                \algitem Set $S_1 \gets S_{12}$.
            \end{If}
            \begin{Else}
                \algitem Set $S_1 \gets S_{11}$.
            \end{Else}
        \end{While}
        \begin{While}{$\abs{S_2} > 1$}
            \algitem Let $S_{21} \cup S_{22}$ be an arbitrary partition of $S_2$ of sizes $\ceil{\abs{S_2}/2}$ and $\floor{\abs{S_2}/2}$.
            \begin{If}{$\oracleIS(S_1 \cup S_{22})$ accepts}
                \algitem Set $S_2 \gets S_{21}$.
            \end{If}
            \begin{Else}
                \algitem Set $S_2 \gets S_{22}$.
            \end{Else}
        \end{While}
        \algitem Let $u \in S_1$, $v \in S_2$.
        \algitem Return $uv$.
    \end{code}
\end{proc-algo}

We recall Lemma \reflemma{extract-edge-IS} and prove it.

\repprovelemma{extract-edge-IS}

\begin{proof}
    For correctness: if $\abs{S} \le 1$, then there is no edge to extract. For a partition $S_1 \cup S_2$, there are three cases (possibly overlapping): there is an edge in $G_{S_1}$, there is an edge in $G_{S_2}$, there is an edge crossing the cut $S_1 \cup S_2$. In the first two cases, the (tail-)recursive call is correct. If only the third case holds, then we deduce that both $S_1$ and $S_2$ are independent, but there exists an edge between them. In every additional step, we partition $S_1$ (or $S_2$) and focus on a subset that has an edge to $S_2$ (or $S_1$), until both $S_1$ and $S_2$ are singletons that have an edge between them.
    
    For complexity: observe that the sequence $n_{i+1} = \ceil{n_i / 2}$ converges to $1$ after $O(\log n)$ steps (for an integer $n_0 \ge 1$). Therefore, after $O(\log n)$ recursive steps, the complexity of the loop-part of the algorithm algorithm is $O(\log \abs{S_1} + \log \abs{S_2}) = O(\log S)$.
\end{proof}

\subsection{Brute-force sampler}
\defproc{sample-edge-bruteforce-IS}{Sample-Edge-Bruteforce-IS}
Some of our category-specific samplers assume that $\tilde{m}$ is sufficiently large. Therefore, we have to handle separately the case where there are only a few edges in the graph ($m = O(1)$).

We recall the following lemma:
\begin{lemma}[Exact count in \cite{beame20}, \textsf{Enumerate-Edges} in \cite{ahl26}]{enumerate-edges}
    There exists a deterministic algorithm that enumerates all edges in the given graph at the cost of $O(1 + m \log n)$ independent-set queries.
\end{lemma}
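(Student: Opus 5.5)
The plan is the standard binary-search-and-peel approach of \cite{beame20}, using \proc{extract-edge-IS} (Lemma \reflemma{extract-edge-IS}) as the basic step. We keep a current set of \emph{forbidden} edges $F$, initially empty. We also keep a decomposition of $V$ into vertex sets that are guaranteed to be independent in $G$ minus the edges already found. The difficulty is that an IS query cannot ``ignore'' edges we already know about: a query on $S$ rejects whenever $S$ contains any edge, old or new. So we cannot simply call \proc{extract-edge-IS} on $V$ repeatedly. Instead, we work with sets that contain no known edge.

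\textbf{Partition.} We process vertices in order $v_1,\dots,v_n$, maintaining a partition of $\{v_1,\dots,v_{i-1}\}$ into classes $C_1,\dots,C_t$, each independent in $G$. This is a greedy proper colouring. When $v_i$ arrives, we try to place it into some class. For each class $C_j$ we query $\oracleIS(C_j\cup\{v_i\})$:
\begin{itemize}
\item If the query accepts, $v_i$ has no neighbour in $C_j$.
\item If it rejects, the set $C_j\cup\{v_i\}$ is non-independent and every edge in it is incident to $v_i$. We then find all such neighbours by repeated binary search: run \proc{extract-edge-IS}-style halving on $C_j$, querying $\{v_i\}\cup$(half), and output an edge $v_iw$. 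Then remove $w$ from a working copy of $C_j$ and repeat until $\oracleIS(\text{working copy}\cup\{v_i\})$ accepts.
\end{itemize}
Each discovered edge costs $O(\log n)$ queries, and each class yields at most one ``wasted'' accepting query per vertex. We place $v_i$ into the first class where the query accepts, or open a new class.

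\textbf{Counting queries.} Every edge is found exactly once, namely when its later endpoint is processed, at cost $O(\log n)$. This gives $O(m\log n)$ in total. The remaining cost is the accepting (edge-free) queries, and these are the main obstacle: naively there are $n\cdot t$ of them, which is far too many when $m$ is small.

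\textbf{Handling the obstacle.} We avoid per-class queries. First, one query $\oracleIS(V)$ detects $m=0$. In general, the number of classes satisfies $t=O(\sqrt m+1)$ for a greedy colouring in a graph with $m$ edges. That bound is still not enough when the $n$ term dominates, so we batch instead. Process $v_i$ against the union $U_i=\{v_1,\dots,v_{i-1}\}$ restricted to its independent classes: query $\oracleIS(C_j\cup\{v_i\})$ only for classes that actually contain a neighbour. To locate those classes, we binary-search over the classes themselves, treating $\bigcup_{j\in J}C_j$ as a block. However, a union of classes is not independent.

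The cleaner fix, which I would commit to, is the approach of \cite{beame20}. Since $G$ minus the known edges is unavailable, we instead keep only \emph{one} maintained independent set $I$, the union of isolated-so-far vertices. Vertices touching a discovered edge are set aside into a list $B$ with $\abs{B}\le 2m$. For a new vertex $v$ we make one query on $I\cup\{v\}$; on rejection we binary-search inside $I$, costing $O(\log n)$ per edge. Edges between $v$ and $B$ are found by querying $\{v,b\}$ pairs in batches via halving on independent subsets of $B$, which we maintain as the greedy colouring of $B$. That colouring has $O(\sqrt m)$ classes, and only vertices of $B$ are tested against them, giving a cost of $O(m\log n)$.

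Overall the bound is $O(1+n+m\log n)$. Matching the stated $O(1+m\log n)$ exactly needs the initial $n$ single-vertex steps to be absorbed. I expect this to work by halving $V$ recursively and discarding independent halves in one query each, which costs $O(1+m\log n)$ because only halves containing an edge are expanded. This recursive pruning is the step I am least sure about, and I would try it first.
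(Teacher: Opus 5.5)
Your proposal has two genuine gaps: the algorithm you commit to does not achieve the cost you claim, and the step meant to remove the $n$ term is incorrect as stated. (The paper gives no proof of this lemma; it imports it from the cited works, so your argument has to stand on its own.)

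First, the ``cleaner fix''. A newly processed vertex $v$ must be tested against every colour class of $B$. A query that includes a known edge of $B$ always rejects, so a useful query can contain only an independent part of $B$. Certifying that $v$ has no $B$-neighbour therefore costs one query $\oracleIS(I\cup C_j\cup\{v\})$ per class $C_j$. Your remark that ``only vertices of $B$ are tested against them'' presupposes that you already know $v$ will land in $B$, which is exactly what is being tested. Suppose the first $\approx\sqrt{2m}$ vertices in your order form a clique and the rest are isolated. Then every later vertex pays $\approx\sqrt{2m}$ accepting queries, for $\Theta(n\sqrt m)$ in total. This is the same $n\cdot t$ obstacle you identified two paragraphs earlier, not $O(n+m\log n)$.

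Second, the recursive halving that ``discards independent halves'' loses every edge crossing between the two halves. For example, a perfect matching between $S_1$ and $S_2$ has both halves independent, so nothing is found. In addition, any vertex-by-vertex scan already spends $\Omega(n)$ queries, so it cannot be patched into $O(1+m\log n)$ when $m\ll n/\log n$.

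The missing idea is a divide-and-conquer that learns the cross edges only after both halves are fully known. To learn $G[S]$:
\begin{itemize}
\item Query $S$ and stop if it is independent.
\item Otherwise split $S=S_1\cup S_2$ into halves and learn $G[S_1]$ and $G[S_2]$ recursively.
\item All edges inside each $S_i$ are now known, so you can colour $S_i$ greedily \emph{offline}, with no queries, into $c_i\le\sqrt{2m_i}+1$ independent classes, where $m_i$ is the number of edges inside $S_i$.
\item Take classes $C\subseteq S_1$, $D\subseteq S_2$ and subsets $C'\subseteq C$, $D'\subseteq D$. The set $C'\cup D'$ is non-independent iff there is a $C'$--$D'$ edge. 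Halving both sides and recursing only into rejecting pairs finds all $C$--$D$ edges with $O(1+e(C,D)\log n)$ queries.
\end{itemize}
For the accounting:
\begin{itemize}
\item The per-node overhead is $c_1c_2=O(1+m_1+m_2)$.
\item On each of the $O(\log n)$ levels, at most $m$ sets are non-independent and the $m_i$ sum to at most $m$.
\item Each edge is a cross edge at exactly one node.
\end{itemize}
Summing gives $O(1+m\log n)$.
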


The brute-force algorithm enumerates all edges and then uniformly chooses one of them.
\begin{observation}{sample-edge-bruteforce-IS}
    There exists an algorithm $\proc*{sample-edge-bruteforce-IS}(G)$ that uniformly samples an edge in a given graph $G$ at the cost of $O(1 + m \log n)$ independent-set queries.
\end{observation}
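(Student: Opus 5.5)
The plan is to combine Lemma \reflemma{enumerate-edges} with a uniform choice, taking care that the cost is charged correctly. The algorithm $\proc*{sample-edge-bruteforce-IS}(G)$ first calls the enumeration procedure of Lemma \reflemma{enumerate-edges}, which deterministically produces the full list $e_1,\dots,e_m$ of edges of $G$ using $O(1 + m\log n)$ independent-set queries. It then draws an index $\bi$ uniformly from $\{1,\dots,m\}$ using internal randomness only, and returns $e_{\bi}$. If the list is empty ($m=0$), it returns \reject. This case is excluded anyway by the definition of an edge-sampling algorithm, which only concerns graphs with at least one edge.

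For correctness, the enumeration is exact and deterministic, and each edge appears exactly once in its output. Hence for every $e\in E$ we get $\Pr[\text{output}=e]=1/m$. So the output is exactly uniform, in particular $0$-uniform in the sense of Definition \refdef{eps-uniform-distribution}, and the success probability is $1$ whenever $m\ge 1$.

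For complexity, the uniform choice makes no queries, so the total is exactly the $O(1+m\log n)$ queries of the enumeration. The only point needing care is that, as stated, the enumeration outputs a \emph{set}. If instead it might list an edge more than once (for instance as both $uv$ and $vu$), I would deduplicate the list by normalizing each pair as unordered before sampling. This step uses no queries and preserves exact uniformity.

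I expect no real obstacle. The only subtlety is that the bound depends on the unknown $m$. Since the enumeration is a deterministic procedure whose cost is bounded in terms of the actual $m$, the statement holds as written without any knowledge of $m$. Later uses can invoke the sampler only when $\tilde{\bm}=O(1)$, which makes the cost $O(\log n)$ with high probability.
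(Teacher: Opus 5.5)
Your proposal is correct and follows the paper's argument: the paper also enumerates all edges with the $O(1+m\log n)$-query enumeration procedure and then picks one uniformly using internal randomness, which costs no further queries. Your remarks on the $m=0$ case and on deduplicating the list are harmless extra care beyond what the paper writes.
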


\section{Using both IS and local queries}
\label{sec:IS-LOCAL}

In this section we provide the hybrid estimation algorithm. Conceptually, it can be seen as a mixture of the IS algorithm (in Section \ref{sec:IS-only}) and the local algorithm (see \cite{er18}).

For a small error $r = \min\{\eps^2 / 6, 1/n^{1/3} \log n \log^2 (1/\eps) \}$, we first estimate $\tilde{m} = e^{\pm 1/10} m$ with probability $1-r$. Since we can use both independent-set queries and local queries, the cost of this preprocess is $O(\min\{\sqrt{m}, \sqrt{n/\sqrt{m}}\} \cdot \log n \log (1/r))$ \cite{ahl26}. We use this small error to bound the contribution of the wrong-$\tilde{m}$ case to the sampling probability of each individual edge.

We recall the threshold degrees:
\begin{itemize}
    \item $k_1 = 3 \sqrt{\tilde{m} \sqrt{\tilde{m}} / n}$.
    \item $k_2 = \sqrt{\tilde{m}}$.
    \item $k_3 = \sqrt{n\sqrt{\tilde{m}}}$.
\end{itemize}
Note that $k_2 \le k_3$ (if we enforce the trivial bound $\tilde{m} \le n^2$), but the inequality $k_1 \le k_2$ does not always hold. We classify degrees as \emph{low} ($0 \le \deg(u) \le k_2$), \emph{medium} ($k_2 < \deg(u) \le k_3$) and \emph{high} ($\deg(u) > k_3$).

Using L, M, H to indicate the degree categories, our edge categories are:
\begin{itemize}
    \item L-L: between vertices of degree at most $k_2$ (low).
    \item L-M: between vertices of degree at most $k_2$ (low) and vertices of degree between $k_2$ and $k_3$ (medium).
    \item L-H: between vertices of degree at most $k_2$ (low) and vertices of degree greater than $k_2$ (high).
    \item MH-MH: between vertices of degree greater than $k_2$ (medium+high).
\end{itemize}
When $\tilde{m} \le n^{2/3}$, which corresponds to the anomaly case in \cite{ahl26}, we use an alternative logic that treats the categories L-M and L-H together as an ``L-MH'' category.

Note that, using the degree oracle, we can certainly determine the degree category of a given vertex.

\subsection{Sampling L-L edges}
\defproc{sample-LL-edge-hyb}{Sample-L-L-Edge-Hybrid}
Algorithm \refalg{sample-LL-edge-hyb} (\proc*{sample-LL-edge-hyb}) merely samples a lone edge, verifies that its vertices have low degree and normalizes the probability to return each edge by canceling the loneliness factor.

\begin{proc-algo}{sample-LL-edge-hyb}{G,n,\eps;\tilde{m}}
    \alginput{$\tilde{m} \ge e^{-1/10} m$}
    \algoutput{For every edge $e \in E_\mathrm{L,L}$, the probability to return $e$ is $\frac{e^{\pm \eps}}{200\tilde{m} \ln (\paperZcoefZestZindZinv/\eps)}$}
    \algcomplexity{$O(\log n)$ (expected)}
    \algcomplexity{$O(\log n + \log \eps^{-1})$ (worst-case)}
    \begin{code}
        \algitem Let $k_2 \gets \sqrt{\tilde{m}}$.
        \algitem Let $\bu\bv \gets \proc{sample-lone-edge-IS}(G,n;\tilde{m})$.
        \begin{If}{$\bu\bv$ is \reject}
            \algitem Return \reject.
        \end{If}
        \begin{If}{$\oracleDEG(\bu) > k_2$ or $\oracleDEG(\bv) > k_2$}
            \algitem Return \reject.
        \end{If}
        \algitem Filter by $\proc{estimate-indicator-inv}(\proc{loneliness-event-IS}(G,n;\tilde{m},\bu,\bv), \eps, 1/2)$.
        \algitem Return $\bu\bv$.
    \end{code}
\end{proc-algo}

\begin{lemma}{sample-LL-edge-hyb}
    If $\tilde{m} \ge e^{-1/10}$, then Algorithm \refalg{sample-LL-edge-hyb} (\proc*{sample-LL-edge-hyb}) samples every edge $e \in E_{L,L}$ with probability $e^{\pm \eps}/{200 \tilde{m} \ln (\paperZcoefZestZindZinv/\eps)}$, and otherwise rejects. Also, regardless of $\tilde{m}$, the query complexity is $O(\log n)$ in expectation and $O(\log n + \log \eps^{-1})$ in worst-case.
\end{lemma}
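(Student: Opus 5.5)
The proof composes the exact sampling probability of the lone-edge sampler with the inversion guarantee of \proc{estimate-indicator-inv}. I read the hypothesis as $\tilde{m} \ge e^{-1/10} m$, as in the algorithm's input specification. Throughout, fix an edge $e = uv \in E_{\mathrm{L,L}}$.

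\textbf{Step 1: the three filters.} The algorithm returns $e$ exactly when three things happen.
\begin{itemize}
\item \proc{sample-lone-edge-IS} outputs $uv$. By Lemma \reflemma{sample-lone-edge-IS} this has probability exactly $\frac{1}{100\tilde{m}}\mathcal L_{u,v}(\tilde{m})$.
\item The degree check passes. Since both endpoints have degree at most $k_2$, this happens with probability $1$.
\item The final filter accepts. It calls \proc{estimate-indicator-inv} with black box \proc{loneliness-event-IS}$(G,n;\tilde{m},u,v)$. It uses fresh randomness, independent of the first call, given the fixed pair $u,v$.
\end{itemize}
The output $uv$ is an unordered edge. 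If the extraction could return it in either orientation, the probabilities add up to the stated exact value, and the filter is symmetric in $u,v$, so orientation does not matter.

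\textbf{Step 2: the inversion.} By Observation \refobs{loneliness-event-IS}, the black box accepts with probability $p_{\mathcal A} = \mathcal L_{u,v}(\tilde{m})$. Since $\deg(u),\deg(v)\le k_2\le 2k_2$ and $\tilde{m}\ge e^{-1/10}m$, Lemma \reflemma{lbnd-loneliness} gives $p_{\mathcal A}\ge 1/2 = \rho$. Lemma \reflemma{estimate-indicator-inv} then says the filter accepts with probability
\[
e^{\pm\eps}\,\frac{1/2}{\mathcal L_{u,v}(\tilde{m})\ln(\paperZcoefZestZindZinv/\eps)}.
\]
Multiplying by the probability from Step 1, the factor $\mathcal L_{u,v}(\tilde{m})$ cancels. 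This leaves $e^{\pm\eps}/(200\tilde{m}\ln(\paperZcoefZestZindZinv/\eps))$, as claimed.

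Every other outcome of the algorithm is either \reject or a returned edge. Any returned edge passed the degree check, so it lies in $E_{\mathrm{L,L}}$. This gives the ``otherwise rejects'' part. The only substantive step is verifying the hypothesis $p_{\mathcal A}\ge\rho$, which Lemma \reflemma{lbnd-loneliness} supplies.

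\textbf{Step 3: complexity, for arbitrary $\tilde{m}$.}
\begin{itemize}
\item \proc{sample-lone-edge-IS} costs $O(\log n)$ worst-case.
\item The degree check costs two degree queries.
\item Each black-box call costs $O(1)$ worst-case.
\end{itemize}
With $\rho=1/2$, Lemma \reflemma{estimate-indicator-inv} caps the number of calls at $O(\log\eps^{-1})$ regardless of $p_{\mathcal A}$. This gives the worst-case bound $O(\log n+\log\eps^{-1})$.

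For the expected bound I cannot use the $O(1/p_{\mathcal A})$ guarantee, since $\mathcal L_{u,v}$ may be tiny when $\tilde{m}$ is wrong or the pair is atypical. Instead I use a pointwise bound. Given $N_{\max}=\lfloor C/\rho\rfloor$, the expected number of calls is at most $\min\{N_{\max}, 1/p_{\mathcal A}\}$. The filter is only reached when a specific pair $(u,v)$ was returned, and that pair is returned with probability exactly $p^2\mathcal L_{u,v}(\tilde{m})$.

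So I bound the expected filter cost by
\[
\sum_{(u,v)} p^2\,\mathcal L_{u,v}\cdot\frac{1}{\mathcal L_{u,v}} = p^2 m = \frac{m}{100\tilde{m}}.
\]
This is $O(1)$ when $\tilde{m}\ge e^{-1/10}m$. In general it is $O(\log \eps^{-1})$, using the cap $N_{\max}$ instead.

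The weak point is the expected bound in the regime $\tilde{m}\ll m$, where this sum is not $O(1)$. I would try the cap first. If the lemma's ``regardless of $\tilde{m}$'' claim needs the stronger bound there, I would argue that the expected number of lone edges in $\bS$ is $O(1)$ in that regime, since having exactly one edge becomes unlikely when $p^2 m$ is large.
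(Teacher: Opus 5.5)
\textbf{Verdict.} Your proposal is correct, and your Steps 1--2 and your worst-case bound are the paper's own argument. The paper composes the exact return probability $\frac{1}{100\tilde{m}}\mathcal L_{u,v}(\tilde{m})$ of \textsc{Sample-Lone-Edge-IS} with the filter's acceptance probability from Lemma \reflemma{estimate-indicator-inv}. It uses Lemma \reflemma{lbnd-loneliness} to certify $\mathcal L_{u,v}(\tilde{m}) \ge 1/2 = \rho$, so the loneliness factor cancels, and it caps the filter at $N_{\max} = O(\log \eps^{-1})$ calls.

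\textbf{Where you differ: the expected filter cost.} The paper argues pointwise. Any pair reaching the filter has passed the degree check, so Lemma \reflemma{lbnd-loneliness} gives loneliness at least $1/2$, hence $O(1)$ expected calls for that pair. You amortize instead. The reach probability $p^2\mathcal L_{u,v}(\tilde{m})$ cancels the $1/\mathcal L_{u,v}(\tilde{m})$ expected cost, for a total of at most $p^2 m = m/(100\tilde{m})$.

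What your route buys:
\begin{itemize}
\item It needs neither the degree check nor Lemma \reflemma{lbnd-loneliness}.
\item It degrades gracefully to $O(\min\{m/\tilde{m}, \log\eps^{-1}\})$ for arbitrary $\tilde{m}$.
\item It would also bound the expected filter cost in \textsc{Sample-L-L-Edge-IS}. There the filter runs before the degree tests, so the pointwise argument is unavailable.
\end{itemize}
The paper's route instead gives a per-pair guarantee.

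Both routes need $\tilde{m} \ge e^{-1/10} m$ for an $O(1)$ filter cost. The paper's appeal to Lemma \reflemma{lbnd-loneliness} uses this hypothesis silently, so its proof does not establish the expected bound ``regardless of $\tilde{m}$'' either. Downstream (Lemma \reflemma{sample-edge-amplified-hyb}), only the worst-case bound is used when $\tilde{m}$ is wrong.

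\textbf{Drop the fallback in your last paragraph; it would not work.}
\begin{itemize}
\item The expected number of lone edges in $\bS$ is trivially at most $1$, so it is not the relevant quantity. You would need $\sum_e p^2 \min\{1, N_{\max}\mathcal L_e(\tilde{m})\} = O(1)$.
\item The claim ``exactly one edge becomes unlikely when $p^2 m$ is large'' is false in general. Take a star with center $c$ and $p^2\deg(c) \gg 1$, plus a disjoint matching of about $1/p^2$ edges.
\end{itemize}

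In fact the $\eps$-free bound itself can fail. Take a perfect matching with $\tilde{m} \ge 1$ (so all degrees are low) and $p^2 m \approx \ln N_{\max}$. Every edge then has loneliness about $1/N_{\max}$. The expected number of filter calls is $p^2 m\,(1 - (1 - \mathcal L_e(\tilde{m}))^{N_{\max}}) \approx (1 - e^{-1}) \ln N_{\max} = \Theta(\log\log(1/\eps))$. Meanwhile $n$ can be as small as $O(\log N_{\max})$. So the cap-based $O(\log n + \log \eps^{-1})$ is the correct statement for arbitrary $\tilde{m}$.
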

\begin{proof}
    For complexity, observe that there are at most two $\oracleDEG$-calls and one call to \proc{sample-lone-edge-IS}, which costs $O(\log n)$ $\oracleIS$-calls at worst-case (Lemma \reflemma{sample-lone-edge-IS}).

    If we reach the call to \proc{estimate-indicator-inv}, then $\deg(u) \le k_2$ and $\deg(v) \le k_2$. In this case, $\mathcal L_{u,v}(\tilde{m}) \ge \frac{1}{2}$ (Lemma \reflemma{lbnd-loneliness}), and therefore, this call costs $O(1)$ calls to \proc{loneliness-event-IS} in expectation and $O(\log \eps^{-1})$ times in worst-case (Lemma \reflemma{estimate-indicator-inv}). The cost for every such call is $O(1)$ (Observation \refobs{loneliness-event-IS}).
    
    Clearly, only L-L edges can be returned. By Lemma \reflemma{sample-lone-edge-IS}, every L-L edge $e = uv$ is returned with probability $\frac{1}{100\tilde{m}}\mathcal L_e(\tilde{m})$. The probability to pass the filter is in the range $e^{\pm \eps} \cdot \frac{1/2}{\mathcal L_e(\tilde{m}) \ln (\paperZcoefZestZindZinv/\eps)}$ (Lemma \reflemma{estimate-indicator-inv}). Combined, the probability to return any individual edge $e=\{u,v\}$ for which $\deg(u), \deg(v) \le k_2$ is in the range
    \[  \frac{1}{100\tilde{m}} \mathcal L_{u,v}(\tilde{m}) \cdot e^{\pm \eps} \frac{1/2}{\mathcal L_{u,v}(\tilde{m}) \ln (\paperZcoefZestZindZinv/\eps)}
        = \frac{e^{\pm \eps}}{200\tilde{m} \ln (\paperZcoefZestZindZinv/\eps)} \qedhere \]
\end{proof}

\subsection{Sampling medium-high vertices}
\defproc{sample-MH-vertex-hyb}{Sample-MH-Vertex-Hybrid}

In Algorithm \refalg{sample-MH-vertex-hyb} (\proc{sample-MH-vertex-hyb}) we use \proc{sample-star-vertex-IS} to sample a ``star'' vertex and then use the degree oracle to make sure its degree is greater than $k_2 = \sqrt{\tilde{m}}$. Then we normalize the return probability by canceling the starness factor of $u$. Note that in one procedure (\proc{sample-MHMH-edge-hyb}) we sample two MH-vertices without using this sampler, to reduce the penalty from square-logarithmic to logarithmic.

\begin{proc-algo}{sample-MH-vertex-hyb}{G,n,\eps;\tilde{m}}
    \alginput{$\tilde{m} \ge \max\{4, e^{-1/10} m\}$}
    \algoutput{For every vertex $v \in V$ with $\deg(u) > k_2$, the probability to return it is $\frac{e^{\pm \eps}}{300 \sqrt{\tilde{m}} \ln (\paperZcoefZestZindZinv/\eps)}$}
    \algcomplexity{$O(\log n)$ (expected)}
    \algcomplexity{$O(\log n \log \eps^{-1})$ (worst-case)}
    \begin{code}
        \algitem Let $k_2 \gets \sqrt{\tilde{m}}$.
        \algitem Let $\bu \gets \proc{sample-star-vertex-IS}(G,n,\eps;\tilde{m})$.
        \begin{If}{$\bu$ is \reject}
            \algitem Return \reject.
        \end{If}
        \begin{If}{$\bu \ne \reject$ and $\oracleDEG(\bu)\le  k_2$}
            \algitem Return \reject.
        \end{If}
        \algitem Filter by $\proc{estimate-indicator-inv}(\proc{starness-event-IS}(G,n;\tilde{m},\bu),\eps,1/30)$.
        \algitem Return $\bu$.
    \end{code}
\end{proc-algo}

\begin{lemma}{sample-MH-vertex-hyb}
    If $\tilde{m} \ge \max\{4, e^{-1/10}m\}$, then Algorithm \refalg{sample-MH-vertex-hyb} (\proc*{sample-MH-vertex-hyb}) samples every vertex whose degree is greater than $k_2$ with probability $e^{\pm \eps} / 300 \sqrt{\tilde{m}} \ln (\paperZcoefZestZindZinv/\eps)$, and otherwise rejects. Also, regardless of $\tilde{m}$, the query complexity is $O(\log n)$ in expectation and $O(\log n \log \eps^{-1})$ in worst-case.
\end{lemma}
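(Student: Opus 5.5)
The plan mirrors the proof of Lemma \reflemma{sample-LL-edge-hyb}. There are two parts: a complexity count, and a correctness computation that multiplies the raw sampling probability by the filter probability so that the starness factor cancels.

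\textbf{Complexity.} The procedure makes the following calls.
\begin{itemize}
\item One call to \proc{sample-star-vertex-IS}, costing $O(\log n)$ IS-queries worst-case (Lemma \reflemma{sample-star-vertex-IS}).
\item At most one degree query.
\item At most one call to \proc{estimate-indicator-inv} with black box \proc{starness-event-IS} and saturation parameter $\rho = 1/30$.
\end{itemize}
By Lemma \reflemma{estimate-indicator-inv}, the filter makes $O(\log \eps^{-1}/\rho) = O(\log \eps^{-1})$ black-box calls in the worst case, regardless of $\tilde{m}$. Each call costs $O(\log n)$ worst-case (Observation \refobs{starness-event-IS}), which gives the $O(\log n \log \eps^{-1})$ worst-case bound.

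For the expected bound, we only reach the filter when $\deg(\bu) > k_2$. If the hypotheses on $\tilde{m}$ hold, Lemma \reflemma{lbnd-starness} gives $\mathcal S_{\bu}(\tilde{m}) \ge 1/30$. Lemma \reflemma{estimate-indicator-inv} then says the expected number of calls is $O(1/\mathcal S_{\bu}) = O(1)$, so the expected cost is $O(\log n)$.

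The obstacle is the phrase ``regardless of $\tilde{m}$''. If $\tilde{m}$ is wrong, the starness factor may be tiny, and the expected number of calls is only bounded by the worst case. I would handle this in one of two ways:
\begin{itemize}
\item Charge the cost to the probability of reaching the filter. The raw probability of outputting $u$ is $p\,\mathcal S_u$, and the expected filter cost given $u$ is $O(\min\{1/\mathcal S_u, \log\eps^{-1}\cdot 30\})$. Summing $p\,\mathcal S_u \cdot (1/\mathcal S_u)\cdot \log n$ over vertices only gives $pn\log n$, which is not obviously small, so this route may not close.
\item Read ``regardless of $\tilde{m}$'' as the worst-case bound (which truly holds for every $\tilde{m}$), with the expected bound proved under the stated input assumption. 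This matches how the preceding L-L lemma is used.
\end{itemize}
I would state the second reading explicitly.

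\textbf{Correctness.} Assume $\tilde{m} \ge \max\{4, e^{-1/10}m\}$. Only vertices with $\deg > k_2$ can be returned, because of the degree check. Fix such a vertex $u$. By Lemma \reflemma{sample-star-vertex-IS}, $u$ is output by the star sampler with probability $\frac{1}{10\sqrt{\tilde{m}}}\mathcal S_u(\tilde{m})$. Given that, $u$ passes the degree test.

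By Observation \refobs{starness-event-IS}, the filter's black box has mean $p_{\mathcal A} = \mathcal S_u(\tilde{m})$. By Lemma \reflemma{lbnd-starness}, $p_{\mathcal A} \ge 1/30 = \rho$. So Lemma \reflemma{estimate-indicator-inv} gives a pass probability of
\[ e^{\pm\eps}\,\frac{1/30}{\mathcal S_u(\tilde{m})\ln(\paperZcoefZestZindZinv/\eps)}. \]
The randomness of the filter is fresh, so the two probabilities multiply. The starness factor cancels, and the total is
\[ \frac{1}{10\sqrt{\tilde{m}}}\cdot\frac{e^{\pm\eps}}{30\ln(\paperZcoefZestZindZinv/\eps)} = \frac{e^{\pm\eps}}{300\sqrt{\tilde{m}}\ln(\paperZcoefZestZindZinv/\eps)}, \]
as claimed. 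On every other branch the procedure returns \reject.
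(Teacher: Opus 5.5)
Your proposal is correct and follows the paper's proof essentially step for step. It uses the same complexity accounting (the star sampler, at most one degree query, and a filter with $\rho = 1/30$ that costs $O(1)$ expected and $O(\log \eps^{-1})$ worst-case black-box calls of $O(\log n)$ each), and the same cancellation of $\mathcal S_u(\tilde{m})$ between the star-sampler probability and the filter probability.

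Your caveat about ``regardless of $\tilde{m}$'' is well taken. The paper's own proof also needs the starness lower bound, and hence the hypothesis on $\tilde{m}$, to get the $O(\log n)$ expected bound. So only the worst-case bound is established unconditionally, and that is all the top-level analysis relies on when $\tilde{m}$ is wrong.
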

\begin{proof}
    For complexity, observe that there is at most one $\oracleDEG$-call and one call to \proc{sample-star-vertex-IS}, which costs $O(\log n)$ $\oracleIS$-calls at worst-case (Lemma \reflemma{sample-star-vertex-IS}).

    If we reach the call to \proc{estimate-indicator-inv}, then $\deg(\bu) > k_2 = \sqrt{\tilde{m}}$. In this case, $\mathcal S_u(\tilde{m}) \ge 1/30$ (Lemma \reflemma{lbnd-starness}), and therefore, this call costs $O(1)$ calls to \proc{starness-event-IS} in expectation and $O(\log \eps^{-1})$ at worst-case (Lemma \reflemma{estimate-indicator-inv}). The cost for every such call is $O(\log n)$ (Observation \refobs{starness-event-IS}).

    Clearly, only MH-vertices can be returned. By Lemma \reflemma{sample-star-vertex-IS}, every MH-vertex $u$ is returned with probability $\frac{1}{10\sqrt{\tilde{m}}}\mathcal S_u(\tilde{m})$. The probability to pass the filter is in the range $e^{\pm \eps} \cdot \frac{1/30}{\mathcal S_u(\tilde{m}) \ln (\paperZcoefZestZindZinv/\eps)}$ (Lemma \reflemma{estimate-indicator-inv}). Combined, the probability to return any individual MH-vertex $u$ is in the range
    \[  \frac{1}{10\sqrt{\tilde{m}}} \mathcal S_u(\tilde{m}) \cdot e^{\pm \eps} \frac{1/30}{\mathcal S_u(\tilde{m}) \ln (\paperZcoefZestZindZinv/\eps)}
        = \frac{e^{\pm \eps}}{300\sqrt{\tilde{m}} \ln (\paperZcoefZestZindZinv/\eps)} \qedhere \]
\end{proof}

\subsection{Sampling L-MH edges (for low \texorpdfstring{$\tilde{m}$}{m~})}
\defproc{sample-LMH-edge-hyb}{Sample-L-MH-Edge-Hybrid}

This logic applies when $\tilde{m}$ is small enough for having $\sqrt{\tilde{m}} < \sqrt{n/\tilde{m}}$. In Algorithm \refalg{sample-LMH-edge-hyb} (\proc*{sample-LMH-edge-hyb}) we draw an MH-vertex $u$ and then a uniform neighbor $v$. To normalize the return probability for different choices of $u$, we normalize by a $\frac{\deg(u)}{2\tilde{m}}$-factor. When $\tilde{m}$ is higher, the result probability is too small (and too expensive to amplify), and we use different logic for medium-degree vertices and high-degree vertices.

\begin{proc-algo}{sample-LMH-edge-hyb}{G,n,\eps;\tilde{m}}
    \alginput{$\tilde{m} \ge \max\{4, e^{-1/10} m\}$}
    \algoutput{For every edge $e \in E_\mathrm{L,MH}$, the probability to return $e$ is $\frac{e^{\pm \eps}}{600 \tilde{m} \sqrt{\tilde{m}} \ln (\paperZcoefZestZindZinv/\eps)}$}
    \algcomplexity{$O(\log n)$ (expected)}
    \algcomplexity{$O(\log n \log \eps^{-1})$ (worst-case)}
    \begin{code}
        \algitem Let $k_2 \gets \sqrt{\tilde{m}}$.
        \algitem Let $u \gets \proc{sample-MH-vertex-hyb}(G,n,\eps;\tilde{m})$.
        \algitem Let $v \gets \oracleNEIGHBOR(u)$.
        \algitem Let $d_u \gets \oracleDEG(u)$.
        \begin{If}{$d_u > k_2$ and $\oracleDEG(v) \le k_2$}
            \algitem Return $uv$ with probability $\min\{1,\frac{d_u}{2\tilde{m}}\}$ (otherwise \reject).
        \end{If}
        \algitem Return \reject.
    \end{code}
\end{proc-algo}

\begin{lemma}{sample-LMH-edge-hyb}
    Assume that $\tilde{m} \ge \max\{4, e^{-1/10} m\}$. Algorithm \refalg{sample-LMH-edge-hyb} (\proc*{sample-LMH-edge-hyb}) samples every edge between vertices of degree of most $k_2=\sqrt{\tilde{m}}$ and vertices of degree greater than $k_2$ with probability $e^{\pm \eps} / 600 \tilde{m} \sqrt{\tilde{m}} \ln (\paperZcoefZestZindZinv/\eps)$. Moreover, regardless of $\tilde{m}$, the query complexity is $O(\log n)$ in expectation and $O(\log n + \log \eps^{-1})$ worst-case.
\end{lemma}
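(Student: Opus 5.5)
We treat the query complexity and the output distribution separately, following the pattern of the proofs of Lemma \reflemma{sample-LL-edge-hyb} and Lemma \reflemma{sample-MH-vertex-hyb}.

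\textbf{Query complexity.} Besides the call to \proc{sample-MH-vertex-hyb}, the procedure makes one $\oracleNEIGHBOR$ query and at most two $\oracleDEG$ queries. Lemma \reflemma{sample-MH-vertex-hyb} bounds that call by $O(\log n)$ in expectation, for every $\tilde{m}$, which gives the expected bound. The worst-case bound also comes from that call: we rely on the worst-case cost of \proc{sample-MH-vertex-hyb} as stated there. If \proc{sample-MH-vertex-hyb} rejects, we return \reject at once; this case needs no separate accounting, since the subsequent guard on $d_u$ rejects in any event.

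\textbf{Output distribution.} Fix an edge $e$ joining $u_0$ and $v_0$, with $\deg(u_0) > k_2$ and $\deg(v_0) \le k_2$. The guard forces the first sampled vertex to be the MH-endpoint and the neighbor to be the L-endpoint. So $e$ is returned only through the ordered outcome $u=u_0$, $v=v_0$; the reversed orientation is excluded because $\deg(v_0)\le k_2$. By Lemma \reflemma{sample-MH-vertex-hyb}, which applies since $\tilde{m}\ge\max\{4,e^{-1/10}m\}$, we have
\[ \Pr[u=u_0]\in \frac{e^{\pm\eps}}{300\sqrt{\tilde{m}}\ln(\paperZcoefZestZindZinv/\eps)}. \]
Conditioned on this, the neighbor query, interpreted as returning a uniformly random neighbor, yields $v_0$ with probability exactly $1/\deg(u_0)$. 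The final coin succeeds with probability $\min\{1,\deg(u_0)/2\tilde{m}\}$.

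\textbf{Main obstacle.} The key step is to show that the minimum equals $\deg(u_0)/2\tilde{m}$, i.e.\ that $\deg(u_0)\le 2\tilde{m}$. This follows from $\deg(u_0)\le m\le e^{1/10}\tilde{m}\le 2\tilde{m}$. With the minimum removed, the $\deg(u_0)$ factors cancel, and the three factors multiply to
\[ \frac{e^{\pm\eps}}{300\sqrt{\tilde{m}}\ln(\paperZcoefZestZindZinv/\eps)}\cdot\frac{1}{\deg(u_0)}\cdot\frac{\deg(u_0)}{2\tilde{m}}=\frac{e^{\pm\eps}}{600\,\tilde{m}\sqrt{\tilde{m}}\ln(\paperZcoefZestZindZinv/\eps)}, \]
as claimed. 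Every other execution path fails the guard and returns \reject.
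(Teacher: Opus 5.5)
Your computation of the sampling probability follows the paper's proof step for step. The factorization is the same: $\Pr[u=u_0]\cdot\frac{1}{\deg(u_0)}\cdot\frac{\deg(u_0)}{2\tilde{m}}$, with only the orientation where $u_0$ is the MH-endpoint able to pass the guard. You also drop the $\min$ the same way, using $\deg(u_0)\le m\le e^{1/10}\tilde{m}\le 2\tilde{m}$.

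The gap is in the worst-case complexity. You say the worst-case bound comes from the cost of \proc{sample-MH-vertex-hyb} ``as stated there.'' But Lemma \reflemma{sample-MH-vertex-hyb} states $O(\log n\log\eps^{-1})$ worst-case, not $O(\log n+\log\eps^{-1})$, so your argument does not give the bound in the statement.

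Tightening the argument cannot close this. The filter inside that subroutine may make $\Theta(\log\eps^{-1})$ calls to \proc{starness-event-IS}, and every call can reject via the $1/2$ coin in the single-edge case. Each such call can first spend $\Theta(\log n)$ queries in \proc{extract-edge-IS} when the drawn set is large. So the product bound is attained in the worst case.

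The paper's own proof makes the same slip: it quotes the subroutine's worst-case cost as $O(\log n+\log\eps^{-1})$. The algorithm box for this very procedure lists $O(\log n\log\eps^{-1})$, which is what your argument, and the paper's, actually establishes.
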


\begin{proof}
    For complexity, observe that there are at most two $\oracleDEG$-calls, one $\oracleNEIGHBOR$-call, and one call to \proc{sample-MH-vertex-hyb}, which costs $O(\log n)$ in expectation and $O(\log n + \log \eps^{-1})$ at worst-case (Lemma \reflemma{sample-MH-vertex-hyb}).

    Note that if $\tilde{m} \ge e^{-1/10} m$, then $d_u / 2\tilde{m} \le \frac{1}{2}e^{1/10} d_u / m \le 1$, where the last transition is correct since the number of edges $m$ cannot be smaller than the degree of $u$.

    Clearly, only L-MH edges can be returned. Let $u_0 v_0$ such an edge for which $\deg(u_0) > k_2$ and $\deg(v_0) \le k_2$:
    \begin{eqnarray*}
        \Pr\left[\text{return $u_0 v_0$}\right]
        &=& \Pr\left[\bu = u_0\right] \cdot \Pr\left[\bv = v_0 \cond \bu = u_0\right] \cdot \Pr\left[\text{pass filter}\cond \bu = u_0, \bv = v_0\right] \\
        \text{[Lemma \reflemma{sample-MH-vertex-hyb}]} &=& \frac{e^{\pm \eps}}{300 \sqrt{\tilde{m}} \ln (\paperZcoefZestZindZinv/\eps)} \cdot \frac{1}{\deg(u_0)} \cdot \frac{\deg(u_0)}{2\tilde{m}} \\
        &=& \frac{e^{\pm \eps}}{600 \tilde{m} \sqrt{\tilde{m}} \ln (\paperZcoefZestZindZinv/\eps)}
    \end{eqnarray*}
\end{proof}

\subsection{Sampling L-M edges (for high \texorpdfstring{$\tilde{m}$}{m~})}
\defproc{sample-LM-edge-hyb}{Sample-L-M-Edge-Hybrid}

This logic applies when $\tilde{m}$ is large enough for having $\sqrt{\tilde{m}} \ge \sqrt{n/\tilde{m}}$. In Algorithm \refalg{sample-LM-edge-hyb} (\proc{sample-LM-edge-hyb}) we use the same logic as in \proc{sample-LMH-edge-hyb}, but the upper bound for $\deg(u)$ is $k_3$ rather than $2\tilde{m}$.

\begin{proc-algo}{sample-LM-edge-hyb}{G,n,\eps;\tilde{m}}
    \alginput{$\tilde{m} \ge \max\{4, e^{-1/10}m\}$}
    \algoutput{For every edge $e \in E_\mathrm{L,M}$, the probability to return $e$ is $\frac{e^{\pm \eps}}{300 \tilde{m} \sqrt{n / \sqrt{\tilde{m}}} \ln (\paperZcoefZestZindZinv/\eps)}$}
    \algcomplexity{$O(\log n)$ (expected)}
    \algcomplexity{$O(\log n \log \eps^{-1})$ (worst-case)}
    \begin{code}
        \algitem Let $k_2 \gets \sqrt{\tilde{m}}$.
        \algitem Let $k_3 \gets \sqrt{n\sqrt{\tilde{m}}}$.
        \algitem Let $\bu \gets \proc{sample-MH-vertex-hyb}(G,n,\eps;\tilde{m})$.
        \algitem Let $\bv \gets \oracleNEIGHBOR(\bu)$.
        \algitem Let $d_{\bu} \gets \oracleDEG(\bu)$.
        \begin{If}{$k_2 < d_{\bu} \le k_3$ and $\oracleDEG(\bv) \le k_2$}
            \algitem Return $uv$ with probability $d_{\bu} / k_3$.
        \end{If}
        \algitem Return \reject.
    \end{code}
\end{proc-algo}

\begin{lemma}{sample-LM-edge-hyb}
    If $\tilde{m} \ge \max\{4, e^{-1/10} m\}$, then Algorithm \refalg{sample-LM-edge-hyb} (\proc*{sample-LM-edge-hyb}) samples every edge between vertices of degree of most $k_2$ and vertices of degree between $k_2$ and $k_3$ with probability in the range $e^{\pm \eps} / 300 \tilde{m} \sqrt{n / \sqrt{\tilde{m}}} \ln (\paperZcoefZestZindZinv/\eps)$. Also, regardless of $\tilde{m}$, the query complexity is $O(\log n)$ in expectation and $O(\log n \log \eps^{-1})$ in worst-case.
\end{lemma}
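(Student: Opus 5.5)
The proof follows the template of Lemma \reflemma{sample-LMH-edge-hyb}, and I would do complexity first, then correctness.

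\textbf{Complexity.} The procedure makes one call to \proc{sample-MH-vertex-hyb}, one $\oracleNEIGHBOR$-call and at most two $\oracleDEG$-calls. By Lemma \reflemma{sample-MH-vertex-hyb}, the first call costs $O(\log n)$ in expectation and $O(\log n \log \eps^{-1})$ in the worst case, whatever the value of $\tilde{m}$. The remaining queries add $O(1)$, so both bounds hold.

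\textbf{Correctness, reduction to one edge.} Assume $\tilde{m} \ge \max\{4, e^{-1/10}m\}$. Only edges $uv$ with $k_2 < \deg(u) \le k_3$ and $\deg(v) \le k_2$ can be returned, because the degree oracle gives exact degrees. In particular an edge is output only in the orientation with $u$ the medium endpoint, so there is no double counting. The final step returns with probability $d_u/k_3$, and this is a valid probability in $(0,1]$ because $d_u \le k_3$ is checked. Now fix such an edge $u_0v_0$ and factor its probability of being returned as
\[ \Pr[\bu=u_0]\cdot\Pr[\bv=v_0 \mid \bu=u_0]\cdot \frac{\deg(u_0)}{k_3}. \]
By Lemma \reflemma{sample-MH-vertex-hyb}, since $\deg(u_0)>k_2$, the first factor equals $e^{\pm\eps}/(300\sqrt{\tilde{m}}\ln(\paperZcoefZestZindZinv/\eps))$. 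Reading $\oracleNEIGHBOR(\bu)$ as returning a uniform neighbor, as in \proc{sample-LMH-edge-hyb}, the second factor is $1/\deg(u_0)$. The $\deg(u_0)$ terms cancel, leaving
\[ \frac{e^{\pm\eps}}{300\sqrt{\tilde{m}}\, k_3 \ln(\paperZcoefZestZindZinv/\eps)}. \]

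\textbf{Algebra.} Since $k_3=\sqrt{n\sqrt{\tilde{m}}}$, we get $\sqrt{\tilde{m}}\,k_3=\sqrt{\tilde{m}}\sqrt{n\sqrt{\tilde{m}}}=\sqrt{\tilde{m}^{3/2}n}$. Also $\tilde{m}\sqrt{n/\sqrt{\tilde{m}}}=\sqrt{\tilde{m}^2 n/\tilde{m}^{1/2}}=\sqrt{\tilde{m}^{3/2}n}$. The two expressions are equal, which gives exactly the claimed probability $e^{\pm\eps}/\bigl(300\tilde{m}\sqrt{n/\sqrt{\tilde{m}}}\ln(\paperZcoefZestZindZinv/\eps)\bigr)$.

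\textbf{Where care is needed.} The argument is routine; the only points to check are that the acceptance probability $d_u/k_3$ is at most one and that $\deg(u_0) \ge 1$, so the neighbor query is meaningful. The latter holds because $u_0$ has the neighbor $v_0$.
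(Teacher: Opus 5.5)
Your proof is correct and follows the paper's argument essentially verbatim. You use the same complexity accounting via Lemma~\reflemma{sample-MH-vertex-hyb}, the same factorization $\Pr[\bu=u_0]\cdot\frac{1}{\deg(u_0)}\cdot\frac{\deg(u_0)}{k_3}$, and the same identity $\sqrt{\tilde{m}}\,k_3=\tilde{m}\sqrt{n/\sqrt{\tilde{m}}}$; your extra remarks (that $d_{\bu}/k_3\le 1$, and that only the orientation with the medium endpoint as $\bu$ can be output) are valid details the paper leaves implicit.
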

\begin{proof}
    For complexity, observe that there is at most two $\oracleDEG$-calls, one $\oracleNEIGHBOR$-call, and one call to \proc{sample-MH-vertex-hyb}, which costs $O(\log n)$ in expectation and $O(\log n \log \eps^{-1})$ at worst-case (Lemma \reflemma{sample-MH-vertex-hyb}).

    Clearly, only L-M edges can be returned. Let $u_0 v_0$ such an edge for which $k_2 < \deg(u_0) \le k_3$ and $\deg(v_0) \le k_2$:
    \begin{eqnarray*}
        \Pr\left[\text{return $u_0 v_0$}\right]
        &=& \Pr\left[\bu = u_0\right] \cdot \Pr\left[\bv = v_0 \cond \bu = u_0\right] \cdot \Pr\left[\text{pass filter}\cond \bu = u_0, \bv = v_0\right] \\
        \text{[Lemma \reflemma{sample-MH-vertex-hyb}]} &=& \frac{e^{\pm \eps}}{300 \sqrt{\tilde{m}} \ln (\paperZcoefZestZindZinv/\eps)} \cdot \frac{1}{\deg(u_0)} \cdot \frac{\deg(u_0)}{k_3} \\
        &=& \frac{e^{\pm \eps}}{300 \sqrt{\tilde{m}} k_3 \ln (\paperZcoefZestZindZinv/\eps)} \\
        &=& \frac{e^{\pm \eps}}{300 \tilde{m} \sqrt{n / \sqrt{\tilde{m}}} \ln (\paperZcoefZestZindZinv/\eps)} 
    \end{eqnarray*}
\end{proof}

\subsection{The tininess factor}
In this subsection we define the tininess factor which we use in the analysis of the L-H edge sampling logic (presented in the following subsection).

\defproc{tininess-event-LOCAL}{Tininess-Event-LOCAL}
We define the \emph{tininess} factor of a vertex $u$ with respect to the treshold degree $k_1(\sqrt{\tilde{m}}) = 3\sqrt{\tilde{m}^{3/2} / n}$ as the fraction of neighbors whose degree is bounded by $k_1$. Formally,
\[\mathcal T_u(\tilde{m}) = \frac{\abs{\{ v \in \mathrm{N}_u : \deg(v) \le k_1(\tilde{m}) \}}}{\deg(u)}\]

Algorithm \refalg{tininess-event-LOCAL} (\proc*{tininess-event-LOCAL}) draws a uniform neighbor of $u$ and compares its degree to $k_1(\tilde{m})$.

\begin{proc-algo}{tininess-event-LOCAL}{G,n; \tilde{m}, u}
    \algoutput{Binary answer. Accept probability $\mathcal T_u(\tilde{m})$}
    \algcomplexity{$O(1)$ worst-case}
    \begin{code}
        \algitem Let $k_1 \gets 3\sqrt{\tilde{m}^{3/2} / n}$.
        \algitem Let $\bv \gets \oracleNEIGHBOR(u)$.
        \algitem Let $d_{\bv} \gets \oracleDEG(\bv)$.
        \algitem Accept if and only if $d_{\bv} \le k_1$.
    \end{code}
\end{proc-algo}

\begin{observation}{tininess-event-LOCAL}
    Algorithm \refalg{tininess-event-LOCAL} (\proc*{tininess-event-LOCAL}) accepts the input $(G,n;\tilde{m},u)$ with probability exactly $\mathcal T_u(\tilde{m})$ at the cost of $O(1)$ local queries.
\end{observation}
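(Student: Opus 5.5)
The claim is that a call to \proc*{tininess-event-LOCAL} on input $(G,n;\tilde{m},u)$ accepts with probability exactly $\mathcal T_u(\tilde{m})$ and makes $O(1)$ local queries. I would argue directly from the pseudocode and the semantics of the local oracles.

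For the query bound, I would read off the code. It makes one $\oracleNEIGHBOR$ query, which returns $\bv$, and one $\oracleDEG$ query on $\bv$. The threshold $k_1$ is computed from $\tilde{m}$ and $n$ alone, so it costs no queries. Hence the algorithm makes exactly two local queries in every execution, which is $O(1)$ worst-case.

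For the acceptance probability, I would use the convention already relied on in \proc{sample-LMH-edge-hyb}: a call $\oracleNEIGHBOR(u)$ without an index means drawing a uniform $\bi \in \{1,\dots,\deg(u)\}$ and returning the $\bi$-th neighbor. Since the adjacency list of $u$ lists each element of $\mathrm{N}_u$ exactly once, $\bv$ is uniform over $\mathrm{N}_u$. The algorithm accepts exactly when $\deg(\bv) \le k_1(\tilde{m})$. Therefore
\[ \Pr[\text{accept}] = \sum_{v \in \mathrm{N}_u} \frac{1}{\deg(u)} \mathbbm{1}[\deg(v) \le k_1(\tilde{m})] = \frac{\abs{\{v \in \mathrm{N}_u : \deg(v) \le k_1(\tilde{m})\}}}{\deg(u)} = \mathcal T_u(\tilde{m}), \]
which matches the definition of the tininess factor word for word.

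The only real subtlety is the degenerate case $\deg(u)=0$, where $\mathcal T_u$ is undefined. I would note that the observation is only invoked for vertices of positive degree (in the L--H analysis $u$ is a high-degree vertex), so $\mathrm{N}_u \neq \emptyset$ and the neighbor query is well defined. Apart from this, the argument is immediate, and I expect no real obstacle.
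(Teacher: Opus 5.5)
Your proof is correct and follows the same direct reading of the pseudocode that the paper intends; the paper states this observation without proof. There are two local queries, and $\oracleNEIGHBOR(u)$ returns a uniform element of $\mathrm{N}_u$, which is the same convention the paper relies on in its L--MH and L--H analyses. Hence the acceptance probability is exactly the fraction of neighbors of degree at most $k_1(\tilde{m})$, which is $\mathcal T_u(\tilde{m})$. Your remark about $\deg(u)=0$ is a harmless addition, since the observation is only invoked for $\deg(u) > k_3$.
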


The tininess factor of a high-degree vertex is $\Omega(1)$, as stated in the following lemma.

\begin{lemma}{lbnd-tininess}
    For a graph $G$ over $n$ vertices, recall that $k_1(\tilde{m}) = 3\sqrt{\tilde{m} \sqrt{\tilde{m}} / n}$ and $k_3(\tilde{m}) = \sqrt{n\sqrt{\tilde{m}}}$. If $\tilde{m} \ge e^{-1/10} m$, then for every vertex $u$ for which $\deg(u) > k_3(\tilde{m})$, $\mathcal T_u(\tilde{m}) \ge \frac{1}{4}$.
\end{lemma}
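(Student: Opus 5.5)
We bound the number of non-tiny neighbors of $u$ with a counting argument and then divide by $\deg(u) > k_3$. Let $B_u = \{v \in \mathrm{N}_u : \deg(v) > k_1(\tilde{m})\}$ be the set of neighbors that are not tiny. Then $\mathcal T_u(\tilde{m}) = 1 - \abs{B_u}/\deg(u)$, so it suffices to show $\abs{B_u} \le \frac{3}{4}\deg(u)$.

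The counting step is a handshake bound on high-degree vertices. Every vertex of degree greater than $k_1$ contributes more than $k_1$ to the degree sum $\sum_v \deg(v) = 2m$. Hence the total number of vertices with degree exceeding $k_1$ is less than $2m/k_1$, and in particular $\abs{B_u} < 2m/k_1$.

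Next we plug in the thresholds. From $\tilde{m} \ge e^{-1/10} m$ we get $m \le e^{1/10}\tilde{m}$. Therefore
\[ \frac{\abs{B_u}}{\deg(u)} < \frac{2m}{k_1 k_3} \le \frac{2e^{1/10}\tilde{m}}{3\sqrt{\tilde{m}^{3/2}/n}\cdot\sqrt{n\sqrt{\tilde{m}}}} = \frac{2e^{1/10}\tilde{m}}{3\tilde{m}} = \frac{2e^{1/10}}{3} \approx 0.737 < \frac{3}{4}. \]
The middle step uses $k_1 k_3 = 3\sqrt{\tilde{m}^{3/2}/n \cdot n\tilde{m}^{1/2}} = 3\tilde{m}$. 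Consequently $\mathcal T_u(\tilde{m}) \ge 1 - 0.737 > 1/4$, which proves the lemma.

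There is no real obstacle here. The only points needing care are the algebraic identity $k_1 k_3 = 3\tilde{m}$, which is exactly why the constant $3$ appears in $k_1$, and checking that $2e^{1/10}/3 < 3/4$. Degenerate cases cause no trouble: if $k_1$ exceeds every degree then $B_u = \emptyset$, and the bound $\abs{B_u} < 2m/k_1$ holds trivially in that case.
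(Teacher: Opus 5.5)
Your proof is correct and follows essentially the same route as the paper: bound the number of vertices of degree above $k_1$ by $2m/k_1$ via the handshake identity, then use $m \le e^{1/10}\tilde{m}$ and $\deg(u) > k_3$ to get $2m/k_1 \le \frac{2e^{1/10}}{3}k_3 < \frac{3}{4}\deg(u)$. The only difference is presentation: the paper rewrites $2m/k_1$ as $\frac{2}{3}\frac{m}{\tilde{m}}k_3$, while you make the identity $k_1 k_3 = 3\tilde{m}$ explicit.
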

\begin{proof}
    The number of vertices of degree greater than $k_1$ in the graph is bounded by
    \begin{eqnarray*}
        \frac{2m}{k_1}
        = \frac{2m}{3 \sqrt{\tilde{m} \sqrt{\tilde{m}} / n}}
        = \frac{2}{3} \frac{m}{\tilde{m}} \sqrt{n \sqrt{\tilde{m}}}
        \le \frac{2e^{1/10}}{3} k_3
        \le \frac{3}{4} \deg(u)
    \end{eqnarray*}

    Hence, the number of $u$-neighbors whose degree is at most $k_1$ is at least $\frac{1}{4}\deg(u)$.
\end{proof}

\subsection{Sampling L-H edges (for high \texorpdfstring{$\tilde{m}$}{m~})}
\defproc{sample-LH-edge-LOCAL}{Sample-L-H-Edge-Local}

This logic applies when $\tilde{m}$ is large enough for having $\sqrt{\tilde{m}} \ge \sqrt{n/\tilde{m}}$.

Algorithm \refalg{sample-LH-edge-LOCAL} (\proc*{sample-LH-edge-LOCAL}) for sampling L-H edges is based on the concept of \cite{er18}. The main difference is the use of IS queries to cancel the tininess factor, reducing the query complexity dependency of $1/\eps$ from linear to logarithmic.

We uniformly choose a vertex $w$. Then, we draw a uniform neighbor $u$ (of $w$) and a uniform neighbor $v$ (of $u$). We reject immediately if $\deg(w) > k_1$ or if $\deg(u) \le k_3$ or if $\deg(v) > k_2$. Finally, we toss a coin to return the edge $uv$ with probability $\frac{\deg(w)}{k_1}$ and reject otherwise. We normalize the return probability by canceling the tininess factor, which is the fraction of $w$-neighbors of $u$ whose degree is bounded by $k_1$.

We consider vertices with degree $\deg(w) \le k_1$ as tiny-degree. Vertices of high degree have an $\Omega(1)$-fraction of tiny-degree neighbors, as stated in the following lemma. Recall that $\mathrm{N}_u$ denotes the set of $u$-neighbors.

\begin{proc-algo}{sample-LH-edge-LOCAL}{G,n,\eps;\tilde{m}}
    \alginput{$\tilde{m} \ge e^{-1/10}m$}
    \algoutput{For every edge $e \in E_\mathrm{L,H}$, the probability to return $e$ is $\frac{e^{\pm \eps}}{12 \tilde{m} \sqrt{n / \sqrt{\tilde{m}}} \ln (\paperZcoefZestZindZinv/\eps)}$}
    \algcomplexity{$O(1)$ (expected)}
    \algcomplexity{$O(\log \eps^{-1})$ (worst-case)}
    \begin{code}
        \algitem Let $k_1 \gets 3 \sqrt{\tilde{m} \sqrt{\tilde{m}} / n}$.
        \algitem Let $k_2 \gets \sqrt{\tilde{m}}$.
        \algitem Let $k_3 \gets \sqrt{n \sqrt{\tilde{m}}}$.
        \algitem Draw $\bw \in V$ uniformly.
        \algitem Let $\bu \gets \oracleNEIGHBOR(\bw)$.
        \algitem Let $\bv \gets \oracleNEIGHBOR(\bu)$.
        \algitem Let $d_{\bw} \gets \oracleDEG(\bw)$, $d_{\bu} \gets \oracleDEG(\bu)$, $d_{\bv} \gets \oracleDEG(\bv)$.
        \begin{If}{$d_{\bw} \le k_1$ and $d_{\bu} > k_3$ and $d_{\bv} \le k_2$}
            \algitem Proceed with probability $d_{\bw} / k_1$.
            \algitem Filter by $\proc{estimate-indicator-inv}(\proc{tininess-event-LOCAL}(G,n;\tilde{m},\bu), \eps, 1/4)$.
            \algitem Return $\bu\bv$.
        \end{If}
        \algitem Return \reject.
    \end{code}
\end{proc-algo}

\begin{lemma}{sample-LH-edge-LOCAL}
    Algorithm \refalg{sample-LH-edge-LOCAL} (\proc*{sample-LH-edge-LOCAL}) returns every edge $e \in E_{L,H}$ with probability $e^{\pm \eps} / 12 \tilde{m} \sqrt{n / \sqrt{\tilde{m}}} \ln (\paperZcoefZestZindZinv/\eps)$, and otherwise rejects. Moreover, the expected complexity is $O(1)$ and the worst-case complexity is $O(\log \eps^{-1})$.
\end{lemma}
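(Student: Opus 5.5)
The plan is to follow the template of Lemma \reflemma{sample-LMH-edge-hyb} and Lemma \reflemma{sample-LM-edge-hyb}: bound the cost by counting oracle calls, then compute the return probability of a fixed L-H edge by summing over the intermediate vertex $\bw$. The first step is to note that only pairs $uv$ with $\deg(u) > k_3$ and $\deg(v) \le k_2$ can be returned.

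\textbf{Complexity.} Apart from the call to \proc{estimate-indicator-inv}, there are $O(1)$ local queries: two neighbor queries and three degree queries. The estimation call is reached only if $\deg(\bu) > k_3$. By Lemma \reflemma{lbnd-tininess} (assuming $\tilde{m} \ge e^{-1/10}m$), in that case $\mathcal T_{\bu}(\tilde{m}) \ge 1/4$. So Lemma \reflemma{estimate-indicator-inv} with $\rho = 1/4$ gives $O(1)$ expected calls and $O(\log \eps^{-1})$ worst-case calls to \proc{tininess-event-LOCAL}, and each call costs $O(1)$ queries by Observation \refobs{tininess-event-LOCAL}. The worst-case bound $O(\log\eps^{-1})$ holds for any $\tilde{m}$, since the lemma caps the number of calls regardless of $p_{\mathcal A}$.

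\textbf{Probability.} Fix $u_0v_0$ with $\deg(u_0) > k_3$ and $\deg(v_0)\le k_2$. Since $k_2 \le k_3$, the edge is returned only in the orientation $\bu=u_0$, $\bv=v_0$. Then
\[
\Pr[\text{return } u_0v_0] = \sum_{w \in \mathrm{N}_{u_0},\ \deg(w)\le k_1} \frac1n\cdot\frac{1}{\deg(w)}\cdot\frac{1}{\deg(u_0)}\cdot\frac{\deg(w)}{k_1}\cdot \Pr[\text{pass filter}].
\]
The factor $\deg(w)/k_1 \le 1$ is a valid probability exactly because we condition on $\deg(w)\le k_1$, and it cancels the $1/\deg(w)$ from the first neighbor step. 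The sum therefore has $\mathcal T_{u_0}(\tilde{m})\deg(u_0)$ identical terms, giving $\mathcal T_{u_0}(\tilde{m})/(n k_1)$ before the filter. The filter passes with probability $e^{\pm\eps}\,(1/4)/(\mathcal T_{u_0}(\tilde{m})\ln(\paperZcoefZestZindZinv/\eps))$, which is valid since $\mathcal T_{u_0}\ge 1/4$. Multiplying, the tininess factor cancels and the result is $e^{\pm\eps}/(4 n k_1 \ln(\paperZcoefZestZindZinv/\eps))$.

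The last step is arithmetic: $4 n k_1 = 12\sqrt{n\,\tilde{m}^{3/2}} = 12\,\tilde{m}\sqrt{n/\sqrt{\tilde{m}}}$, which matches the stated probability.

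The main obstacle is the hypothesis mismatch: the lemma statement omits the assumption $\tilde{m} \ge e^{-1/10}m$, but Lemma \reflemma{lbnd-tininess} needs it. I would take it from the algorithm's declared input condition, used only for the probability claim and the expected-cost claim. For the expected cost without that assumption, I would simply bound the number of calls to \proc{tininess-event-LOCAL} by the worst case $O(\log\eps^{-1})$.
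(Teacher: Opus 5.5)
Your proposal is correct and follows the paper's proof essentially step for step. The cost count uses the tininess bound $\mathcal T_{\bu}(\tilde m)\ge 1/4$ together with Lemma \reflemma{estimate-indicator-inv}, and the return probability is a sum over the tiny-degree neighbors $w$ of $u_0$, in which the $\deg(w)$ and $\mathcal T_{u_0}(\tilde m)$ factors cancel to give $e^{\pm\eps}/(4nk_1\ln(\paperZcoefZestZindZinv/\eps))$. Your remarks on the unique orientation $\bu=u_0$, $\bv=v_0$ and on importing $\tilde m\ge e^{-1/10}m$ from the algorithm's input condition are small rigor improvements that the paper leaves implicit.
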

\begin{proof}
    For complexity, observe that there are two $\oracleNEIGHBOR$-calls and three $\oracleDEG$-calls.

    If we reach the call to \proc{estimate-indicator-inv}, then $\deg(\bu) > k_3$. In this case, $\mathcal T_u(\tilde{m}) \ge \frac{1}{4}$, and therefore, this call costs $O(1)$ evaluations of \proc{tininess-event-LOCAL} in expectation, and $O(\log \eps^{-1})$ at worst-case (Lemma \reflemma{estimate-indicator-inv}). The cost for every such evaluation is $O(1)$ local queries (Observation \refobs{tininess-event-LOCAL}).

    Consider an edge $u_0 v_0$ where $\deg(u_0) > k_3$ and $\deg(v_0) \le k_2$. The probability to return the edge $u_0 v_0$ is:
    \begin{eqnarray*}
        \Pr\left[\text{return $u_0 v_0$}\right]
        &=& \sum_{w \in V} \Big( \Pr\left[\bw = w\right] \cdot \Pr\left[\bu = u_0 \cond \bw = w\right] \cdot \Pr\left[\bv = v_0 \cond \bu = u_0 \right] ~\cdots \\&& \phantom{\sum_{w\in V}}\cdots~ \Pr\left[\text{proceed} \cond \bu = u_0 \right] \cdot \Pr\left[\text{pass filter} \cond \bu = u_0 \right] \Big) \\
        &=& \sum_{\substack{w \in \mathrm{N}_{u_0} \\ \deg(w) \le k_1}} \frac{1}{n} \cdot \frac{1}{\deg(w)} \cdot \frac{1}{\deg(u_0)} \cdot \frac{\deg(w)}{k_1} \cdot \frac{e^{\pm\eps} \cdot 1/4}{\mathcal T_{u_0}(\tilde{m})\ln(\paperZcoefZestZindZinv/\eps)} \\
        &=& \frac{1}{n k_1 \deg(u_0)} \sum_{\substack{w \in \mathrm{N}_{u_0} \\ \deg(w) \le k_1}} \frac{e^{\pm\eps} \cdot 1/4}{\mathcal T_{u_0}(\tilde{m})\ln(\paperZcoefZestZindZinv/\eps)} \\
        &=& \frac{1}{4 n k_1 \underbrace{\deg(u_0)} \ln(\paperZcoefZestZindZinv/\eps)} \cdot \frac{1}{\mathcal T_{u_0}(\tilde{m})} \cdot e^{\pm\eps} \cdot \underbrace{\abs{\{ w \in \mathrm{N}_{u_0} : \deg(w) \le k_1 \}}}
    \end{eqnarray*}

    Recall that $\mathcal T_{u_0}(\tilde{m})$ is the fraction of tiny-degree neighbors of $u_0$ (according to the threshold $k_1$). Therefore,
    \[  \Pr\left[\text{return $u_0 v_0$}\right]
        = \frac{e^{\pm \eps}}{4 n k_1\ln(\paperZcoefZestZindZinv/\eps)}
        = \frac{e^{\pm \eps}}{12 \tilde{m} \sqrt{n / \sqrt{\tilde{m}}} \ln (\paperZcoefZestZindZinv/\eps)}
        \qedhere \]
\end{proof}

\subsection{Sampling MH-MH edges}
\defproc{sample-MHMH-edge-hyb}{Sample-MH-MH-Edge-Hybrid}

Algorithm \refalg{sample-MHMH-edge-hyb} (\proc*{sample-MHMH-edge-hyb}) merely samples two star vertices (plus normalization) and, if they both exist, tests whether there is an edge between them.

\begin{proc-algo}{sample-MHMH-edge-hyb}{G,n,\eps;\tilde{m}}
    \alginput{$\tilde{m} \ge e^{-1/10} m$}
    \algoutput{For every edge $e \in E_\mathrm{MH,MH}$, the probability to return $e$ is $\frac{e^{\pm \eps}}{45000 \tilde{m} \ln (\paperZcoefZestZindZinv/\eps)}$}
    \algcomplexity{$O(\log n)$ (expected)}
    \algcomplexity{$O(\log n \log \eps^{-1})$ (worst-case)}
    \begin{code}
        \algitem Let $k_2 \gets \sqrt{\tilde{m}}$.
        \algitem Let $\bu \gets \proc{sample-star-vertex-IS}(G,n;\tilde{m})$.
        \algitem Let $\bv \gets \proc{sample-star-vertex-IS}(G,n,\eps;\tilde{m})$.
        \begin{If}{$\bu,\bv \ne \reject$ and $\oracleIS(\{\bu,\bv\})$ rejects and $\oracleDEG(\bu) > k_2$ and $\oracleDEG(\bv) > k_2$}
            \algitem Filter by $\proc{estimate-indicator-inv}(\mathcal A, \eps, 1/900)$ with the following procedure $\mathcal A$:
            \begin{Codeblock*}
                \algitem Let $\bb_1 \gets \proc{starness-event-IS}(G,n;\tilde{m},\bu)$.
                \algitem Let $\bb_2 \gets \proc{starness-event-IS}(G,n;\tilde{m},\bv)$.
                \algitem Accept if and only if both $\bb_1$ and $\bb_2$ are ``\accept''.
            \end{Codeblock*}
            \algitem Return $\bu\bv$.
        \end{If}
        \algitem Return \reject.
    \end{code}
\end{proc-algo}

\begin{lemma}{sample-MHMH-edge-hyb}
    If $\tilde{m} \ge \max\{4, e^{-1/10} m\}$, then Algorithm \refalg{sample-MHMH-edge-hyb} (\proc*{sample-MHMH-edge-hyb}) returns every edge $e \in E_{MH,MH}$ with probability $e^{\pm \eps} / 45000 \tilde{m} \ln^2 (\paperZcoefZestZindZinv/\eps)$, and otherwise rejects. Moreover, regardless of $\tilde{m}$, the complexity is $O(\log n)$ in expectation if $\tilde{m} \ge e^{-1/10} m$ and $O(\log n \log \eps^{-1})$ in worst-case.
\end{lemma}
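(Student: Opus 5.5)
The plan is to compute the return probability of a fixed edge $u_0v_0 \in E_{MH,MH}$ directly, exactly as in the proofs of Lemmas \reflemma{sample-LL-edge-hyb} and \reflemma{sample-MH-vertex-hyb}. The computation multiplies three quantities: the star-sampler probabilities, the probability of passing the adjacency and degree test, and the filter probability given by Lemma \reflemma{estimate-indicator-inv}. The complexity bound is handled separately.

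\textbf{Complexity.} The two calls to \proc{sample-star-vertex-IS} cost $O(\log n)$ worst-case (Lemma \reflemma{sample-star-vertex-IS}). They are followed by one IS query and two degree queries. If the filter is reached, then $\deg(\bu),\deg(\bv) > k_2$. When $\tilde{m} \ge \max\{4, e^{-1/10}m\}$, Lemma \reflemma{lbnd-starness} gives $\mathcal S_{\bu}(\tilde{m}),\mathcal S_{\bv}(\tilde{m}) \ge 1/30$. The two calls to \proc{starness-event-IS} inside $\mathcal A$ use independent sets, so $p_{\mathcal A} = \mathcal S_{\bu}(\tilde{m})\mathcal S_{\bv}(\tilde{m}) \ge 1/900$. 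Lemma \reflemma{estimate-indicator-inv} then gives $O(1)$ expected calls to $\mathcal A$, each costing $O(\log n)$ by Observation \refobs{starness-event-IS}. Regardless of $\tilde{m}$, the worst case is $O(\log \eps^{-1}/(1/900)) = O(\log\eps^{-1})$ calls, for $O(\log n\log\eps^{-1})$ queries in total.

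\textbf{Probability.} Only pairs that pass the IS test and both degree checks are returned, so only MH-MH edges can be output. Fix $u_0v_0$ with both degrees greater than $k_2$. The edge is returned along the two ordered outcomes $(\bu,\bv)=(u_0,v_0)$ and $(\bu,\bv)=(v_0,u_0)$. The two star samples are independent, and by Lemma \reflemma{sample-star-vertex-IS} each ordered outcome occurs with probability $\frac{1}{100\tilde{m}}\mathcal S_{u_0}(\tilde{m})\mathcal S_{v_0}(\tilde{m})$. Conditioned on either outcome, Lemma \reflemma{estimate-indicator-inv} with $\rho = 1/900 \le p_{\mathcal A}$ gives filter-pass probability $e^{\pm\eps}\frac{1/900}{\mathcal S_{u_0}\mathcal S_{v_0}\ln(\paperZcoefZestZindZinv/\eps)}$. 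The starness factors cancel, and summing the two orientations gives
\[
  2\cdot \frac{1}{100\tilde{m}} \cdot \frac{e^{\pm\eps}}{900\ln(\paperZcoefZestZindZinv/\eps)}
  = \frac{e^{\pm\eps}}{45000\,\tilde{m}\ln(\paperZcoefZestZindZinv/\eps)}.
\]

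\textbf{Main obstacle: the logarithmic power.} The power of $\ln(\paperZcoefZestZindZinv/\eps)$ equals the number of inverse-estimation filters an edge passes. Here there is exactly one, because the two starness events are merged into the single procedure $\mathcal A$. This is the reason the paper gives for not invoking \proc{sample-MH-vertex-hyb} twice, which would apply two filters and yield the $\ln^2$ normalization.

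So the calculation establishes the single-$\ln$ expression, which matches the algorithm's stated output line. That value is larger than the $\ln^2$ expression in the lemma by a factor $\ln(\paperZcoefZestZindZinv/\eps) \ge \ln \paperZcoefZestZindZinv > 1$, so it does not lie within $e^{\pm\eps}$ of it. The $\ln^2$ form is therefore not what this procedure produces as written.

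I would verify this once more against the pseudocode. I would then carry the single-$\ln$ constant into the later combination step, where only the existence of a uniform per-category mass $\lambda_A$ is needed. If the exact $\ln^2$ normalization were required, I would realize it by adding one more independent filter, namely \proc{estimate-indicator-inv} applied to an always-accepting procedure with $\rho=1$. That filter passes with probability $e^{\pm\eps}/\ln(\paperZcoefZestZindZinv/\eps)$ and adds only $O(\log\eps^{-1})$ trivial calls.
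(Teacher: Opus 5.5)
Your proposal is correct and follows the paper's proof. The complexity accounting is the same, including $p_{\mathcal A}=\mathcal S_{u}(\tilde m)\,\mathcal S_{v}(\tilde m)\ge 1/900$. The probability computation is also the same: for the directed edge $u_0\to v_0$ the starness factors cancel to give $e^{\pm\eps}/(90000\,\tilde m\ln(5/\eps))$, and this is doubled over the two orientations.

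Your remark about the logarithm is right. The paper's own calculation ends at $e^{\pm\eps}/(45000\,\tilde m\ln(5/\eps))$, which is exactly what the algorithm's output line and the table in \textsc{Sample-Edge-Core-Hybrid} use. So the $\ln^2$ in the lemma statement is a typo, and no extra filter is needed.
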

\begin{proof}
    For complexity, observe that there are three explicit oracle calls, in addition to two calls to \proc{sample-star-vertex-IS}, each costing $O(\log n)$ (Lemma \reflemma{sample-star-vertex-IS}).

    If we reach the call to \proc{estimate-indicator-inv}, then $\deg(\bu) > k_2 = \sqrt{\tilde{m}}$. In this case, $\mathcal S_u(\tilde{m}) \mathcal S_v(\tilde{m}) \ge (1/30)^2 = 1/900$ (Lemma \reflemma{lbnd-starness}), and therefore, this call for costs $O(1)$ calls to \proc{starness-event-IS} in expectation and $O(\log \eps^{-1})$ at worst-case (Lemma \reflemma{estimate-indicator-inv}). The cost for every such call is $O(\log n)$ (Observation \refobs{starness-event-IS}).
    
    Consider a directed edge $u_0 \to v_0$ for which $\deg(u_0), \deg (v_0) > k_2$. The probability to return $u_0 \to v_0$ is
    \begin{eqnarray*}
        \Pr\left[\text{return $u_0 \to v_0$}\right]
        &=&
        \Pr\left[\bu = u_0\right] \cdot \Pr\left[\bv = v_0\right] \cdot \Pr\left[\text{pass filter}\cond \bu=u_0, \bv=v_0\right] \\
        &=& \frac{\mathcal S_{u_0}(\tilde{m})}{10\sqrt{\tilde{m}}} \cdot \frac{\mathcal S_{v_0}(\tilde{m})}{10\sqrt{\tilde{m}}} \cdot e^{\pm \eps} \frac{1/900}{\mathcal S_{u_0}(\tilde{m}) \mathcal S_{v_0}(\tilde{m}) \cdot \ln (\paperZcoefZestZindZinv / \eps)} \\
        &=& \frac{e^{\pm \eps}}{90000 \tilde{m} \ln (\paperZcoefZestZindZinv / \eps)}
    \end{eqnarray*}

    Note that the probability to return the edge $u_0 v_0$ is twice this result, since it can also be represented as $v_0 u_0$.
\end{proof}

\subsection{Sampling all edges}
\defproc{sample-edge-core-hyb}{Sample-Edge-Core-Hybrid}

We combine the category-specific edge sampling procedures by unifying their coefficient.

\begin{proc-algo}{sample-edge-core-hyb}{G,n,\eps;\tilde{m}}
    \alginput{$\tilde{m} \ge \max\{4, e^{-1/10} m\}$}
    \algoutput[noperiod]{For every edge $e \in E$, the probability to return $e$ is $\frac{e^{\pm 2\eps}}{750000 \tilde{m} R \ln (\paperZcoefZestZindZinv/\eps)}$}
    \algoutputphantom{~~~}{where $R = \min\left\{\sqrt{\tilde{m}}, \sqrt{n/\sqrt{\tilde{m}}}\right\}$}
    \begin{code}
        \algitem Let $\tilde{R} \gets \min\{\sqrt{\tilde{m}}, \sqrt{n/\sqrt{\tilde{m}}}\}$.
        \algitem Let $e \gets \mathrm{N/A}$.
        \begin{If}{$\sqrt{\tilde{m}} \le \sqrt{n/\sqrt{\tilde{m}}}$}
                \algitem Toss a four-head coin.
                \begin{Codeblock*}
                    \algitem With probability $\frac{1}{225 \tilde{R}}$: let $e \gets \proc{sample-LL-edge-hyb}(G,n,\eps;\tilde{m})$.
                    \algitem With probability $\frac{1}{75}$: let $e \gets \proc{sample-LMH-edge-hyb}(G,n,\eps;\tilde{m})$.
                    \algitem With probability $\frac{1}{\tilde{R}}$: let $e \gets \proc{sample-MHMH-edge-hyb}(G,n,\eps;\tilde{m})$.
                    \algitem Otherwise: (do nothing).
                \end{Codeblock*}
        \end{If}
        \begin{Else}
            \algitem Toss a five-head coin.
            \begin{Codeblock*}
                \algitem With probability $\frac{1}{225 \tilde{R}}$: let $e \gets \proc{sample-LL-edge-hyb}(G,n,\eps;\tilde{m})$.
                \algitem With probability $\frac{1}{150}$: let $e \gets \proc{sample-LM-edge-hyb}(G,n,\eps;\tilde{m})$.
                \algitem With probability $\frac{1}{3750}$: let $e \gets \proc{sample-LH-edge-LOCAL}(G,n,\eps;\tilde{m})$.
                \algitem With probability $\frac{1}{\tilde{R}}$: let $e \gets \proc{sample-MHMH-edge-hyb}(G,n,\eps;\tilde{m})$.
                \algitem Otherwise: (do nothing).
            \end{Codeblock*}
        \end{Else}
        \begin{If}{$e \ne \mathrm{N/A}$}
            \algitem Return $e$.
        \end{If}
        \algitem Return \reject.
    \end{code}
\end{proc-algo}

\begin{lemma}{sample-edge-core-hyb}
    For every graph $G = (V,E)$ (over $n$ vertices and $m$ edges), if $\tilde{m} \ge \max\{4, e^{-1/10} m\}$, then Algorithm \refalg{sample-edge-core-hyb} (\proc*{sample-edge-core-hyb}) returns every edge $uv \in E$ with probability in the range $e^{\pm 2 \eps} / 45000\tilde{m} \tilde{R} \ln (\paperZcoefZestZindZinv/\eps)$ for $\tilde{R} = \min\{\sqrt{\tilde{m}}, \sqrt{n / \sqrt{\tilde{m}}}\}$, and otherwise rejects. The query complexity is $O(\log n)$ in expectation and $O(\log n \log \eps^{-1})$ in worst-case.
\end{lemma}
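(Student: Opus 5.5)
The plan is a direct case analysis. Every edge lies in exactly one category, every category-specific sampler returns only edges of its own category, and the coin probabilities in Algorithm \refalg{sample-edge-core-hyb} are chosen so that each product (coin probability) $\times$ (per-edge probability of the sampler) equals the same quantity, $1/(45000\,\tilde{m}\tilde{R}\ln(\paperZcoefZestZindZinv/\eps))$, up to an $e^{\pm\eps}$ factor.

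\textbf{Setup.}
\begin{enumerate}
\item The categories partition $E$: L-L, L-M, L-H and MH-MH cover all edges. In the branch $\sqrt{\tilde{m}}\le\sqrt{n/\sqrt{\tilde{m}}}$, L-M and L-H are merged into L-MH.
\item Each sampler is confined to its category, by its explicit degree checks: \proc{sample-LL-edge-hyb} requires both degrees $\le k_2$; \proc{sample-LMH-edge-hyb} requires $d_u>k_2$ and $\deg(v)\le k_2$; \proc{sample-LM-edge-hyb} additionally requires $d_u\le k_3$; \proc{sample-LH-edge-LOCAL} requires $d_u>k_3$ and $d_v\le k_2$; \proc{sample-MHMH-edge-hyb} requires both degrees $>k_2$.
\item Every returned pair is an actual edge. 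In \proc{sample-MHMH-edge-hyb} this is because $\oracleIS(\{\bu,\bv\})$ rejected.
\end{enumerate}
Consequently, the probability that the core algorithm returns an edge $e$ equals the probability that the coin selects $e$'s category sampler, times that sampler's probability of returning $e$. The hypothesis $\tilde{m}\ge\max\{4,e^{-1/10}m\}$ is exactly what Lemmas \reflemma{sample-LL-edge-hyb}, \reflemma{sample-LMH-edge-hyb}, \reflemma{sample-LM-edge-hyb}, \reflemma{sample-LH-edge-LOCAL} and \reflemma{sample-MHMH-edge-hyb} need.

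\textbf{The products.} Write $\Lambda=\ln(\paperZcoefZestZindZinv/\eps)$.

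In the low-$\tilde{m}$ branch, $\tilde{R}=\sqrt{\tilde{m}}$, and:
\begin{itemize}
\item L-L: $\frac{1}{225\tilde{R}}\cdot\frac{e^{\pm\eps}}{200\tilde{m}\Lambda}$;
\item L-MH: $\frac{1}{75}\cdot\frac{e^{\pm\eps}}{600\tilde{m}\sqrt{\tilde{m}}\Lambda}$;
\item MH-MH: $\frac{1}{\tilde{R}}\cdot\frac{e^{\pm\eps}}{45000\tilde{m}\Lambda}$.
\end{itemize}
Each equals $e^{\pm\eps}/(45000\tilde{m}\tilde{R}\Lambda)$.

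In the high-$\tilde{m}$ branch, $\tilde{R}=\sqrt{n/\sqrt{\tilde{m}}}$, and:
\begin{itemize}
\item L-L and MH-MH: the same computations as above;
\item L-M: $\frac{1}{150}\cdot\frac{e^{\pm\eps}}{300\tilde{m}\tilde{R}\Lambda}$;
\item L-H: $\frac{1}{3750}\cdot\frac{e^{\pm\eps}}{12\tilde{m}\tilde{R}\Lambda}$.
\end{itemize}
Again each equals $e^{\pm\eps}/(45000\tilde{m}\tilde{R}\Lambda)$.

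For MH-MH I would use the per-edge value $e^{\pm\eps}/(45000\tilde{m}\Lambda)$ actually derived in the proof of Lemma \reflemma{sample-MHMH-edge-hyb}: twice the directed probability, with a single $\Lambda$. The $\Lambda^2$ in its statement and the $750000$ in the algorithm header appear to be typos. The resulting $e^{\pm\eps}$ lies inside the claimed $e^{\pm2\eps}$.

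\textbf{Validity of the coin and complexity.} The coin probabilities must sum to at most $1$. Since $\tilde{m}\ge4$, in the low branch $\tilde{R}=\sqrt{\tilde{m}}\ge2$. In the high branch $\tilde{R}=\sqrt{n/\sqrt{\tilde{m}}}\ge\sqrt{\tilde{m}}\ge2$ by the branch condition. Hence the sums are at most $\frac{1}{450}+\frac{1}{75}+\frac12<1$ and $\frac{1}{450}+\frac{1}{150}+\frac{1}{3750}+\frac12<1$, so the "otherwise" mass is nonnegative.

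For complexity, exactly one sub-sampler is invoked, plus $O(1)$ work. By the cited lemmas, each sub-sampler costs $O(\log n)$ in expectation and $O(\log n\log\eps^{-1})$ in the worst case, and those worst-case bounds hold regardless of $\tilde{m}$.

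\textbf{Main obstacle.} The only delicate point is bookkeeping: checking that the degree guards make the sub-samplers' supports disjoint and exhaustive in each branch, and reconciling the inconsistent constants (the $\Lambda^2$, and $750000$ versus $45000$) by relying on the computations rather than the headers.
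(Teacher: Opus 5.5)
Your per-edge computation is the paper's proof. In each branch the paper tabulates the base probability from the category lemma times the coin probability, obtaining $e^{\pm\eps}/(45000\,\tilde m\tilde R\ln(5/\eps))$ for every edge class. It bounds the complexity by that of the single invoked subroutine. Your products and category bookkeeping agree with this, as does your reading of the MH-MH constant (one logarithm and $45000$, as in the paper's table).

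The step that fails is the one you added to show the coin is a valid distribution. The high-$\tilde m$ branch is the \emph{else} case of $\sqrt{\tilde m}\le\sqrt{n/\sqrt{\tilde m}}$, so the branch condition gives $\tilde R=\sqrt{n/\sqrt{\tilde m}}<\sqrt{\tilde m}$, not $\tilde R\ge\sqrt{\tilde m}$. There $\tilde R=(n^2/\tilde m)^{1/4}$, which is close to $1$ for dense graphs. The hypothesis $\tilde m\ge\max\{4,e^{-1/10}m\}$ puts no upper bound on $\tilde m$, so the mass $1/\tilde R$ on the MH-MH sampler can exceed $1$. Even with $\tilde m\le n^2$, at $\tilde m=n^2$ the masses sum to $1+\frac{1}{225}+\frac{1}{150}+\frac{1}{3750}>1$.

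So you need an upper bound on $\tilde m$ that the hypothesis does not supply. One option is $\tilde m\le e^{1/10}m<e^{1/10}n^2/2$, which the estimation step provides with high probability. It gives $\tilde R>(2e^{-1/10})^{1/4}>1.15$ and total coin mass below $0.9$. Another option is to clip $\tilde m$ at $\binom n2$. This keeps $\tilde m\ge e^{-1/10}m$ and gives $\tilde R>2^{1/4}$. The paper's proof omits this check entirely, so the flaw is confined to your extra verification, but that sentence is false as written.
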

\begin{proof}
    For query complexity, observe that all subroutines require at most $O(\log n)$ queries in expectation and $O(\log n \log \eps^{-1})$ queries at worst-case.
    
    If $\sqrt{\tilde{m}} \le \sqrt{n / \sqrt{\tilde{m}}}$ (that is, $\tilde{R} = \sqrt{\tilde{m}}$), then the probability to return every individual edge $uv$, based on its vertices' degree classes, is:
    \[ \begin{array}{lllll}
         \text{Edge class} & \text{Lemma} & \text{Base probability} & \text{Coin} & \text{Combined} \\
         \text{L-L} & \text{\reflemma{sample-LL-edge-hyb}} & \frac{e^{\pm \eps}}{200\tilde{m} \ln (\paperZcoefZestZindZinv/\eps)} & \frac{1}{225 \tilde{R}} & \frac{e^{\pm \eps}}{45000 \tilde{m} \tilde{R} \ln (\paperZcoefZestZindZinv/\eps)} \\
         \text{L-MH} & \text{\reflemma{sample-LMH-edge-hyb}} & \frac{e^{\pm \eps}}{600\tilde{m} \tilde{R} \ln (\paperZcoefZestZindZinv/\eps)} & \frac{1}{75} & \frac{e^{\pm \eps}}{45000 \tilde{m} \tilde{R} \ln (\paperZcoefZestZindZinv/\eps)} \\
         \text{MH-MH} & \text{\reflemma{sample-MHMH-edge-hyb}} & \frac{e^{\pm \eps}}{45000\tilde{m} \ln (\paperZcoefZestZindZinv/\eps)} & \frac{1}{\tilde{R}} & \frac{e^{\pm \eps}}{45000 \tilde{m} \tilde{R} \ln (\paperZcoefZestZindZinv/\eps)}
    \end{array} \]
    
    If $\sqrt{\tilde{m}} > \sqrt{n / \sqrt{\tilde{m}}}$ (that is, $\tilde{R} = \sqrt{n / \sqrt{\tilde{m}}}$), then the probability to return every individual edge $uv$, based on its vertices' degree classes, is:
    \[ \begin{array}{lllll}
        \text{Edge class} & \text{Lemma} & \text{Base probability} & \text{Coin} & \text{Combined} \\
        \text{L-L} & \text{\reflemma{sample-LL-edge-hyb}} & \frac{e^{\pm \eps}}{200\tilde{m} \ln (\paperZcoefZestZindZinv/\eps)} & \frac{1}{225 \tilde{R}} & \frac{e^{\pm \eps}}{45000 \tilde{m} \tilde{R} \ln (\paperZcoefZestZindZinv/\eps)} \\
        \text{L-M} & \text{\reflemma{sample-LM-edge-hyb}} & \frac{e^{\pm \eps}}{300\tilde{m} \tilde{R} \ln (\paperZcoefZestZindZinv/\eps)} & \frac{1}{150} & \frac{e^{\pm \eps}}{45000 \tilde{m} \tilde{R} \ln (\paperZcoefZestZindZinv/\eps)} \\
        \text{L-H} & \text{\reflemma{sample-LH-edge-LOCAL}} & \frac{e^{\pm \eps}}{12\tilde{m} \tilde{R} \ln (\paperZcoefZestZindZinv/\eps)} & \frac{1}{3750} & \frac{e^{\pm \eps}}{45000 \tilde{m} \tilde{R} \ln (\paperZcoefZestZindZinv/\eps)} \\
        \text{MH-MH} & \text{\reflemma{sample-MHMH-edge-hyb}} & \frac{e^{\pm \eps}}{45000\tilde{m} \ln (\paperZcoefZestZindZinv/\eps)} & \frac{1}{\tilde{R}} & \frac{e^{\pm \eps}}{45000 \tilde{m} \tilde{R} \ln (\paperZcoefZestZindZinv/\eps)}
    \end{array} \]
\end{proof}

\defproc{sample-edge-amplified-hyb}{Sample-Edge-Amplified-Hybrid}
The core sampler is a $(\lambda,\eps)$-uniform sampler, but $\lambda$ can be very small. Algorithm \refalg{sample-edge-amplified-hyb} (\proc*{sample-edge-amplified-hyb}) amplifies the success probability to $3/4$ by making $O(R \log (1/\eps))$ sample tries, for $R = \min\{\sqrt{m}, \sqrt{n / \sqrt{m}}\}$, and choosing the first successful one. This algorithm requires an advice $\tilde{m}$, and its correctness and expected query complexity statements only hold if $\tilde{m} \in e^{\pm 1/10} m$.

\begin{proc-algo}{sample-edge-amplified-hyb}{G, n, \tilde{m}, \eps}
    \alginput{$\tilde{m} \ge \max\{4, e^{-1/10} m\}$}
    \algoutput{For every edge $e \in E$, the probability to return $e$ is $e^{\pm \eps} \lambda(G,\tilde{m})$}
    \algoutputphantom{}{($\lambda(G,\tilde{m}) \ge 3/4$ if $\tilde{m} \ge e^{-1/10} m$)}
    \algcomplexity{$O(\min\{\sqrt{m}, n / \sqrt{m}\} \cdot \log n \log (1/\eps))$ expected, if $\tilde{m} \ge e^{-1/10} m$}
    \algcomplexity{$O(n^{1/3} \log n \log^2 (1/\eps))$ worst-case}
    \begin{code}
        \algitem Let $\tilde{R} \gets \min\{\sqrt{\tilde{m}}, \sqrt{n / \sqrt{\tilde{m}}}\}$.
        \begin{For}{$\ceil{10^6 \tilde{R} \ln (\paperZcoefZestZindZinv/\eps)}$ times}
            \algitem Let $\be \gets \proc{sample-edge-core-hyb}(G,n,\eps/5;\tilde{m})$.
            \begin{If}{$\be$ is not \reject}
                \algitem Return $\be$.
            \end{If}
        \end{For}
        \algitem Return \reject.
    \end{code}
\end{proc-algo}

\begin{lemma}{sample-edge-amplified-hyb}
    For every graph $G = (V,E)$ (over $n$ vertices and $m$ edges), if $\tilde{m} \ge \max\{4, e^{-1/10} m\}$, then Algorithm \refalg{sample-edge-core-hyb} (\proc*{sample-edge-core-hyb}) returns every edge $uv \in E$ with probability in the range $e^{\pm \eps} \lambda(G,\tilde{m})$, where $\lambda(G,\tilde{m}) \ge 3/4$, and otherwise rejects. The query complexity is $O(R \log n \log (1/\eps))$ in expectation (based on the assumption about $\tilde{m}$) for $R = \min\{\sqrt{\tilde{m}}, \sqrt{n / \sqrt{\tilde{m}}}\}$ and $O(n^{1/3} \log n \cdot \log^2 (1/\eps))$ in worst-case (for every $\tilde{m}$).
\end{lemma}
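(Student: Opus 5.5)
The plan is to treat \proc*{sample-edge-amplified-hyb} as a sequence of $T = \ceil{10^6 \tilde{R} \ln(\paperZcoefZestZindZinv/\eps)}$ independent calls to \proc{sample-edge-core-hyb} with accuracy $\eps/5$. The analysis has three parts: the output distribution, the success probability, and the complexity.

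\textbf{Output distribution.} By Lemma \reflemma{sample-edge-core-hyb}, applied with accuracy $\eps/5$, each call returns any fixed edge $e$ with probability $q_e \in e^{\pm 2\eps/5} q$, where
\[ q = \frac{1}{45000\,\tilde{m}\tilde{R}\ln(\paperZcoefZestZindZinv/(\eps/5))}, \]
and otherwise it rejects. Let $Q = \sum_e q_e$ be the per-call success probability. The amplified algorithm returns $e$ with probability
\[ \sum_{i=1}^{T}(1-Q)^{i-1} q_e \;=\; q_e \cdot \frac{1-(1-Q)^T}{Q}. \]
Setting $\lambda(G,\tilde{m}) = (1-(1-Q)^T)/m$, we get $q_e/Q \in e^{\pm 4\eps/5}/m$, since every $q_e$ lies within $e^{\pm 2\eps/5}$ of a common $q$. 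Hence every edge is returned with probability in $e^{\pm\eps}\lambda$. I read the statement's "$\lambda \ge 3/4$" as saying that the total mass $m\lambda = 1-(1-Q)^T$ is at least $3/4$.

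\textbf{Success probability.} This is the main obstacle, because of constant bookkeeping. We have $Q \ge e^{-2\eps/5} m q$. Using $m \ge e^{-1/10}\tilde{m}$, which holds when $\tilde{m} \in e^{\pm1/10}m$, this gives $Q \gtrsim 1/(45000\, e^{1/10} e^{2\eps/5}\, \tilde{R}\ln(\paperZcoefZestZindZinv \cdot 5/\eps))$. Then
\[ (1-Q)^T \le e^{-QT}. \]
With $T = 10^6\tilde{R}\ln(\paperZcoefZestZindZinv/\eps)$, the product $QT$ is at least roughly $20 \cdot \ln(5/\eps)/\ln(25/\eps)$, which is bounded below by a constant exceeding $\ln 4$ for all small $\eps$. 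So $1-(1-Q)^T \ge 3/4$.

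A gap remains: this step needs $m$ to be comparable to $\tilde{m}$ from below. The stated hypothesis only gives $\tilde{m} \ge e^{-1/10}m$, so the lower bound on $\lambda$ will be proved under $\tilde{m} \in e^{\pm1/10}m$. That is the regime the parenthetical in the algorithm header refers to, and the distributional claim holds regardless. I would also check that the $\ln$ factor mismatch between $\eps$ and $\eps/5$ is absorbed by the $10^6$ constant.

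\textbf{Complexity.} Each core call costs $O(\log n)$ in expectation and $O(\log n\log\eps^{-1})$ in the worst case. There are at most $T = O(\tilde{R}\log(1/\eps))$ calls, which gives expected cost $O(R\log n\log(1/\eps))$ once $\tilde{R} = \Theta(R)$ under the accuracy assumption. For the worst case, which must hold for every $\tilde{m}$, use $\tilde{R} = \min\{\sqrt{\tilde{m}},\sqrt{n/\sqrt{\tilde{m}}}\} \le n^{1/3}$; the two terms balance at $\tilde{m} = n^{2/3}$. Multiplying by the worst-case per-call cost gives $O(n^{1/3}\log n\log^2(1/\eps))$.
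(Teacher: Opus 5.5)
Your proposal is correct and follows the paper's proof essentially step for step. It uses the per-call success probability from the core-sampler lemma, the bound $(1-Q)^T \le e^{-QT}$ with $m/\tilde{m} \ge e^{-1/10}$ (the paper gets $e^{-e^{-1/10}e^{-1/5}\cdot 200/9}$), and rounds times per-round cost with $\tilde{R} \le n^{1/3}$ for the worst case. The extra hypothesis $\tilde{m} \le e^{1/10} m$ that you flag for $\lambda \ge 3/4$ is also used silently in the paper's proof, and your handling of the $\ln(25/\eps)$ versus $\ln(5/\eps)$ factor and of what $\lambda$ means is more careful than the paper's.
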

\begin{proof}
    Let $R = \min\{\sqrt{m}, \sqrt{n / \sqrt{m}}\}$ and $\tilde{R} = \min\{\sqrt{\tilde{m}}, \sqrt{n / \sqrt{\tilde{m}}}\}$. The success probability of an iteration is at least $e^{-\eps/5} / 45000 \tilde{m} \tilde{R} \ln (\paperZcoefZestZindZinv/\eps)$ (Lemma \reflemma{sample-edge-core-hyb}). Therefore, the success probability of the loop is:
    \[  \lambda(G,\tilde{m})
        \ge 1 - \left(1 - m \cdot \frac{e^{-\eps/5}}{45000 \tilde{m} \tilde{R} \ln (\paperZcoefZestZindZinv/\eps)}\right)^{10^6 \tilde{R} \ln (\paperZcoefZestZindZinv/\eps)}
        \ge 1 - e^{-e^{-1/10} \cdot e^{-1/5} \cdot \frac{200}{9}}
        \ge 3/4
    \]

    For every edge $e \in E$, the probability to sample it is in the range $e^{\pm 2 (\eps/5)} \lambda(G, \tilde{m})$.

    For complexity:
    \begin{itemize}
        \item Expected: $O(\tilde{R} \log (1/\eps))$ rounds times $O(\log n)$ per-round (Lemma \reflemma{sample-edge-core-hyb}), which is $O(R \log n \cdot \log (1/\eps))$ if $\tilde{m} \in e^{\pm 1/10} m$.
        \item Worst-case: $O(n^{1/3} \log (1/\eps))$ rounds times $O(\log n \log (1/\eps))$ per-round (Lemma \reflemma{sample-edge-core-hyb}).
    \end{itemize}
\end{proof}

\defproc{sample-edge-hyb}{Sample-Edge-Hybrid}
The entry-point sampling algorithm, \proc*{sample-edge-hyb} (Algorithm \refalg{sample-edge-hyb}), estimates $\tilde{m} \in e^{\pm 1/10} m$ with success probability $1 - r$, where $r = O(\eps^2) \cap O(1/n^{1/3} \log n)$.

\begin{proc-algo}{sample-edge-hyb}{G,n,\eps}
    \begin{code}
        \begin{If}{$\eps > 1/3$}
            \algitem Set $\eps \gets 1/3$.
        \end{If}
        \algitem Let $r \gets \min\{\eps^2 / 6, 1/n^2 \log n \log^2 (1/\eps) \}$.
        \algitem Compute $\tilde{\bm} \gets e^{\pm 1/10} m$ with probability $1 - r$.
        \begin{If}{$\tilde{\bm} < 4$}
            \algitem Return $\proc{sample-edge-bruteforce-IS}(G)$.
        \end{If}
        \algitem Return $\proc{sample-edge-amplified-hyb}(G,n,\eps/5,\tilde{\bm})$.
    \end{code}
\end{proc-algo}

\begin{lemma}{sample-edge-hyb}
    For an input graph $G$ over $n$ vertices (known to the algorithm) and $m$ edges (unknown to the algorithm), Algorithm \refalg{sample-edge-hyb} (\proc*{sample-edge-hyb}) is a $(2/3, \eps)$-uniform sampler of the edges of $G$. The expected query complexity is $O(R \log n \log (n/\eps) + \log^2 (1/\eps))$, where $R=\min\{\sqrt{m}, \sqrt{n/\sqrt{m}}\}$.
\end{lemma}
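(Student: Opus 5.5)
We split according to whether the estimation step succeeds. Let $\mathcal G$ be the event that $\tilde{\bm} \in e^{\pm 1/10} m$; by the amplified estimator (Lemma \reflemma{external:estimate-edges-hybrid} with median amplification) $\Pr[\mathcal G] \ge 1-r$.

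\textbf{Conditioned on $\mathcal G$.} The algorithm takes one of two branches.
\begin{itemize}
\item If $\tilde{\bm}<4$, then $m \le 4e^{1/10}=O(1)$. The brute-force sampler (Observation \refobs{sample-edge-bruteforce-IS}) returns an exactly uniform edge with probability $1$, at cost $O(\log n)$.
\item Otherwise $\tilde{\bm} \ge \max\{4,e^{-1/10}m\}$, so Lemma \reflemma{sample-edge-amplified-hyb} applies (run with accuracy $\eps/5$). Every edge is returned with probability $e^{\pm\eps/5}\lambda$, where $\lambda \ge 3/4$.
\end{itemize}
In both branches the conditional output, given success and $\mathcal G$, is $\eps/5$-uniform, and the success probability is at least $3/4$.

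\textbf{Unconditioning.} For each edge $e$,
\[
\Pr[\bX=e] = \Pr[\mathcal G]\,\Pr[\bX=e \cond \mathcal G] + \Pr[\neg\mathcal G]\,\Pr[\bX=e \cond \neg\mathcal G].
\]
The first term lies in $(1-r)e^{\pm\eps/5}\lambda/m$ when averaged over the realizations of $\tilde{\bm}$ within $\mathcal G$. This averaging is the delicate point. Lemma \reflemma{sample-edge-amplified-hyb} gives, for each fixed $\tilde m$, a probability $e^{\pm\eps/5}\lambda(G,\tilde m)$ that is the same for all edges. Averaging over $\tilde m$ therefore preserves uniformity up to $e^{\pm\eps/5}$, because the common factor $\lambda(G,\tilde m)$ is edge-independent.

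The second term is at most $r$. This is the main obstacle: an additive $r$ must be converted into a multiplicative error relative to roughly $1/m$. A bound of $r \le \eps/m$ would suffice, but $r\le \eps^2/6$ does not give it directly. I would therefore bound the failure mass more carefully. On $\neg\mathcal G$, any single edge is output with probability at most the probability that the wrong-$\tilde m$ run outputs that edge. Rather than bounding this per edge, I would use that with $\tilde{\bm}$ wrong at most $r$ total mass is misallocated, while the good mass per edge is at least $(3/4)(1-r)e^{-\eps/5}/m$.

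To make this work I would check that $r$ is small relative to $1/m$ using $r\le 1/(n^2\log n\log^2(1/\eps))$. Since $m\le n^2$, this gives $r \le 1/m \cdot 1/(\log n \log^2(1/\eps))$. Combined with $r\le\eps^2/6$, taking a geometric mean gives $r\le \sqrt{\eps^2/6 \cdot 1/m}$, so $r\cdot m$ is small compared with $\eps$. Hence
\[
\Pr[\bX=e] \in e^{\pm\eps/5}(1-r)\lambda/m \pm r \subseteq e^{\pm\eps/2}\lambda'/m
\]
for an edge-independent $\lambda'$. Dividing by the total success probability gives $\eps$-uniformity conditioned on success. The success probability is at least $(1-r)\cdot 3/4 \ge 2/3$.

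\textbf{Complexity.} The expected cost has four parts.
\begin{itemize}
\item Estimation costs $O(R\log n\log(1/r)) = O(R\log n\log(n/\eps))$, since $\log(1/r)=O(\log(n/\eps))$.
\item On $\mathcal G$, the amplified sampler costs $O(R\log n\log(1/\eps))$ in expectation; here $\tilde R=\Theta(R)$ because $\tilde m=\Theta(m)$.
\item On $\neg\mathcal G$, we use the worst-case bound $O(n^{1/3}\log n\log^2(1/\eps))$, weighted by $r$. This contributes $O(1)$, or $O(\log^2(1/\eps))$ through the $\eps^2$ term.
\item The brute-force branch costs $O(1+m\log n)$. It runs only when $\tilde{\bm}<4$, which on $\mathcal G$ means $m=O(1)$. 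Off $\mathcal G$ it can cost up to $O(n^2\log n)$, but weighted by $r\le 1/(n^2\log n)$ this contributes $O(1)$.
\end{itemize}
Summing these gives the claimed bound $O(R\log n\log(n/\eps)+\log^2(1/\eps))$.
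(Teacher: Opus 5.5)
Your overall structure matches the paper's proof, including the four-part complexity accounting:
\begin{itemize}
\item condition on $\tilde m\in e^{\pm1/10}m$;
\item use that $\lambda(G,\tilde m)$ is common to all edges, so averaging over good values of $\tilde m$ preserves $e^{\pm\eps/5}$-uniformity;
\item charge the bad event an additive $r$ per edge.
\end{itemize}
You also correctly see that the per-edge mass is of order $\lambda'/m$, so what is really needed is $rm=O(\eps)$.

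The step you use to obtain $rm=O(\eps)$ is wrong. From $r\le\min\{\eps^2/6,\ 1/(m\log n\log^2(1/\eps))\}$, the geometric mean gives $r\le\eps/\sqrt{6m\log n\log^2(1/\eps)}$. That only yields $rm\le\eps\sqrt{m/(6\log n\log^2(1/\eps))}$. Your own version, $r\le\sqrt{\eps^2/(6m)}$, gives $rm\le\eps\sqrt{m/6}$. Both bounds grow with $m$, so neither is $O(\eps)$.

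No rearrangement can rescue this, because with the algorithm's $r=\min\{\eps^2/6,\ 1/(n^2\log n\log^2(1/\eps))\}$ the inequality is simply false in general. Take $m=\Theta(n^2)$ and $\eps=1/n$. Then for large $n$ we get $r=1/(n^2\log^3 n)$, hence $rm=\Theta(1/\log^3 n)$, which is far larger than $\eps$. The estimator is a black box, and its failure mass of up to $r$ may land on a single edge. So the containment $e^{\pm\eps/5}(1-r)\lambda/m\pm r\subseteq e^{\pm\eps/2}\lambda'/m$ is unjustified.

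The paper's argument has the same hole in a less visible form. It bounds each edge's probability between $e^{-\eps/5}\lambda'$ and $e^{\eps/5}\lambda'+r$ and then uses $r\le\eps^2/6$. This works only because the factor $1/m$ is missing. The omission is already present in the per-edge statement of the amplified-sampler lemma, where $\lambda$ is actually the total success probability.

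The simple repair is to shrink the estimator's failure probability, for example
\[
r=\min\{\eps^2/6,\ \eps/(10n^2),\ 1/(n^2\log n\log^2(1/\eps))\}.
\]
\begin{itemize}
\item This gives $rm\le\eps/10$.
\item Let $\Lambda\ge(1-r)\cdot 3/4$ be the success mass averaged over the good event. The ratio of the largest to the smallest per-edge probability is then at most $e^{2\eps/5}+e^{\eps/5}rm/\Lambda\le e^{2\eps/5}+\eps/5\le e^{\eps}$, which is exactly $\eps$-uniformity conditioned on success.
\item Since still $\log(1/r)=O(\log(n/\eps))$ and $r\le 1/(n^2\log n\log^2(1/\eps))$, the median amplification, the brute-force term and the wrong-$\tilde m$ worst-case term all keep the claimed complexity.
\end{itemize}

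Alternatively, you could observe two facts about the bad event. Runs with $\tilde m>e^{1/10}m$ still satisfy the one-sided hypothesis $\tilde m\ge\max\{4,e^{-1/10}m\}$, so they remain near-uniform. Runs with $\tilde m<4$ are exactly uniform. However, the case $4\le\tilde m<e^{-1/10}m$ would still need its own per-edge bound, so shrinking $r$ is the cleanest fix.
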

\begin{proof}
    Estimating $\tilde{m}$ in the range $e^{\pm 1/10} m$ with probability $2/3$ requires $O(R \log n)$ queries worst-case (Lemma \reflemma{external:estimate-edges-hybrid}). To increase this probability to $1 - r$, for $r=\min\{\eps^2/6,1/n^2\log n \log^2 (1/\eps)\}$, we take the median of $O(\log r^{-1}) = O(\log (n/\eps))$ independent runs at the cost of $O(R \log (n/\eps) \log n)$ queries.
    
    The contribution of the brute-force sampler in the case where $\tilde{\bm}$ is wrongly small is $\Pr\left[\tilde{\bm} < 4\right] \cdot O(1 + m \log n) = O(r m \log n) = O(1)$ (last transition is correct since $m \le n^2$ and $r \le 1/n^2 \log n$).

    If $\tilde{m} \ge 4$, then amplified sample costs:
    \begin{itemize}
        \item $O(R \cdot \log n)$ in expectation, if $\tilde{m} \in e^{\pm 1/10} m$.
        \item $O(n^{1/3} \log n \log^2 (1/\eps))$ if $\tilde{m} \notin e^{\pm 1/10} m$
    \end{itemize}
    Since $\tilde{m}$ is outside the correct range with probability at most $r \le 1/n^2 \log n \log^2 (1/\eps)$, the expected query complexity of the sampling logic is $O(R \log n \log (1/\eps))$.
    
    Let $\lambda'(G) = \frac{\sum_{a \in e^{\pm 1/10} m} \Pr\left[\tilde{m} = a\right]}{\Pr\left[\tilde{m} \in e^{\pm 1/10} m\right]} \ge \frac{3}{4}$ (last transition: Lemma \reflemma{sample-edge-amplified-hyb}).

    By Lemma \reflemma{sample-edge-amplified-hyb}, the probability to sample an individual edge $e \in E$ is at least $e^{-\eps/5} \lambda'(G)$ and at most $e^{\eps/5}\lambda'(G) + r$. For $0 < \eps \le 1/3$, the ratio between the maximum and the minimum probability is at most:
    \begin{eqnarray*}
        e^{(2/5)\eps} + \frac{e^{\eps/5}}{\lambda'(G)} r
        \le e^{(2/5)\eps} + \frac{4}{3} e^{\eps/5} \cdot \frac{1}{6} \eps^2
        \le e^{\eps/2}
    \end{eqnarray*}

    Therefore, the output is $\eps$-uniform when conditioned on success.
\end{proof}

\section{Using the IS oracle}
\label{sec:IS-only}

The main conceptual difference between the IS-only algorithm and the IS+LOCAL algorithm is the inability to certainly determine the category of each vertex, since we have to estimate the degree rather than using the degree oracle. Besides this, the algorithms differ by the need to simulate the local oracles using the IS-oracle.

In the IS-only part, we use the following parameters:
\begin{itemize}
    \item $k = \sqrt{\tilde{m}}$ - a threshold degree. This is the same as $k_2$ of the technical overview.
    \item $n^* = \min\{n, 2m\}$ - an upper bound for the number of effective vertices.
    \item $\tilde{n}^* = \min\{n, 2\tilde{m}\}$ - an estimation for $n^*$.
\end{itemize}
Degrees bounded by $k$ are considered low and degrees greater than $k$ are considered as high. Some statements refer to an extended range for low degrees, between $0$ and $2k$ (rather than $k$). This follows the need to test the degree of a vertex instead of certainly determine it using the degree oracle.

Let $r = \poly(\eps, 1/n)$ be a reasonable additive error. The algorithm uses $O(\min\{\sqrt{m}, \sqrt{n/\sqrt{m}}\} \cdot \poly(\log n) \cdot \log (1/r))$ queries to obtain an approximation $\tilde{m}$, which is in the range $e^{\pm 1/10} m$ with probability at least $1 - r$ \cite{clw20, ahl26}.

\subsection{Sampling a neighbor}
\defproc{sample-neighbor-IS}{Sample-Neighbor-IS}
The neighbor sampler for parameters $n$, $u$ and $\tilde{m}$ has two branches, depending on the choice of $\tilde{n}^*$.
\begin{itemize}
    \item If $\tilde{n}^* = n$, then we uniformly choose a vertex $v$ and use a single IS query to determine whether $v \in \mathrm{N}_u$, in which case we return it, or not, in which case we reject.
    \item If $\tilde{n}^* \ne n$, then we draw a set $\bS$ of density $1/\tilde{m}$. If there is exactly one edge in $\bS \cup \{u\}$, and this edge is adjacent to $u$, then we return its other endpoint vertex. Otherwise we reject.
\end{itemize}

Algorithm \refalg{sample-neighbor-IS} (\proc*{sample-neighbor-IS}) provides the pseudocode for the neighbor sampler. In the following we define $\mathcal N_{u,v}(\tilde{m})$, the \emph{neighborhood factor} of $u$ and $v$, and show that the probability to sample $v$ (when $u$ is given) is exactly $\mathcal N_{u,v}(\tilde{m})/\tilde{m}$. This can be see as an alternative to the loneliness factor, when considering a different density.

\begin{proc-algo}{sample-neighbor-IS}{G,n,\tilde{m};u}
    \alginput{$\tilde{m} \ge \max\{5, e^{-1/10} m\}$}
    \alginput{A vertex $u$}
    \alginput{If $\tilde{n}^* = n$: for every $v \in \mathrm{N}_u$, the probability to return $v$ is $\frac{1}{n}$}
    \alginput{If $\tilde{n}^* \ne n$: for every $v \in \mathrm{N}_u$, the probability to return $v$ is $\frac{1}{\tilde{m}} \mathcal N_{u,v}(\tilde{m})$}
    \algcomplexity{$1$ if $\tilde{n}^* = n$ (worst-case)}
    \algcomplexity{$O(\log n)$ if $\tilde{n}^* \ne n$ (worst-case)}
    \begin{code}
        \algitem Let $\tilde{n}^* \gets \min\{n, 2\tilde{m}\}$.
        \begin{If}{$\tilde{n}^* = n$}
            \algitem Draw $\bv \in V$ uniformly.
            \algpushcomment{(No edges)}
            \begin{If}{$\oracleIS(\{u,\bv\})$ accepts}
                \algitem Return \reject.
            \end{If}
            \algitem Return $\bv$.
        \end{If}
        \begin{Else}
            \algitem Let $\bS \subseteq V$ be a set that every non-$u$ vertex belongs to with probability $1/\tilde{m}$ iid.
            \algitem Let $\be \gets \proc{extract-edge-IS}(G; \bS \cup \{u\})$.
            \begin{If}{$\be \ne \reject$ and $u \in \be$}
                \algitem Let $\bv$ be the other vertex of $\be$.
                \begin{If}{$\proc{test-loneliness-IS}(\bS; u, \bv)$}
                    \algitem Return $\bv$.
                \end{If}
            \end{If}
            \algitem Return \reject.
        \end{Else}
    \end{code}
\end{proc-algo}

\defproc{neighborhood-event-IS}{Neighborhood-Event-IS}
To draw an indicator whose expected value is $\mathcal N_{u,v}(\tilde{m})$, we draw a set $\bS \subseteq V$ of density $p = 1/\tilde{m}$ and test whether or not $\{u,v\}$ is lonely in $\bS$. Algorithm \refalg{neighborhood-event-IS} (\proc*{neighborhood-event-IS}) provides the pseudocode for this logic.

\begin{proc-algo}{neighborhood-event-IS}{G,n,\tilde{m};u,v}
    \alginput{$\tilde{m} \ge \max\{5, e^{-1/10}m\}$}
    \alginput{A vertex $u$, a vertex $v \in \mathrm{N}_u$}
    \algoutput{Expected value: $\mathcal N_{u,v}(\tilde{m})$}
    \algcomplexity{$O(1)$ (worst case)}
    \begin{code}
        \algitem Draw a set $\bS \subseteq V$ of density $1/\tilde{m}$.
        \algitem Return $\proc{test-loneliness-IS}(\bS; u,v)$.
    \end{code}
\end{proc-algo}

\begin{observation}{neighborhood-event-IS}
    Algorithm \refalg{neighborhood-event-IS} (\proc*{neighborhood-event-IS}) accepts the input $(G,n,\tilde{m};u,v)$ with probability exactly $\mathcal N_{u,v}(\tilde{m})$. The query complexity is $O(1)$ worst-case.
\end{observation}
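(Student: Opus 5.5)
The statement to prove is the observation about \proc*{neighborhood-event-IS}: it accepts with probability exactly $\mathcal N_{u,v}(\tilde{m})$, using $O(1)$ queries in the worst case. Both parts will follow directly from definitions and earlier results.

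\textbf{Complexity.} The procedure draws $\bS$, which costs no queries, and then makes a single call to \proc{test-loneliness-IS}. By Observation \refobs{test-loneliness-IS}, that call costs $O(1)$ IS queries in the worst case.

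\textbf{Acceptance probability.} The key step is fixing the definition of $\mathcal N_{u,v}(\tilde{m})$ so that it matches both this procedure and the sampler \proc*{sample-neighbor-IS}. I would define it as the probability that the pair $\{u,v\}$ is lonely in $\bS$, where every non-$u$ vertex enters $\bS$ independently with probability $1/\tilde{m}$. This is the analogue of $\mathcal L_{u,v}$ at density $1/\tilde{m}$ rather than $1/10\sqrt{\tilde{m}}$.

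Loneliness of $\{u,v\}$ in $S$ depends only on the set $S\cup\{u,v\}$. Therefore the following two things do not affect the event:
\begin{itemize}
    \item whether $u$ itself is drawn into $\bS$;
    \item whether $v$ is drawn into $\bS$.
\end{itemize}
So drawing $\bS$ at density $1/\tilde{m}$ over all of $V$ gives the same distribution of $\bS\cup\{u,v\}$ as drawing it over non-$u$ vertices. Since \proc{test-loneliness-IS} decides loneliness exactly, by its definition, the acceptance probability equals $\mathcal N_{u,v}(\tilde{m})$.

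\textbf{Consistency with the sampler.} I would check that this same quantity gives the sampler's claimed probability $\mathcal N_{u,v}(\tilde{m})/\tilde{m}$ when $\tilde{n}^*\ne n$. The sampler returns $v$ exactly in the following situation:
\begin{itemize}
    \item $v\in\bS$, which happens with probability $1/\tilde{m}$;
    \item the only edge in $\bS\cup\{u\}$ is $uv$, which is exactly loneliness of $\{u,v\}$.
\end{itemize}
In that case \proc{extract-edge-IS} necessarily returns $uv$. The presence of $v$ is independent of the remaining vertices, so the conditional probability of loneliness given $v\in\bS$ is still $\mathcal N_{u,v}(\tilde{m})$.

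\textbf{Main obstacle.} The only delicate point is this inclusion convention: whether $u$ and $v$ are counted inside $\bS$ or not. Once it is settled as above, the claim is immediate, because the event is determined by $\bS\setminus\{u,v\}$.
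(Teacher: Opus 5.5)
Your proposal is correct and follows the paper's (implicit) argument. The paper introduces $\mathcal N_{u,v}(\tilde{m})$ as the probability that $\{u,v\}$ is lonely in a set of density $1/\tilde{m}$, \proc{test-loneliness-IS} decides loneliness exactly, and its cost is $O(1)$ by Observation \refobs{test-loneliness-IS}; you also correctly spell out something the paper leaves unstated, namely that loneliness depends only on $\bS\cup\{u,v\}$, so neither the inclusion convention for $u,v$ nor conditioning on $v\in\bS$ in \proc{sample-neighbor-IS} changes the value.
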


The neighborhood factor of a high degree vertex is $\Omega(1)$, as stated in the following lemma.

\begin{lemma}{lbnd-neighborhood}
    Recall that $k = \sqrt{\tilde{m}}$ and assume that $\tilde{m} \ge \max\{20, e^{-1/10} m\}$. For every vertex $u$ and a neighbor $v \in \mathrm{N}_u$, $\mathcal N_{u,v}(\tilde{m}) \ge \frac{1}{25}$.
\end{lemma}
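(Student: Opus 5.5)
The plan is to lower-bound the probability that the pair $\{u,v\}$ is lonely in $\bS \sim \Bin(V, 1/\tilde{m})$. This probability is exactly $\mathcal N_{u,v}(\tilde{m})$ by Observation \refobs{neighborhood-event-IS}. Whether $u,v$ themselves belong to $\bS$ is irrelevant, since \proc{test-loneliness-IS} inspects $\bS \cup \{u,v\}$.

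As in the proof of Lemma \reflemma{lbnd-loneliness}, I would split the failure of loneliness into two kinds of bad configurations:
\begin{itemize}
\item $\bS$ contains a vertex of $W = (\mathrm{N}_u \cup \mathrm{N}_v) \setminus \{u,v\}$;
\item $\bS \setminus \{u,v\}$ contains an edge, which is then necessarily adjacent to neither $u$ nor $v$.
\end{itemize}
The difference from Lemma \reflemma{lbnd-loneliness} is that the density is now $1/\tilde{m}$ and $\deg(u)$ may be as large as $m \approx \tilde{m}$. So a union bound on the first event is useless, because its expected count can be about $1$. Instead I would compute the first probability exactly as a product.

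First step: bound $\abs{W}$. The edges incident to $u$ or $v$ number $\deg(u)+\deg(v)-1 \le m$, and each vertex of $W$ is an endpoint of one of them other than $uv$. Hence $\abs{W} \le m-1 < e^{1/10}\tilde{m}$, and
\[
\Pr[\bS \cap W = \emptyset] = (1-1/\tilde{m})^{\abs{W}} \ge (1-1/\tilde{m})^{e^{1/10}\tilde{m}}.
\]
Applying Lemma \reflemma{lbnd-1-plus-x-power-y} with $x=-1/\tilde{m}$ and $y = e^{1/10}\tilde{m}$ gives at least $(1 - e^{1/10}/\tilde{m})\,e^{-e^{1/10}}$. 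Using $\tilde{m}\ge 20$, this is roughly $0.94 \cdot 0.33 > 0.3$.

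Second step: the probability that $\bS$ spans an edge. By linearity of expectation and Markov's inequality, this is at most $m/\tilde{m}^2 \le e^{1/10}/20 < 0.06$, again using $\tilde{m} \ge 20$.

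Third step: combine the two bounds. This is the main obstacle, since the events are not independent and a union bound loses too much. Both good events (``$\bS\cap W=\emptyset$'' and ``$\bS$ spans no edge'') are decreasing in the product space $\{0,1\}^V$. Harris' (FKG) inequality therefore makes them positively correlated, so the probability of their intersection is at least the product $0.3 \cdot 0.94 > 1/25$. If I wanted to avoid FKG, I would instead condition on $\bS \cap W = \emptyset$, under which $\bS\setminus W$ is still an independent $\Bin(V\setminus W, 1/\tilde{m})$ set. Then the Markov bound from the second step applies verbatim to the conditional distribution, giving the same product bound. Finally, the intersection of the two good events is exactly loneliness, because any edge of $G_{\bS\cup\{u,v\}}$ other than $uv$ either touches $u$ or $v$ (forcing a $W$-vertex in $\bS$) or lies inside $\bS \setminus\{u,v\}$.
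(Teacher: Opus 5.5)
Your proof is correct and is essentially the paper's argument. You use the same split into ``a vertex of the random set adjacent to $u$ or $v$'' versus ``an edge avoiding both'', the same product bound via $(1+x)^y \ge (1-x^2y)e^{xy}$, and the same Markov bound $m/\tilde{m}^2 \le e^{1/10}/20$. The differences are minor: you count $|W| \le m-1$ where the paper uses the cruder $(\deg(u)-1)+(\deg(v)-1) \le 2m$. Also, your third step is not actually an obstacle, because the paper just takes a union bound ($1-0.903-0.056 = 0.041 > 1/25$), and with your sharper count a union bound already gives about $0.31-0.06$. So the Harris/conditioning argument is valid but unnecessary.
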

\begin{proof}
    The following proof is identical to the proof of Lemma \reflemma{lbnd-loneliness}, up to the different density parameter.

    The neighborhood event is the negation of the union of the following event:
    \begin{itemize}
        \item $\bS$ has a $u$-neighbor which is not $v$.
        \item $\bS$ has a $v$-neighbor which is not $u$.
        \item $\bS$ has an edge adjacent to neither $u$ nor $v$.
    \end{itemize}
    
    The probability to have a $u$-neighbor (which is not $v$) or a $v$-neighbor (which is not $u$) is bounded by:
    \begin{eqnarray*}
        1 - \left(1 - p\right)^{(\deg(u) - 1) + (\deg(v) - 1)}
        &\!\!\le\!\!& 1 - \left(1 - p\right)^{2m} \\
        \text{[Lemma \reflemma{lbnd-1-plus-x-power-y}]}
        &\!\!\le\!\!& 1 - (1 - p^2 \cdot 2m)e^{-p \cdot 2m} \\
        &\!\!=\!\!& 1 - \left(1 - \frac{2m}{\tilde{m}^2}\right)e^{-2m/\tilde{m}} \\
        &\!\!\le\!\!& 1 - \left(1 - \frac{2e^{1/10}}{\tilde{m}}\right)e^{-2e^{1/10}}
        \le 1 - \left(1 - \frac{2e^{1/10}}{20}\right)e^{-2e^{1/10}}
        \le 0.903
    \end{eqnarray*}
    
    The probability to have a non-$u$, non-$v$ edge is bounded by $p^2 m = m/\tilde{m}^2 \le \frac{e^{1/10}}{\tilde{m}} \le \frac{e^{1/10}}{20} \le 0.056$.

    Combined, $\mathcal N_{u,v}(\tilde{m}) \ge 1 - 0.903 - 0.056 = 0.041 > \frac{1}{25}$.
\end{proof}

\begin{lemma}{sample-neighbor-IS}
    A call to $\proc*{sample-neighbor-IS}(G,n;\tilde{m},u)$ samples every $v \in \mathrm{N}_u$ with probability exactly $\frac{1}{n}$ if $\tilde{n}^* = n$ and otherwise $\frac{1}{\tilde{m}} \mathcal N_{u,v}(\tilde{m})$, at the cost of $O(\log n)$ IS-queries worst-case.
\end{lemma}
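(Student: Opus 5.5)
The plan is to follow the two branches of \proc*{sample-neighbor-IS} separately, mirroring the proof of Lemma \reflemma{sample-lone-edge-IS}.

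\textbf{Branch $\tilde{n}^* = n$.} Here the procedure draws a uniform vertex $\bv$ and asks a single query $\oracleIS(\{u,\bv\})$. The query rejects exactly when $\bv \in \mathrm{N}_u$, since $\bv = u$ gives a singleton, which is independent. So every fixed $v \in \mathrm{N}_u$ is returned exactly when $\bv = v$, which has probability $1/n$, and the cost is one IS query.

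\textbf{Branch $\tilde{n}^* \ne n$.} Fix $v \in \mathrm{N}_u$.

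1. The set $\bS$ contains $v$ with probability $1/\tilde{m}$.
2. Condition on $v \in \bS$ and write $\bS = \bS' \cup \{v\}$, where $\bS'$ is the random remainder, independent of this conditioning.
3. The procedure returns $v$ exactly when $\{u,v\}$ is lonely in $\bS \cup \{u\}$. This is where the determinism of \proc{extract-edge-IS} matters:
   \begin{itemize}
   \item If the pair is lonely, $uv$ is the only edge in $\bS \cup \{u\}$. So \proc{extract-edge-IS} returns it, by correctness in Lemma \reflemma{extract-edge-IS}. Then \proc{test-loneliness-IS} accepts, and $v$ is returned.
   \item Conversely, $v$ can be returned only if the extracted edge is $uv$ and \proc{test-loneliness-IS}$(\bS;u,v)$ accepts. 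By its definition that means the pair is lonely.
   \end{itemize}
   So, conditioned on $v \in \bS$, the output is $v$ if and only if the loneliness test passes. The output never depends on which edge the extractor would pick among several.
4. The loneliness event depends only on $\bS \cup \{u,v\}$. With $v \in \bS$ forced, it depends only on $\bS' \cup \{u,v\}$.

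Its probability is therefore exactly the acceptance probability of \proc{neighborhood-event-IS}, which by Observation \refobs{neighborhood-event-IS} is $\mathcal N_{u,v}(\tilde{m})$. Multiplying the two factors gives $\frac{1}{\tilde{m}}\mathcal N_{u,v}(\tilde{m})$.

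\textbf{Main obstacle: distributional bookkeeping.} The subtle step is checking that $\mathcal N_{u,v}$ is defined with respect to the same distribution. Here $\bS$ excludes $u$, while \proc{neighborhood-event-IS} draws $\bS \subseteq V$ that may include $u$ and $v$. The loneliness test is insensitive to whether $u$ or $v$ themselves lie in the drawn set, because it queries $S \cup \{v\} \setminus \{u\}$ and $S \cup \{u\} \setminus \{v\}$. So the relevant randomness is just the independent inclusion of vertices other than $u,v$ at density $1/\tilde{m}$, in both procedures. I would state this explicitly and then identify the two probabilities. If $\mathcal N_{u,v}$ is instead taken to be defined by that event, the identification is immediate.

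\textbf{Complexity.} The first branch uses one query. The second branch makes one call to \proc{extract-edge-IS}, costing $O(\log n)$ IS queries by Lemma \reflemma{extract-edge-IS}, plus $O(1)$ queries for \proc{test-loneliness-IS} by Observation \refobs{test-loneliness-IS}. This gives $O(\log n)$ IS queries in the worst case.
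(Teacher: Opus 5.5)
Your proposal is correct and matches the paper's proof. Both treat the two branches separately, take the complexity from the $O(\log n)$ cost of extracting an edge plus $O(1)$ for the loneliness test, and in the second branch factor the probability as $\Pr[v \in S]\cdot\Pr[\{u,v\}\text{ lonely} \mid v \in S] = \frac{1}{\tilde{m}}\mathcal N_{u,v}(\tilde{m})$. Your additional checks (that returning $v$ is equivalent to $v \in S$ plus loneliness regardless of which edge the extractor picks, and that loneliness depends only on vertices other than $u,v$, so both sampling distributions agree) just make explicit what the paper's one-line computation takes for granted.
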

\begin{proof}
    For complexity: if $\tilde{n}^* = n$, then we make a single IS query. Otherwise, the call of \proc{extract-edge-IS} costs $O(\log n)$ worst-case (Lemma \reflemma{extract-edge-IS}) and the call of \proc{test-loneliness-IS} is $O(1)$ worst-case (Observation \refobs{test-loneliness-IS}).

    Correctness is trivial if $\tilde{n}^* = n$. If $\tilde{n}^* \ne n$, then:
    \[  \Pr\left[\text{sample $v$} \cond u \right]
        = \Pr\left[v \in \bS \right]\Pr\left[\text{$uv$ is lonely in $\bS$} \cond v \in \bS \right]
        = \frac{1}{\tilde{m}} \cdot \mathcal N_{u,v}(\tilde{m})
    \]
\end{proof}

\subsection{Categorizing the degree}
\defproc{test-high-degree-IS}{Test-High-Degree-IS}

Recall the threshold degree $k=\sqrt{\tilde{m}}$ and the tolerable additive error $r$. For every vertex $u$, we test whether its degree is low or high. If $\deg(u) \le k$ then we say ``low'' with probability at least $1-r$, and if $\deg(u) \ge 2k$ then we say ``high'' with probability at least $1-r$. The high degrees in the range between $k$ and $2k$ can be wrongly classified as ``low'' with any probability. The query complexity of the low-high test (Algorithm \refalg{test-high-degree-IS}) is $O(\log n \log (1/r))$ worst-case.

To distinguish between the degree threshold, we draw a set $\bS$ of density $1/8k$. The event ``$\bS \cup \{u\}$ has edges but $\bS \setminus \{u\}$ is independent'' is bounded by a constant probability if $\deg(u) \le k$ and greater than (another) constant probability if $\deg(u) \ge 2k$. We use $O(\log (1/r)) = O(\log (n/\eps))$ rounds to amplify the success probability of the test to $1-r$.

\begin{proc-algo}{test-high-degree-IS}{G,n,\tilde{m};u, r}
    \alginput{$\tilde{m} \ge \max\{36, e^{-1/10}m\}$}
    \algcomplexity{$O(\log (1/r))$ worst-case}
    \algcompleteness{If $\deg(u) \ge 2k$, then the accept probability is at least $1-r$}
    \algsoundness{If $\deg(u) \le k$, then the reject probability is at least $1-r$}
    \begin{code}
        \algitem Let $k \gets \sqrt{\tilde{m}}$.
        \algitem Let $N \gets \ceil{400 \ln (1/r)}$
        \algitem Set $\bM \gets 0$.
        \begin{For}{$N$ times}
            \algitem Draw a set $\bS \subseteq V$ that every non-$u$ element belongs to with probability $1/8k$ iid.
            \begin{If}{$\oracleIS(S \cup \{u\})$ rejects and $\oracleIS(S \setminus \{u\})$ accepts}
                \algitem Set $\bM \gets \bM + 1$.
            \end{If}
        \end{For}
        \begin{If}{$\bM \ge \frac{13}{80}N$}
            \algitem Return \accept.
        \end{If}
        \begin{Else}
            \algitem Return \reject.
        \end{Else}
    \end{code}
\end{proc-algo}

\begin{lemma}{test-high-degree-IS}
    Assume that $\tilde{m} \ge \min\{36,e^{-1/10}m\}$. Procedure $\proc*{test-high-degree-IS}(G,n,\tilde{m},\delta;u)$ (Algorithm \refalg{test-high-degree-IS}) accepts with probability at least $1-r$ if $\deg(u) \ge 2k$ and rejects with probability at least $1-r$ if $\deg(u) \le k$. Moreover, the query complexity is $O(\log (1/r))$ worst-case.
\end{lemma}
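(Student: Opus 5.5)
The plan is to analyze a single round of the loop and then amplify. Fix $u$ and let $p = 1/8k = 1/(8\sqrt{\tilde{m}})$. Let $q(u)$ be the probability that one round increments $\bM$, that is, that $\bS \cup \{u\}$ contains an edge while $\bS \setminus \{u\} = \bS$ is independent. Since $u \notin \bS$, this event is exactly ``$G_{\bS}$ has no edges and $\bS \cap \mathrm{N}_u \neq \emptyset$''. The rounds are i.i.d., so $\bM \sim \Bin(N, q(u))$. It then suffices to show two bounds:
\begin{itemize}
    \item $q(u) \le 1/8$ whenever $\deg(u) \le k$;
    \item $q(u) \ge 0.2$ whenever $\deg(u) \ge 2k$.
\end{itemize}
Both are separated from the threshold $13/80 = 0.1625$ by a constant gap of about $0.0375$.

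For the upper bound I would drop the independence requirement. This gives $q(u) \le \Pr[\bS \cap \mathrm{N}_u \ne \emptyset] \le p\deg(u) \le k/8k = 1/8$ by a union bound.

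For the lower bound I would use
\[ q(u) \ge \Pr[\bS \cap \mathrm{N}_u \ne \emptyset] - \Pr[G_{\bS} \text{ has an edge}]. \]
The first term is $1 - (1-p)^{\deg(u)} \ge 1 - e^{-2k p} = 1 - e^{-1/4} \ge 0.221$, using Lemma \reflemma{ubnd-1+x}. For the second term, Markov's inequality applied to the number of induced edges gives at most $p^2 m = m/(64\tilde{m}) \le e^{1/10}/64 \le 0.018$, using $\tilde{m} \ge e^{-1/10}m$. Hence $q(u) \ge 0.203 > 0.2$. The assumption $\tilde{m} \ge 36$ only ensures that $p \le 1/48 < 1$ is a valid density; I read the hypothesis as the input condition $\tilde{m} \ge \max\{36, e^{-1/10} m\}$.

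For amplification I would apply Hoeffding's inequality to $\bM/N$. In the low case, $\Pr[\bM \ge \tfrac{13}{80}N] \le \exp(-2N(0.0375)^2)$. In the high case, $\Pr[\bM < \tfrac{13}{80}N] \le \exp(-2N(0.04)^2)$. With $N = \ceil{400\ln(1/r)}$ both exponents are at least $800 \cdot 0.0014 \ln(1/r) \ge \ln(1/r)$, so each error probability is at most $r$. Alternatively, one can use the multiplicative Chernoff bounds of Lemma \reflemma{chernoff-bounds}, stochastically dominating (or being dominated by) $\Bin(N,1/8)$ and $\Bin(N,0.2)$ respectively, since the constants leave comfortable slack.

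The query complexity is immediate: each round makes exactly two IS queries, so the total is $2N = O(\log(1/r))$ worst-case.

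The only delicate step is the lower bound on $q(u)$. The two events ``hits a neighbor'' and ``$\bS$ is independent'' are correlated, and the subtraction bound sidesteps this correlation. I expect the numeric constants ($1-e^{-1/4}$ versus $13/80$) to be the place where care is needed, but the margin of roughly $0.04$ appears sufficient.
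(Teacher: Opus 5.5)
Your proposal is correct and follows essentially the same route as the paper: the paper uses the same union bound for a per-round probability of at most $1/8$ in the low case, and the same subtraction bound $\Pr[\bS\cap \mathrm{N}_u\neq\emptyset]-\E[\text{edges in }\bS]\ge 1-e^{-1/4}-e^{1/10}/64\ge 1/5$ in the high case. It then amplifies with a Hoeffding bound at gap $3/80$ using $N=\lceil 400\ln(1/r)\rceil$, as you do, and your reading of the hypothesis as $\tilde{m}\ge\max\{36,e^{-1/10}m\}$, which matches the algorithm's input condition, is the right one (the $\min$ in the statement is evidently a typo).
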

\begin{proof}
    For complexity, observe that we make $O(\log (1/r))$ rounds, each costing two IS queries.
    
    If $\deg(u) \le k$, then the probability that $\bS \cup \{u\}$ has edges but $\bS \setminus \{u\}$ is independent is bounded by the expected number of $u$-edges, which is $(1/8k) \cdot \deg(u) \le 1/8$.
    
    Therefore, $\bM$ distributes as $\Bin(N,p)$ for $p \le \frac{1}{8}$. The probability to accept is bounded by:
    \begin{eqnarray*}
        \Pr\left[\Bin\left(N,\frac{1}{8}\right) \ge \frac{13}{80}N\right]
        &=& \Pr\left[\Bin\left(N,\frac{1}{8}\right) - \frac{1}{8}N \ge \frac{3}{80} N\right] \\
        &\le& e^{-2\cdot((3/80)N)^2/N}
        \le e^{-(9/3200) N}
        \le e^{-\ln r^{-1}}
        = r
    \end{eqnarray*}
    
    If $\deg(u) \ge 2k$, then the probability that $\bS \cup \{u\}$ has edges but $\bS \setminus \{u\}$ is independent is at least:
    \begin{eqnarray*}
        \Pr\left[\bS \cap \mathrm{N}_u \ne \emptyset\right] -\E\left[\text{edges in $\bS$}\right]
        &\ge& \left(1 - \left(1 - \frac{1}{8k}\right)^{\deg(u)}\right) - \frac{1}{64k^2} \cdot m \\
        &\ge& (1 - e^{-(1/8k) \cdot (2k)}) - \frac{1}{64\tilde{m}} \cdot m
        \ge 1 - e^{-1/4} - \frac{e^{1/10}}{64}
        \ge \frac{1}{5}
    \end{eqnarray*}
    
    Therefore, $M$ distributes as $\Bin(N,p)$ for $p > \frac{1}{5}$. The probability to reject is bounded by:
    \begin{eqnarray*}
        \Pr\left[\Bin\left(N,\frac{1}{5}\right) < \frac{13}{80}N\right]
        &=& \Pr\left[\Bin\left(N,\frac{1}{5}\right) - \frac{1}{5}N < -\frac{3}{80}N\right] \\
        &\le& e^{-2((3/80)N)^2/N}
        \le e^{-(9/3200)N}
        \le e^{-\ln r^{-1}}
        = r
    \end{eqnarray*}
\end{proof}

\subsection{Sampling low-low edges}
\defproc{sample-LL-edge-IS}{Sample-L-L-Edge-IS}

Algorithm \refalg{sample-LL-edge-IS} (\proc*{sample-LL-edge-IS}) samples ``low-low'' edges. Formally, for every edge $uv \in E$, the probability that the output is ``$uv$'' (undirected) is in the range $\frac{e^{\pm \eps}}{m} \Pr[T_u = 0] \Pr[T_v = 0] \pm r$, where $T_u$ and $T_v$ are indicators to accept by the high-degree test (\proc{test-high-degree-IS}) whose error parameter is $r$. When no edge is returned, the algorithm explicitly declares failure by rejecting.

\begin{proc-algo}{sample-LL-edge-IS}{G,n,\eps;\tilde{m}, r}
    \alginput{$\tilde{m} \ge \max\{36, e^{-1/10} m\}$}
    \algoutput{For every edge $e \in E$, the probability to return $e$ is $\frac{e^{\pm \eps}}{200 \tilde{m} \ln (\paperZcoefZestZindZinv/\eps)} \Pr\left[T_u = T_v = 0\right] \pm r$}
    \algcomplexity{$O(\log (n/\eps) + \log(1/r))$ (worst-case)}
    \begin{code}
        \algitem Let $\be \gets \proc{sample-lone-edge-IS}(G,n;\tilde{m})$.
        \begin{If}{$\be$ is \reject}
            \algitem Return \reject.
        \end{If}
        \algitem Let $\bu\bv = \be$ (in an arbitrary order).
        \algitem Filter by $\proc{estimate-indicator-inv}(\proc{loneliness-event-IS}(G,n;\tilde{m},\bu,\bv), \eps, 1/2)$.
        \algitem Let $T_u \gets \proc{test-high-degree-IS}(G,n,\tilde{m};\bu,r)$.
        \algitem Let $T_v \gets \proc{test-high-degree-IS}(G,n,\tilde{m};\bv,r)$.
        \begin{If}{$T_{\bu} \ne 0$ or $T_{\bv} \ne 0$}
            \algitem Return \reject.
        \end{If}
        \algitem Return $\be$.
    \end{code}
\end{proc-algo}

\begin{lemma}{sample-LL-edge-IS}
    Assume that $\tilde{m} \ge \max\{36, e^{-1/10}m\}$. For every edge $uv \in E$, the probability that $\proc{sample-LL-edge-IS}(G,n,\eps;\tilde{m},r)$ returns $u_0 v_0$ is in the range $\frac{e^{\pm \eps}}{200\tilde{m} \ln (\paperZcoefZestZindZinv/\eps)} \Pr[T_{u_0} = 0 \wedge T_{v_0} = 0] \pm r$. Also, the query complexity is $O(\log (n/\eps) + \log (1/r))$ worst-case.
\end{lemma}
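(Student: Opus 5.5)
The plan mirrors the proof of Lemma \reflemma{sample-LL-edge-hyb}. The one new ingredient is that the degree oracle is replaced by the randomized tests $T_{\bu},T_{\bv}$. We fix an edge $u_0v_0$ and factor the return probability along the three stages of the procedure: the lone-edge sampler, the inverse-loneliness filter, and the two high-degree tests.

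\textbf{Complexity.} The call to \proc{sample-lone-edge-IS} costs $O(\log n)$ worst-case (Lemma \reflemma{sample-lone-edge-IS}). Each high-degree test costs $O(\log(1/r))$ (Lemma \reflemma{test-high-degree-IS}). The filter \proc{estimate-indicator-inv} with $\rho=1/2$ makes at most $O(\log\eps^{-1})$ calls to \proc{loneliness-event-IS} regardless of the edge (Lemma \reflemma{estimate-indicator-inv}), and each call costs $O(1)$ (Observation \refobs{loneliness-event-IS}). Summing gives $O(\log(n/\eps)+\log(1/r))$.

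\textbf{Correctness.} By Lemma \reflemma{sample-lone-edge-IS}, the edge $u_0v_0$ is produced with probability exactly $\frac{1}{100\tilde m}\mathcal L_{u_0,v_0}(\tilde m)$. The filter and the two tests use fresh randomness that is independent of $\bS$, so conditioned on $\be=u_0v_0$ the return probability is
\[ \Pr[\text{pass filter}]\cdot\Pr[T_{u_0}=0\wedge T_{v_0}=0]. \]
We then split into two cases according to whether both endpoints have degree at most $2k$.

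\emph{Case 1: $\deg(u_0),\deg(v_0)\le 2k$.} Lemma \reflemma{lbnd-loneliness} gives $\mathcal L_{u_0,v_0}(\tilde m)\ge 1/2=\rho$. Lemma \reflemma{estimate-indicator-inv} then gives a filter pass probability of $e^{\pm\eps}\frac{1/2}{\mathcal L_{u_0,v_0}(\tilde m)\ln(\paperZcoefZestZindZinv/\eps)}$. The loneliness factor cancels, and the return probability is exactly $\frac{e^{\pm\eps}}{200\tilde m\ln(\paperZcoefZestZindZinv/\eps)}\Pr[T_{u_0}=T_{v_0}=0]$, with no additive error needed.

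\emph{Case 2: some endpoint, say $u_0$, has $\deg(u_0)>2k$.} Here the loneliness lower bound may fail, so the filter is not controlled. Instead we use Lemma \reflemma{test-high-degree-IS}: since $\deg(u_0)\ge 2k$, we have $\Pr[T_{u_0}=0]\le r$. The true return probability therefore lies in $[0,r]$. The target expression $\frac{e^{\pm\eps}}{200\tilde m\ln(\cdot)}\Pr[T_{u_0}=T_{v_0}=0]$ also lies in $[0,r]$, since its prefactor is below $1$. Hence the two differ by at most $r$, which is exactly what the $\pm r$ slack in the statement absorbs.

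The main subtlety is Case 2: one must notice that the filter's behaviour is irrelevant there and that the order of operations (filter before tests) does not matter, because the tests use independent randomness. Some care is also needed with the hypothesis $\tilde m\ge\max\{36,e^{-1/10}m\}$, which is what both Lemma \reflemma{lbnd-loneliness} and Lemma \reflemma{test-high-degree-IS} require.
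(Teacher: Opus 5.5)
Your proof is correct and follows the paper's argument essentially step by step. It uses the same worst-case cost accounting, the same split at degree $2k$ (Lemma \reflemma{lbnd-loneliness} plus Lemma \reflemma{estimate-indicator-inv} to cancel the loneliness factor when both endpoints have degree at most $2k$, and Lemma \reflemma{test-high-degree-IS} to bound the return probability by $r$ otherwise); your Case 2 is slightly more explicit than the paper's, since you also check that the target expression lies in $[0,r]$.
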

\begin{proof}
    For query complexity, observe that there is a single call to \proc{sample-lone-edge-IS} at the cost of $O(\log n)$, a single call to \proc{estimate-indicator-inv} with $\rho=1/2=\Omega(1)$ at the cost of $O(\log (1/\eps))$ queries worst-case (Lemma \reflemma{estimate-indicator-inv}), and two calls to \proc{test-high-degree-IS}, at the cost of $O(\log (1/r))$ queries worst-case.

    Let $u_0 v_0 \in E$. If $\deg(u_0) > 2 k$, then the probability to return the edge $u_0 v_0$ is bounded by $\Pr[T_{u_0} = 0] \le r$ (Lemma \reflemma{test-high-degree-IS}). This applies also when $\deg(v_0) > 2 k$. We proceed assuming that $\deg(u_0)$ and $\deg(v_0)$ are both bounded by $2 k = 2\sqrt{\tilde{m}}$.

    By Lemma \reflemma{sample-lone-edge-IS}, the probability that \proc{sample-lone-edge-IS} returns $u_0 v_0$ is $\mathcal L_{u_0,v_0}(\tilde{m}) / 100 \tilde{m}$.

    By Lemma \reflemma{lbnd-loneliness}, $\mathcal L_{u_0,v_0}(\tilde{m}) \ge \frac{1}{2}$, and therefore, the probability to pass the filter is in the range $e^{\pm \eps} / 2\mathcal L_{u_0,v_0}(\tilde{m}) \ln (\paperZcoefZestZindZinv/\eps)$.

    The probability to return the edge $e_0 = u_0 v_0$ is:
    \begin{eqnarray*}
        && \Pr\left[\be = e_0\right] \cdot \E\left[\text{pass filter}\cond \be = e_0\right] \cdot \Pr\left[T_{u_0} = T_{v_0} = 0\right] \\
        &=& \frac{\mathcal L_{u_0,v_0}(\tilde{m})}{100 \tilde{m}} \cdot \frac{e^{\pm \eps}}{2\mathcal L_{u_0,v_0}(\tilde{m}) \ln (\paperZcoefZestZindZinv/\eps)} \cdot \Pr\left[T_u = T_v = 0\right] \\
        &=& \frac{e^{\pm \eps}}{200 \tilde{m} \ln (\paperZcoefZestZindZinv/\eps)} \cdot \Pr\left[T_u = T_v = 0\right]
    \end{eqnarray*}
\end{proof}

\subsection{Sampling high-low and high-high edges}
\defproc{sample-H-edge-IS}{Sample-H-Edge-IS}

In Algorithm \refalg{sample-H-edge-IS}, implementing \proc*{sample-H-edge-IS}, we sample a star vertex and then sample a neighbor. After normalizing the starness factor and the neighborhood factor, we test the degree of each vertex. If the star vertex degree is considered as low then we reject. Otherwise, we return the edge with probability $1$ if the neighbor's degree is considered low and with probability $1/2$ otherwise. This eliminates the double-counting where both vertices are considered as high-degree.

\begin{proc-algo}{sample-H-edge-IS}{G,n,\eps;\tilde{m},r}
    \alginput{$\tilde{m} \ge \max\{36, e^{-1/10} m\}$}
    \algoutput{For every edge $uv\in E$, the probability to return $uv$ is $\frac{e^{\pm \eps}}{300 \tilde{n}^*\sqrt{\tilde{m}} \ln (\paperZcoefZestZindZinv/\eps)}\Pr\left[T_u = 1 \vee T_v = 1\right] \pm 2r$}
    \algcomplexity{$O(\log n \log (1/\eps))$ (worst-case)}
    \begin{code}
        \algitem Let $k \gets \sqrt{\tilde{m}}$.
        \algitem Let $\bu \gets \proc{sample-star-vertex-IS}(G,n,\eps;\tilde{m})$.
        \begin{If}{$\bu$ is \reject}
            \algitem Return \reject.
        \end{If}
        \algitem Let $\bv \gets \proc{sample-neighbor-IS}(G,n,\tilde{m};\bu)$.
        \begin{If}{$\bv$ is \reject}
            \algitem Return \reject.
        \end{If}
        \algitem Let $\bT_{\bu} \gets \proc{test-high-degree-IS}(G,n,\tilde{m};\bu, r)$.
        \begin{If}{$\bT_{\bu} = 0$}
            \algitem Return \reject.
        \end{If}
        \algitem Let $\bT_{\bv} \gets \proc{test-high-degree-IS}(G,n,\tilde{m}; \bv, r)$.
        \begin{If}{$\bT_{\bv} = 1$}
            \algitem Return \reject with probability $1/2$.
        \end{If}
        \begin{If}{$\tilde{n}^* = n$}
            \algitem Filter by $\proc{estimate-indicator-inv}(\proc{starness-event-IS}(G,n;\tilde{m},\bu), \eps, 1/375)$.
        \end{If}
        \begin{Else}
            \algitem Filter by $\proc{estimate-indicator-inv}(\mathcal A, \eps, 1/750)$ with the following procedure $\mathcal A$:
            \begin{Codeblock*}
                \algitem Let $\bb_1 \gets \proc{starness-event-IS}(G,n;\tilde{m},\bu)$.
                \algitem Let $\bb_2 \gets \proc{neighborhood-event-IS}(G,n;\tilde{m},\bu,\bv)$.
                \algitem Accept if and only if both $\bb_1$ and $\bb_2$ are ``\accept''.
            \end{Codeblock*}
        \end{Else}
        \algitem Return $\bu\bv$.
    \end{code}
\end{proc-algo}

\begin{lemma}{sample-H-edge-IS-directed}
    Assume that $\tilde{m} \ge \max\{36, e^{-1/10} m\}$, and consider the directed edge $u_0 \to v_0$. The probability of Algorithm \refalg{sample-H-edge-IS} (\proc*{sample-H-edge-IS}) to sample the directed edge $u_0 \to v_0$ (through $\bu = u_0$ and $\bv = v_0$) is in the range
    \[\frac{e^{\pm \eps}}{3750\tilde{n}^* \sqrt{\tilde{m}} \ln (\paperZcoefZestZindZinv/\eps)}\Pr\left[\bT_{u_0} = 1\right]\left(\Pr\left[\bT_{v_0} = 0\right] + \frac{1}{2}\Pr\left[\bT_{v_0} = 1\right]\right) \pm r\]
\end{lemma}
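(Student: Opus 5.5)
The plan is to write the probability of the run with $\bu=u_0$, $\bv=v_0$ as a product of stage probabilities, following the pseudocode. The stages are: the star sampler, the neighbor sampler, the two degree tests, and the $1/2$-coin. The fresh randomness of the filter is independent of the tests given $(\bu,\bv)$, so its acceptance probability also factors in. The tests are independent of the sampling stages given the sampled vertices. Hence the probability equals
\[
\frac{\mathcal S_{u_0}(\tilde{m})}{10\sqrt{\tilde{m}}}\cdot P_N\cdot \Pr[\bT_{u_0}=1]\Bigl(\Pr[\bT_{v_0}=0]+\tfrac12\Pr[\bT_{v_0}=1]\Bigr)\cdot F,
\]
where we use Lemma \reflemma{sample-star-vertex-IS} for the first factor. $P_N$ is the neighbor-sampling probability from Lemma \reflemma{sample-neighbor-IS}: it equals $1/n$ if $\tilde{n}^*=n$, and $\mathcal N_{u_0,v_0}(\tilde{m})/\tilde{m}$ otherwise. $F$ is the expected output of \proc{estimate-indicator-inv}.

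We split into two cases according to whether $\deg(u_0)$ is small. If $\deg(u_0)\le k$, then $\Pr[\bT_{u_0}=1]\le r$ by Lemma \reflemma{test-high-degree-IS}. Every other factor is at most $1$, so the whole probability lies in $[0,r]$. The stated expression (a nonnegative main term times $\Pr[\bT_{u_0}=1]\le r$, plus $\pm r$) then covers it, since both the main term and the true value lie in $[0,r]$. The only care needed is that the main term's coefficient times $\Pr[\bT_{u_0}=1]$ is at most $r$, which holds because the coefficient is below $1$.

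If $\deg(u_0)>k$, Lemma \reflemma{lbnd-starness} gives $\mathcal S_{u_0}(\tilde{m})\ge 1/30$; the assumption $\tilde{m}\ge 36$ covers the requirement $\tilde{m}\ge4$. When $\tilde{n}^*=n$ the filter is invoked with $\rho=1/375\le 1/30$. Lemma \reflemma{estimate-indicator-inv} then gives $F=e^{\pm\eps}(1/375)/(\mathcal S_{u_0}\ln(\paperZcoefZestZindZinv/\eps))$, so the starness factor cancels. The product becomes $e^{\pm\eps}/(3750\, n\sqrt{\tilde{m}}\ln(\cdot))$ times the test terms, and this matches the statement because $\tilde{n}^*=n$.

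When $\tilde{n}^*\neq n$ we have $\tilde{n}^*=2\tilde{m}$. The procedure $\mathcal A$ accepts with probability $\mathcal S_{u_0}\mathcal N_{u_0,v_0}$, because its two calls are independent. Lemma \reflemma{lbnd-neighborhood} (using $\tilde{m}\ge 36\ge 20$) gives $\mathcal N\ge 1/25$, so the product is at least $1/750\ge\rho$. The filter therefore cancels both factors, leaving
\[
\frac{1}{10\sqrt{\tilde{m}}}\cdot\frac1{\tilde{m}}\cdot\frac{e^{\pm\eps}}{750\ln(\cdot)}=\frac{e^{\pm\eps}}{3750\cdot 2\tilde{m}\sqrt{\tilde{m}}\ln(\cdot)},
\]
which again matches, with $\tilde{n}^*=2\tilde{m}$.

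The main obstacle is the bookkeeping in the case split. First, no lower bound on the starness factor is available when $\deg(u_0)\le k$, so the filter guarantee fails there and that case must be absorbed entirely into the $\pm r$ term. Second, the test outcomes must be shown to be conditionally independent of the sampling and filter randomness, so that the product form is legitimate.
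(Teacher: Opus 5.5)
Your proposal is correct and follows essentially the same argument as the paper. Both use the same product-of-stages decomposition and the same split into $\deg(u_0)\le k$, which is absorbed into the $\pm r$ term via $\Pr[\bT_{u_0}=1]\le r$ and the coefficient being below $1$, versus $\deg(u_0)>k$, where the two subcases on $\tilde{n}^*$ use Lemmas \reflemma{lbnd-starness}, \reflemma{lbnd-neighborhood} and \reflemma{estimate-indicator-inv} to cancel the starness and neighborhood factors.
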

\begin{proof}
    For shortening, let $\alpha_{u_0,v_0} = \Pr\left[\bT_{u_0} = 1\right]\left(\Pr\left[\bT_{v_0} = 0\right] + \frac{1}{2}\Pr\left[\bT_{v_0} = 1\right]\right)$.
    
    If $\deg(u_0) \le k$, then the probability to pass the $(\bT_{u_0} = 1)$-test is bounded by $r$. Since $\frac{e^{\eps}}{3750\tilde{n}^* \sqrt{\tilde{m}} \ln (\paperZcoefZestZindZinv/\eps)} < 1$, the probability to sample the directed edge $u_0 \to v_0$ is indeed bounded by $\frac{e^{\pm \eps}}{3750\tilde{n}^* \sqrt{\tilde{m}} \ln (\paperZcoefZestZindZinv/\eps)}\alpha_{u_0,v_0} \pm r$.
    
    If $\deg(u_0) > k$, then:
    \begin{itemize}
        \item The probability of $\bu=u_0$ is $\mathcal S_{u_0}(\tilde{m})/10\sqrt{\tilde{m}}$ (Lemma \reflemma{sample-star-vertex-IS}).
        \item The probability of $\bv=v_0$ is $\frac{1}{n}$ if $\tilde{n}^* = n$ and $\mathcal N_{u_0,v_0}(\tilde{m})/\tilde{m}$ otherwise (Lemma \reflemma{sample-neighbor-IS}).
        \item The probability to pass the test for $\bT_{u_0}$ is $\Pr\left[\bT_u = 1\right]$.
        \item The probability to pass the test for $\bT_{v_0}$ is $\Pr\left[\bT_v = 0\right] + \frac{1}{2}\Pr\left[\bT_v = 1\right]$.
        \item If $\tilde{n}^* = n$: the probability to pass the last filter is $\frac{e^{\pm \eps}}{375 \ln (\paperZcoefZestZindZinv/\eps) \mathcal S_{u_0}(\tilde{m})}$ (Lemma \reflemma{estimate-indicator-inv}, Lemma \reflemma{lbnd-starness}).
        \item If $\tilde{n}^* \ne n$: the probability to pass the last filter is $\frac{e^{\pm \eps}}{750 \ln (\paperZcoefZestZindZinv/\eps) \mathcal S_{u_0}(\tilde{m}) \mathcal N_{u_0,v_0}(\tilde{m})}$(Lemma \reflemma{estimate-indicator-inv}, Lemma \reflemma{lbnd-starness}, Lemma \reflemma{lbnd-neighborhood}).
    \end{itemize}

    If $\tilde{n}^* = n$, then the probability to return the edge $\bu = u_0$ and $\bv = v_0$ is:
    \[  \frac{S_{u_0}(\tilde{m})}{10\sqrt{\tilde{m}}} \cdot \frac{1}{n} \cdot \alpha_{u_0,v_0} \cdot \frac{e^{\pm \eps}}{375 \ln (\paperZcoefZestZindZinv/\eps) \mathcal S_{u_0}(\tilde{m})}
        = \frac{e^{\pm \eps} \alpha_{u_0,v_0}}{3750 n\sqrt{\tilde{m}} \ln (\paperZcoefZestZindZinv/\eps)}
        = \frac{e^{\pm \eps} \alpha_{u_0,v_0}}{3750\tilde{n}^*\sqrt{\tilde{m}} \ln (\paperZcoefZestZindZinv/\eps)}
    \]
    
    If $\tilde{n}^* = 2\tilde{m}$, then the probability to return the edge $u_0 = \bu$ and $v_0 = \bv$ is:
    \[  \frac{S_{u_0}(\tilde{m})}{10\sqrt{\tilde{m}}} \cdot \frac{\mathcal N_{u_0,v_0}(\tilde{m})}{\tilde{m}} \cdot \alpha_{u_0,v_0} \cdot \frac{e^{\pm \eps}}{750 \ln (\paperZcoefZestZindZinv/\eps) \mathcal S_{u_0}(\tilde{m}) \mathcal N_{u_0,v_0}(\tilde{m}) }
        = \frac{e^{\pm \eps} \alpha_{u_0,v_0}}{7500\tilde{m}\sqrt{\tilde{m}} \ln (\paperZcoefZestZindZinv/\eps)}
        = \frac{e^{\pm \eps} \alpha_{u_0,v_0}}{3750\tilde{n}^*\sqrt{\tilde{m}} \ln (\paperZcoefZestZindZinv/\eps)}
    \]    
\end{proof}

\begin{lemma}{sample-H-edge-IS}
    Assume that $\tilde{m} \ge \max\{36, e^{-1/10}m\}$. For every edge $uv \in E$, the probability that $\proc{sample-H-edge-IS}(G,n,\eps;\tilde{m},r)$ returns $u_0 v_0$ is in the range $\frac{e^{\pm \eps}}{3750 \tilde{n}^* \sqrt{\tilde{m}} \ln (\paperZcoefZestZindZinv/\eps)} \Pr[\bT_{u_0} \ge 1 \wedge \bT_{v_0} \ge 1] \pm 2r$. Also, regardless of $\tilde{m}$, the query complexity is $O(\log n \log (1/\eps) + \log (1/r))$ worst-case.
\end{lemma}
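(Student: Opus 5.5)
The plan is to obtain the undirected statement by summing Lemma \reflemma{sample-H-edge-IS-directed} over the two orientations of the edge. The algorithm returns the unordered edge $u_0 v_0$ exactly when it outputs either $(\bu,\bv)=(u_0,v_0)$ or $(\bu,\bv)=(v_0,u_0)$. These events are disjoint, so the return probability is the sum of the two directed probabilities. Write $c=\frac{1}{3750\tilde{n}^*\sqrt{\tilde{m}}\ln(\paperZcoefZestZindZinv/\eps)}$, $p_x=\Pr[\bT_x=1]$ and $q_x=1-p_x$. The directed lemma bounds the two terms by $e^{\pm\eps}c\,p_{u_0}(q_{v_0}+\tfrac12 p_{v_0})\pm r$ and $e^{\pm\eps}c\,p_{v_0}(q_{u_0}+\tfrac12 p_{u_0})\pm r$. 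The additive errors add to $\pm 2r$. Since both multiplicative factors lie in $e^{\pm\eps}$, they can be merged into a single $e^{\pm\eps}$ factor on the sum.

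Next I would simplify the sum of the main terms. Expanding gives $c\,(p_{u_0}q_{v_0}+q_{u_0}p_{v_0}+p_{u_0}p_{v_0})$, because the two half-weighted terms $\tfrac12 p_{u_0}p_{v_0}$ combine into one. This is where the coin of probability $1/2$ in the algorithm is used: it removes the double counting when both endpoints test high. The tests $\bT_{u_0}$ and $\bT_{v_0}$ are independent runs of \proc{test-high-degree-IS} with fresh randomness. Hence this bracket is the probability of a single event about the pair $(\bT_{u_0},\bT_{v_0})$: the event that the pair is not $(0,0)$. That event is the one the statement denotes with its $\ge 1$ notation on the pair of test outcomes.

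The main obstacle, and the step I would check most carefully, is matching this event to the statement's wording $\Pr[\bT_{u_0}\ge1\wedge\bT_{v_0}\ge1]$. Read literally as a conjunction of two separate $0/1$ indicators, the identity above gives $p_{u_0}p_{v_0}$ rather than $1-q_{u_0}q_{v_0}$. The two disagree by far more than $2r$ on an edge between a vertex of degree at most $k$ and a vertex of degree at least $2k$. There Lemma \reflemma{test-high-degree-IS} gives $p_{u_0}\ge 1-r$ and $p_{v_0}\le r$, so the algorithm returns the edge with probability about $c$, while the conjunction is at most $r$. I would therefore read the statement's event as ``the test outcomes are not both $0$'', which is the complement of the event in Lemma \reflemma{sample-LL-edge-IS}. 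This reading is also the one the final combination step needs: together with the L-L sampler, every edge must be covered with total weight $\Pr[\bT_{u_0}=\bT_{v_0}=0]$ plus the weight of its complement, up to $O(r)$. I would flag this interpretation explicitly in the write-up rather than claim the literal conjunction, which the two-orientation computation refutes.

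For complexity, I would add up the worst-case costs of the calls the algorithm makes. \proc{sample-star-vertex-IS} costs $O(\log n)$ (Lemma \reflemma{sample-star-vertex-IS}). \proc{sample-neighbor-IS} costs $O(\log n)$ (Lemma \reflemma{sample-neighbor-IS}). The two calls to \proc{test-high-degree-IS} cost $O(\log(1/r))$ (Lemma \reflemma{test-high-degree-IS}). The final filter uses \proc{estimate-indicator-inv} with constant $\rho$, which makes $O(\log\eps^{-1})$ calls (Lemma \reflemma{estimate-indicator-inv}). Each call runs \proc{starness-event-IS} at $O(\log n)$ (Observation \refobs{starness-event-IS}) and possibly \proc{neighborhood-event-IS} at $O(1)$ (Observation \refobs{neighborhood-event-IS}). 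None of these worst-case bounds depends on $\tilde{m}$. The total is $O(\log n\log(1/\eps)+\log(1/r))$.
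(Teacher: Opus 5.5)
Your proposal is correct and follows the paper's own proof. You sum the two orientations given by Lemma~\reflemma{sample-H-edge-IS-directed}, use the independence of $\bT_{u_0}$ and $\bT_{v_0}$ so that the two $\tfrac12 p_{u_0}p_{v_0}$ terms merge into $\Pr[\bT_{u_0}=1\vee\bT_{v_0}=1]$, and add up the worst-case costs of the subroutines. You are also right to read the statement's $\wedge$ as a typo for $\vee$: the paper's derivation, the algorithm's output specification, and the later combination with the L-L sampler all use the disjunction (and your directed weight with $\Pr[\bT_{u_0}=1]$ is the correct one, where the paper's proof slips and writes $\Pr[\bT_{u_0}=0]$).
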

\begin{proof}
    For expected complexity assuming that $\tilde{m} \ge \max\{36,e^{-1/10}m\}$, observe that we have:
    \begin{itemize}
        \item A single \proc{sample-star-vertex-IS} call, at the cost of $O(\log n)$ queries (worst-case, Lemma \reflemma{sample-star-vertex-IS}).
        \item A single \proc{sample-neighbor-IS} call, at the cost of $O(\log n)$ queries (worst-case, Lemma \reflemma{sample-neighbor-IS}).
        \item Two calls to \proc{test-high-degree-IS} at the cost of $O(\log (1/r))$ queries (worst-case, Lemma \reflemma{test-high-degree-IS}).
        \item A single call to \proc{estimate-indicator-inv} with parameter $\rho=1/120=\Omega(1)$, at the cost of $O(\log (1/\eps))$ calls to \proc{starness-event-IS} and \proc{neighborhood-event-IS} (worst-case, Lemma \reflemma{estimate-indicator-inv}), each costing $O(\log n)$ (worst-case, Observation \refobs{starness-event-IS}, Observation \refobs{neighborhood-event-IS}).
    \end{itemize}
    
    For shortening, let $\alpha_{u_0,v_0} = \Pr\left[\bT_{u_0} = 0\right]\left(\Pr\left[\bT_{v_0} = 0\right] + \frac{1}{2}\Pr\left[\bT_{v_0} = 1\right]\right)$. Observe that $\alpha_{u_0,v_0} + \alpha_{v_0,u_0} = \Pr\left[\bT_{u_0} = 1 \vee \bT_{v_0} = 1\right]$, since $\bT_{u_0}$ and $\bT_{v_0}$ are independent.
    
    For the undirected edge, we consider both directions.
    \begin{eqnarray*}
        \Pr[\text{sample $u_0 v_0$}]
        &=& \Pr[u_0 \to v_0] + \Pr[v_0 \to u_0] \\
        \text{[Lemma \reflemma{sample-H-edge-IS-directed}]} &=& \left(\frac{e^{\pm \eps} \alpha_{u_0,v_0}}{3750\tilde{n}^*\sqrt{\tilde{m}}} \pm r\right) + \left(\frac{e^{\pm \eps} \alpha_{v_0,u_0}}{3750\tilde{n}^*\sqrt{\tilde{m}}} \pm r\right) \\
        &=& \frac{e^{\pm \eps}}{3750\tilde{n}^*\sqrt{\tilde{m}}}(\alpha_{u_0,v_0} + \alpha_{v_0,u_0}) \pm 2r
        = \frac{e^{\pm \eps}}{3750\tilde{n}^*\sqrt{\tilde{m}}}\Pr\left[\bT_{u_0} = 1 \vee \bT_{v_0} = 1\right] \pm 2r
    \end{eqnarray*}
\end{proof}

\subsection{Sampling all edges}
\defproc{sample-edge-core-IS}{Sample-Edge-Core-IS}
We combine the category-specific edge sampling procedures by unifying their coefficient.

\begin{proc-algo}{sample-edge-core-IS}{G,n,\eps;\tilde{m}}
    \alginput{$0 < \eps \le 1/3$}
    \alginput{$\tilde{m} \ge \max\{36, e^{-1/10} m\}$}
    \algoutput{For every edge $uv\in E$, the probability to return $uv$ is $\frac{e^{\pm 2 \eps}}{1200 \tilde{n}^* \sqrt{\tilde{m}} \ln (\paperZcoefZestZindZinv/\eps)}$}
    \algcomplexity{$O(\log n \log (1 / \eps))$ worst-case}
    \begin{code}
        \algitem Let $r \gets (\eps^2 / \ln (\paperZcoefZestZindZinv/\eps)) / 15000 n^2$.
        \algitem Toss a three-head coin.
        \begin{Codeblock*}
            \algitem With probability $\frac{\sqrt{\tilde{m}}}{75 \tilde{n}^*}$: Return $\proc{sample-LL-edge-IS}(G,n,\eps;\tilde{m}, r)$.
            \algitem With probability $\frac{1}{4}$: Return $\proc{sample-H-edge-IS}(G,n,\eps;\tilde{m}, r)$.
            \algitem Otherwise: \reject.
        \end{Codeblock*}
    \end{code}
\end{proc-algo}

\begin{lemma}{sample-edge-core-IS}
    For every graph $G = (V,E)$ (over $n$ vertices and $m$ edges), if $\tilde{m} \ge \max\{36, e^{-1/10} m\}$ and $0 < \eps \le 1/3$, then Algorithm \refalg{sample-edge-core-IS} (\proc*{sample-edge-core-IS}) returns every edge $uv \in E$ with probability in the range $e^{\pm 2 \eps} / 15000\tilde{m} R \ln (\paperZcoefZestZindZinv/\eps)$ for $R = \min\{\sqrt{\tilde{m}}, n / \sqrt{\tilde{m}}\}$, and otherwise rejects. The query complexity is $O(\log n \log (1/\eps))$ worst-case.
\end{lemma}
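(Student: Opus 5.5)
The plan is to combine the guarantees of Lemma \reflemma{sample-LL-edge-IS} and Lemma \reflemma{sample-H-edge-IS}. The coin probabilities in \proc{sample-edge-core-IS} are chosen so that both branches carry the same multiplicative coefficient. The degree-test outcomes then recombine into a probability-$1$ event, because ``$\bT_{u_0}=\bT_{v_0}=0$'' and ``$\bT_{u_0}=1 \vee \bT_{v_0}=1$'' are complementary.

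\textbf{Step 1: the coin is valid.} We need $\frac{\sqrt{\tilde{m}}}{75\tilde{n}^*} + \frac14 \le 1$. This holds since $\tilde{n}^* = \min\{n,2\tilde{m}\}$. In the case $\tilde{n}^*=n$ we have $\sqrt{\tilde{m}} \le n$, because we may enforce $\tilde{m}\le n^2$. In the case $\tilde{n}^*=2\tilde{m}$ we have $\sqrt{\tilde{m}}/2\tilde{m} \le 1$.

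\textbf{Step 2: combine the two branches.} Fix an edge $u_0v_0$. Multiplying the coin probabilities by the two lemmas gives the following. The L-L branch contributes
\[ \frac{\sqrt{\tilde{m}}}{75\tilde{n}^*}\cdot\frac{e^{\pm\eps}}{200\tilde{m}\ln(\paperZcoefZestZindZinv/\eps)}\Pr[\bT_{u_0}=\bT_{v_0}=0] \pm r = \frac{e^{\pm\eps}}{15000\,\tilde{n}^*\sqrt{\tilde{m}}\ln(\paperZcoefZestZindZinv/\eps)}\Pr[\bT_{u_0}=\bT_{v_0}=0]\pm r. \]
The H branch contributes
\[ \frac14\cdot\frac{e^{\pm\eps}}{3750\,\tilde{n}^*\sqrt{\tilde{m}}\ln(\paperZcoefZestZindZinv/\eps)}\Pr[\bT_{u_0}=1\vee \bT_{v_0}=1]\pm \frac{r}{2}, \]
which has the same coefficient. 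The two branches are disjoint events, so the probabilities add. By independence of the two tests, the two probabilities sum to $1$. The total is therefore $\frac{e^{\pm\eps}}{15000\,\tilde{n}^*\sqrt{\tilde{m}}\ln(\paperZcoefZestZindZinv/\eps)} \pm 2r$.

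Finally, $\tilde{n}^*\sqrt{\tilde{m}} = \tilde{m}\cdot\min\{n/\sqrt{\tilde{m}},\,2\sqrt{\tilde{m}}\}$, which equals $\tilde{m}R$ up to a factor of at most $2$. That factor is a constant that is harmless and can be absorbed by adjusting the normalizing constant (it is the same for all edges, so uniformity is unaffected).

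\textbf{Step 3: absorb the additive error.} This is the main point requiring care. We must show $2r$ is at most a $(e^{\eps}-1)$-fraction of the main term $P := 1/(15000\,\tilde{n}^*\sqrt{\tilde{m}}\ln(\paperZcoefZestZindZinv/\eps))$. Using $\tilde{n}^*\le n$ and $\sqrt{\tilde{m}}\le n$, we get $P \ge 1/(15000\, n^2 \ln(\paperZcoefZestZindZinv/\eps))$. With $r = (\eps^2/\ln(\paperZcoefZestZindZinv/\eps))/15000n^2$ this gives $2r \le 2\eps^2 P \le (1-e^{-\eps})P$ for $\eps\le 1/3$, since $2\eps^2\le \eps/2 \le 1-e^{-\eps}$ there. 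Hence $e^{\pm\eps}P \pm 2r \subseteq e^{\pm 2\eps}P$, as claimed. Edges can only be output by the two subroutines, and every other outcome is \reject.

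\textbf{Step 4: complexity.} The bound is the maximum of the worst-case costs in the two lemmas. That is $O(\log(n/\eps)+\log(1/r))$ for the L-L branch and $O(\log n\log(1/\eps)+\log(1/r))$ for the H branch. Here $\log(1/r)=O(\log(n/\eps))$, giving $O(\log n\log(1/\eps))$ up to an additive $O(\log n)$ term absorbed in the stated bound.
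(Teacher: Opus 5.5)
Your overall route is the paper's. The coin equalizes both branch coefficients to $1/(15000\,\tilde{n}^*\sqrt{\tilde{m}}\ln(5/\eps))$, the two complementary test events sum to $1$, and the additive error is absorbed at the end. You are also right that the natural normalizer is $\tilde{n}^*\sqrt{\tilde{m}}$ rather than $\tilde{m}R$. They differ by a factor of up to $2$, which cannot be hidden inside $e^{\pm 2\eps}$ since $e^{2/3}<2$. The paper's own proof also ends with $\tilde{n}^*\sqrt{\tilde{m}}$, and that is the form the amplification lemma uses.

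The gap is in Step~3, the step you flagged yourself. To pass from $e^{\pm\eps}P\pm\delta$ to $e^{\pm 2\eps}P$ you need $\delta\le\min\{e^{2\eps}-e^{\eps},\,e^{-\eps}-e^{-2\eps}\}\,P=e^{-\eps}(1-e^{-\eps})\,P$. The lower end is the binding constraint. Your bound $\delta\le(1-e^{-\eps})P$ secures only the upper end.

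With your budget $\delta=2r\le 2\eps^2P$, the lower end genuinely fails near $\eps=1/3$. There $2\eps^2=2/9\approx0.222$, whereas $e^{-1/3}-e^{-2/3}\approx0.203$. Separately, the intermediate claim $2\eps^2\le\eps/2$ is false for $\eps>1/4$, even though $2\eps^2\le1-e^{-\eps}$ does hold on $(0,1/3]$.

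The repair is the paper's tighter bookkeeping:
\begin{itemize}
  \item By your Step~1, $\sqrt{\tilde{m}}/\tilde{n}^*\le1$, so the L-L coin is at most $1/75$. That branch contributes additive error at most $r/75$, not $r$.
  \item The H branch contributes $\tfrac14\cdot2r=r/2$, so the total error is below $r$.
  \item Then $r\le\eps^2P$. Both $\eps^2\le e^{2\eps}-e^{\eps}$ and $\eps^2\le e^{-\eps}-e^{-2\eps}$ hold for $0<\eps\le1/3$. For the second, the function $e^{-\eps}-e^{-2\eps}-\eps^2$ is concave, vanishes at $0$, and is positive at $1/3$.
\end{itemize}
This yields the $e^{\pm2\eps}$ window.
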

\begin{proof}
    For complexity, using $\log (1/r) = O(\log(n/\eps))$:
    \begin{itemize}
        \item The cost of \proc{sample-LL-edge-IS} is $O(\log (n/\eps))$ (Lemma \reflemma{sample-LL-edge-IS}).
        \item The cost of \proc{sample-H-edge-IS} is $O(\log n \log (1/\eps) + \log (n/\eps))$ (Lemma \reflemma{sample-H-edge-IS}).
    \end{itemize}

    Note that $\sqrt{\tilde{m}} / \tilde{n}^* = \sqrt{\tilde{m}} / \min\{ n, 2\tilde{m} \} = \max\{ \sqrt{\tilde{m}}/n, 1/2\sqrt{m} \} \le \max\{ \sqrt{n^2/2}/n, 1/2 \} \le 1$.
    
    The probability to return every individual edge $uv$ is:
    \begin{eqnarray*}
        \Pr\left[\text{sample $uv$}\right]
        &\!\!\!=\!\!\!& \frac{\sqrt{\tilde{m}}}{75\tilde{n}^*} \cdot \left(\frac{e^{\pm \eps}}{200\tilde{m} \ln (\paperZcoefZestZindZinv/\eps)} \Pr\left[T_u = T_v = 0\right] \pm r\right) + \cdots\\&& \frac{1}{4} \cdot \left(\frac{e^{\pm \eps}}{3750 \tilde{n}^* \sqrt{\tilde{m}} \ln (\paperZcoefZestZindZinv/\eps)} \Pr\left[T_u \ge 1 \vee T_v \ge 1\right] \pm 2r\right) \\
        &\!\!\!=\!\!\!& \frac{e^{\pm \eps}}{15000\tilde{n}^* \sqrt{\tilde{m}} \ln (\paperZcoefZestZindZinv/\eps)} \left(\Pr\left[T_u  = T_v  = 0\right] + \Pr\left[T_u \ge 1 \vee T_v \ge 1\right]\right) \pm \left(\frac{1}{75} + 2 \cdot \frac{1}{4}\right)r \\
        &\!\!\!=\!\!\!& \frac{e^{\pm \eps}}{15000\tilde{n}^* \sqrt{\tilde{m}} \ln (\paperZcoefZestZindZinv/\eps)} \pm r
    \end{eqnarray*}

    Observe that for $\eps \le 1/3$:
    \begin{eqnarray*}
        n^2 &\ge& \tilde{n}^* \cdot \sqrt{\tilde{m}} \\
        \eps^2 &\le& \min\{e^{2\eps} - e^{\eps}, e^{-\eps} - e^{-2\eps}\} \\
        r &\le& \frac{\min\{e^{2\eps} - e^\eps, e^{-\eps} - e^{-2\eps}\}}{15000 \tilde{n}^* \sqrt{\tilde{m}} \ln (\paperZcoefZestZindZinv/\eps)}
    \end{eqnarray*}
    
    Therefore,
    \[  \Pr\left[\text{sample $uv$}\right]
        = \frac{e^{\pm \eps}}{15000\tilde{n}^* \sqrt{\tilde{m}} \ln (\paperZcoefZestZindZinv/\eps)} \pm \frac{\min\{e^{2\eps} - e^{\eps}, e^{-\eps} - e^{-2\eps}\}}{15000\tilde{n}^* \tilde{m} \ln (\paperZcoefZestZindZinv/\eps)}
        = \frac{e^{\pm 2\eps}}{15000\tilde{n}^* \sqrt{\tilde{m}} \ln (\paperZcoefZestZindZinv/\eps)}
    \]
\end{proof}

\defproc{sample-edge-amplified-IS}{Sample-Edge-Amplified-IS}
The core sampler is a $(\lambda,\eps)$-uniform sampler, but $\lambda$ can be very small. Algorithm \refalg{sample-edge-amplified-IS} (\proc*{sample-edge-amplified-IS}) amplifies the success probability to $3/4$ by making $O(R \log (1/\eps))$ sample tries, for $R = \min\{\sqrt{m}, n / \sqrt{m}\}$, and choosing the first successful one. This algorithm requires an advice $\tilde{m}$, and its correctness and expected query complexity statements only hold if $\tilde{m} \in e^{\pm 1/10} m$.

\begin{proc-algo}{sample-edge-amplified-IS}{G, n, \tilde{m}, \eps}
    \alginput{$\tilde{m} \ge \max\{36, e^{-1/10} m\}$}
    \algoutput{For every edge $e \in E$, the probability to return $e$ is $e^{\pm \eps} \lambda(G,\tilde{m})$}
    \algoutputphantom{}{($\lambda(G,\tilde{m}) \ge 3/4$ if $\tilde{m} \ge e^{-1/10} m$)}
    \algcomplexity{$O(\min\{\sqrt{m}, n / \sqrt{m}\} \cdot \log n \log (1/\eps) \log (n/\eps))$ expected, if $\tilde{m} \ge e^{-1/10} m$}
    \algcomplexity{$O(m\sqrt{n}\log n \log (1/\eps))$ worst-case}
    \begin{code}
        \algitem Let $\tilde{R} \gets \min\{\sqrt{\tilde{m}}, n / \sqrt{\tilde{m}}\}$.
        \begin{For}{$\ceil{10^6 \tilde{R} \ln (\paperZcoefZestZindZinv/\eps)}$ times}
            \algitem Let $\be \gets \proc{sample-edge-core-IS}(G,n,\eps/5;\tilde{m})$.
            \begin{If}{$\be$ is not \reject}
                \algitem Return $\be$.
            \end{If}
        \end{For}
        \algitem Return \reject.
    \end{code}
\end{proc-algo}

\begin{lemma}{sample-edge-amplified-IS}
    For every graph $G = (V,E)$ (over $n$ vertices and $m$ edges), if $\tilde{m} \ge \max\{36, e^{-1/10} m\}$ and $0 < \eps \le 1/3$, then Algorithm \refalg{sample-edge-amplified-IS} (\proc*{sample-edge-amplified-IS}) returns every edge $uv \in E$ with probability in the range $e^{\pm \eps} \lambda(G,\tilde{m})$, where $\lambda(G,\tilde{m}) \ge 3/4$, and otherwise rejects. The query complexity is $O(R \log n \log^2 (1/\eps))$ if $\tilde{m} \in e^{\pm 1/10} m$ and otherwise $O(\sqrt{n} \log n \log^2 (1/\eps))$.
\end{lemma}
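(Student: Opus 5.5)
The proof follows the template of Lemma \reflemma{sample-edge-amplified-hyb}, with Lemma \reflemma{sample-edge-core-IS} replacing Lemma \reflemma{sample-edge-core-hyb}. The key observation is that each iteration of the loop is an independent call to \proc{sample-edge-core-IS} with accuracy $\eps/5$. By Lemma \reflemma{sample-edge-core-IS}, in every iteration every edge $uv$ is returned with probability $q_{uv} \in e^{\pm 2\eps/5} q$, where
\[ q = \frac{1}{15000\, \tilde{n}^* \sqrt{\tilde{m}} \ln (\paperZcoefZestZindZinv/(\eps/5))}, \]
and the iteration rejects otherwise. Note that $\tilde{n}^*\sqrt{\tilde{m}} = \min\{n\sqrt{\tilde{m}}, 2\tilde{m}^{3/2}\}$ equals $\tilde{m}\tilde{R}$ up to a factor of $2$, which reconciles the two forms of the statement. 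Let $\sigma = \sum_{e} q_e$ be the per-iteration success probability. Since the loop returns the first success, the probability of returning a given $uv$ is
\[ \frac{q_{uv}}{\sigma}\left(1 - (1-\sigma)^{T}\right), \qquad T = \ceil{10^6 \tilde{R} \ln (\paperZcoefZestZindZinv/\eps)} . \]
I define $\lambda(G,\tilde{m}) = \frac{q}{\sigma}\left(1-(1-\sigma)^T\right) \cdot m$ (rescaled as needed). Then each edge is returned with probability $e^{\pm 2\eps/5}$ times a common value, which lies within $e^{\pm\eps}\lambda/m$ after normalization. This shows uniformity holds for every $\tilde{m}$ meeting the input condition.

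For the lower bound $\lambda \ge 3/4$, I bound the total success probability $1-(1-\sigma)^T \ge 1 - e^{-\sigma T}$. Using $\sigma \ge m\, e^{-2\eps/5} q$, together with $m \ge e^{-1/10}\tilde{m}$ when $\tilde{m}\in e^{\pm 1/10} m$ and $\tilde{n}^*\sqrt{\tilde{m}} \le 2\tilde{m}\tilde{R}$, gives $\sigma T \ge c\cdot 10^6/(15000\cdot 2)\cdot e^{-1/10 - 2\eps/5}$, up to the ratio $\ln(5/\eps)$ versus $\ln(25/\eps)$. That ratio is at least a constant, about $1/2$, for $\eps\le 1/3$. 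The result is $\sigma T$ at least roughly $10$, so the success probability is at least $1-e^{-10} \ge 3/4$. The main obstacle is this constant bookkeeping, specifically the mismatch between $\tilde{n}^*\sqrt{\tilde{m}}$ and $\tilde{m}\tilde{R}$ and the shifted logarithm; I would handle both by crude constant-factor bounds. One caveat: the hypothesis only gives $\tilde{m} \ge e^{-1/10}m$, so the bound $\lambda\ge 3/4$ really uses the two-sided range $\tilde{m}\in e^{\pm1/10}m$, or at least an upper bound on $\tilde{m}$. I would state it under that assumption, as in the algorithm's output annotation.

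The complexity is $T$ iterations times the worst-case per-iteration cost $O(\log n\log(1/\eps))$ from Lemma \reflemma{sample-edge-core-IS}, which gives $O(\tilde{R}\log n\log^2(1/\eps))$. When $\tilde{m}\in e^{\pm1/10}m$ this is $O(R\log n\log^2(1/\eps))$. Otherwise I use $\tilde{R}=\min\{\sqrt{\tilde{m}},n/\sqrt{\tilde{m}}\}\le \sqrt{n}$, since the minimum of $x$ and $n/x$ is at most $\sqrt n$, which gives the worst-case bound $O(\sqrt{n}\log n\log^2(1/\eps))$.
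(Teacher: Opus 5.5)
Your proposal is correct and follows essentially the same route as the paper. Both arguments lower-bound the per-round success probability using $\tilde n^*\sqrt{\tilde m}\le 2\tilde m\tilde R$ and $\tilde m\le e^{1/10}m$, amplify over the $\lceil 10^6\tilde R\ln(5/\eps)\rceil$ independent rounds, inherit uniformity from the core sampler, and bound complexity as rounds times worst-case per-round cost (with $\tilde R\le\sqrt n$ otherwise). Your caveats are well-founded, since the paper's proof also silently uses the upper bound $\tilde m\le e^{1/10}m$ and writes $\ln(5/\eps)$ where the call with accuracy $\eps/5$ actually yields $\ln(25/\eps)$. Your explicit per-edge formula $\frac{q_{uv}}{\sigma}\bigl(1-(1-\sigma)^T\bigr)$ states precisely what the paper's ``$e^{\pm 2(\eps/5)}\lambda$'' intends.
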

\begin{proof}
    Let $R = \min\{\sqrt{m}, n / \sqrt{m}\}$. The success probability of an iteration is at least:
    \[  \frac{e^{-(2/5)\eps} m}{15000 \tilde{n^*} \sqrt{\tilde{m}} \ln (\paperZcoefZestZindZinv/\eps)}
        \ge \frac{m}{30000e^{2/15} \cdot \tilde{R} \cdot e^{1/10}m \ln (\paperZcoefZestZindZinv/\eps)}
        \ge \frac{1}{40000 \tilde{R} \ln (\paperZcoefZestZindZinv/\eps)}
    \]
    
    Therefore, the success probability of the loop is:
    \[  \lambda(G,\tilde{m})
        \ge 1 - \left(1 - \frac{1}{40000 \tilde{R} \ln (\paperZcoefZestZindZinv/\eps)}\right)^{10^6 \tilde{R} \ln (\paperZcoefZestZindZinv/\eps)}
        \ge 1 - e^{-25}
        \ge 3/4
    \]

    For every edge $e \in E$, the probability to sample it is in the range $e^{\pm 2 (\eps/5)} \lambda(G, \tilde{m})$.

    The query complexity is $O(\tilde{R} \log (1/\eps))$ rounds, $O(\log n \log (1/\eps))$ per round (Lemma \reflemma{sample-edge-core-IS}). If $\tilde{m} \ge 36$ and $\tilde{m} \in e^{\pm 1/10} m$, then $\tilde{R} = O(R)$, and otherwise the worst case is $\tilde{R} = O(\sqrt{n})$.
\end{proof}

\defproc{sample-edge-IS}{Sample-Edge-IS}
The entry-point sampling algorithm, \proc*{sample-edge-IS} (Algorithm \refalg{sample-edge-IS}), estimates $\tilde{m} \in e^{\pm 1/10} m$ with success probability $1 - r$, where $r = O(\eps^2) \cap O(1/n^3)$.

\begin{proc-algo}{sample-edge-IS}{G,n,\eps}
    \begin{code}
        \begin{If}{$\eps > 1/3$}
            \algitem Set $\eps \gets 1/3$.
        \end{If}
        \algitem Let $r \gets \min\{\eps^2/6, 1/n^2 \log n \}$.
        \algitem Compute $\tilde{\bm} \gets e^{\pm 1/10} m$ with probability $1 - r$.
        \begin{If}{$\tilde{\bm} < 36$}
            \algitem Return $\proc{sample-edge-bruteforce-IS}(G)$.
        \end{If}
        \algitem Return $\proc{sample-edge-amplified-IS}(G, n, \tilde{\bm}, \eps/5)$.
    \end{code}
\end{proc-algo}

\begin{lemma}{sample-edge-IS}
    For every graph $G = (V,E)$ (over $n$ vertices and $m$ edges). Algorithm \refalg{sample-edge-IS} (\proc{sample-edge-IS}) returns every edge $uv \in E$ with probability in the range $e^{\pm 4\eps} / m$, and otherwise rejects. The expected query complexity is $O(R \poly(\log n) \log^2 (1/\eps))$ for $R = \min\{\sqrt{m}, n/\sqrt{m}\}$.
\end{lemma}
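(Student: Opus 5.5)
The argument mirrors the proof of Lemma \reflemma{sample-edge-hyb}: split on the outcome of the estimation step, bound the cost of each branch, and show that the rare bad outcome barely perturbs the output distribution.

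\textbf{Estimation step.} By Lemma \reflemma{external:estimate-edges-IS}, a constant-accuracy estimate $\tilde{\bm} \in e^{\pm 1/10} m$ costs $O(R \poly(\log n))$ IS-queries and succeeds with probability $2/3$. Taking the median of $O(\log(1/r)) = O(\log(n/\eps))$ independent runs lifts the success probability to $1-r$. The total cost is $O(R\poly(\log n)\log(n/\eps))$, which fits inside $O(R\poly(\log n)\log^2(1/\eps))$ once the $\log(n/\eps)$ factor is absorbed: one $\log n$ goes into the polylog, and $\log(1/\eps)\le\log^2(1/\eps)$ when $\eps\le 1/3$.

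\textbf{Branch analysis.} Let $\mathcal G$ be the event $\tilde{\bm} \in e^{\pm 1/10} m$.
\begin{itemize}
\item \emph{$\mathcal G$ holds and $\tilde{\bm} < 36$.} Then $m = O(1)$. The brute-force sampler (Observation \refobs{sample-edge-bruteforce-IS}) returns an exactly uniform edge at cost $O(\log n)$.
\item \emph{$\mathcal G$ holds and $\tilde{\bm} \ge 36$.} Lemma \reflemma{sample-edge-amplified-IS} applies with accuracy $\eps/5$. Every edge is returned with probability $e^{\pm\eps/5}\lambda(G,\tilde{\bm})$, where $\lambda \ge 3/4$, and the expected cost is $O(R\log n\log^2(1/\eps))$.
\item \emph{$\mathcal G$ fails.} This has probability at most $r$.
\begin{itemize}
\item If $\tilde{\bm}<36$, brute force costs $O(1+m\log n) = O(n^2\log n)$. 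Its expected contribution is at most $r\cdot O(n^2\log n)=O(1)$ because $r\le 1/n^2\log n$.
\item If $\tilde{\bm}\ge 36$, the worst case of Lemma \reflemma{sample-edge-amplified-IS} is $O(\sqrt n\log n\log^2(1/\eps))$. Multiplied by $r$ this is $o(1)$.
\end{itemize}
\end{itemize}
Summing the branches gives the claimed expected complexity.

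\textbf{Output distribution.} Condition on the value of $\tilde{\bm}$ and combine the good values. Let $\lambda'$ be the $\Pr[\cdot\mid\mathcal G]$-weighted average of the per-value success probabilities on $\mathcal G$, where a brute-force value contributes success probability $1$. Then $\lambda'\ge 3/4$. For every edge $e$, its unconditional return probability $P_e$ satisfies
\[(1-r)e^{-\eps/5}\lambda' \le P_e \le e^{\eps/5}\lambda' + r.\]
The main obstacle is that the bad branch is adversarial: for $\tilde{\bm}$ outside $\mathcal G$, the core-sampler guarantees can fail, so that branch can only be charged as an additive $r$ per edge. This is where $r\le\eps^2/6$ is used. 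Dividing by $\lambda'\ge 3/4$, the additive error is at most $\tfrac43 r\le \eps^2\cdot\tfrac{2}{9}$, and the factor $(1-r)$ is at least $e^{-\eps^2/3}$. For $\eps\le 1/3$ both are absorbed into $e^{\pm\eps}$, so $P_e\in e^{\pm 2\eps/5\pm\eps^2}\lambda'\subseteq e^{\pm\eps}\lambda'$. Normalizing by the total success probability, which also lies in $e^{\pm\eps}m\lambda'$, gives conditional probability $e^{\pm 2\eps}/m \subseteq e^{\pm4\eps}/m$ for each edge. A failure (reject) occurs only when no edge is returned.
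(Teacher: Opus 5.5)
Your proposal follows the paper's proof essentially step for step: median amplification of the estimate, cost accounting over the brute-force, amplified and bad-estimate branches, an averaged success probability $\lambda'\ge 3/4$ on the good event, and an additive $r$ per edge for the bad event. It is slightly more careful than the paper in keeping the $(1-r)$ factor and renormalizing explicitly.

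There is, however, a genuine normalization error in the output-distribution step, and the paper's own proof makes the same one. You define $\lambda'$ as an averaged \emph{success probability}, and that is what justifies $\lambda'\ge 3/4$. But then the return probability of a single edge on $\mathcal G$ is about $\lambda'/m$, not $\lambda'$. Your closing remark that the total success lies in $e^{\pm\eps}m\lambda'$ is incompatible with $\lambda'\ge 3/4$ once $m\ge 2$. The correct sandwich is
\[
(1-r)\,e^{-\eps/5}\,\frac{\lambda'}{m}\;\le\;P_e\;\le\;e^{\eps/5}\,\frac{\lambda'}{m}+r .
\]
Relative to the base mass, the additive term is therefore $\frac{4}{3}rm$, not $\frac{4}{3}r$. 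Absorbing it into $e^{\pm O(\eps)}$ requires $rm=O(\eps)$, ideally $O(\eps^2)$.

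The algorithm's $r=\min\{\eps^2/6,\,1/(n^2\log n)\}$ does not ensure this. Take $m=\Theta(n^2)$ and $\eps=1/n$. For large $n$ one gets $r=1/(n^2\log n)$, so an edge may receive extra mass $1/(n^2\log n)$ on top of a base mass $\Theta(1/n^2)$. That is a relative error $\Theta(1/\log n)\gg\eps$. As you say, the bad branch is charged only crudely, so nothing in the argument stops its success mass from landing on a single edge.

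The repair is cheap: take $r\le \eps^2/(3n^2)$, e.g.\ $r=\min\{\eps^2/(3n^2),\,1/(n^2\log n)\}$. Because $m\le n^2/2$, this gives $rm\le\eps^2/6$. Your computation then yields $P_e\in e^{\pm(\eps/5+\eps^2/3)}\lambda'/m$ for every edge. Hence each conditional probability lies in $e^{\pm 2(\eps/5+\eps^2/3)}/m\subseteq e^{\pm\eps}/m$ for $\eps\le 1/3$. The median trick still needs only $O(\log(1/r))=O(\log(n/\eps))$ repetitions, so the query-complexity part of your argument is unaffected.
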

\begin{proof}
    Estimating $\tilde{m}$ in the range $e^{\pm 1/10} m$ with probability $2/3$ requires $O(R \cdot \poly(\log n))$ queries worst-case (Lemma \reflemma{external:estimate-edges-IS}). To increase this probability to $1 - r$, we take the median of $O(\log r^{-1}) = O(\log (n/\eps))$ independent runs at the cost of $O(R \poly(\log n) \log (n/\eps))$ queries.
    
    The contribution of the brute-force sampler in the case where $\tilde{\bm}$ is wrongly small is $\Pr\left[\tilde{\bm} < 36\right] \cdot O(1 + m \log n) = O(r m \log n) = O(1)$ (last transition is correct since $m \le n^2$ and $r \le 1/n^2 \log n$).

    The amplified sample costs $O(R \cdot \log n \log^2 (1/\eps))$ if $\tilde{m} \in e^{\pm 1/10} m$ and $O(\sqrt{n} \log n \log^2 (1/\eps))$ if $\tilde{m} \notin e^{-1/10} m$, which happens with probability at most $r \le 1/n^2 \log n$. Therefore, the expected query complexity of the sampling logic is $O(R \log n \log^2 (1/\eps))$ (in addition to the estimation of $\tilde{m}$).
    
    Let $\lambda'(G) = \frac{\sum_{a \in e^{\pm 1/10} m} \Pr\left[\tilde{m} = a\right]}{\Pr\left[\tilde{m} \in e^{\pm 1/10} m\right]} \ge \frac{3}{4}$ (last transition: Lemma \reflemma{sample-edge-amplified-IS}).

    By Lemma \reflemma{sample-edge-amplified-IS} and the union bound, the probability to sample an individual edge $e \in E$ is at least $e^{-\eps/5} \lambda'(G)$ and at most $e^{\eps/5}\lambda'(G) + r$. For $0 < \eps \le 1/3$, the ratio between the maximum and the minimum probability is at most:
    \begin{eqnarray*}
        e^{(2/5)\eps} + \frac{e^{\eps/5}}{\lambda'(G)} r
        \le e^{(2/5)\eps} + \frac{4}{3} e^{\eps/5} \cdot \frac{1}{6} \eps^2
        \le e^{\eps/2}
    \end{eqnarray*}

    Therefore, the output is $\eps$-uniform when conditioned on success.
\end{proof}

\section{Lower bounds}
\label{sec:lbnd}

We describe a paradigm for showing lower bounds for sampling algorithms. Consider two graphs $G_1 = (V,E_1)$ and $G_2 = (V,E_2)$ for which $E_1 \subseteq E_2$. For every relabeling $\pi : V \to V$, let $E^\pi_b$ be the set of edges in $\pi(G_b)$ ($b \in \{1,2\}$).

We formally define the algorithm behavior over a random input as a distribution.
\begin{definition}{alg-distribution-of-runs}
    Let $\mathcal A$ be a query-making algorithm, and let $\mathcal G$ be a distribution over inputs. We define the \emph{distribution of runs} of $\mathcal A$ given an input drawn from $\mathcal G$ as the distribution for which, for every query-answer sequence $\sigma$, \[\Pr_{\mathcal A(\mathcal G)}\left[\sigma\right] = \Pr_{\bG \sim \mathcal G}\left[\text{$\mathcal A$ executes the query-answer sequence $\sigma$ when given an input $\bG$}\right].\]
\end{definition}

Consider the model of double inputs: the algorithm input consists of two graphs, and every graph query is performed on both of them. If a single-input algorithm cannot distinguish between $\pi(G_1)$ and $\pi(G_2)$ using a specific sequence of queries, then it cannot distinguish between them even if $\pi(G_1)$ is given in addition to the actual input $G \in \{ \pi(G_1), \pi(G_2) \}$. This is stated in the following observation:
\begin{observation}{single-input-to-double-input-analysis}
    Let $G_1$ and $G_2$ be two graphs over the same vertex set $V$, and let $\bpi : V \to V$ be a random relabeling of the vertices. Assume that for every single-graph algorithm $\mathcal A$ that makes $q$ queries in expectation when given an input of the form $\bpi(G_1)$, $\dtv(\mathcal A(\bpi(G_1)), \mathcal A(\bpi(G_2))) \le d(q)$. In this setting, for every double-graph algorithm $\mathcal A'$ that makes $q$ queries in expectation when given an input of the form $(\bpi(G_1), \bpi(G_1))$, $\dtv(\mathcal A'(\bpi(G_1), \bpi(G_1)), \mathcal A'(\bpi(G_1), \bpi(G_2))) \le d(q)$.
\end{observation}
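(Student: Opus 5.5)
We prove the observation by a simulation argument: from any double-graph algorithm $\mathcal A'$ we build a single-graph algorithm $\mathcal A$ whose run distribution on $\bpi(G_b)$ has the same total variation distance as that of $\mathcal A'$ on $(\bpi(G_1),\bpi(G_b))$. We then apply the hypothesis to $\mathcal A$.

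The obstacle is that $\mathcal A$ cannot answer queries on the first coordinate of $\mathcal A'$, because it does not know $\bpi$. So we do not simulate $\mathcal A'$ literally. Instead, we compare the two run distributions directly, by coupling on $\bpi$.

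\textbf{Step 1: fix the permutation.} A double-input run is a sequence of queries together with answer pairs $(a^1_i,a^2_i)$. Under both double inputs the first coordinate is the same graph $\bpi(G_1)$, so both distributions are generated with one shared random $\bpi$. Condition on $\bpi=\pi$. The first-coordinate answers are then a fixed deterministic function of the queries, identical in both worlds. Hence $\mathcal A'$ with first input fixed to $\pi(G_1)$ is just a single-graph algorithm $\mathcal A'_\pi$ that queries the second coordinate and computes the first-coordinate answers internally, using $\pi$ as hard-wired advice.

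\textbf{Step 2: handle the random relabeling.} This hard-wiring is the delicate point. The hypothesis is over a random relabeling $\bpi$ and an algorithm that does not know $\bpi$; here $\mathcal A'_\pi$ knows $\pi$, so its guarantee does not follow pointwise. The argument works if the hypothesis is read as applying to algorithms that may depend on the realization of $\bpi$, that is, a pointwise bound averaged over $\bpi$. In that case Step 1 applies for each $\pi$, and averaging the total variation bound over $\bpi$ (using convexity of $\dtv$ under mixtures with a common mixing variable) gives the claim.

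\textbf{Step 3: complexity bookkeeping.} The query count of $\mathcal A'_\pi$ on $\pi(G_1)$ equals that of $\mathcal A'$ on $(\pi(G_1),\pi(G_1))$. Its expectation over $\bpi$ is therefore $q$, which matches the budget in the hypothesis.

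\textbf{Alternative route: monotonicity.} This is the fallback if the hypothesis cannot be read as in Step 2. Use $E_1\subseteq E_2$. For IS, degree and neighbor queries, answers on $\pi(G_1)$ are determined by $\pi(G_2)$'s answers only when $G_2$'s answers are \emph{trivial}. So instead, argue via the event that the run under $(\bpi(G_1),\bpi(G_2))$ never touches an edge of $E_2\setminus E_1$. On that event the second coordinate is determined by the first, so the double run is determined by a single run on $\bpi(G_2)$. The total variation distance is then bounded by the probability of the complementary event, which is in turn bounded by $d(q)$.

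Both routes need the hypothesis to be interpreted in this stronger form; the first, via coupling on $\bpi$, is the one I would write.
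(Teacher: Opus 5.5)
\textbf{Gap in the main route.} Your main route is circular, and the strengthened hypothesis it relies on is vacuous. Once you condition on $\bpi=\pi$, the inputs $\pi(G_1)$ and $\pi(G_2)$ are two fixed, distinct graphs. An algorithm that knows $\pi$ separates them with certainty using one query: $\oracleIS(\{\pi(u),\pi(v)\})$ for a pair $uv$ in the symmetric difference of $E_1$ and $E_2$. So the reading in Step~2 forces $d(1)\ge 1$.

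Moreover, for fixed $\pi$, the distance between the runs of $\mathcal A'_\pi$ on $\pi(G_1)$ and on $\pi(G_2)$ is exactly the probability $\delta_\pi$ that $\mathcal A'$ issues a query answered differently on the two graphs. Before such a query the coupled runs coincide; after it, the $\pi(G_2)$-run contains an answer that is impossible under $\pi(G_1)$. The double-input distance is exactly the average of $\delta_{\bpi}$ over $\bpi$. So applying your strengthened hypothesis to the family $(\mathcal A'_\pi)_\pi$ is the conclusion itself.

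\textbf{Gap in the fallback.} Your fallback is the right mechanism, and it is what the paper's one-sentence justification says. As long as no query separates $\bpi(G_1)$ from $\bpi(G_2)$, the extra copy of $\bpi(G_1)$ carries no information. Hence the double-input distance equals the probability $\delta$ that the run, coupled through the same $\bpi$, issues a separating query.

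The gap is your last clause, ``which is in turn bounded by $d(q)$''. The hypothesis bounds a total variation distance between single-input run distributions under a random relabeling, and this can be strictly smaller than $\delta$: after the runs diverge, the relabeling may make them indistinguishable in distribution. Invoking $E_1\subseteq E_2$ does not by itself convert one quantity into the other, and you give no argument that it does.

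In fact, from the TV hypothesis alone the statement (which does not assume containment) is false. Take $V=\{1,2,3\}$, $E_1=\{\{1,2\}\}$, $E_2=\{\{1,3\}\}$. Then $\bpi(G_1)$ and $\bpi(G_2)$ are identically distributed, so $d\equiv 0$ satisfies the hypothesis. Yet the one-query double algorithm asking $\oracleIS(\{1,2\})$ receives unequal answers with probability $2/3$.

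\textbf{Missing idea.} Take $d(q)$ as a bound on $\delta$ itself, i.e.\ on the probability of issuing a separating query. This is what the paper's lower-bound arguments actually establish: the color-simulation argument in Section~\ref{sec:lbnd} bounds the double-input distance directly via the probability of revealing an $\mathrm{H}$-colored vertex. Under that hypothesis your fallback is a complete proof.
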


\begin{lemma}{sampler-to-double-reduction}
    Let $G_1$ and $G_2$ be two graphs over the same vertex set $V$, and let $\bpi : V \to V$ be a random relabeling of the vertices. Assume that algorithm $\mathcal A$ that is allowed to make queries, in a query model that allows determining the existence of a given edge\footnote{Such as every model that allows IS queries.}, is a $(\lambda,\eps)$-uniform sampler that makes $q$ queries in expectation when given an input of the form $\bpi(G_1)$. In this setting, there exists a double-input algorithm $\mathcal A'$ that makes $q+1$ queries in expectation when given an input of the form $\bpi(G_1)$, for which \[
        \dtv(\mathcal A(\bpi(G_1)), \mathcal A(\bpi(G_2)))
        = \dtv(\mathcal A'(\bpi(G_1), \bpi(G_1)), \mathcal A'(\bpi(G_1), \bpi(G_2)))
        \ge \lambda e^{-\eps} \frac{\abs{E_2 \setminus E_1}}{\abs{E_2}}.\]
\end{lemma}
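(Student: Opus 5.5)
The construction of $\mathcal A'$ is direct. On a double input $(H_1, H_2)$, the algorithm $\mathcal A'$ runs $\mathcal A$ with every oracle answer taken from the second graph $H_2$. When $\mathcal A$ halts with an output edge $e = uv$, one extra query is made: an edge-existence test for $\{u,v\}$, i.e.\ an IS query on $\{u,v\}$, answered by both graphs under the double-input convention. The answer from $H_1$ is what matters. If $\mathcal A$ failed, no extra query is needed (or a dummy one is made).

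On input $(\bpi(G_1), \bpi(G_1))$ the run of $\mathcal A'$ is the run of $\mathcal A$ on $\bpi(G_1)$ plus at most one query, so the expected cost is $q+1$. Because $\mathcal A'$ only adds a query whose answer is a deterministic function of the double input and the transcript, we get the first equality (or at least $\ge$, which is what we need) once the extra query's answer pair is counted. The $\mathcal A$-part of the transcript is a sufficient statistic for distinguishing the two double inputs: the first coordinate is identical in both, and a copy of it can be simulated given $\bpi$. I would phrase the statement through the data processing inequality, and read the claimed equality as holding for the natural coupling, justifying it directly in the paper's double-input formalism where the transcript of $\mathcal A'$ determines and is determined by the transcript of $\mathcal A$ together with $\bpi$-related answers.

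The main step is the lower bound on the distance. Consider the event $B$ that the output edge $e$ of $\mathcal A$ is in $E_2^{\bpi}\setminus E_1^{\bpi}$. The extra query of $\mathcal A'$ reveals this: in $H_1$ the queried pair is a non-edge exactly when $e\notin E_1^{\bpi}$.

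Under input $(\bpi(G_1),\bpi(G_1))$, the output of $\mathcal A$ is always an edge of $\bpi(G_1)$ or a failure, so the extra query never reports a non-edge in the first coordinate. Hence $\Pr[B]=0$ there.

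Under $(\bpi(G_1),\bpi(G_2))$, $\mathcal A$ runs on $\bpi(G_2)$ and succeeds with probability at least $\lambda$. Conditioned on success, each edge of $E_2^{\bpi}$ appears with probability at least $e^{-\eps}/\abs{E_2}$, by $\eps$-uniformity. Summing over the $\abs{E_2\setminus E_1}$ edges of $E_2^{\bpi}\setminus E_1^{\bpi}$ for every fixed $\pi$, and averaging over $\bpi$, gives
\[
\Pr[B]\ge \lambda e^{-\eps}\,\frac{\abs{E_2\setminus E_1}}{\abs{E_2}}.
\]
The total variation distance is at least the difference in probability of this transcript-measurable event, which yields the bound.

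One subtlety is that the relabeling $\bpi$ is shared by both coordinates. This is exactly why $E_2^{\bpi}\setminus E_1^{\bpi}=\bpi(E_2\setminus E_1)$ has the stated size. It is also why the first coordinate gives the algorithm no extra information beyond $\bpi$, which is needed for the equality with $\dtv(\mathcal A(\bpi(G_1)),\mathcal A(\bpi(G_2)))$.

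I expect the main obstacle to be making that equality rigorous rather than just an inequality. The plan is to show that both distances are realized by the transcript of $\mathcal A$ jointly with the relabeling-determined answers. In the application (together with Observation \refobs{single-input-to-double-input-analysis}), only the inequality chain $\dtv(\mathcal A'\ldots)\ge$ bound and $\dtv(\mathcal A'\ldots)\le d(q+1)$ is used, so I would fall back to proving that version if the equality turns out to be delicate.
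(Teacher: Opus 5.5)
Your construction of $\mathcal A'$, and your lower bound via the event that the two coordinates disagree on the output pair, are exactly the paper's proof, and that part is correct. That event has probability $0$ under $(\bpi(G_1),\bpi(G_1))$ and at least $\lambda e^{-\eps}\abs{E_2\setminus E_1}/\abs{E_2}$ under $(\bpi(G_1),\bpi(G_2))$.

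The gap is the first equality, and your ``sufficient statistic'' argument for it is wrong. The first-coordinate answers are functions of $\bpi$, which is correlated with the second-coordinate transcript, so a distinguisher cannot simulate them from the $\mathcal A$-part alone. Data processing (discard the first-coordinate answers and the last query) only gives $\dtv(\mathcal A'(\bpi(G_1),\bpi(G_1)),\mathcal A'(\bpi(G_1),\bpi(G_2)))\ge\dtv(\mathcal A(\bpi(G_1)),\mathcal A(\bpi(G_2)))$, which is the direction you do not need.

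In fact, for distributions of runs as the paper defines them, the equality, and even the conclusion $\dtv(\mathcal A(\bpi(G_1)),\mathcal A(\bpi(G_2)))\ge\lambda e^{-\eps}\abs{E_2\setminus E_1}/\abs{E_2}$, is false. Take $V=\{1,2,3\}$, $E_1=\{\{1,2\}\}$, $E_2=\{\{1,2\},\{2,3\}\}$. Let $\mathcal A$ repeat, up to $T\ge 3$ times, ``query a uniformly random pair and output it if it is an edge''. This is a $(1-(2/3)^T,0)$-uniform sampler, so the claimed bound is $\frac12(1-(2/3)^T)>\frac13$. Here $\bpi(G_1)$ is a uniformly random single edge, and $\bpi(G_2)$ is the complement of a uniformly random pair. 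Every run is at least as likely under $\bpi(G_1)$ as under $\bpi(G_2)$, except the three runs whose first query succeeds ($2/9$ versus $1/9$ each). Hence $\dtv(\mathcal A(\bpi(G_1)),\mathcal A(\bpi(G_2)))=\frac13$, whereas $\mathcal A'$ sees a disagreeing answer pair with probability $\frac12(1-3^{-T})$.

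The paper's one-line justification is that the last query of $\mathcal A'$ differs only if the run of $\mathcal A$ has already forked. That proves equality only for the coupled quantity $\Pr_{\bpi,\text{coins}}[\text{the runs on the two inputs differ}]$, which is your ``natural coupling'' reading. It is also the quantity the paper's indistinguishability arguments actually bound, e.g.\ Observation~\refobs{color-simulation}.

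So drop the sufficient-statistic claim, and keep either the coupling reading or your fallback. The fallback is to prove only the lower bound on the distance for $\mathcal A'$, and bound that distance from above by $d(q+1)$ via Observation~\refobs{single-input-to-double-input-analysis}. This fallback is also the sound way to feed the lemma into Lemma~\reflemma{sampling-lower-bound-mechanism}. That proof instead uses the step ``the distance for $\mathcal A'$ is at most the distance for $\mathcal A$'', which the example above refutes.
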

\begin{proof}
    We define $\mathcal A'(H_1,H_2)$ as the algorithm that runs $\mathcal A$ as-is on the graph $H_2$, and then makes an additional IS query to determine whether the result edge belongs to $H_1$. Since $\mathcal A$ is a $(\lambda,\eps)$-uniform sampler, the probability to sample an edge in $E_2 \setminus E_1$ from $H_1$ is zero, and the probability to sample such an edge from $H_2$ is at least $\lambda \cdot e^{-\eps} \cdot \abs{E_2 \setminus E_1} \cdot \frac{1}{\abs{E_2}}$. That is, the additional query enforces a $\lambda e^{-\eps} \frac{\abs{E_2 \setminus E_1}}{\abs{E_2}}$-difference in the behavior of the algorithm when given inputs of the form $(\bpi(G_1), \bpi(G_1))$ and inputs of the form $(\bpi(G_1), \bpi(G_2))$.

    Observe that the total-variation distance of the algorithm behaviors is the same, since the last query of $\mathcal A'$ is different only if the execution of $\mathcal A$ has already forked into paths resulting in different outputs.
\end{proof}

\begin{lemma}[Lower bound mechanism]{sampling-lower-bound-mechanism}
    Let $n_0 \ge 1$, $m_0 \ge 1$, $0 < \rho < 1$, $0 < \eps_0 < 1$ and some $R(n,m) > 0$. Assume that there exists some globally fixed constant $C$ such that for every $n \ge n_0$ and every $m_0 \le m \le \rho n^2$ there exist two graphs $G_1$ and $G_2$ on the same vertex set $V$ for which:
    \begin{itemize}
        \item Considering the edge sets, $E_1 \subseteq E_2$.
        \item $\abs{E_1} \le m$.
        \item $\abs{E_2} - \abs{E_1} \ge \eps_0 m$.
        \item $\abs{E_2} - \abs{E_1} \le m$.
        \item Let $\bpi : V \to V$ be a uniformly chosen relabeling of the vertices. For every algorithm that makes $q$ queries in expectation when given an input of the form $\bpi(G_1)$, $\dtv(\mathcal A(\bpi(G_1)), \mathcal A(\bpi(G_2))) \le \frac{C}{R(n,m)} \cdot q$.
    \end{itemize}
    In this setting, every $(\lambda,\eps_0)$-uniform edge sampler must make $\frac{1}{2C}\lambda e^{-\eps_0} \eps_0 R(n,m)$ queries in expectation when given an input of the form $\bpi(G_1)$.
\end{lemma}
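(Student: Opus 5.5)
Fix $n \ge n_0$ and $m_0 \le m \le \rho n^2$, take the graphs $G_1 \subseteq G_2$ provided by the hypothesis, and let $\mathcal A$ be any $(\lambda,\eps_0)$-uniform edge sampler that makes $q$ queries in expectation on inputs of the form $\bpi(G_1)$. The plan is to bound one total-variation distance from both sides and compare. The lower bound comes from Lemma \reflemma{sampler-to-double-reduction}, the upper bound from the indistinguishability hypothesis (through Observation \refobs{single-input-to-double-input-analysis}), and comparing the two forces $q$ to be large.

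For the lower bound, apply Lemma \reflemma{sampler-to-double-reduction} to $\mathcal A$. It yields a double-input algorithm $\mathcal A'$ that makes $q+1$ queries in expectation on inputs $(\bpi(G_1),\bpi(G_1))$, with
\[ \dtv(\mathcal A'(\bpi(G_1),\bpi(G_1)),\mathcal A'(\bpi(G_1),\bpi(G_2))) \ge \lambda e^{-\eps_0}\frac{\abs{E_2\setminus E_1}}{\abs{E_2}}. \]
Since $E_1 \subseteq E_2$, we have $\abs{E_2} = \abs{E_1} + \abs{E_2 \setminus E_1} \le m + m = 2m$. Together with $\abs{E_2\setminus E_1} \ge \eps_0 m$, the ratio is at least $\eps_0/2$, so the distance is at least $\frac12\lambda e^{-\eps_0}\eps_0$.

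For the upper bound, the hypothesis holds for every single-input algorithm, and Observation \refobs{single-input-to-double-input-analysis} transfers it to double-input algorithms with the same expected query count on $(\bpi(G_1),\bpi(G_1))$. This gives a distance of at most $\frac{C}{R(n,m)}(q+1)$. Comparing the two bounds gives $q+1 \ge \frac{1}{2C}\lambda e^{-\eps_0}\eps_0 R(n,m)$.

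The main obstacle is removing the additive $+1$, since the statement claims the bound for $q$ itself. I would first try to avoid paying for the extra query. Lemma \reflemma{sampler-to-double-reduction} states the equality $\dtv(\mathcal A(\bpi(G_1)),\mathcal A(\bpi(G_2))) = \dtv(\mathcal A',\mathcal A')$, because the last query only differs on runs that have already diverged. So I would apply the hypothesis directly to the single-input $\mathcal A$, which makes exactly $q$ queries on $\bpi(G_1)$. This gives $\frac{C}{R}q \ge \dtv(\mathcal A(\bpi(G_1)),\mathcal A(\bpi(G_2))) \ge \frac12\lambda e^{-\eps_0}\eps_0$, hence $q \ge \frac{1}{2C}\lambda e^{-\eps_0}\eps_0 R(n,m)$ exactly as stated. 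If that route is judged not rigorous, the fallback is to absorb the $+1$: either note that $q \ge 1$, since a sampler that outputs an edge with positive probability must query, so $q+1 \le 2q$, or accept the bound up to a constant factor in $C$. The one delicate point is checking that the expected-query accounting is over inputs from $\bpi(G_1)$ in both the hypothesis and the reduction, which it is by construction.
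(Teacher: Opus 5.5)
Your final route is exactly the paper's proof. The paper likewise lower-bounds the distance by $\frac12\lambda e^{-\eps_0}\eps_0$ using Lemma \reflemma{sampler-to-double-reduction} and $\abs{E_2}\le 2m$, then bounds the modified algorithm's distance by that of $\mathcal A$ itself, so the hypothesis applies with $q$ rather than $q+1$ queries and the detour through Observation \refobs{single-input-to-double-input-analysis} is unnecessary. (Your fallback claim $q\ge 1$ is not safe in general anyway, since a valid $(\lambda,\eps_0)$-sampler may reject immediately with no query with some probability, making $q<1$.)
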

\begin{proof}
    By Lemma \reflemma{sampler-to-double-reduction}, if $\mathcal A$ is a $(\lambda,\eps_0)$-uniform sampler, then for an appropriate double-input algorithm $\mathcal A'$ that makes one more query:
    \[  \dtv(\mathcal A'(\bpi(G_1), \bpi(G_1)), \mathcal A'(\bpi(G_1), \bpi(G_2)))
        \ge \lambda e^{-\eps_0} \eps_0 \frac{\abs{E_2 \setminus E_1}}{\abs{E_2}}
        \ge \lambda e^{-\eps_0} \frac{\eps_0 m}{2m}
        = \frac{1}{2}\lambda e^{-\eps_0} \eps_0.
    \]
    Since the total variation distance in the behavior of the modified algorithm is bounded by the total variation distance in the behavior of the sampling algorithm,
    \[ \frac{C}{R(n,m)} \cdot q \ge \frac{1}{2}\lambda e^{-\eps_0} \eps_0. \]
    
    This results in $q \ge \frac{1}{2C}\lambda e^{-\eps_0} \eps_0 \cdot R(n,m)$.
\end{proof}

For hybrid algorithms:
\begin{lemma}[Lower bound for hybrid algorithms]{lbnd-sample-edge-hyb}
    For a sufficiently small, fixed $\eps$, every $\eps$-uniform edge-sampling algorithm that uses IS queries and local queries must make at least $\Omega(\min\{\sqrt{m}, \sqrt{n/\sqrt{m}}\})$ queries in expectation.
\end{lemma}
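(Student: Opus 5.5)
We will apply Lemma \reflemma{sampling-lower-bound-mechanism} with $R(n,m)=\min\{\sqrt{m},\sqrt{n/\sqrt{m}}\}$. The task is to exhibit, for every sufficiently large $n$ and every $m$ in the admissible range, a pair of graphs $G_1\subseteq G_2$ on a common vertex set. They must satisfy $\abs{E_1}\le m$ and $\eps_0 m\le\abs{E_2\setminus E_1}\le m$. We also need that any hybrid (IS + degree + neighbor) algorithm making $q$ expected queries on $\bpi(G_1)$ distinguishes $\bpi(G_1)$ from $\bpi(G_2)$ with total variation distance $O(q/R(n,m))$.

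Once these pairs exist, the lemma gives the bound $\Omega(\lambda e^{-\eps_0}\eps_0 R)$. For a fixed small $\eps$ and $\lambda=2/3$ this is $\Omega(R)$. The pairs can be taken from the counting lower bound of \cite{ahl26}, since that construction already has the required shape: two graphs differing by $\Theta(m)$ edges that are hard to tell apart. We treat the two regimes separately.

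\textbf{Regime $R=\sqrt{m}$} (i.e.\ $m\lesssim n^{2/3}$). Let $G_1$ be a fixed sparse graph with $\Theta(m)$ edges, say a perfect matching on $2m$ vertices, with the rest isolated. Let $G_2$ add to it a hidden clique (or a dense component) on $\Theta(\sqrt{m})$ fresh vertices, contributing $\Theta(m)$ edges. Under a random relabeling, every IS, degree or neighbor query detects the extra component only if it involves one of its vertices. For IS queries, it must in addition contain two of them. We bound the per-query distinguishing probability conditioned on no prior detection, then sum.

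The main obstacle is that IS queries on large sets can cover many hidden vertices at once. The fix is to note that an IS query on a set $S$ that distinguishes must already be non-independent in $G_1$ with overwhelming probability whenever $\abs{S}$ is large. A large random-looking set contains matching edges, so the answer is the same in both graphs. Only sets of size $O(\sqrt{n/m}\cdot\ldots)$ carry information, and for those the chance of hitting two hidden vertices is $O(1/\sqrt{m})$. Making this precise requires adapting the argument of \cite{ahl26}, where the $\sqrt{m}$ term plays exactly this role.

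\textbf{Regime $R=\sqrt{n/\sqrt{m}}$}. Here $G_1$ is a dense enough base graph with $\Theta(m)$ edges. $G_2$ additionally has $\Theta(\sqrt{n\sqrt{m}})$-degree structure (a complete bipartite piece between $\Theta(\sqrt{m}/\sqrt{n/\sqrt m})$ high vertices and many low vertices) adding $\Theta(m)$ edges, following the medium/high threshold $k_3$. We argue as in \cite{ahl26}. Local queries find the added part only by hitting one of its $O(n/R^2)$-fraction vertices. IS queries are masked by the base edges. Each query therefore reveals it with probability $O(1/R)$. A union bound (Wald's identity for an expected query count $q$) then gives $\dtv\le Cq/R$. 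Observation \refobs{single-input-to-double-input-analysis} and Lemma \reflemma{sampler-to-double-reduction} are already packaged in Lemma \reflemma{sampling-lower-bound-mechanism}, so this completes the proof. The main expected difficulty is verifying the indistinguishability bound under IS queries; if \cite{ahl26} states it as a theorem, we invoke it directly.
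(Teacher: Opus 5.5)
Your top-level plan matches the paper's proof exactly: plug a pair $G_1\subseteq G_2$ into Lemma \reflemma{sampling-lower-bound-mechanism} with $R=\min\{\sqrt m,\sqrt{n/\sqrt m}\}$. The paper does nothing beyond that. It takes the pair $G_{n,m/n,0}\subseteq G_{n,m/n,\eps}$ from \cite{ahl26}, which already satisfies $\abs{E_1}\le m$, $\eps m\le\abs{E_2\setminus E_1}\le m$ and $\dtv\le Cq/R$ against hybrid algorithms, and builds nothing per regime. Had you stopped at ``invoke \cite{ahl26} directly,'' you would be done.

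The gap is the explicit pair you give for the regime $R=\sqrt m$: it violates the indistinguishability hypothesis. Take $G_1$ a perfect matching on $2m$ vertices and $G_2$ that matching plus a clique on $\Theta(\sqrt m)$ fresh vertices. A uniformly random set $S$ of size $n/\sqrt m$ contains in expectation one matching edge and $\Theta(1)$ clique vertices. So with constant probability $S$ is independent in $\bpi(G_1)$ yet contains two clique vertices in $\bpi(G_2)$. A single IS query therefore has constant advantage, and $O(1)$ queries give $\dtv=\Omega(1)$, not $O(q/\sqrt m)$.

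Your masking heuristic fails at both ends. Sets stay independent in $G_1$ with constant probability up to size $\Theta(n/\sqrt m)$, not $O(\sqrt{n/m})$. At that size, two hidden vertices appear with constant probability, not $O(1/\sqrt m)$. The root cause is that random subsets of density $p$ see the added clique edges at the same rate, $\Theta(p^2 m)$, as the base edges.

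The roles have to be reversed:
\begin{itemize}
\item The dense part must sit in the base graph: a clique on $\Theta(\sqrt m)$ vertices. Then every IS set of size $\gg n/\sqrt m$ is non-independent in both graphs.
\item The $\Theta(m)$ added edges must form a star or unbalanced biclique on fresh vertices. Then an admissible set hits one of the few high-degree endpoints only with probability $O(1/R)$.
\item Since $E_1\subseteq E_2$ changes the degree of every touched vertex, degree queries force the touched set to have size $O(n/R)$.
\end{itemize}
For $m\le n^{2/3}$, a single star with $\Theta(m)\le n/\sqrt m$ leaves works. For $m\ge n^{2/3}$ one needs $\Theta(\sqrt{n\sqrt m})=\Theta(n/R)$ low vertices and $\Theta(m^{3/4}/\sqrt n)$ high ones.

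Your regime-2 sketch has this shape, with two corrections. The ``dense enough base graph'' must be exactly such a clique, since a generic $m$-edge graph masks nothing. The touched fraction is $\Theta(1/R)$, not ``$O(n/R^2)$''.

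Even with the right graphs, adaptivity needs its own argument, because the algorithm can learn clique vertices and exclude them from later sets. This calls for a simulation argument like the color-query one in Lemma \reflemma{IS-lower-bound-construction-expval-color-queries}, which \cite{ahl26} carries out for hybrid queries. The clean fix is to drop both hand-made constructions and cite that pair, as the paper does.
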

\begin{proof}
    For some globally fixed constants $n_0 \ge 1$, $m_0 \ge 1$ and $0 < \eps_0 < 1$, let $n \ge n_0$, $m \ge m_0$ and $0 < \eps \le \eps_0$. Also, let $R(n,m) = \min\{\sqrt{m}, \sqrt{n/\eps\sqrt{m}}\}$.

    By \cite{ahl26}, for a globally fixed constant $C$ (independent of $n$, $m$ and $\eps$), there exist two graphs named ``$G_{n,m/n,0}$'' (denoted here by $G_1$) and ``$G_{n,m/n,\eps}$'' (denoted here by $G_2$) so that:
    \begin{itemize}
        \item Considering the edge sets, $E_1 \subseteq E_2$.
        \item $\abs{E_1} \le m$.
        \item $\abs{E_2} - \abs{E_1} \ge \eps m$.
        \item $\abs{E_2} - \abs{E_1} \le m$ (stated in \cite{ahl26} as $\Theta(\eps m)$).
        \item For every algorithm that makes $q$ queries in expectation, $\dtv(\mathcal A(\bpi(G_1)), \mathcal A(\bpi(G_2))) \!\le\! \frac{C}{R(n,m)} \!\cdot\! q$.
    \end{itemize}
    This matches the constraints of Lemma \reflemma{sampling-lower-bound-mechanism} for $R(n,m) = \min\{\sqrt{m}, \sqrt{n/\sqrt{m}}\}$.
\end{proof}

For the IS algorithm, the lower bound of \cite{clw20} for degree estimation is $\tilde{\Omega}(\min\{\sqrt{m}, \frac{n}{\sqrt{m}}\})$, and the edge containment constraint ($E_1 \subseteq E_2$) is presented in the text but not explicitly stated. Below we attach an alternative proof, based on the construction ideas of \cite{ahl26}, for an $\Omega(\min\{\sqrt{m}, \frac{n}{\sqrt{m}}\})$ lower bound for a fixed $\eps$.

\begin{lemma}{lower-bound-IS-construction}
    For every $n \ge 16$ and $9 \le m \le n^2 / 16$, there exist two graphs $G_{n,m}$ and $H_{n,m}$ over $\{1,\ldots,n\}$ for which:
    \begin{itemize}
        \item $G_{n,m}$ is a subgraph of $H_{n,m}$.
        \item $G_{n,m}$ contains at most $m$ edges.
        \item $H_{n,m}$ contains at least $\frac{1}{2} m$ additional edges.
        \item $H_{n,m}$ contains at most $m$ additional edges.
        \item There exists a globally fixed constant $C$ so that, for every algorithm that only uses IS queries and makes $q$ queries in expectation, $\dtv(\mathcal A(\bpi(G_{n,m})), \mathcal A(\bpi(H_{n,m}))) \le \frac{C}{R(n,m)} \cdot q$, for $R(n,m) = \min\{\sqrt{m}, n/ \sqrt{m}\}$.
    \end{itemize}
\end{lemma}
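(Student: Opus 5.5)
We split into the two regimes of $R(n,m)=\min\{\sqrt{m}, n/\sqrt{m}\}$ and give a pair of graphs for each. In both, $H_{n,m}$ is $G_{n,m}$ plus a block of extra edges that an IS algorithm can only see by putting both endpoints of some extra edge into one query set that the $G$-part does not already make non-independent. The distance bound will come from a hybrid argument: under a uniform relabeling $\bpi$, each query is a fixed set in the relabeled graph, and we bound the probability $\delta$ that a single query gets different answers on $\bpi(G_{n,m})$ and $\bpi(H_{n,m})$. A union bound over the (expected) $q$ queries then gives $\dtv \le \delta q$, and we aim for $\delta = O(1/R(n,m))$. Since the algorithm is adaptive, we run the union bound on the coupled process where both inputs use the same $\bpi$. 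Until the first disagreement the transcripts coincide, so each query is a function of past answers, which are identical. It therefore suffices to bound, for an arbitrary fixed set $S$, the probability over $\bpi$ that $S$ distinguishes. Conditioning on the transcript so far only restricts $\bpi$ mildly, and the cleanest route is to bound the \emph{unconditional} per-step disagreement probability, as in \cite{ahl26}.

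\textbf{Dense regime} ($m \ge n$, so $R = n/\sqrt{m}$). Let $G_{n,m}$ be a clique on $k=\lfloor\sqrt{m}\rfloor$ vertices, with $\binom{k}{2}\le m$ edges. Let $H_{n,m}$ add a second disjoint clique on $k' \approx \sqrt{m}$ vertices, with $k'$ tuned so that it has between $m/2$ and $m$ edges. This needs $2\sqrt{m}\lesssim n$, which $m\le n^2/16$ guarantees. A query $S$ distinguishes the two inputs only if $S$ contains no two vertices of the first clique and at least two vertices of the second. If $|S|=s$, the second condition has probability at most about $(s k'/n)^2$. The first condition forces the bound $s k/n \lesssim 1$ in probability: when $sk/n$ is large, $S$ hits the first clique at least twice except with exponentially small probability. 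Optimizing over $s$ gives $\delta = O(k'/k \cdot \ldots)$, but this is only a constant, so two cliques of equal size are too symmetric. We must therefore use a structure in which the extra edges are ``diluted'': place the extra edges as a random-like bipartite or matching structure spread over $\Theta(n)$ vertices. Concretely, $H$ adds a clique-free union of about $m/2$ edges forming a complete bipartite graph $K_{a,b}$ with $a=\sqrt{m}$ and $b=\sqrt{m}$, padded so that hitting one of its edges requires $S$ to contain both sides. We expect the right choice, following \cite{clw20,ahl26}, to be the following. $G$ is a clique on $\sqrt{m}$ vertices that ``blocks'' large sets, and $H$ additionally has a perfect-matching-like set of $m/2$ edges among $\Theta(\min(n,m))$ further vertices. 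A set avoiding two clique vertices has size $s\lesssim n/\sqrt{m}$. It then contains a matching edge with probability at most $s^2\cdot (m/2)/n^2 \lesssim 1$, which is too weak. The fix is to use instead $\approx n$ matching-like edges of total count $m/2$, i.e.\ degree $m/n$, and to estimate via $s^2 m/n^2 \cdot (\sqrt{m}/n)$. Pinning down these parameters is the main obstacle, and here we would directly adapt the $G_{n,d,\eps}$ construction of \cite{ahl26} with IS queries only.

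\textbf{Sparse regime} ($m<n$, so $R=\sqrt{m}$). Let $G_{n,m}$ be a star with $m$ leaves, and let $H_{n,m}$ add about $m/2$ disjoint edges among the isolated vertices. A query distinguishes the two only if it avoids the pattern ``center plus a leaf'' and captures a matching edge. Separately, a query containing the center and no leaf would find that $G$ is independent, so a matching edge is the only thing it can detect. With $s=|S|$, the chance of capturing a matching edge is at most $s^2 m/n^2$. If $S$ contains the center (probability $s/n$), it must avoid all $m$ leaves, which forces $s\lesssim n/m$. Balancing these cases gives $\delta = O(1/\sqrt{m})$ after taking the worst $s$. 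The computation that must be checked is the maximization of $\min\{s^2m/n^2,\ e^{-sm/n}\}$-type expressions over $s$, and it is where the constant $C$ comes from. Finally, the four edge-count conditions follow by choosing the sizes with floors, using $n\ge16$ and $m\ge 9$.
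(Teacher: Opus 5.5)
Your proposal has genuine gaps. The lemma is essentially a construction, and the proposal does not end with one that works.

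\textbf{The sparse construction fails.} A star does not block anything: a set avoiding the center is independent in $G_{n,m}$ however large it is, and your case analysis only treats sets that contain the center. Take a single fixed query $S$ with $|S|=n/2$. It is independent in $\bpi(G_{n,m})$ with probability about $1/2$ (the center falls outside $S$). Conditioned on this, it contains one of the $m/2$ disjoint extra edges with probability $1-2^{-\Omega(m)}$. So one query already gives $\dtv=\Omega(1)$, contradicting the bound $Cq/\sqrt{m}$.

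\textbf{The dense construction is never found.} You keep the right blocker there: a $\lfloor\sqrt m\rfloor$-clique, which only lets sets of size $O(n/\sqrt m)$ through. But none of the extra structures you list works, and the factor $\sqrt m/n$ in your estimate $s^2m/n^2\cdot\sqrt m/n$ has no source. For $s\approx n/\sqrt m$, any $\Theta(m)$ randomly relabeled extra edges put $\Theta(1)$ edges into $S$ in expectation. What matters is the size of a vertex cover of the extra edges, since a query detects them only by hitting that cover. A matching, a degree-$m/n$ graph, or $K_{\sqrt m,\sqrt m}$ (whose $\sqrt m$-vertex sides are each hit with constant probability) all have large covers, so detection happens with constant probability. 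You need a cover of size $O(1+m/n)$, that is, a few vertices of degree $\Theta(\min\{n,m\})$. This is the missing idea.

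\textbf{The paper's construction.} It takes $k=\lfloor\sqrt m\rfloor$ and $m'=m-\binom{k}{2}$. To the $k$-clique it adds a biclique $K_{h,\ell}$ with $h=\lceil 2m'/n\rceil\approx\max\{1,2m/n\}$ and $\ell=\lceil m'/h\rceil$. A query of size about $n/k$ contains one of the $h$ high-side vertices with probability about $h/k=O(\sqrt m/n+1/\sqrt m)=O(1/R)$. This single construction covers both regimes. For $m<n$ it has $h=1$, so the extra edges form a star while the clique stays the blocker; this is the reverse of your sparse choice.

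\textbf{Adaptivity is not handled.} Your reduction to an arbitrary fixed set $S$ with an unconditional per-query bound is not valid for adaptive algorithms, because $S_t$ is chosen from earlier answers, which are correlated with $\bpi$. An algorithm that has located clique vertices by binary search, or certified many vertices as non-clique, can query large sets it knows are independent in $\bpi(G_{n,m})$. Such a set hits a high-side vertex far more often than a uniformly placed set of the same size. The claim that ``conditioning restricts $\bpi$ only mildly'' is exactly what has to be proved.

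\textbf{How the paper amortizes.} It answers each IS query on $\bpi(G_{n,m})$ by revealing the colors of the vertices of $S$ one at a time until two clique vertices appear. The answer on $\bpi(H_{n,m})$ can differ only if some revealed singleton of $G_{n,m}$ is a high-side biclique vertex. The biclique's position among the singletons is independent of the $G$-run, so this costs $h/(n-k)$ per revealed vertex. During the first $q\le k/4$ queries at most $k/2$ clique vertices are ever found, so each fresh reveal is a clique vertex with probability at least $k/(2n)$. That bounds the expected number of reveals by $4nq/k$, and Markov's inequality handles $\bQ>k/4$. Altogether $\dtv\le\frac{h}{n-k}\cdot\frac{8n}{k}\E[\bQ]\le 64\E[\bQ]/R$.
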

We prove each sub-statement of Lemma \reflemma{lower-bound-IS-construction} in Lemmas \reflemma{IS-lbnd-construction-well-defined}, \reflemma{IS-lbnd-construction-G-edges}, \reflemma{IS-lbnd-construction-H-G-edge-diff}, \reflemma{IS-lower-bound-construction-dtv}.

\begin{lemma}[Lower bound for IS algorithms]{lbnd-sample-edge-IS}
    For a sufficiently small, fixed $\eps$, every $\eps$-uniform edge-sampling algorithm that uses IS queries must make at least $\Omega(\min\{\sqrt{m}, n/\sqrt{m}\})$ queries.
\end{lemma}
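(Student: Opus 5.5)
The proof is a direct instantiation of Lemma \reflemma{sampling-lower-bound-mechanism} with the construction of Lemma \reflemma{lower-bound-IS-construction}, in the same way as the proof of Lemma \reflemma{lbnd-sample-edge-hyb} instantiates it with the graphs of \cite{ahl26}. Fix $n_0 = 16$, $m_0 = 9$, $\rho = 1/16$, and $\eps_0 = 1/2$. Set $R(n,m) = \min\{\sqrt{m}, n/\sqrt{m}\}$. For every $n \ge 16$ and $9 \le m \le n^2/16$, take $G_1 = G_{n,m}$ and $G_2 = H_{n,m}$ from Lemma \reflemma{lower-bound-IS-construction}.

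I then check the five hypotheses of the mechanism one by one:
\begin{itemize}
\item $E_1 \subseteq E_2$ holds because $G_{n,m}$ is a subgraph of $H_{n,m}$.
\item $\abs{E_1} \le m$ is stated in the construction lemma.
\item $\abs{E_2} - \abs{E_1} \ge \frac12 m = \eps_0 m$ is stated there as well.
\item $\abs{E_2} - \abs{E_1} \le m$ is stated there as well.
\item The total-variation bound $\dtv(\mathcal A(\bpi(G_1)), \mathcal A(\bpi(G_2))) \le \frac{C}{R(n,m)} q$ is the last item of the construction lemma, with a global constant $C$.
\end{itemize}
The only caveat is that the mechanism needs to output an edge membership check. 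Lemma \reflemma{sampler-to-double-reduction} uses one extra IS query on a pair $\{u,v\}$, which is available in the IS model, so the double-input reduction is legal. The mechanism then yields, for any $(\lambda,\eps)$-uniform sampler with $\eps \le \eps_0$, an expected query count of at least $\frac{1}{2C}\lambda e^{-\eps}\eps_0 R(n,m) = \Omega(R(n,m))$ on inputs $\bpi(G_1)$.

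Two small points remain. First, the statement speaks of an $\eps$-uniform sampler with some constant success probability $\lambda$ (for example $2/3$, as in Theorem \ref{th:ubnd-IS}). An $\eps$-uniform sampler with $\eps \le \eps_0$ is in particular $(\lambda,\eps_0)$-uniform, so the bound applies. Second, the input $\bpi(G_1)$ has at most $m$ edges rather than exactly $m$. This is harmless, since $\abs{E_1} \le m$ and $R$ evaluated at the true edge count is within constant factors of $R(n,m)$. The edge counts of $G_{n,m}$ are $\Theta(m)$ by the construction: $E_2$ has between $\frac12 m$ and $2m$ edges. If needed, one can instead phrase the lower bound for inputs $\bpi(G_2)$, using that the total-variation bound is symmetric up to the expected-query normalisation.

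The regimes outside $9 \le m \le n^2/16$ are trivial. For $m < 9$, we have $R = O(1)$, and $\Omega(1)$ queries are needed to output anything valid. For $m > n^2/16$, we have $n/\sqrt{m} = \Theta(1)$.

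The only real obstacle is the dependence on Lemma \reflemma{lower-bound-IS-construction}, specifically its total-variation item, which is deferred to Lemma \reflemma{IS-lower-bound-construction-dtv}. Given that lemma, the present statement is bookkeeping.
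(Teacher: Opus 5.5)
Your proposal is correct and is essentially the paper's proof, which is the one-line instruction to apply Lemma \reflemma{sampling-lower-bound-mechanism} to the construction of Lemma \reflemma{lower-bound-IS-construction}; your parameters ($n_0=16$, $m_0=9$, $\rho=1/16$, $\eps_0=1/2$) are the right ones. Two minor slips, both in side remarks: the fact that the hard input $G_{n,m}$ itself has $\Theta(m)$ edges follows from $\binom{\floor{\sqrt{m}}}{2}\ge m/5$ (proved inside Lemma \reflemma{IS-lbnd-construction-H-G-edge-diff-ubnd}), not from the edge count of $E_2$, which belongs to $H_{n,m}$; and the optional switch to inputs built from $H_{n,m}$ is not ``symmetric'' for free, because the total-variation bound is stated in terms of the expected query count on $G_{n,m}$ --- but your argument never needs it.
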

\begin{proof}
    Just apply Lemma \reflemma{sampling-lower-bound-mechanism} with the construction stated in Lemma \reflemma{lower-bound-IS-construction}.
\end{proof}

\subsection{The IS lower-bound construction}
Our construction is based on the clique+biclique idea of the lower bound presented in \cite{ahl26}. The following analysis is shorter since it refers to a hard-coded choice of $\eps$ and only considers IS queries.

For $n \ge 16$ and $9 \le m \le \frac{1}{16}n^2$, we define $k = \floor{\sqrt{m}}$, $m' = m - \binom{k}{2}$, $h = \ceil{2m'/n}$, $\ell = \ceil{m'/h}$ and two graphs:
\begin{itemize}
    \item $G_{n,m}$ is the union of a $k$-clique and $n-k$ singleton vertices.
    \item $H_{n,m}$ as the union of a $k$-clique, an $(\ell,h)$-biclique and $n - (k + h + \ell)$ singleton vertices.
\end{itemize}

Note that $\ell \approx \frac{1}{2}n$ is the  number of low-degree vertices and $h \approx m/n$ is the number high-degree vertices (note that $h\le\ell)$.

\begin{lemma}{IS-lbnd-construction-well-defined}
    The graphs $G_{n,m}$ and $H_{n,m}$ are well-defined.
\end{lemma}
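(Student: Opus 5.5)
The plan is to check directly that every parameter in the definition is a positive integer and that the prescribed components fit disjointly inside the vertex set $\{1,\ldots,n\}$. Concretely I would verify: (i) $k\ge 1$, so the clique is nonempty; (ii) $m'>0$, so $h\ge 1$ and $\ell\ge 1$ and the biclique is nondegenerate; (iii) $k+h+\ell\le n$, so the number $n-(k+h+\ell)$ of singletons is nonnegative. I would also record $h\le\ell$ (mentioned after the construction) and $h\ell\ge m'$, since the later edge-count lemmas will want them.

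For (i) and (ii), $m\ge 9$ gives $k=\floor{\sqrt m}\ge 3$. Since $\binom{k}{2}\le k^2/2\le m/2$, we get $m/2\le m'\le m$. In particular $m'\ge 9/2>0$, so $h=\ceil{2m'/n}\ge 1$ and $\ell=\ceil{m'/h}\ge 1$. Also $h\ell\ge h\cdot(m'/h)=m'$.

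For (iii) I would bound each part separately using $m\le n^2/16$:
\begin{itemize}
\item $k\le\sqrt m\le n/4$;
\item $h\le 2m'/n+1\le 2m/n+1\le n/8+1$;
\item since $h\ge 2m'/n$, we have $m'/h\le n/2$, hence $\ell\le\ceil{n/2}\le (n+1)/2$.
\end{itemize}
Summing gives $k+h+\ell\le 7n/8+3/2$, which is at most $n$ whenever $n\ge 12$. This holds because $n\ge 16$. The only point needing care is the ceiling in $\ell$, which costs the extra $1/2$.

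For $h\le\ell$ I would split into two cases. If $2m'/n\le 1$, then $h=1\le\ell$. Otherwise $h\le 2m'/n+1<4m'/n$, so $h^2<16m'^2/n^2\le m'$, using $m'\le m\le n^2/16$. This gives $h<m'/h\le\ell$. I expect this case split to be the only mildly delicate step; everything else is routine arithmetic.
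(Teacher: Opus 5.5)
Your proof is correct and follows the same route as the paper: you bound $k\le n/4$, $h\le n/8+1$ and $\ell\le n/2+O(1)$ using $m\le n^2/16$ and $h\ge 2m'/n$, then conclude $k+h+\ell\le n$ for $n\ge 16$. Your handling of the ceiling in $\ell$ is slightly tighter ($7n/8+3/2$ versus the paper's $7n/8+2$). Your extra checks are also correct and fill in points the paper leaves implicit: $m'\ge m/2>0$, so that $h\ge 1$ and $\ell=\lceil m'/h\rceil$ is actually defined, and $h\le\ell$.
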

\begin{proof}
    It suffices to show that the number of singletons is non-negative in the $H$-graph.
    \begin{eqnarray*}
        k + h + \ell
        &\le& \sqrt{m} + \frac{2m'}{n} + 1 + \frac{m'}{\ceil{2m'/n}} + 1 \\
        &\le& \sqrt{m} + \frac{2m}{n} + \frac{1}{2}n + 2
        \le \frac{1}{4}n + \frac{1}{8}n + \frac{1}{2}n + 2
        = \frac{7}{8}n + 2
        \le n
    \end{eqnarray*}
\end{proof}

\begin{lemma}{IS-lbnd-construction-G-edges}
    The number of edges in $G_{n,m}$ is at most $\frac{1}{2}m$.
\end{lemma}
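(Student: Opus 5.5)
The plan is a direct count: $G_{n,m}$ consists of a $k$-clique together with $n-k$ isolated vertices, so its edge set is exactly the edge set of the clique. Hence $\abs{E(G_{n,m})} = \binom{k}{2}$, and it remains to compare $\binom{k}{2}$ with $\frac{1}{2}m$.

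First I will use the elementary bound $\binom{k}{2} = \frac{k(k-1)}{2} \le \frac{k^2}{2}$. Second, since $k = \floor{\sqrt{m}}$, we have $k \le \sqrt{m}$, hence $k^2 \le m$. Combining the two gives
\[ \abs{E(G_{n,m})} = \binom{k}{2} \le \frac{k^2}{2} \le \frac{m}{2}. \]
This is the claimed bound; it is in fact slightly stronger, since $k(k-1) < k^2$ whenever $k \ge 1$.

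The argument does not use the hypotheses $n \ge 16$ and $9 \le m \le n^2/16$, except that they guarantee the construction is well defined (Lemma \reflemma{IS-lbnd-construction-well-defined}). The only point needing care is that $k$ is the floor of $\sqrt{m}$, not the ceiling, so that $k^2 \le m$ holds. There is no genuine obstacle here.

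As a by-product, $m' = m - \binom{k}{2} \ge \frac{1}{2}m$. This inequality is what I expect the later edge-difference lemma (Lemma \reflemma{IS-lbnd-construction-H-G-edge-diff}) to rely on.
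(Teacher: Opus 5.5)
Your proposal is correct and is essentially the paper's own proof: count the clique edges as $\binom{k}{2}$, then apply $\binom{k}{2} \le \frac{1}{2}k^2 \le \frac{1}{2}\floor{\sqrt{m}}^2 \le \frac{1}{2}m$. You are also right that the by-product $m' \ge \frac{1}{2}m$ is exactly what the paper uses in Lemma \reflemma{IS-lbnd-construction-H-G-edge-diff}.
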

\begin{proof}
    The number of edges in $G_{n,m}$ is
    \[  \binom{k}{2}
        \le \frac{1}{2}k^2
        \le \frac{1}{2}\floor{\sqrt{m}}^2
        \le \frac{1}{2}m
        \qedhere \]
\end{proof}

\begin{lemma}{IS-lbnd-construction-H-G-edge-diff}
    The number of edges in $H_{n,m}$ not belonging to $G_{n,m}$ is at least $\frac{1}{2}m$.
\end{lemma}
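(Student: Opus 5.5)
The edges of $H_{n,m}$ that are not in $G_{n,m}$ are exactly the edges of the $(\ell,h)$-biclique. The clique and the biclique live on disjoint vertex sets, so no edge of the biclique is an edge of $G_{n,m}$. The count to bound is therefore $\ell h$. The plan is to show $\ell h \ge m'$ and then $m' \ge \frac{1}{2}m$.

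\textbf{Step 1: $\ell h \ge m'$.} Since $\ell = \ceil{m'/h}$, we have $\ell \ge m'/h$, and hence $\ell h \ge m'$. We need $h \ge 1$ for this to make sense. This holds because $m' > 0$, which follows from Step 2 (or directly from $\binom{k}{2} < k^2/2 \le m/2$), so $h = \ceil{2m'/n} \ge 1$.

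\textbf{Step 2: $m' \ge \frac{1}{2}m$.} By definition $m' = m - \binom{k}{2}$ with $k = \floor{\sqrt{m}}$. Using Lemma \reflemma{IS-lbnd-construction-G-edges}, or directly $\binom{k}{2} \le \frac{1}{2}k^2 \le \frac{1}{2}m$, we get $m' \ge m - \frac{1}{2}m = \frac{1}{2}m$.

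Combining the two steps, the number of edges of $H_{n,m}$ outside $G_{n,m}$ is $\ell h \ge m' \ge \frac{1}{2}m$.

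There is no real obstacle. The only points needing care are that the two edge sets are disjoint, which follows because the vertex sets are disjoint (well-definedness is Lemma \reflemma{IS-lbnd-construction-well-defined}), and that $h \ge 1$, so the ceiling bound on $\ell$ is meaningful.
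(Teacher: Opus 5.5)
Your proof is correct and follows the paper's argument exactly: the new edges are the $h\ell$ biclique edges, $h\lceil m'/h\rceil \ge m'$, and $m' = m - \binom{k}{2} \ge \frac{1}{2}m$. Your extra check that $h \ge 1$, which makes the ceiling bound meaningful, is a small detail the paper leaves implicit.
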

\begin{proof}
    The number of additional edges in $H_{n,m}$ is:
    \[h \cdot \ell
        = h \cdot \ceil{m'/h}
        \ge m'
        = m - \binom{k}{2}
        \ge m - \frac{1}{2}m
        = \frac{1}{2}m
        \qedhere \]
\end{proof}

\begin{lemma}{IS-lbnd-construction-H-G-edge-diff-ubnd}
    The number of edges in $H_{n,m}$ not belonging to $G_{n,m}$ is at most $m$.
\end{lemma}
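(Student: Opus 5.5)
The number of additional edges in $H_{n,m}$ is exactly $h\ell$, the size of the $(\ell,h)$-biclique. The plan is to bound $h\ell$ from above by $m' + h$ and then absorb the $h$ into $m - m'$.

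First, since $\ell = \ceil{m'/h} \le m'/h + 1$, we get
\[ h\ell \le h\left(\frac{m'}{h} + 1\right) = m' + h, \]
so it suffices to show that $h \le m - m' = \binom{k}{2}$.

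For this, use $h = \ceil{2m'/n} \le 2m'/n + 1 \le 2m/n + 1$. The assumption $m \le n^2/16$ gives $\sqrt{m} \le n/4$, hence $2m/n \le \sqrt{m}/2$, and so $h \le \sqrt{m}/2 + 1$. For the other side, $k = \floor{\sqrt{m}} \ge 3$ because $m \ge 9$. Therefore
\[ \binom{k}{2} = \frac{k(k-1)}{2} \ge k \ge \sqrt{m} - 1 + \ldots \]
More precisely, $\binom{k}{2} \ge k$ holds once $k \ge 3$, and $k > \sqrt{m} - 1$.

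It then remains to check that $\sqrt{m}/2 + 1 \le \sqrt{m} - 1$, which is equivalent to $\sqrt{m} \ge 4$. This leaves the small range $9 \le m < 16$ (where $k = 3$ and $\binom{k}{2} = 3$) as the main obstacle. In that range I would use the bound directly: $h \le 2m/n + 1 < 32/n + 1 \le 3$ for $n \ge 16$, so $h \le 2 \le 3$.

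Combining the cases gives $h\ell \le m' + \binom{k}{2} = m$. The only delicate point is this small-$m$ case analysis; everything else follows from the ceiling bounds.
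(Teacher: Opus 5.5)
Your proof is correct, but after the shared first step it takes a different route from the paper.

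Both arguments start from $h\ell \le h(m'/h+1) = m'+h$. Your division by $h$ is legitimate, since $m' \ge m/2 > 0$ forces $h \ge 1$.

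The paper then absorbs $h$ multiplicatively:
\begin{itemize}
\item It proves the numerical estimate $\binom{\lfloor\sqrt m\rfloor}{2}\ge m/5$ for $m\ge 9$, checked at $m=15$ and algebraically for $m\ge 24$, and deduces $m'\le \frac45 m$.
\item Using only $n\ge 16$, it gets $h\ell\le(1+2/n)m'+1\le\frac{9}{10}m+1$.
\item This is at most $m$ once $m\ge 10$, and the case $m=9$ is patched by integrality.
\end{itemize}

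You instead prove the sharper additive fact $h\le\binom{k}{2}=m-m'$:
\begin{itemize}
\item For $m\ge 16$, the hypothesis $m\le n^2/16$ gives $h\le \sqrt m/2+1\le\sqrt m-1<k\le\binom{k}{2}$.
\item For $9\le m\le 15$, you check $k=3$ and $h\le 2$ directly.
\end{itemize}

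Your route avoids the somewhat fiddly floor-of-square-root estimate and the $m=9$ integrality patch. Its cost is invoking $m\le n^2/16$, which is a standing hypothesis of the construction anyway. The paper's route needs only $n\ge 16$ in this step and yields the slightly more quantitative bound $h\ell\le\frac{9}{10}m+1$.

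One presentational fix: replace the garbled display ``$\binom{k}{2}\ge k\ge\sqrt m-1+\ldots$'' with the clean chain above.
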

\begin{proof}
    For $m \ge 9$:
    \[ \binom{k}{2}
        = \binom{\floor{\sqrt{m}}}{2}
        \ge \frac{1}{5}m
    \]
    (Minimum is achieved where $m+1$ is a square. Can be validated explicitly for $m=15$ and algebraically for $m \ge 24$ through $\floor{x} \ge x-1$).

    For $m'$:
    \[  m'
        = m - \binom{k}{2}
        \le m - \frac{1}{5}m
        = \frac{4}{5}m
    \]
    
    Since $n \ge 16$, the number of additional edges in $H_{n,m}$ is:
    \[  h \cdot \ell
        \le h \cdot (m'/h + 1)
        = m' + h
        \le m' + (2m'/n + 1)
        = (1 + 2/n) m' + 1
        \le \frac{9}{8} m' + 1
        \le \frac{9}{8} \cdot \frac{4}{5}m  + 1
    \]
    For $m \ge 10$ we directly obtain that $h \cdot \ell \le m$. For $m=9$, we obtain that $h \cdot \ell \le 9.1$, but since it must be an integer, it is bounded by $m$.
\end{proof}

We color the vertices using four colors: $\mathrm{A}$ (singletons), $\mathrm{K}$ (clique), $\mathrm{L}$ (low-degree part, $\ell \approx n/2$ vertices) and $\mathrm{H}$ (high-degree part, $h \approx m/n$ vertices). Observe that the result of an independent-set query only depends on its vertex colors.

To analyze the lower bound, we consider a deterministic adaptive algorithm. This lower bound can be extended to probabilistic algorithms through Yao's principle \cite{yao77}.

We use a simulation argument similar to \cite{ahl26}. We translate every IS query into a sequence of ``color queries''. In each color query, we reveal the color of a single vertex (of $\bpi(G)$ or $\bpi(H)$). Note that all non-$\mathrm{K}$-colored vertices in $\bpi(H)$ have $\mathrm{A}$-color in $\bpi(G)$. Since additional edges in $H$ must be between an $\mathrm{L}$-color vertex and an $\mathrm{H}$-color vertex, the behavior can differ only when an $\mathrm{A}$-colored vertex in $\bpi(G)$ has an $\mathrm{H}$-color in $\bpi(H)$. This is stated in the following observation.

\begin{observation}{color-simulation}
    Let $\mathcal A'$ be a double-input algorithm. The distance between runs of $\mathcal A'$ when given an input $(\bpi(G), \bpi(G))$ and runs of $\mathcal A'$ when given an input $(\bpi(G), \bpi(H))$, when conditioned on the number of color queries $\bQ$ made when given the input $(\bpi(G), \bpi(G))$, is bounded by $\frac{h}{n - k} \bQ$.
\end{observation}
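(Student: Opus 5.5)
We couple the two executions through the shared random relabeling $\bpi$ and reveal colors lazily. The coupling is this: both inputs share the same $\mathrm{K}$-clique positions, and the set of $\mathrm{A}$-colored vertices in $\bpi(G)$ is exactly the set of non-$\mathrm{K}$ vertices. In $\bpi(H)$, that same set is split uniformly at random into $h$ $\mathrm{H}$-vertices, $\ell$ $\mathrm{L}$-vertices and the remaining singletons. The algorithm $\mathcal A'$ queries both graphs, and the answer to an IS query on either graph depends only on the colors of the queried vertices. So we simulate each IS query by revealing, one at a time, the colors of the vertices it involves. These are the color queries, and $\bQ$ counts them in the run on $(\bpi(G),\bpi(G))$.

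We claim the two runs can diverge only at a color query that reveals a vertex whose $G$-color is $\mathrm{A}$ and whose $H$-color is $\mathrm{H}$. Suppose no such vertex has been revealed yet. Then every revealed vertex of $\bpi(H)$ is $\mathrm{K}$, $\mathrm{A}$ or $\mathrm{L}$. Since every new edge of $H$ lies between $\mathrm{L}$ and $\mathrm{H}$, each queried set has the same independence status in $\bpi(G)$ and in $\bpi(H)$. Hence both answer streams coincide, and the two executions follow the same transcript.

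It then remains to bound the probability that some revealed vertex is $\mathrm{H}$-colored. Condition on any history in which no $\mathrm{H}$ vertex has appeared. The next revealed vertex is either already known, or it is a fresh non-$\mathrm{K}$ vertex. In the fresh case, by symmetry of the uniform relabeling, it is $\mathrm{H}$ with probability at most $h$ divided by the number of still-unrevealed non-$\mathrm{K}$ vertices. The main obstacle is that this denominator shrinks as vertices are revealed, so the per-step bound is not literally $h/(n-k)$. We would handle this as follows. Without the conditioning, each individual non-$\mathrm{K}$ vertex is $\mathrm{H}$ with probability exactly $h/(n-k)$ under the uniform relabeling. Because the two runs agree up to the first $\mathrm{H}$ revelation, the set of revealed vertices up to that point is a deterministic function of the $G$-run. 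A union bound over the (at most $\bQ$) vertices revealed in the $G$-run, where $\bQ$ is measured on $(\bpi(G),\bpi(G))$, then gives the bound $\frac{h}{n-k}\bQ$.

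One subtlety is that the revealed set is chosen adaptively. To handle it, we fix the $G$-transcript, which depends only on $\mathrm{K}$ positions and hence is independent of how the non-$\mathrm{K}$ vertices are split in $H$. Given the transcript, the revealed set is fixed, and each member is $\mathrm{H}$ with marginal probability $h/(n-k)$. This is exactly why we condition on the $(\bpi(G),\bpi(G))$ run. Finally, total variation is bounded by the probability of the bad event, which yields the observation.
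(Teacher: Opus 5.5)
Your proposal is correct and follows the paper's argument. The two runs can diverge only when a color query reveals a vertex that is $\mathrm{A}$ in $\bpi(G)$ but $\mathrm{H}$ in $\bpi(H)$, and a union bound over the color queries then gives $\frac{h}{n-k}\bQ$. Your step of conditioning on the $G$-run, which depends only on the clique positions, makes each revealed non-clique vertex $\mathrm{H}$ with probability exactly $h/(n-k)$. This supplies the justification the paper leaves implicit and correctly avoids the shrinking-denominator problem you identified in a sequential bound.

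One small point needs a sentence. In the paper's simulation, which stops after finding two $\mathrm{K}$-vertices, not every vertex of a queried set is revealed. Add the remark that such a set is non-independent in both $\bpi(G)$ and $\bpi(H)$, so its unrevealed vertices cannot cause a divergence.
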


For a uniformly chosen relabeling $\bpi : V \to V$, we consider a run of a deterministic adaptive algorithm on the input $\bpi(G_{n,m})$, and then examine this run to bound the probability of the algorithm to behave differently on the input $\bpi(H_{n,m})$.

\begin{lemma}{IS-lower-bound-construction-expval-color-queries}
    Assume that we simulate every IS query in the graph $G_{n,m}$ by a sequence of color queries. Given query set $S$, we sequentially query the vertices of $S$ until we obtain two $\mathrm{K}$-vertices, in which case we return ``not independent'', or until we exhaust all vertices, in which case we return ``independent''. In this setting, for $q \le \frac{1}{4}k$, the first $q$ query simulations reveal at most $(4n/k) q$ vertex colors in expectation.
\end{lemma}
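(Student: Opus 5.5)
We will analyze a single simulated IS query on $\bpi(G_{n,m})$ and bound the expected number of colors it reveals. The bound must hold conditioned on the history of the previous (at most $q-1$) simulated queries, and then linearity of expectation over the $q$ queries gives the claim.

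\textbf{The unrevealed vertices.} Fix the history after $i<q$ simulated queries. Every simulated query that answered ``independent'' revealed at most one $\mathrm{K}$-vertex. Every query that answered ``not independent'' revealed exactly two. Hence at most $2i \le 2q \le \frac{1}{2}k$ clique vertices have been revealed. Conditioned on the history, the relabeling $\bpi$ is uniform among labelings consistent with the revealed colors. So the clique vertices not yet revealed, numbering at least $k/2$, form a uniformly random subset of the unrevealed vertices, of which there are at most $n$.

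\textbf{Cost of one simulation.} The algorithm is deterministic, so the query set $S$ is fixed given the history. We can order the simulation so that already-revealed vertices are processed first at no cost; they only contribute known $\mathrm{K}$-vertices. Now scan the unrevealed vertices of $S$ in a fixed order. The simulation stops by the time two fresh $\mathrm{K}$-vertices have been found, and possibly earlier if a $\mathrm{K}$-vertex was already known in $S$. The positions of the unrevealed $\mathrm{K}$-vertices among the $N\le n$ unrevealed vertices form a uniform random subset of size $K' \ge k/2$. The number of colors revealed before the second $\mathrm{K}$-vertex appears in the scanned sequence is therefore stochastically dominated by the waiting time for the second success in sampling without replacement. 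That waiting time has expectation $2(N+1)/(K'+1) \le 4n/k$. If $S$ is exhausted first, the count is only smaller.

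\textbf{Conclusion and main obstacle.} Each simulation reveals at most $4n/k$ colors in conditional expectation, so the first $q$ simulations reveal at most $(4n/k)q$ colors in expectation. The delicate step is justifying the conditional uniformity. The revealed non-$\mathrm{K}$ colors (all $\mathrm{A}$ in $G$) and the revealed $\mathrm{K}$ vertices must leave the remaining clique positions uniformly distributed among the unrevealed vertices. This holds because $\bpi$ is uniform and the simulation's information is exactly the set of colors revealed so far, which are symmetric under permutations of the unrevealed vertices. I would state this symmetry explicitly before applying the negative-hypergeometric expectation.
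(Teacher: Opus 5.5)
Your proposal is correct and takes essentially the same route as the paper. Both arguments bound each query separately, using that at most $2(i-1)\le k/2$ clique vertices are revealed before the $i$-th simulation, so each simulation reveals at most $4n/k$ colors in expectation, and then sum over the $q$ queries. The only difference is the waiting-time estimate: the paper dominates the wait for each of the two fresh $\mathrm{K}$-vertices by a geometric variable with success probability at least $(k-2(i-1))/n$ (noting this probability only grows as $\mathrm{A}$-vertices are revealed), whereas you use the exact negative-hypergeometric mean $2(N+1)/(K'+1)\le 4n/k$ and state explicitly the exchangeability argument that the paper leaves implicit.
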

\begin{proof}
    For $1 \le i \le q$, assume that (at most) $2(i-1)$ $\mathrm{K}$-color vertices have already been found. The probability of a ``new'' vertex to be $\mathrm{K}$-color is at least $\frac{U_K}{U_V} \ge \frac{k - 2(i-1)}{n}$, where $U_K$ is the number of unseen $\mathrm{K}$-color vertices and $U_V$ is the number of unrevealed vertices, and this probability cannot decrease as we find $\mathrm{A}$-color vertices (since such reveal decrease the denominator while keeping the numerator unchanged). Therefore, the expected number of color queries until finding a new $\mathrm{K}$-color vertex is bounded by $\frac{n}{k - 2(i-1)}$. By the same analysis, the expected number of color queries until finding the second $\mathrm{K}$-color vertex is bounded by $\frac{n}{k - (2(i-1) + 1)}$. Note that this search may terminate even earlier if we exhaust all $S_i$ vertices.

    Combined, the expected number of color queries in the first $r \le q$ queries is bounded by:
    \begin{eqnarray*}
        \sum_{i=1}^q \left(\frac{n}{k - 2(i-1)} + \frac{n}{k - (2(i-1) + 1)}\right)
        \le \sum_{i=1}^q \left(2 \cdot \frac{n}{k - 2q}\right)
        \le 2 \sum_{i=1}^q \frac{n}{k - 2k/4}
        = \frac{4n}{k} \cdot q
    \end{eqnarray*}
\end{proof}

\begin{lemma}{IS-lower-bound-construction-dtv}
    Consider an algorithm that when given an input of the form $\bpi(G_{n,m})$, where $\bpi$ is a uniformly chosen relabeling, makes $\bQ$ IS queries. The total variation distance between the behavior of the algorithm on the input $\bpi(G_{n,m})$ and its behavior on the input $\bpi(H_{n,m})$ is bounded by $64\E[\bQ]/R$ for $R = \min\{\sqrt{m}, n/\sqrt{m}\}$.
\end{lemma}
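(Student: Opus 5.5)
We plan to combine the color-query simulation (Observation \refobs{color-simulation}) with the cost estimate of Lemma \reflemma{IS-lower-bound-construction-expval-color-queries}, splitting according to whether $\E[\bQ]$ is small relative to $k$. First, the trivial case: if $\E[\bQ] \ge \frac{1}{64}R$, the claimed bound $64\E[\bQ]/R \ge 1$ holds since total variation distance is at most $1$. So we assume $\E[\bQ] < R/64 \le \sqrt{m}/64$. Since $k = \floor{\sqrt{m}} \ge \sqrt{m}/2$ for $m \ge 9$, this gives $\E[\bQ] \le k/32$.

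Next we truncate the number of queries. Let $q_0 = \floor{k/4}$. By Markov's inequality, $\Pr[\bQ > q_0] \le \E[\bQ]/q_0 = O(\E[\bQ]/k) = O(\E[\bQ]/R)$, because $k = \Theta(\sqrt{m}) \ge R$ up to a constant factor. On the complementary event, the run on $\bpi(G_{n,m})$ is determined by its first $q_0$ query simulations. We simulate each IS query exactly as in Lemma \reflemma{IS-lower-bound-construction-expval-color-queries}, sequentially revealing colors until two $\mathrm{K}$-vertices are found. Consider the process in which the algorithm is stopped after $\min\{\bQ, q_0\}$ queries. Applying that lemma through a stopping-time or Wald-type argument, in which each query's expected color cost is at most $2n/(k - 2q_0) \le 4n/k$ conditioned on the history, gives an expected number of color queries of at most $(4n/k)\E[\min\{\bQ,q_0\}] \le (4n/k)\E[\bQ]$. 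The adaptivity needs care here: we charge each of the at most $q_0$ queries separately, conditioned on the past, rather than fixing $q$ in advance.

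We then apply Observation \refobs{color-simulation}. Conditioned on $\bQ_{\mathrm{col}}$ color queries, the distance is at most $\frac{h}{n-k}\bQ_{\mathrm{col}}$. Taking expectations, the truncated runs differ by at most $\frac{h}{n-k}\cdot\frac{4n}{k}\E[\bQ]$. We have $h = \ceil{2m'/n} \le 2m/n + 1 \le 3m/n$, where the last step uses $m \ge n/2$. When $m < n/2$ we have $h=1$, and we use $h \le 1 \le 2m/n+1$ and bound that regime separately. We also have $n - k \ge n/2$ and $k \ge \sqrt{m}/2$. Together these give $O(\frac{m/n}{n}\cdot\frac{n}{\sqrt{m}})\E[\bQ] = O(\sqrt{m}/n)\E[\bQ] = O(\E[\bQ]/R)$ whenever $R = n/\sqrt{m}$. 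In the case $h = 1$, the bound is $O(\frac{1}{n}\cdot\frac{n}{\sqrt m})\E[\bQ] = O(\E[\bQ]/\sqrt{m}) = O(\E[\bQ]/R)$. We will check that the two regimes combine to $O(1/\min\{\sqrt m, n/\sqrt m\})$ by writing $h \le 2m/n + 1$ and handling each term separately. The term $2m/n$ contributes $O(\sqrt m/n)$, and the term $1$ contributes $O(1/\sqrt m)$; both are at most $1/R$.

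Finally, we add the truncation error and the simulation error and check the constants against $64$; if needed we will adjust the truncation threshold. We expect the main obstacle to be the adaptive step, namely justifying that the per-query bound of Lemma \reflemma{IS-lower-bound-construction-expval-color-queries} applies when the number of queries is random and depends on the answers. We would handle this by summing conditional expectations over the query indices $i \le q_0$, multiplied by the indicators $[\bQ \ge i]$.
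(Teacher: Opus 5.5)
Your proposal is correct and is essentially the paper's argument: the color-query simulation, the per-query cost bound of Lemma~\reflemma{IS-lower-bound-construction-expval-color-queries} for the first $\lfloor k/4\rfloor$ queries, Markov for the tail, and then $h\le 2m/n+1$, $n-k=\Omega(n)$, $k\ge\sqrt{m}/2$, which fits under $64/R$. The only differences are that the paper absorbs the tail $\{\bQ>k/4\}$ by capping the number of color queries at $n$ rather than charging $\Pr[\bQ>q_0]$ directly as distance, and that your summation over the indicators $[\bQ\ge i]$ is the rigorous form of the paper's $\sum_r\Pr[\bQ=r]\cdot\frac{4n}{k}r$ step. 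One trivial fix: for $9\le m<16$ you have $q_0=\lfloor k/4\rfloor=0$, so write Markov as $\Pr[\bQ>q_0]=\Pr[\bQ>k/4]\le 4\E[\bQ]/k$.
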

\begin{proof}
    As observed before (Observation \refobs{color-simulation}), the total-variation distance between the runs is bounded by the product of (1) the ratio between $h$ and the number of singletons in $G_{n,m}$ and (2) the number of color queries.
    
    By Lemma \reflemma{IS-lower-bound-construction-expval-color-queries}, the expected number of color queries is bounded by:
    \[
        \sum_{r=1}^{\floor{k/4}} \Pr\left[\bQ = r\right] \cdot \frac{4n}{k} r + n \cdot \Pr\left[\bQ > k/4\right]
        \le \frac{4n}{k}\sum_{r=1}^{\floor{k/4}} \Pr\left[\bQ = r\right] r + n \cdot \frac{\E\left[\bQ\right]}{k/4}
        \le \frac{8n}{k} \E\left[\bQ\right] \]

    Bounds:
    \begin{eqnarray*}
        \dtv(\mathcal A(\bpi(G_{n,m})), \mathcal A(\bpi(H_{n,m})))
        &\le& \frac{h}{n - k} \cdot \E\left[\text{color queries}\right] \\
        &\le& \frac{\ceil{2m'/n}}{n - \floor{\sqrt{m}}} \cdot \frac{8n}{k}\E\left[\bQ\right] \\
        &\le& \frac{2m/n + 1}{(3/4) n} \cdot \frac{8n}{\sqrt{m}/2}\E[\bQ] \\
        &=& \left(\frac{(128/3)\sqrt{m}}{n} + \frac{64/3}{\sqrt{m}}\right) \E[\bQ] \\
        &\le& 64\max\{\sqrt{m}/n, 1/\sqrt{m}\} \E[\bQ] \\
        &=& \frac{64}{\min\{\sqrt{m}, n/\sqrt{m}\}} \E[\bQ]
    \end{eqnarray*}
\end{proof}

\newpage
\phantomsection
\addcontentsline{toc}{section}{References}

\bibliographystyle{alpha}
\bibliography{main}

\end{document}